\documentclass[reprint, superscriptaddress,groupedaddress, amsmath, amssymb, aps, floatfix]{revtex4-2}

\usepackage{graphicx}
\usepackage{dcolumn}
\usepackage{bm}

\usepackage{xcolor}
\usepackage{amssymb}
\usepackage{amsmath}
\usepackage{amsthm}
\usepackage{mathtools}
\usepackage{changepage}
\usepackage{color}
\usepackage{upgreek}
\usepackage{dsfont}
\usepackage{hyperref}
\usepackage{qcircuit}

\usepackage{float}
\usepackage{mathrsfs}
\usepackage{array}
\usepackage{footnote}

\usepackage{tikz}

\usetikzlibrary{matrix}
\usetikzlibrary{shapes,arrows}
\usepackage{enumitem}

\usepackage{comment}

\usepackage[algoruled]{algorithm2e}

\newcommand{\lrb}[1]{\left( #1 \right)}
\newcommand{\lrbb}[1]{\left[ #1 \right]}

\newcommand{\sinp}[1]{\sin\left(#1\right)}
\newcommand{\cosp}[1]{\cos\left(#1\right)}
\newcommand{\sinpt}[1]{\sin^2\left(#1\right)}

\newcommand{\arccosp}[1]{\arccos\left(#1\right)}
\newcommand{\arcsinp}[1]{\arcsin\left(#1\right)}
\renewcommand{\arg}[1]{\mathrm{Arg}\left(#1\right)}
\newcommand{\sgn}[1]{\mathrm{sgn}\left(#1\right)}

\newcommand{\rep}[1]{\mathrm{Re}\left( #1 \right)}
\newcommand{\imp}[1]{\mathrm{Im}\left( #1 \right)}

\newcommand{\identity}{I}
\newcommand{\ket}[1]{\left| #1 \right\rangle}
\newcommand{\bra}[1]{\left \langle #1 \right |}
\newcommand{\ketbra}[1]{\left| #1 \right\rangle\left \langle #1 \right |}

\def\e{\ensuremath{\mathrm{e}}}
% The imaginary unit.
\def\i{\ensuremath{\mathrm{i}}}
% The differential operator.
\def\d{\ensuremath{\mathrm{d}}}

\DeclareMathAlphabet{\mathsfit}{T1}{\sfdefault}{\mddefault}{\sldefault}
\SetMathAlphabet{\mathsfit}{bold}{T1}{\sfdefault}{\bfdefault}{\sldefault}

\def\root#1#2{\mathrm{root}_{#1}(#2)}

\theoremstyle{plain}
\newtheorem{thm}{Theorem}
\newtheorem{lemma}[thm]{Lemma}
\newtheorem{proposition}[thm]{Proposition}

\DeclareMathOperator*{\argmax}{arg\,max}
\DeclareMathOperator*{\argmin}{arg\,min}

\begin{document}

\title{Minimizing estimation runtime on noisy quantum
computers}
\author{Guoming Wang}
\email{guoming.wang@zapatacomputing.com}
\affiliation{Zapata Computing Inc., 100 Federal Street, Boston, MA 02110, USA}
\author{Dax Enshan Koh}
\email[Current address: Institute of High Performance Computing, Agency for Science, Technology and Research (A*STAR), 1 Fusionopolis Way, \#16-16 Connexis, Singapore 138632, Singapore. ]{dax\_koh@ihpc.a-star.edu.sg}
\affiliation{Zapata Computing Inc., 100 Federal Street, Boston, MA 02110, USA}
\author{Peter D.\ Johnson}
\email{peter@zapatacomputing.com}
\affiliation{Zapata Computing Inc., 100 Federal Street, Boston, MA 02110, USA}
\author{Yudong Cao}
\email{yudong@zapatacomputing.com}
\affiliation{Zapata Computing Inc., 100 Federal Street, Boston, MA 02110, USA}

\begin{abstract}
The number of measurements demanded by hybrid quantum-classical algorithms such as the variational quantum eigensolver (VQE) is prohibitively high for many problems of practical value.
For such problems, realizing quantum advantage will require methods which dramatically reduce this cost.
Previous quantum algorithms that reduce the measurement cost (e.g. quantum amplitude and phase estimation) require error rates that are too low for near-term implementation. 
Here we propose methods that take advantage of the available quantum coherence to maximally enhance the power of sampling on noisy quantum devices, reducing measurement number and runtime compared to the standard sampling method of the variational quantum eigensolver (VQE). 
Our scheme derives inspiration from quantum metrology, phase estimation, and the more recent ``alpha-VQE" proposal, arriving at a general formulation that is robust to error and does not require ancilla qubits.
The central object of this method is what we call the  ``engineered likelihood function'' (ELF), used for carrying out Bayesian inference. We show how the ELF formalism enhances the rate of information gain
in sampling as the physical hardware transitions from the regime of noisy intermediate-scale quantum computers into that of quantum error corrected ones.
This technique speeds up a central component of many quantum algorithms, with applications including chemistry, materials, finance, and beyond.
Similar to VQE, we expect small-scale implementations to be realizable on today's quantum devices.
\end{abstract}

\maketitle

\section{Introduction}
\label{sec:intro}

Which quantum algorithms will deliver practical value first? A recent flurry of methods that cater to the limitations of near-term quantum devices have drawn significant attention. These methods include the variational quantum eigensolver (VQE) \cite{Peruzzo2014, Wecker2015, mcclean2016theory, Romero_2018}, quantum approximate optimization algorithm (QAOA) \cite{Farhi2014} and variants \cite{Hadfield2019}, variational quantum linear systems solver \cite{1909.05820v1, 1909.05500v2, 1909.03898v1}, other quantum algorithms leveraging the variational principles  \cite{Li2017}, and quantum machine learning algorithms \cite{Romero_2017,PhysRevLett.122.040504,Zhueaaw9918}.
In spite of such algorithmic innovations, many of these approaches have appeared to be impractical for commercially-relevant problems owing to their high cost in terms of number of measurements \cite{Wecker2015,1907.13117v3} and hence runtime, of the expectation value estimation subroutine.
Through an extensive benchmarking study of the VQE algorithm \cite{gonthier2020identifying}, it was shown that for a set of molecules having industrial relevance, VQE is very unlikely to yield an advantage over state-of-the-art quantum chemistry methods.
Unfortunately, methods offering a quadratic speedup over VQE in the runtime of the expectation value estimation subroutine, such as phase estimation, demand quantum resources that are far beyond the reach of near-term devices for moderately large problem instances \cite{Babbush2018}.

Expectation value estimation is also a building block for many non-variational quantum algorithms that have high-impact applications.
Unfortunately, the standard versions of these algorithms lie out of reach for near-term quantum computers, in part due to the coherence requirements needed to implement estimation subroutines. Such techniques include quantum algorithms for Monte Carlo estimation \cite{montanaro2015quantum} and quantum algorithms for solving linear systems of equations \cite{harrow2009quantum, CJS2013}.
These algorithms find application in finance, engineering, and machine learning.
Thus, there is strong motivation for developing estimation methods that make these techniques more amenable to near-term implementation.

Recently, the method of ``$\alpha$-VQE" \cite{wang2019accelerated} was proposed for interpolating between VQE and phase estimation in terms of the asymptotic tradeoff between sample count and quantum coherence.
The basic idea is to start from the general framework of VQE, namely the iterative optimization of the energy expectation which is a sum of individual operator expectations, and proceed to estimate each individual operator with a Bayesian variant of the overlap estimation algorithm \cite{knill2007optimal} that shares the same statistical inference backbone with known Bayesian parameter estimation schemes \cite{Sergeevich2011,Ferrie2012,svore2013faster,wiebe2016efficient}.
While phase estimation is commonly regarded as a quantum algorithm intended for fault-tolerant quantum computers, previous works \cite{wiebe2016efficient,wang2019accelerated} have demonstrated that in a noisy setting, {Bayesian} phase estimation can still yield quantum advantages in sampling. For instance, in \cite[Sec.\ IIIA]{wiebe2016efficient} it is shown that depolarizing noise reduces but does not eliminate the ability of the likelihood function to distinguish between different possible values of the parameter to be estimated.

This motivates the central question of our work: with realistic, \emph{noisy} quantum computers, how do we maximize information gain from the coherence available to speed up  expectation value estimation, and in doing so, speed up algorithms such as VQE that rely on sampling? We note that this question is not only urgently relevant in the current era of noisy quantum devices without quantum error correction, but remains relevant for error-corrected quantum computation.

% Main results
In this work, we investigate the impact of gate and measurement-readout error on the performance of quantum estimation tasks such as amplitude estimation and expectation value estimation. Note that the standard formulation of amplitude estimation \cite{brassard2002quantum} is equivalent to estimating the expectation value of an observable with eigenvalues $\pm 1$ with respect to a quantum state generated by a given circuit. We introduce a simple noise model and show that the typical sample-generation scheme is hindered by a phenomenon we refer to as ``dead spots'' in the likelihood function. Motivated by these findings, we develop the framework of \emph{engineered likelihood functions (ELFs)}, in which signal processing techniques are used to boost the information gain per sample during the inference process of estimation. We develop several algorithms for the framework of engineered likelihood functions and investigate their performance with simulations. Finally, we develop a model for the runtime performance of these estimation algorithms and discuss the implications for near-term and far-term quantum computing.

% Outline
The remainder of the paper is organized as follows. The remaining subsections of the introduction review relevant prior work on quantum estimation and describe our main results in more detail. In Section \ref{subsec:example}, we present a concrete example of our scheme for readers who wish to glean only a first impression of the key ideas. Subsequent sections then expand on the general formulation of our scheme. Section \ref{sec:elf} describes in detail the general quantum circuit construction for realizing ELFs, and analyzes the structure of ELF in both noisy and noiseless settings. In addition to the quantum circuit scheme, our scheme also involves 1) tuning the circuit parameters to maximize information gain, and 2) Bayesian inference for updating the current belief about the distribution of the true expectation value. Section \ref{sec:optalg} presents heuristic algorithms for both. We then show numerical results in Section \ref{sec:simulatin_results} comparing our approach with existing methods based on Chebyshev likelihood functions (CLFs). In Section \ref{sec:runtimemodel}, we derive a mathematical model for the runtimes of our algorithms on noisy devices. We conclude in Section \ref{sec:outlook} with implications of our results from a broad perspective of quantum computing.

\subsection{Prior work}

Evaluating the expectation value of an operator $O$ with respect to a quantum state $|A\rangle$ is a fundamental element of quantum information processing. 
In the simplest setting where samples are drawn repeatedly by measuring the same operator $O$ on the same quantum state $\ket{A}$, the measurement process is equivalent to a sequence of independent Bernoulli experiments. Yielding an estimate of the expectation value within error $\varepsilon$
 (with high probability) requires the number of samples to scale as 
$O(1/\varepsilon^2)$. We highlight the following key points regarding quantum estimation that are relevant to the context of this work:
\begin{enumerate}
    \item {\bf Cost scaling improvement using phase estimation}. Quantum effects introduce the opportunity to asymptotically accelerate the measurement process. In particular, there is a set of schemes \cite{knill2007optimal} based on quantum phase estimation that is able to reduce the sample complexity to $O(\log\frac{1}{\varepsilon})$. This saturates the information-theoretical lower bound for the number of samples since in order to determine a bounded quantity to resolution $\varepsilon$ one must be able to distinguish  $O(1/\varepsilon)$ different values, requiring at least  $\Omega(\log\frac{1}{\varepsilon})$ bits \cite{svore2013faster}. However, such optimal sample complexity comes at the cost of $O(1/\varepsilon)$ many coherent quantum operations. 
    This tradeoff between sample complexity and quantum coherence is also well understood in quantum metrology \cite{Giovannetti2004,Giovannetti2011}. 
    \label{item:pea}

    \item {\bf Amplitude estimation and generalized reflections}.
    % Phase estimation is closely related to the task of amplitude estimation [cite AE+AA paper]
    % Many of the performance guarantees of phase estimation carry over to the case of amplitude estimation [citation?].
    % Amplitude estimation seeks to estimate the amplitude of a state $\ket{A}$ on a subspace.
    % If we assume access to a reflection $O$ about this subspace, the amplitude estimation task is equivalent to the task of estimating the expectation value $\cosp{\theta}=\langle A|O|A\rangle$, where $\theta$ is the angle between the two vectors $\ket{A}$ and $O|A\rangle$.
    % By alternating applications of reflection operators with respect to each of the states $|A\rangle$ and $O|A\rangle$ one can enact rotations in the two-dimensional subspace spanned by the two states. Measurements of the output state then encode information about the target amplitude. The same geometric picture also arises in Grover's search algorithm.
    Phase estimation is closely related to the task of amplitude estimation \cite{brassard2002quantum} with many of the performance guarantees of the former applying to the latter. In its original definition, amplitude estimation is the problem of evaluating the quantity $\bra{A} P_+ \ket{A}$, where $P_+$ is a projection operator and $\ket{A}=A\ket{0^n}$ is an ansatz state. This problem is essentially equivalent to estimating the expectation value $\bra{A} P \ket{A}$ where $P=2P_+-I$ is a reflection operator. \cite{brassard2002quantum} showed that amplitude estimation can be solved by running phase estimation on the Grover iterate $U=(2\ketbra{A}-I)P$. Namely, the desired amplitude information is encoded in the eigenvalues of the unitary operator $U$. Subsequent works demonstrated \cite{yoder2014fixed} that using generalized reflection operators (i.e. those with a phase parameter $\varphi$ such that $R_\varphi=(1-e^{i\varphi})|A\rangle\langle A|-I$ or $R_\varphi=(1-e^{i\varphi})P_+-I$), one can realize a much larger set of SU(2) rotations in the subspace $\mathrm{span}\{\ket{A}, P\ket{A}\}$ than with only common reflection operators. The set of SU(2) rotations realizable with such generalized construction has also been rigorously characterized \cite{low2016methodology} and later used for some of the most advanced Hamiltonian simulation algorithms such as qubitization \cite{Low2019} and signal processing \cite{Low2017}.
    \label{item:ae}
        \item {\bf Bayesian inference perspective.} 
        The problem of expectation value estimation can be framed as a parameter estimation problem, common in statistical inference. In fact, previous work (for example \cite[Sec.\ IVA]{mcclean2016theory}) has already pointed out a Bayesian perspective for considering the standard sampling process for VQE algorithms. The general setting is first to treat the operator expectation $\Pi=\langle A|O|A\rangle$ as the parameter to be estimated. Then, a parametrized quantum circuit $V(\vec{\theta})$ that may be related to $\ket{A}$ and $O$ is constructed. The ability to execute the circuit and collect measurement outcome $d$ translates to the ability to sample from a likelihood function $p(d|\vec{\theta},\Pi)$. For a given prior $p(\Pi)$ representing the current belief of the true value of $\Pi$, Bayesian inference uses a measurement outcome $d$ to produce (or update the prior to) a posterior distribution $p(\Pi|d)=\frac{p(d|\Pi,\vec{\theta})p(\Pi)}{\int p(d|\Pi,\vec{\theta})p(\Pi)d\Pi}$. For the settings considered in this paper, as well as in previous works \cite{Sergeevich2011,Ferrie2012,svore2013faster,wiebe2016efficient}, the prior and posterior distributions are maintained on the classical computer, while sampling from the likelihood function involves using a quantum device.
    \label{item:bayesian}
\end{enumerate}

The combination of phase estimation and the Bayesian perspective gives rise to Bayesian phase estimation techniques \cite{svore2013faster,wiebe2016efficient,O_Brien_2019} that are more suitable for noisy quantum devices capable of realizing limited-depth quantum circuits than earlier proposals \cite{quant-ph/9511026}. 
The goal of Bayesian phase estimation is to estimate the phase $\theta = \arccos(\Pi)$ in an eigenvalue $e^{i\theta}$ of the unitary $U$.
The quantum circuits used in this algorithm yield measurement outcomes with likelihoods given by
\begin{align}
\label{eq:pdmbP}
p(d|m,\Pi)=\frac{1+(-1)^d\mathcal{T}_m(\Pi)}{2},
\end{align}
where $d\in\{0,1\}$ and $\mathcal{T}_m(\Pi)=\cosp{m\arccos(\Pi)}$ is the $m$th-degree Chebyshev polynomial, found in many settings beyond Bayesian phase estimation (c.f.\ \cite[Eq.\ 2]{Sergeevich2011}, \cite[Eq.\ 1]{Ferrie2012}, \cite[Eq.\ 2]{wiebe2016efficient}, and \cite[Eq.\ 4]{wang2019accelerated}). In \cite{Ferrie2012} the exponential advantage of Bayesian inference with a Gaussian prior over other non-adaptive sampling methods is established by showing that the expected posterior variance $\sigma$ decays \emph{exponentially} in the number of inference steps. Such exponential convergence is at a cost of $O(1/\sigma)$ amount of quantum coherence required at each inference step \cite{Ferrie2012}. Such scaling is also confirmed in \cite{wiebe2016efficient} in the context of Bayesian phase estimation.

Combining the above observations one may devise a Bayesian inference method for expectation value estimation that smoothly interpolates between the standard sampling regime and phase estimation regime. This is proposed in \cite{wang2018quantum} as ``$\alpha$-VQE", where the asymptotic scaling is $O(1/\varepsilon^\alpha)$ with the extremal values of $\alpha=2$ corresponding to the standard sampling regime (typically realized in VQE) and $\alpha=1$ corresponding to the quantum-enhanced regime where the scaling reaches the Heisenberg limit (typically realized with phase estimation). By varying the parameters for the Bayesian inference one can also achieve $\alpha$ values between $1$ and $2$. The lower the $\alpha$ value is, the deeper the quantum circuit is needed for Bayesian phase estimation. This accomplishes the tradeoff between quantum coherence and asymptotic speedup for the measurement process (point \ref{item:pea} above).

It is also worth noting that phase estimation is not the only paradigm that can reach the Heisenberg limit for amplitude estimation \cite{4655455,Zintchenko2016,Suzuki2020}. In \cite{4655455} the authors consider the task of estimating the parameter $\theta$ of a quantum state $\rho_\theta$. A parallel strategy is proposed where $m$ copies of the parametrized circuit for generating $\rho_\theta$, together with an entangled initial state and measurements in an entangled basis, are used to create states with the parameter $\theta$ amplified to $m\theta$. Such amplification can also give rise to likelihood functions that are similar to that in Eq. \eqref{eq:pdmbP}. In \cite{Zintchenko2016} it is shown that with randomized quantum operations and Bayesian inference one can extract information in fewer iterations than classical sampling even in the presence of noise. In \cite{Suzuki2020} quantum amplitude estimation circuits with varying numbers $m$ of iterations and numbers $N$ of measurements are considered. A particularly chosen set of pairs $(m,N)$ gives rise to a likelihood function that can be used for inferring the amplitude to be estimated. The Heisenberg limit is demonstrated for one particular likelihood function construction given by the authors. Both works \cite{Zintchenko2016,Suzuki2020} highlight the power of parametrized likelihood functions, making it tempting to investigate their performance under imperfect hardware conditions.
As will become clear, although the methods we propose can achieve Heisenberg-limit scaling, they do not take the perspective of many previous works that consider interfering many copies of the same physical probe.

\subsection{Main results}

This work focuses on estimating the expectation value $\Pi=\langle A|O|A\rangle$ where the state $|A\rangle$ can be prepared by a circuit $A$ such that $|A\rangle=A|0^n\rangle$ for some integer $n \ge 1$. We consider a family of quantum circuits such that as the circuit deepens with more repetitions of $A$ it allows for likelihood functions that are polynomial in $\Pi$ of ever higher degree. As we will demonstrate in the next section with a concrete example, a direct consequence of this increase in polynomial degree is an increase in the power of inference, which can be quantified by Fisher information gain at each inference step. After establishing this ``enhanced sampling" technique, we further introduce parameters into the quantum circuit and render the resulting likelihood function tunable. We then optimize the parameters for maximal information gain during each step of inference. The following lines of insight emerge from our efforts:

\begin{enumerate}
    \item {\bf The role of noise and error in amplitude estimation:} Previous works \cite{wang2019accelerated,wiebe2016efficient,O_Brien_2019,Zintchenko2016} have revealed the impact of noise on the likelihood function and the estimation of the Hamiltonian spectrum. Here we investigate the same for our scheme of amplitude estimation. Our findings show that while noise and error do increase the runtime needed for producing an output that is within a specific statistical error tolerance, they do not necessarily introduce systematic bias in the output of the estimation algorithm. 
    Systematic bias in the estimate can be suppressed by using active noise-tailoring techniques \cite{wallman2016noise} and calibrating the effect of noise.
    
    We have performed simulation using realistic error parameters for near-term devices and discovered that the enhanced sampling scheme can outperform VQE in terms of sampling efficiency. Our results have also revealed a perspective on tolerating error in quantum algorithm implementation where higher fidelity does not necessarily lead to better algorithmic performance. For fixed gate fidelity, there appears to be an optimal circuit fidelity around the range of $0.5-0.7$ at which the enhanced scheme yields the maximum amount of quantum speedup.
    
    \item {\bf The role of likelihood function tunability:} Parametrized likelihood functions are centerpieces of phase estimation or amplitude estimation routines. To our knowledge, all of the current methods focus on likelihood functions of the Chebyshev form (Eq. \eqref{eq:pdmbP}). For these Chebyshev likelihood functions (CLF) we observe that in the presence of noise there are specific values of the parameter $\Pi$ (the ``dead spots") for which the CLFs are significantly less effective for inference than other values of $\Pi$. We remove such dead spots by engineering the form of the likelihood function with generalized reflection operators (point \ref{item:ae} in Section \ref{sec:intro}) whose angle parameters are made tunable.
    
    \item {\bf Runtime model for estimation as error rates decrease:} 
    Previous works \cite{wang2019accelerated,Suzuki2020} have demonstrated smooth transitions in the asymptotic cost scaling from the $O(1/\varepsilon^2)$ of VQE to $O(1/\varepsilon)$ of phase estimation. We advance this line of thinking by developing a model for estimating the runtime $t_{\varepsilon}$ to target accuracy $\varepsilon$ using devices with degree of noise $\lambda\in[0,\infty)$ (c.f. Section \ref{sec:runtimemodel}):
    \begin{align}
        \label{eq:runtime_model}
        t_{\varepsilon} \sim O\left(\frac{ \lambda }{\varepsilon^2} + \frac{1}{\sqrt{2}\varepsilon} +\sqrt{\left(\frac{\lambda}{\varepsilon^2}\right)^2+\left(\frac{2\sqrt{2}}{\varepsilon}\right)^2}\right).
    \end{align}
    The model interpolates between the $O(1/\varepsilon)$ scaling and $O(1/\varepsilon^2)$ scaling as a function of $\lambda$. Such bounds also allow us to make concrete statements about the extent of quantum speedup as a function of hardware specifications such as the number of qubits and two-qubit gate fidelity, and therefore estimate runtimes using realistic parameters for current and future hardware.
\end{enumerate}

\begin{table*}[!ht]
    \caption{
    Comparison of our scheme with relevant proposals that appear in the literature. Here the list of features include whether the quantum circuit used in the scheme requires ancilla qubits in addition to qubits holding the state for amplitude estimation or phase estimation, whether the scheme uses Bayesian inference, whether any noise resilience is considered, whether the initial state is required to be an eigenstate, and whether the likelihood function (LF) is fully tunable like engineered likelihood functions (ELFs) proposed here or restricted to Chebyshev likelihood functions (CLFs).}
    \begin{ruledtabular}
    \begin{tabular}{cccccc}
        % \hline
         Scheme & Bayesian inference & Noise consideration & Fully tunable LFs & Requires ancilla & Requires eigenstate \\
         \hline
         Knill et al.\ \cite{knill2007optimal} & No & No & No & Yes & No \\
         Svore et al.\ \cite{svore2013faster} & No & No & No & Yes & Yes \\
         Wiebe and Grenade \cite{wiebe2016efficient} & Yes & Yes & No & Yes & Yes \\
         Wang et al.\ \cite{wang2019accelerated} & Yes & Yes & No & Yes & Yes \\
         O'Brien et al.\ \cite{O_Brien_2019} & Yes &  Yes & No & Yes & No \\
         Zintchenko and Wiebe \cite{Zintchenko2016} & No & Yes & No & No & No \\
         Suzuki et al.\ \cite{Suzuki2020} & No & No & No & No & No \\
         This work (Section \ref{sec:elf}) & Yes & Yes & Yes & No & No \\
         This work (Appendix \ref{sec:ancilla_based}) & Yes & Yes & Yes & Yes & No \\
    \end{tabular}
    \end{ruledtabular}
    \label{tab:comparison}    
\end{table*}

\section{A first example}
\label{subsec:example}

There are two main strategies for estimating the expectation value $\bra{A} P \ket{A}$ of some operator $P$ with respect to a quantum state $\ket{A}$.
The method of \emph{quantum amplitude estimation} \cite{brassard1998quantum} provides a provable quantum speedup with respect to certain computational models.
However, to achieve precision $\varepsilon$ in the estimate, the circuit depth needed in this method scales as $O(1/\varepsilon)$, making it impractical for near-term quantum computers.
The variational quantum eigensolver uses \emph{standard sampling} to carry out amplitude estimation. 
Standard sampling allows for low-depth quantum circuits, making it more amenable to implementation on near-term quantum computers.
However, in practice, the inefficiency of this method makes VQE impractical for many problems of interest \cite{Wecker2015}.
In this section we introduce the method of \emph{enhanced sampling} for amplitude estimation.
This technique draws inspiration from quantum-enhanced metrology \cite{giovannetti2006quantum} and seeks to maximize the statistical power of noisy quantum devices.
We motivate this method by starting from a simple analysis of standard sampling as used in VQE.
We note that, although the subroutine of estimation is a critical bottleneck, other aspects of the VQE algorithm also must be improved, including the optimization of the parameters in parameterized quantum circuits \cite{kubler2020adaptive, sweke2019stochastic, arrasmith2020operator,  sung2020exploration}.

The energy estimation subroutine of VQE estimates amplitudes with respect to Pauli strings.
For a Hamiltonian decomposed into a linear combination of Pauli strings $H=\sum_j \mu_j P_j$ and ``ansatz state'' $\ket{A}$, the energy expectation value is estimated as a linear combination of Pauli expecation value estimates
\begin{align}
    \hat{E}=\sum_j \mu_j \hat{\Pi}_j,
\end{align}
where $\hat{\Pi}_j$ is the (amplitude) estimate of $\langle A|P_j|A\rangle$.
VQE uses the standard sampling method to build up Pauli expectation value estimates with respect to the ansatz state, which can be summarized as follows. Prepare $\ket{A}$ and measure operator $P$ receiving outcome $d \in \{0,1\}$. Repeat this $M$ times, receiving $k$ outcomes labeled $0$ and $M-k$ outcomes labeled $1$. Estimate $\Pi=\langle A|P|A\rangle$ as $\hat{\Pi}=\frac{k-(M-k)}{M}$.

We can quantify the performance of this estimation strategy
using the mean squared error of the estimator as a function of time $t=TM$, where $T$ is the time cost of each measurement.
Because the estimator is unbiased, the mean squared error is simply the variance in the estimator,
\begin{align}
    \textup{MSE}(\hat{\Pi})=\frac{1-\Pi^2}{M}.
\end{align}
For a specific mean squared error $\textup{MSE}(\hat{\Pi})=\varepsilon^2$, the runtime of the algorithm needed to ensure mean squared error $\varepsilon^2$ is
\begin{align}
    t_{\varepsilon}=T\frac{1-\Pi^2}{\varepsilon^2}.
\end{align}
The total runtime of energy estimation in VQE is the sum of the runtimes of the individual Pauli expectation value estimation runtimes.
For problems of interest, this runtime can be far too costly, even when certain parallelization techniques are used \cite{kandala2017hardware}.
The source of this cost is the insensitivity of the standard sampling estimation process to small deviations in $\Pi$: the expected information gain about $\Pi$ contained in the standard-sampling measurement outcome data is low.

Generally, we can measure the information gain of an estimation process of $M$ repetitions of standard sampling with the Fisher information
\begin{align}
\label{eq:fisherinfo_def}
    I_M(\Pi)&=\mathbb{E}_{D}\left[\left(\frac{\partial}{\partial \Pi} \log \mathbb{P}(D|\Pi)\right)^2\right]
    \nonumber\\
    &=-\mathbb{E}_{D}\left[\frac{\partial^2}{\partial \Pi^2} \log \mathbb{P}(D|\Pi)\right]\nonumber\\
    &=\sum_D \frac{1}{\mathbb{P}(D|\Pi)}\left(\frac{\partial}{\partial \Pi}\mathbb{P}(D|\Pi)\right)^2,
\end{align}
where $D=\{d_1,d_2,\cdots,d_M\}$ is the set of outcomes from $M$ repetitions of the standard sampling.
The Fisher information identifies the likelihood function $\mathbb{P}(D|\Pi)$ as being responsible for information gain.
We can lower bound the mean squared error of an (unbiased) estimator with the Cramer-Rao bound
\begin{align}
    \text{MSE}(\hat{\Pi})\geq \frac{1}{I_M(\Pi)}.
\end{align}
Using the fact that the Fisher information is additive in the number of samples, we have $I_M(\Pi)=MI_1(\Pi)$ where $I_1(\Pi)=1/(1-\Pi^2)$ is the Fisher information of a single sample drawn from likelihood function $\mathbb{P}(d|\Pi)=(1+(-1)^d\Pi)/2$. Using the Cramer-Rao bound, we can find a lower bound for the runtime of the estimation process as
\begin{align}
    t_{\varepsilon}\geq \frac{T}{I_1(\Pi)\varepsilon^2},
\end{align}
which shows that in order to reduce the runtime of an estimation algorithm we should aim to increase the Fisher information.

\begin{figure*}[!ht]
\includegraphics[trim=40 50 50 50, clip=true, width=0.75\linewidth]{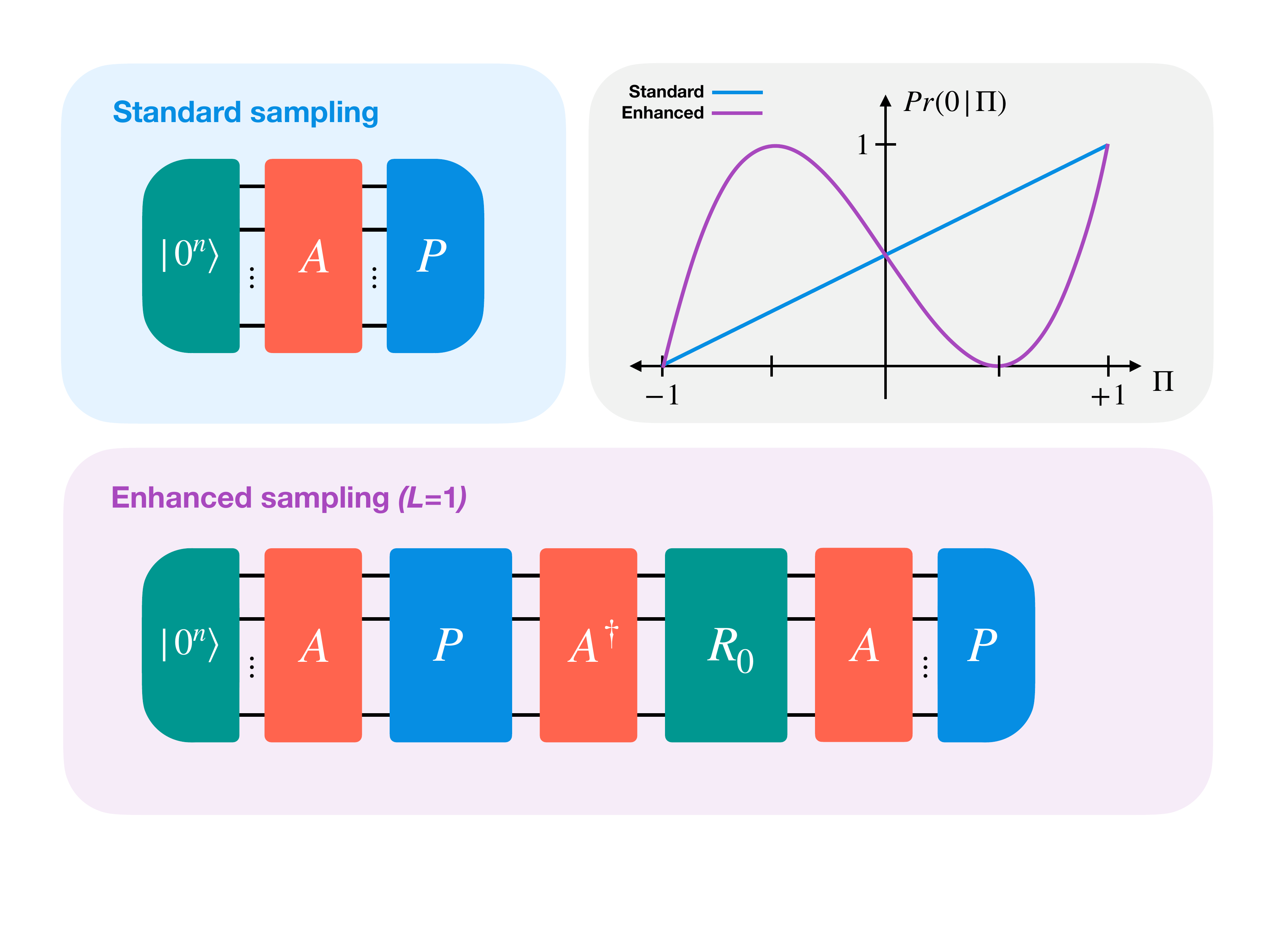}
\caption{This figure exhibits the quantum circuits for standard sampling (used in VQE) and for the simplest non-trivial version of enhanced sampling, along with their corresponding likelihood functions. The blocks represent unitary transformations, while the caps at the left and right indicate state preparation and measurement, respectively. Enhanced sampling can yield a larger statistical power in this estimation of $\Pi=\bra{A}P\ket{A}$. The likelihoods of the outcome data in enhanced sampling can depend more sensitively on the unknown value $\Pi$ than they do in standard sampling. This increased sensitivity can reduce the runtime needed to achieve a target average error in the estimate of $\Pi$.}
\label{fig:L1example}
\end{figure*}

The purpose of enhanced sampling is to reduce the runtime of amplitude estimation by engineering likelihood functions that increase the rate of information gain.
We consider the simplest case of enhanced sampling, which is illustrated in Figure \ref{fig:L1example}.
To generate data we prepare the ansatz state $\ket{A}$, apply the operation $P$, apply a phase flip about the ansatz state, and then measure $P$.
The phase flip about the ansatz state can be achieved by applying the inverse of the ansatz circuit $A^{-1}$, applying a phase flip about the initial state $R_0=2\ketbra{0^n}-\identity$, and then re-applying the ansatz circuit $A$.
In this case, the likelihood function becomes
\begin{align}
\label{eq:standardlf}
\mathbb{P}(d|\Pi)&=\frac{1+(-1)^d\cos(3\arccos(\Pi))}{2}\nonumber\\
&=\frac{1+(-1)^d(4\Pi^3-3\Pi)}{2}.
\end{align}
The bias is a degree-$3$ Chebyshev polynomial in $\Pi$.
We will refer to such likelihood functions as \emph{Chebyshev likelihood functions} (CLFs).

In order to compare the Chebyshev likelihood function of enhanced sampling to that of standard sampling, we consider the case of $\Pi=0$.
Here, $\mathbb{P}(0|\Pi=0)=\mathbb{P}(1|\Pi=0)$ and so the Fisher information is proportional to the square of the slope of the likelihood function
\begin{align}
    I_1(\Pi=0)=4\left(\frac{\partial \mathbb{P}(d=0|\Pi)}{\partial\Pi}\right)^2.
\end{align}
As seen in Figure \ref{fig:L1example}, the slope of the Chebyshev likelihood function at $\Pi=0$ is steeper than that of the standard sampling likelihood function.
The single-sample Fisher information in each case evaluates to
\begin{align}
    \textup{Standard: }I_1(\Pi=0)=1\nonumber\\ 
    \textup{Enhanced: }I_1(\Pi=0)=9,
\end{align}
demonstrating how a simple variant of the quantum circuit can enhance information gain.
In this example, using the simplest case of enhanced sampling can reduce the number of measurements needed to achieve a target error by at least a factor of nine.
As we will discuss later, we can further increase the Fisher information by applying $L$ layers of $P \circ A^{\dagger} \circ R_0 \circ A$ before measuring $P$. In fact, the Fisher information $I_1(\Pi)=\frac{(2L+1)^2}{1-\Pi^2}=O(L^2)$ grows quadratically in $L$.

We have yet to propose an estimation scheme that converts enhanced sampling measurement data into an estimation.
One intricacy that enhanced sampling introduces is the option to vary $L$ as we are collecting measurement data.
In this case, given a set of measurement outcomes from circuits with varying $L$, the sample mean of the $0$ and $1$ counts loses its meaning.
Instead of using the sample mean, we use \emph{Bayesian inference} to process the measurement outcomes into information about $\Pi$.
Section \ref{subsec:bi_elf} describes the use of Bayesian inference for estimation.

At this point, one may be tempted to point out that the comparison between standard sampling and enhanced sampling is unfair because only one query to $A$ is used in the standard sampling case while the enhanced sampling scheme uses three queries of $A$. It seems that if one considers a likelihood function that arises from \emph{three} standard sampling steps, one could also yield a cubic polynomial form in the likelihood function. Indeed, suppose one performs three independent standard sampling steps yielding results $x_1,x_2,x_3\in\{0,1\}$, and produces a binary outcome $z\in\{0,1\}$ classically by sampling from a distribution $\mathbb{P}(z|x_1,x_2,x_3)$. Then the likelihood function takes the form of
\begin{align}
\label{eq:classical_elf}
    \mathbb{P}(z|\Pi)
    &=\sum_{x_1,x_2,x_3}\mathbb{P}(z|x_1,x_2,x_3)\mathbb{P}(x_1,x_2,x_3|\Pi)\nonumber\\
    &=\sum_{i=0}^3\alpha_i
    \begin{pmatrix}
    3 \\ i
    \end{pmatrix}
    \left(\frac{1+\Pi}{2}\right)^i\left(\frac{1-\Pi}{2}\right)^{3-i},
\end{align}
where each $\alpha_i\in[0,1]$ is a parameter that can be tuned classically through changing the distribution $\mathbb{P}(z|x_1,x_2,x_3)$. More specifically, $\alpha_i=\sum_{x_1x_2x_3:h(x_1x_2x_3)=i}\mathbb{P}(z|x_1,x_2,x_3)$ where $h(x_1x_2x_3)$ is the Hamming weight of the bit string $x_1x_2x_3$. Suppose we want $\mathbb{P}(z=0|\Pi)$ to be equal to $\mathbb{P}(d=0|\Pi)$ in Eq. \eqref{eq:standardlf}. This implies that $\alpha_0=1$, $\alpha_1=-2$, $\alpha_2=3$ and $\alpha_3=0$, which is clearly beyond the classical tunability of the likelihood function in Eq. \eqref{eq:classical_elf}. This evidence suggests that the likelihood function arising from the quantum scheme in Eq. \eqref{eq:standardlf} is beyond classical means.

As the number of circuit layers $L$ is increased, the time per sample $T$ grows linearly in $L$.
This linear growth in circuit layer number, along with the quadratic growth in Fisher information leads to a lower bound on the expected runtime,
\begin{align}
    t_{\varepsilon} \in \Omega\left(\frac{1}{L\varepsilon^2}\right),
\end{align}
assuming a fixed-$L$ estimation strategy with an unbiased estimator.
In practice, the operations implemented on the quantum computer are subject to error.
Fortunately, Bayesian inference can incorporate such errors into the estimation process.
As long as the influence of errors on the form of the likelihood function is accurately modeled, the principal effect of such errors is only to slow the rate of information gain.
Error in the quantum circuit accumulates as we increase the number of circuit layers $L$.
Consequently, beyond a certain number of circuit layers, we will receive diminishing returns with respect to gains in Fisher information (or the reduction in runtime).
The estimation algorithm should then seek to balance these competing factors in order to optimize the overall performance.

The introduction of error poses another issue for estimation.
Without error, the Fisher information gain per sample in the enhanced sampling case with $L=1$ is greater than or equal to $9$ for all $\Pi$.
As shown in Figure \ref{fig:L1ELF}, with the introduction of even a small degree of error, the values of $\Pi$ where the likelihood function is flat incur a dramatic drop in Fisher information.
We refer to such regions as estimation \emph{dead spots}.
This observation motivates the concept of engineering likelihood functions (ELF) to increase their statistical power.
By promoting the $P$ and $R_0$ operations to generalized reflections $U(x)=\exp(-\i xP)$ and $R_0(y)=\exp(-\i y R_0)$, we can choose rotation angles such that the information gain is boosted around such dead spots.
We will find that even for deeper enhanced sampling circuits, engineering likelihood functions allows us to mitigate the effect of estimation dead spots.

\begin{figure*}[!ht]
\begin{minipage}{.48\textwidth}
\includegraphics[width=1\linewidth]{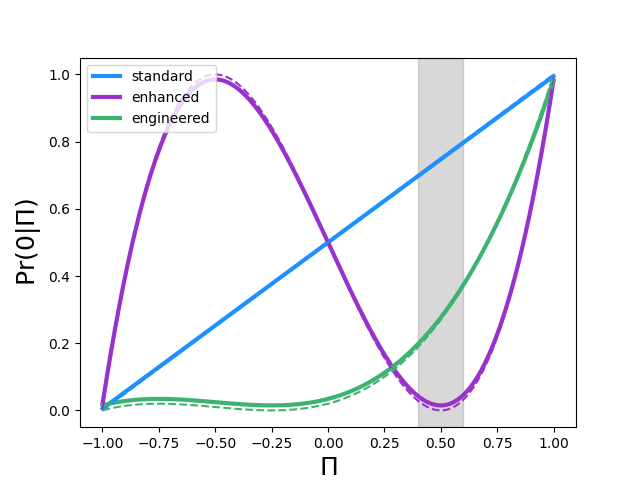}\\
{Likelihood function comparison}
\end{minipage}
\begin{minipage}{.48\textwidth}
\includegraphics[width=1\linewidth]{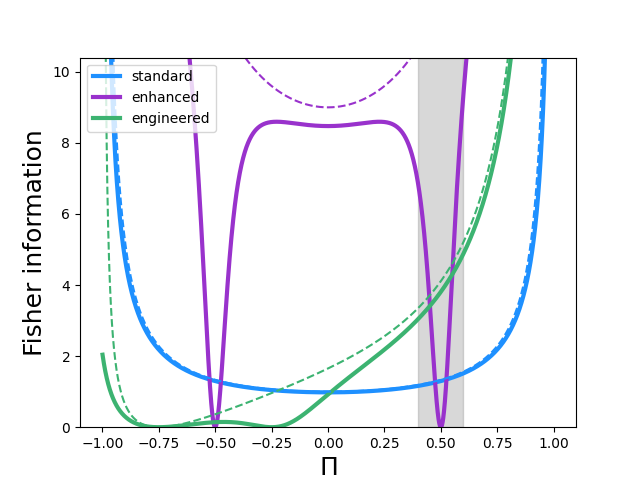}\\
{Fisher information comparison}
\end{minipage}
\caption{These figures demonstrate improvement in information gain if the likelihood function is engineered. Dotted curves are the noiseless case, solid curves incorporate a 1\% error per ansatz application (using the notation of Eq. \eqref{eq:noisy_elf}, $\bar{p}=0.99$, $p=0.99^2=0.9801$, and $L=1$). The likelihood function of enhanced sampling yields a large Fisher information for many values of $\Pi$ relative to that of standard sampling. The introduction of even a small degree of error in the quantum circuits causes the Fisher information to become zero where the enhanced sampling likelihood function is flat (indicated by the grey bands).
This can be verified by evaluating Eq. \eqref{eq:fisherinfo_def} using the expression for the model of the noisy likelihood function in Eq. \eqref{eq:noisy_elf}.
By tuning the generalized reflection angles ($(x_1, x_2)=(-0.6847,0.6847)$), we can engineer a likelihood function to boost the information gain in the estimation ``dead spot'' (gray region).
}
\label{fig:L1ELF}
\end{figure*}

\section{Engineered likelihood functions}
\label{sec:elf}
In this section, we propose the methodology of engineering likelihood functions for amplitude estimation. We first introduce the quantum circuits for drawing samples that correspond to engineered likelihood functions, and then describe how to tune the circuit parameters and carry out Bayesian inference with the resultant likelihood functions.

\subsection{Quantum circuits for engineered likelihood functions}
\label{subsec:qc_elf}
Our objective is to design a procedure for estimating the expectation value
\begin{align}
\Pi = \cosp{\theta} =\bra{A} P \ket{A},
\label{eq:defPi}
\end{align}
where $\ket{A}=A\ket{0^n}$ in which $A$ is an $n$-qubit unitary operator, $P$ is an $n$-qubit Hermitian operator with eigenvalues $\pm 1$, and $\theta=\arccosp{\Pi}$ is introduced to facilitate Bayesian inference later on. In constructing our estimation algorithms, we assume that we are able to perform the following primitive operations. First, we can prepare the computational basis state $\ket{0^n}$ and apply an ansatz circuit $A$ to it, obtaining $\ket{A}=A\ket{0^n}$. Second, we can implement the unitary operator $U(x)=\exp(-\i xP)$ for any angle $x \in \mathbb{R}$. Finally, we can perform the measurement of $P$ which is modeled as a projection-valued measure $\{\frac{I+P}{2},\frac{I-P}{2}\}$ with respective outcome labels $\{0,1\}$. We will also make use of the unitary operator $V(y)=A R_0(y) A^{\dagger}$, where $R_0(y)=\exp(-\i y(2\ketbra{0^n}-\identity))$ and $y \in \mathbb{R}$. Following the convention (see e.g. \cite{low2016methodology}), we will call $U(x)$ and $V(y)$ the \emph{generalized reflections} about the $+1$ eigenspace of $P$ and the state $\ket{A}$, respectively, where $x$ and $y$ are the \emph{angles} of these generalized reflections, respectively.

We use the ancilla-free \footnote{
We call this scheme ``ancilla-free'' (AF) since it does not involve any ancilla qubits. In Appendix \ref{sec:ancilla_based}, we consider a different scheme named the ``ancilla-based'' (AB) scheme that involves one ancilla qubit.
} quantum circuit in  Figure~\ref{fig:circuit_diagram} to generate the \emph{engineered likelihood function} (ELF), that is the probability distribution of the outcome $d \in \{0, 1\}$ given the unknown quantity $\theta$ to be estimated. The circuit consists of a sequence of generalized reflections. Specifically, after preparing the ansatz state $\ket{A}=A\ket{0^n}$, we apply $2L$ generalized reflections $U(x_1)$, $V(x_2)$, $\dots$, $U(x_{2L-1})$, $V(x_{2L})$ to it, varying the rotation angle $x_j$ in each operation. For convenience, we will call $V(x_{2j})U(x_{2j-1})$ the $j$-th layer of the circuit, for $j=1, 2, \dots, L$. The output state of this circuit is 
\begin{align} 
Q(\vec{x})\ket{A}=V(x_{2L})U(x_{2L-1})\ldots V(x_2)U(x_1)\ket{A},
\end{align}
where $\vec x = (x_1,x_2,\ldots, x_{2L-1},x_{2L}) \in \mathbb{R}^{2L}$ is the vector of tunable parameters. Finally, we perform the projective measurement $\{\frac{I+P}{2}, \frac{I-P}{2}\}$ on this state, receiving an outcome $d \in \{0,1\}$. 

\begin{figure*}[!ht]
\includegraphics[trim=40 40 60 300, clip=true, width=0.8\linewidth]{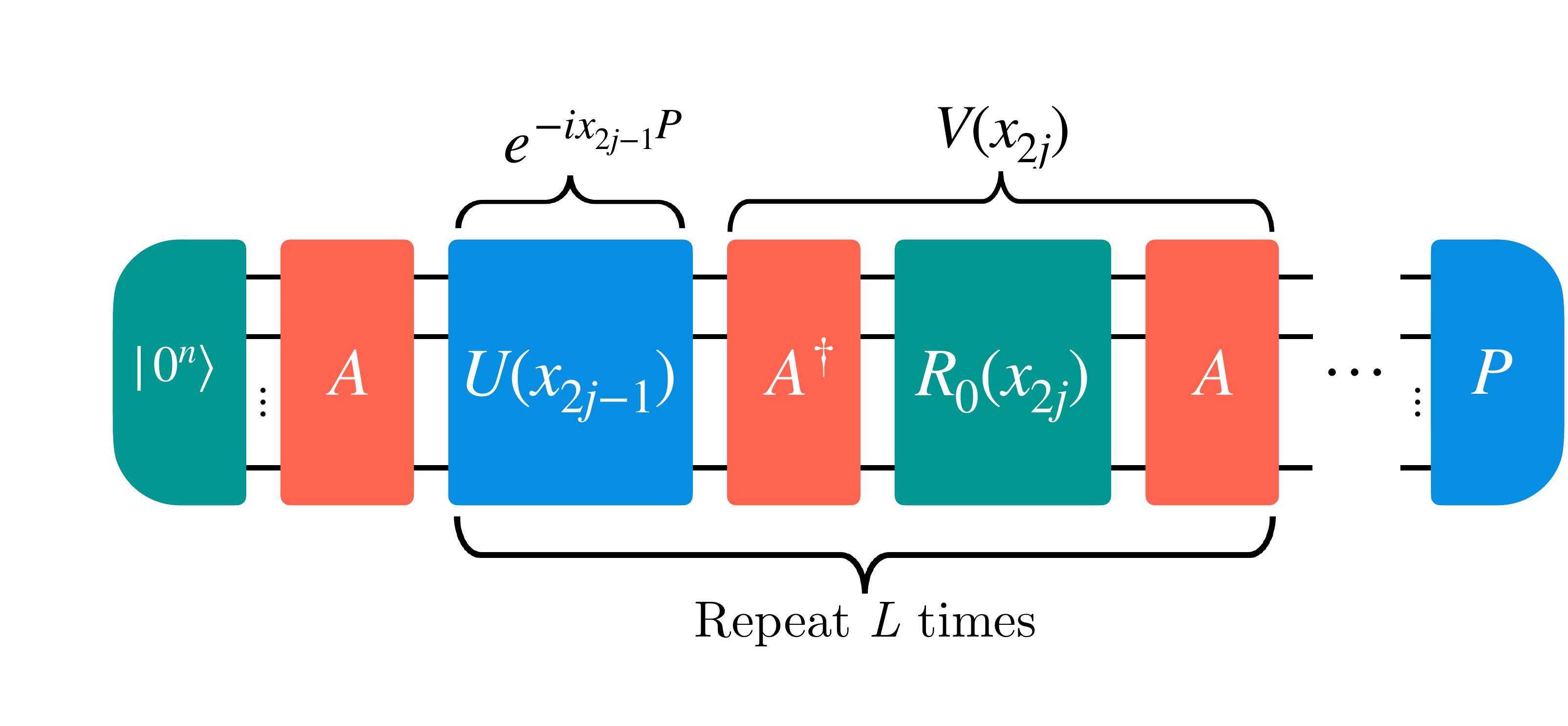}
\caption{This figure illustrates the operations used for generating samples that correspond to an engineered likelihood function. 
$A$ is the state preparation circuit, $P$ is the observable of interest, and $R_0(x_{i+1})$ is a generalized reflection about the state $\ket{0^n}$.
The blocks represent unitary transformations, while the caps at the left and right indicate state preparation and measurement, respectively.
The outcomes of measurement of $P$ yield information about the expectation value $\Pi=\langle A|P|A\rangle$.
The case of $L=0$ simply prepares $\ket{A}$ and measures $P$.
This corresponds to the standard sampling method used in VQE.
Even with an error-prone implementation, we can enhance the information gain rate by applying a sequence of generalized reflections before the measurement.
In such \emph{enhanced sampling}, the likelihood of outcomes depends more sensitively on $\Pi$.
These circuit elements are color-coded to highlight the commonalities in the way the features $P$ (blue), $A$ (red), and $\ket{0^n}$ (green) enter in the likelihood function.
} 
\label{fig:circuit_diagram}
\end{figure*}

As in Grover's search algorithm, the generalized reflections $U(x_{2j-1})$ and $V(x_{2j})$ ensure that the quantum state remains in two-dimensional subspace $S:=\mathrm{span}\{\ket{A}, P\ket{A}\}$ \footnote{To ensure that $S$ is two-dimensional, we assume that $\Pi \neq \pm 1$, i.e. $\theta \neq 0$ or $\pi$.} for any $j$. Let $\ket{A^\bot}$ be the state (unique, up to a phase) in $S$ that is orthogonal to $\ket{A}$, i.e.
\begin{align}
    \ket{A^\bot} = \frac{P\ket A - \bra A P \ket A \ket A}{\sqrt{1-\bra A P \ket A^2}}.
\label{eq:abot}
\end{align}
To help the analysis, we will view this two-dimensional subspace as a qubit, writing $\ket{A}$ and $|A^\bot\rangle$ as $\ket{\bar{0}}$ and $\ket{\bar{1}}$, respectively. 
Let $\bar{X}$, $\bar{Y}$, $\bar{Z}$ and $\bar{I}$ be the Pauli operators and identity operator on this virtual qubit, respectively. Then, focusing on the subspace $S=\mathrm{span}\{\ket{\bar{0}}, \ket{\bar{1}}\}$, we can rewrite $P$ as 
\begin{align}
P(\theta) = \cos(\theta) \bar{Z} + \sin(\theta) \bar{X},
\label{eq:matrixRep}
\end{align}
and rewrite the generalized reflections $U(x_{2j-1})$ and $V(x_{2j})$ as
\begin{align}
U(\theta; x_{2j-1})&=\cosp{x_{2j-1}} \bar{I} 
-\i \sinp{x_{2j-1}} [\cosp{\theta} \bar{Z} 
\nonumber\\
&\quad + \sinp{\theta} \bar{X}]
\label{eq:u_theta_x}
\end{align}
and 
\begin{align}
V(x_{2j})=\cosp{x_{2j}} \bar{I} - \i \sinp{x_{2j}} \bar{Z},
\label{eq:v_x}
\end{align}
where $x_{2j-1}, x_{2j} \in \mathbb{R}$ are tunable parameters. Then the unitary operator $Q(\vec x)$ implemented by the $L$-layer circuit becomes
\begin{align}
Q(\theta; \vec x)=V(x_{2L})U(\theta; x_{2L-1})\ldots V(x_2)U(\theta; x_1).
\label{eq:q_theta_vecx}
\end{align}
Note that in this picture, $\ket{A}=\ket{\bar{0}}$ is fixed, while $P=P(\theta)$, $U(x)=U(\theta; x)$ and $Q(\vec x)=Q(\theta; \vec x)$ depend on the unknown quantity $\theta$. It turns out to be more convenient to design and analyze the estimation algorithms in this ``logical" picture than in the original ``physical" picture. Therefore, we will stick to this picture for the remainder of this paper.

The engineered likelihood function (i.e. the probability distribution of measurement outcome $d \in \{0, 1\}$) depends on the output state $\rho(\theta; \vec x)$ of the circuit and the observable $P(\theta)$. Precisely, it is 
\begin{align}
\mathbb{P}(d|\theta; \vec x)=\dfrac { 
1+(-1)^d \Delta(\theta; \vec x) }{2},
\label{eq:lf_af}
\end{align}
where
\begin{align}
\Delta(\theta; \vec x) = \bra{\bar{0}} Q^\dagger(\theta;\vec x)P(\theta)Q(\theta;\vec x) \ket{\bar{0}}
\label{eq:delta_af}
\end{align}
is the $\emph{bias}$ of the likelihood function (from now on, we will use $\mathbb{P}'(d|\theta; \vec x)$ and $\Delta'(\theta;\vec x)$ to denote the partial derivatives of $\mathbb{P}(d|\theta; \vec x)$ and $\Delta(\theta; \vec x)$ with respect to $\theta$, respectively). In particular, if $\vec x=(\frac{\pi}{2}, \frac{\pi}{2}, \dots, \frac{\pi}{2}, \frac{\pi}{2})$, then we have
$\Delta(\theta; \vec x)=\cosp{(2L+1)\theta}$. Namely, the bias of the likelihood function for this $\vec x$ is the Chebyshev polynomial of degree $2L+1$ (of the first kind) of $\Pi$. For this reason, we will call the likelihood function for this $\vec x$ the \emph{Chebyshev likelihood function} (CLF). In Section \ref{sec:simulatin_results} we will explore the performance gap between CLFs and general ELFs.

In reality, quantum devices are subject to noise. To make the estimation process robust against errors, we incorporate the following exponential decay noise model into the likelihood function \cite{wiebe2016efficient} \footnote{In practice, the establishment of the noise model requires a procedure for \emph{calibrating the likelihood function} for the specific device being used. With respect to Bayesian inference, the parameters of this model are known as \emph{nuisance parameters} \cite{jaynes2003probability, royall2000probability}; the target parameter does not depend directly on them, but they determine how the data relates to the target parameter and, hence, should be incorporated into the inference process. We will explore likelihood function calibration in future work. For the remainder of this article, we will assume that the noise model has been calibrated to sufficient precision so as to render the effect of model error negligible. In Section \ref{sec:outlook} we touch upon the relationship between model error and estimation error.}. 
Recently this exponential decay noise model of the likelihood function has been used in several related works \cite{brown2020quantum, uno2020modified} and validated in small scale experiments \cite{tanaka2020amplitude}. It has also been used in an 18-qubit experiment on spectrum estimation \cite{aleiner2020accurately}. Furthermore, it is closely related to the noise analysis carried out in \cite{arute2019quantum}.
Validation at large scales remains an important line of research, which we emphasize in Section \ref{sec:outlook} and leave to future work.
Letting $p$ be the exponential decay factor, we have
\begin{align}
\mathbb{P}(d|\theta; \bar{p}, p, \vec x)=\frac 12 \left[ 
1+(-1)^d \bar{p}p^L\Delta(\theta; \vec x) \right],
\label{eq:noisy_elf}
\end{align}
where $\bar{p}$ accounts for SPAM error (c.f. Appendix \ref{sec:app_noise_model}) and $\Delta(\theta,\vec{x})$ is the bias of the ideal likelihood function as defined in Eq.~(\ref{eq:delta_af}). From now on, we will use $f=\bar{p}p^L$ as the \emph{fidelity} of the whole process for generating the ELF, and call $p$ the layer fidelity. Moreover, for convenience, we will write $\mathbb{P}(d|\theta; \bar{p}, p, \vec x)$ simply as $\mathbb{P}(d|\theta; f, \vec x)$ (we will also use $\mathbb{P}'(d|\theta; f, \vec x)$ to denote the partial derivative of $\mathbb{P}(d|\theta; f, \vec x)$ with respect to $\theta$). Note that the effect of noise on the ELF is that it rescales the bias by a factor of $f$. This implies that the less errored the generation process is, the steeper the resultant ELF is, as one would expect.

Before moving on to the discussion of Bayesian inference with ELFs, it is worth mentioning the following property of engineered likelihood functions, as it will play a pivotal role in Section \ref{sec:optalg}. In \cite{koh2020framework}, we introduced the concepts of \emph{trigono-multilinear} and \emph{trigono-multiquadratic} functions. Basically, a multivariable function $f: \mathbb{R}^k \to \mathbb{C}$ is trigono-multilinear if for any $j \in \{1,2,\dots,k\}$, we can write $f(x_1,x_2,\dots,x_k)$ as
\begin{align}
f(x_1,x_2,\dots,x_k) = C_j(\vec x_{\neg j})\cosp{x_j}+S_j(\vec x_{\neg j})\sinp{x_j},
\end{align}
for some (complex-valued) functions $C_j$ and $S_j$ of $\vec x_{\neg j}:=(x_1,\dots,x_{j-1}, x_{j+1}, x_k)$, and we call $C_j$ and $S_j$ the \emph{cosine-sine-decomposition (CSD)} coefficient functions of $f$ with respect to $x_j$. Similarly, a multivariable function $f: \mathbb{R}^k \to \mathbb{C}$ is trigono-multiquadratic if for any $j \in \{1,2,\dots,k\}$, we can write $f(x_1,x_2,\dots,x_k)$ as
\begin{align}
f(x_1,x_2,\dots,x_k) &= C_j(\vec x_{\neg j})\cosp{2x_j} +S_j(\vec x_{\neg j})\sinp{2x_j} \nonumber \\
&\quad +B_j(\vec x_{\neg j}),
\end{align}
for some (complex-valued) functions $C_j$, $S_j$ and $B_j$ of $\vec x_{\neg j}:=(x_1,\dots,x_{j-1}, x_{j+1}, x_k)$, and we call $C_j$, $S_j$ and $B_j$ the \emph{cosine-sine-bias-decomposition (CSBD)} coefficient functions of $f$ with respect to $x_j$. The concepts of trigono-multilinearity and trigono-multiquadraticity can be  naturally generalized to linear operators. Namely, a linear operator is trigono-multilinear (or trigono-multiquadratic) in a set of variables if each entry of this operator (written in an arbitrary basis) is trigono-multilinear (or trigono-multiquadratic) in the same variables. Now Eqs.~(\ref{eq:u_theta_x}), (\ref{eq:v_x}) and (\ref{eq:q_theta_vecx}) imply that $Q(\theta;\vec x)$ is a trigono-multilinear operator of $\vec x$. Then it follows from Eq.~(\ref{eq:delta_af}) that $\Delta(\theta; \vec x)$ is a trigono-multiquadratic function of $\vec x$. Furthermore, we will show in Section \ref{subsec:proxyopt} that the CSBD coefficient functions of $\Delta(\theta; \vec x)$ with respect to any $x_j$ can be evaluated in $O(L)$ time, and this greatly facilitates the construction of the algorithms in Section \ref{subsec:proxyopt} for tuning the circuit parameters $\vec x=(x_1, x_2, \dots, x_{2L-1}, x_{2L})$.

\subsection{Bayesian inference with engineered likelihood functions}
\label{subsec:bi_elf}

With the model of (noisy) engineered likelihood functions in place, we are now ready to describe our methodology for tuning the circuit parameters $\vec x$ and performing Bayesian inference with the resultant likelihood functions for amplitude estimation.

Let us begin with a high-level overview of our algorithm for estimating $\Pi=\cosp{\theta}=\bra{A} P \ket{A}$. For convenience, our algorithm mainly works with $\theta=\arccosp{\Pi}$ instead of $\Pi$. We use a Gaussian distribution to represent our knowledge of $\theta$ and make this distribution gradually concentrate around the true value of $\theta$ as the inference process proceeds. We start with an initial distribution of $\Pi$ (which can be generated by standard sampling or domain knowledge) and convert it to the initial distribution of $\theta$. Then we repeat the following procedure until a convergence criterion is satisfied. At each round, we first find the circuit parameters $\vec x$ that maximize the information gain from the measurement outcome $d$ in certain sense (based on our current knowledge of $\theta$). Then we run the quantum circuit in Figure \ref{fig:circuit_diagram} with the optimized parameters $\vec x$ and receive a measurement outcome $d \in \{0, 1\}$. Finally, we update the distribution of $\theta$ by using Bayes' rule, conditioned on the received outcome $d$. Once this loop is finished, we convert the final distribution of $\theta$ to the final distribution of $\Pi$, and set the mean of this distribution as the final estimate of $\Pi$. See Figure \ref{fig:ELF_diagram} for the conceptual diagram of this algorithm.

\tikzstyle{orangeBlock} = [rectangle, rounded corners, minimum width=2.8cm, minimum height=1.25cm,text centered, draw=black,fill=orange!12]
\tikzstyle{blueBlock} = [rectangle, rounded corners, minimum width=2.6cm, minimum height=1.25cm,text centered, draw=black,fill=blue!12]
\tikzstyle{blueDiamond} = [diamond, rounded corners, minimum width=2.6cm, minimum height=1cm,text centered, draw=black,fill=blue!12]
\tikzstyle{redBlock} = [rectangle, rounded corners, minimum width=2.8cm, minimum height=1.25cm,text centered, draw=black,fill=red!12]
\tikzstyle{arrow} = [->,>=stealth]

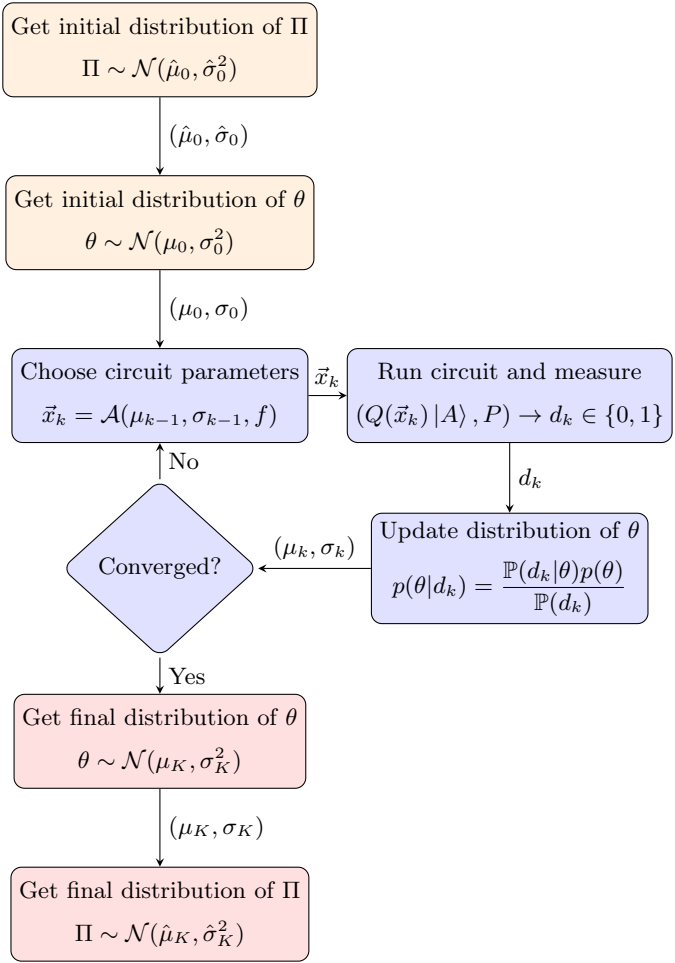
\begin{figure}[!ht]
\begin{tikzpicture}[node distance=2.3cm, auto]
  % Place nodes
  \node [redBlock, align=center] (05) {Get final distribution of $\Pi$ \\[0.2cm] 
  $\Pi \sim \mathcal{N}(\hat{\mu}_K, \hat{\sigma}_K^2)$ };
  \node [redBlock, above of=05, align=center] (00) {Get final distribution of $\theta$ \\[0.2cm] 
  $\theta \sim \mathcal{N}(\mu_K, \sigma_K^2)$ };
  \node [blueDiamond, above of=00, align=center] (01) {Converged?};
  \node [blueBlock, above of=01, align=center] (02) {Choose circuit parameters \\[0.2cm] $\vec{x}_k = \mathcal A(\mu_{k-1},\sigma_{k-1},f)$};
  \node [orangeBlock, above of=02, align=center] (03) {Get initial distribution of $\theta$ \\[0.2cm] $\theta \sim \mathcal N(\mu_0, \sigma_0^2)$};   
  \node [orangeBlock, above of=03, align=center] (04) {Get initial distribution of $\Pi$ \\[0.2cm] $\Pi \sim \mathcal N(\hat{\mu}_0, \hat{\sigma}_0^2)$};   
  \node [blueBlock, right of=01, align=center, xshift=2.35cm] (11) {
  Update distribution of $\theta$ \\[0.2cm] 
  $\displaystyle{p(\theta|d_k) = \frac{\mathbb{P}(d_k|\theta)p(\theta)}{\mathbb{P}(d_k)}}$
  };
  \node [blueBlock, above of=11, align=center] (12) {Run circuit and measure \\[0.2cm]
  $\displaystyle{(Q(\vec{x}_k)\ket A,P) \rightarrow d_k \in \{0,1\}}$};
  
  % Draw edges
  \draw [arrow] (01) -- node[anchor=west] {Yes} (00);
  \draw [arrow] (01) -- node[anchor=west] {No} (02);
  \draw [arrow] (03) -- node[anchor=west] {$(\mu_0, \sigma_0)$} (02);
  \draw [arrow] (04) -- node[anchor=west] {$(\hat{\mu}_0, \hat{\sigma}_0)$} (03);
  \draw [arrow] (02) -- node[anchor=south] {$\vec{x}_k$} (12);
  \draw [arrow] (11) -- node[anchor=south] {$(\mu_k,\sigma_k)$} (01);
  \draw [arrow] (00) -- node[anchor=west] {$(\mu_K,\sigma_K)$} (05);
  \draw [arrow] (12) -- node[anchor=west] {$d_k$} (11);   
\end{tikzpicture}
\caption{High-level flowchart of the algorithm for estimating $\Pi=\cosp{\theta}=\bra{A} P \ket{A}$. Here $f$ is the fidelity of the process for generating the ELF. This algorithm mainly works with $\theta$ instead of $\Pi$, and there are conversions between the distributions of $\theta$ and $\Pi$ at the beginning and end of the algorithm. The final estimate of $\Pi$ is $\hat{\mu}_K$. Note that only the ``Run circuit and measure'' step involves a quantum device.}
\label{fig:ELF_diagram}
\end{figure}

Next, we describe each component of the above algorithm in more detail. Throughout the inference process, we use a Gaussian distribution to keep track of our belief of the value of $\theta$. Namely, at each round, $\theta$ has prior distribution
\begin{align}
\label{eq:gaussianDistribution}
    p(\theta) = p(\theta ; \mu, \sigma) := \frac{1}{\sqrt{2 \pi} \sigma} \e^{-\frac{(\theta-\mu)^{2}}{2 \sigma^{2}}}
\end{align}
for some prior mean $\mu\in \mathbb R$ and prior variance $\sigma^2 \in \mathbb R^+$. After receiving the measurement outcome $d$, we compute the posterior distribution of $\theta$ by using Bayes' rule
\begin{align}
 p(\theta|d; f, \vec x)=\frac{\mathbb{P}(d|\theta; f, \vec{x})p(\theta)}{\mathbb{P}(d; f, \vec x)},
 \label{eq:bayes_update}
\end{align}
where the normalization factor, or model evidence, is defined as $\mathbb{P}(d; f, \vec x)=\int \mathbb{P}(d|\theta; f, \vec{x})p(\theta) \d \theta$ (recall that $f$ is the fidelity of the process for generating the ELF). Although the true posterior distribution will not be a Gaussian, we will approximate it as such. Following the methodology in \cite{granade2012robust}, we replace the true posterior with a Gaussian distribution of the same mean and variance \footnote{Although we can compute the mean and variance of the posterior distribution $p(\theta|d; f, \vec x)$ directly by definition, this approach is time-consuming, as it involves numerical integration. Instead, we accelerate this process by taking advantage of certain property of engineered likelihood functions. See Section \ref{sec:bi_elf_af} for more details.}, and set it as the prior of $\theta$ for the next round. We repeat this measurement-and-Bayesian-update procedure until the distribution of $\theta$ is sufficiently concentrated around a single value.

Since the algorithm mainly works with $\theta$ and we are eventually interested in $\Pi$, we need to make conversions between the estimators of $\theta$
 and $\Pi$. This is done as follows. Suppose that at round $k$ 
the prior distribution of $\theta$ is $\mathcal{N}(\mu_k, \sigma_k^2)$ and the prior distribution of $\Pi$ is $\mathcal{N}(\hat{\mu}_k, \hat{\sigma}_k^2)$ (note that $\mu_k$, $\sigma_k$, $\hat{\mu}_k$ and $\hat{\sigma}_k$ are random variables as they depend on the history of random measurement outcomes up to round $k$). The estimators of $\theta$ and $\Pi$ at this round are $\mu_k$ and $\hat{\mu}_k$, respectively. Given the distribution $\mathcal{N}(\mu_k, \sigma_k^2)$ of $\theta$, 
we compute the mean $\hat{\mu}_k$ and variance $\hat{\sigma}_k^2$ of $\cosp{\theta}$, and set $\mathcal{N}(\hat{\mu}_k, \hat{\sigma}_k^2)$ as the distribution of $\Pi$. This step can be done analytically, as if $X \sim \mathcal{N}(\mu, \sigma^2)$, then 
\begin{align}
\mathbb{E}[\cosp{X}] &= e^{-\frac{\sigma^2}{2}} \cosp{\mu}, 
\label{eq:conversion_mean}\\
\mathrm{Var}[\cosp{X}] &= \dfrac{\lrb{1-e^{-\sigma^2}} \lrb{1-e^{-\sigma^2}\cosp{2\mu}}}{2}.
\label{eq:conversion_var}
\end{align}
Conversely, given the distribution $\mathcal{N}(\hat{\mu}_k, \hat{\sigma}_k^2)$ of $\Pi$, we compute the mean $\mu_k$ and variance $\sigma_k^2$ of $\arccosp{\Pi}$ (clipping $\Pi$ to $[-1, 1]$), and set $\mathcal{N}(\mu_k, \sigma_k^2)$ as the distribution of $\theta$. This step is done numerically. Even though the $\cos$ or $\arccos$ function of a Gaussian variable is not truly Gaussian, we approximate it as such and find that this has negligible impact on the performance of the algorithm. 

Our method for tuning the circuit parameters $\vec x$ is as follows. Ideally, we want to choose them carefully so that the mean squared error (MSE) of the estimator $\mu_k$ of $\theta$ decreases as fast as possible as $k$ grows. In practice, however, it is hard to compute this quantity directly, and we must resort to a proxy of its value. The MSE of an estimator is a sum of the variance of the estimator and the squared bias of the estimator. We find that for large enough $k$, the squared bias of $\mu_k$ is smaller than its variance, i.e. $\operatorname{Bias}(\mu_k)^2=(\mathbb{E}[\mu_k] - \theta^*)^2 < \operatorname{Var}(\mu_k)$, where $\theta^*$ is the true value of $\theta$. We also find that for large enough $k$, the variance $\sigma_k^2$ of $\theta$ is often close to the variance of $\mu_k$, i.e. $\sigma_k^2 \approx \operatorname{Var}(\mu_k)$ with high probability (see Appendix \ref{sec:evidences} for evidences for these claims). Combining these facts, we know that for large enough $k$, $\operatorname{MSE}(\mu_k) =\mathbb{E}[(\mu_k - \theta^*)^2] \le 2 \sigma_k^2$ with high probability. So we will find the parameters $\vec x$ that minimize the variance $\sigma_k^2$ of $\theta$ instead.

Specifically, suppose $\theta$ has prior distribution $\mathcal N(\mu, \sigma^2)$. Upon receiving the measurement outcome $d \in \{0, 1\}$, the expected posterior variance \cite{koh2020framework} of $\theta$ is
\begin{align}
\mathbb{E}_{\mathrm d}[\operatorname{Var}(\theta | d; f, \vec x)] 
=\sigma^2\left(1-\sigma^2\frac{f^2(\partial_{\mu} b(\mu,\sigma;\vec x))^2}{1-f^2 (b(\mu,\sigma;\vec x))^2}\right),
\end{align}
where 
\begin{align}
    b(\mu,\sigma;\vec x) &= \int_{-\infty}^\infty  \ p(\theta;\mu,\sigma) \Delta(\theta;\vec x) \d \theta
    \label{eq:b}
\end{align}
in which $\Delta(\theta;\vec x)$ is the bias of the ideal likelihood function as defined in Eq.~(\ref{eq:delta_af}), and $f$ is the fidelity of the process for generating the likelihood function. We introduce an important quantity for engineering likelihood functions that we refer to as the \emph{variance reduction factor},
\begin{align}
    \mathcal{V}(\mu, \sigma;f, \vec{x}):=\frac{f^2(\partial_{\mu} b(\mu,\sigma;\vec x))^2}{1-f^2(b(\mu,\sigma;\vec x))^2}.
    \label{eq:varReductionFactor}
\end{align}
Then we have 
\begin{align}
    \mathbb{E}_{\mathrm d}[\operatorname{Var}(\theta | d; f, \vec x)] = \sigma^2 \lrbb{1- \sigma^2 \mathcal{V}(\mu, \sigma;f, \vec{x})}.
\end{align}
The larger $\mathcal{V}$ is, the faster the variance of $\theta$ decreases on average. Furthermore, to quantify the growth rate (per time step) of the inverse variance of  $\theta$, we introduce the following quantity 
\begin{align}
\label{eq:fisherrate}
R(\mu, \sigma; f, \vec{x})&:=\frac{1}{T(L)}\left(\frac{1}{\mathbb{E}_{\mathrm d}[\operatorname{Var}(\theta | d; f, \vec x)]}-\frac{1}{\sigma^2}\right) \\
 &= \frac{1}{T(L)}\frac{{\mathcal{V}}(\mu, \sigma;f, \vec{x})}{1-\sigma^2{\mathcal{V}}(\mu, \sigma;f, \vec{x})},
\end{align}
where $T(L)$ is the duration of the $L$-layer circuit in Figure \ref{fig:circuit_diagram} (recall that $\vec x \in \mathbb{R}^{2L}$). Note that $R$ is a monotonic function of $\mathcal{V}$ for $\mathcal{V}\in (0, 1)$. Therefore, when $L$ is fixed, we can maximize $R$ (with respect to $\vec{x}$) by maximizing $\mathcal{V}$. In addition, when $\sigma$ is small, $R$ is approximately proportional to $\mathcal{V}$, i.e. ${R} \approx \mathcal{V}/T(L)$. For the remainder of this work, we will assume that the ansatz circuit $A$ and its inverse $A^{\dagger}$ contribute most significantly to the duration of the overall circuit. So we will take $T(L)$ to be proportional to the number of times $A$ or $A^{\dagger}$ is invoked in the circuit, setting $T(L)=2L+1$, where time is in units of $A$'s duration.

So now we need to find the parameters $\vec x=(x_1, x_2, \dots, x_{2L}) \in \mathbb{R}^{2L}$ that maximize the variance reduction factor $\mathcal{V}(\mu, \sigma; f, \vec x)$ for given $\mu \in \mathbb{R}$, $\sigma \in \mathbb{R}^+$ and $f \in [0, 1]$. This optimization problem turns out to be difficult to solve in general. Fortunately, in practice, we may assume that the prior variance $\sigma^2$ of $\theta$ is small (e.g. at most $0.01$), and in this case, $\mathcal V(\mu, \sigma; f, \vec x)$ can be approximated by the Fisher information of the likelihood function $\mathbb{P}(d|\theta; f, \vec x)$ at $\theta=\mu$, as shown in Appendix \ref{sec:proxy_comparison}, i.e.
\begin{align}
\mathcal{V}(\mu, \sigma; f, \vec x) \approx \mathcal{I}(\mu; f, \vec x), \quad \mbox{when $\sigma$ is~small},
\label{eq:v_fi_af}
\end{align}
where 
\begin{align}
\mathcal{I}(\theta; f, \vec x) 
&= \mathbb{E}_d \lrbb{\lrb{\frac{\partial}{\partial {\theta}} \log{\mathbb{P}(d| \theta; f, \vec x)}}^2}
\\
&= \dfrac{f^2 (\Delta'(\theta; \vec x))^2}{1 - f^2 (\Delta(\theta; \vec x))^2}
\label{eq:fi_af_def}
\end{align}
is the Fisher information of the two-outcome likelihood function $\mathbb{P}(d| \theta; f, \vec x)$ as defined in Eq.~(\ref{eq:noisy_elf}). Therefore, rather than directly optimizing the variance reduction factor $\mathcal{V}(\mu, \sigma; f, \vec x)$, we optimize the Fisher information $\mathcal{I}(\mu; f, \vec x)$, which can be done efficiently by the algorithms in Section \ref{subsubsec:opt_fi_af}. Furthermore, when the fidelity $f$ of the process for generating the ELF is low, we have $\mathcal{I}(\theta; f, \vec x) \approx f^2 (\Delta'(\theta; \vec x))^2$. It follows that
\begin{align}
\mathcal{V}(\mu, \sigma; f, \vec x) &\approx f^2 (\Delta'(\mu; \vec x))^2, \quad \mbox{when $\sigma$ and $f$ are small}.  
\label{eq:v_slope_af}
\end{align}
So in this case, we can simply optimize $|\Delta'(\mu; \vec x)|$, which is proportional to the slope of the likelihood function $\mathbb{P}(d|\theta; f, \vec x)$ at $\theta=\mu$, and this task can be accomplished efficiently by the algorithms in Section \ref{subsubsec:opt_slope_af}.

Finally, we make a prediction on how fast the MSE of the estimator $\hat{\mu}_k$ of $\Pi$ decreases as $k$ grows, under the assumption that the number $L$ of circuit layers is fixed during the inference process. Note that $\operatorname{MSE}(\hat{\mu}_k) = \Theta(\frac{1}{k})$ as $k \to \infty$ in this case. Let $\theta^*$ and $\Pi^*$ be the true values of $\theta$ and $\Pi$, respectively. Then as $k \to \infty$, we have $\mu_k \to \theta^*$, $\sigma_k \to 0$, $\hat{\mu}_k \to \Pi^*$ and $\hat{\sigma}_k \to 0$ with high probability. When this event happens, we get that for large $k$,
\begin{align}
\dfrac{1}{\sigma_{k+1}^2} - \dfrac{1}{\sigma_k^2}
 \approx \mathcal{I}(\mu_k; f, \vec x_k).
\end{align}
Consequently, by Eq.~(\ref{eq:conversion_var}), we know that for large $k$,
\begin{align}
\dfrac{1}{\hat{\sigma}_{k+1}^2} - \dfrac{1}{\hat{\sigma}_k^2}
 \approx \dfrac{\mathcal{I}(\mu_k; f, \vec x_k)}{\sinpt{\mu_k}},
\end{align}
where $\mu_k \approx \arccosp{\hat{\mu}_k}$. Since $\operatorname{Bias}(\hat{\mu}_k)^2 \ll \operatorname{Var}(\hat{\mu}_k) \approx \hat{\sigma}_k^2$ for large $k$, we predict that 
\begin{align}
\operatorname{MSE}(\hat{\mu}_k) \approx \dfrac{ 1-\hat{\mu}_k^2}{k\mathcal{I}(\arccosp{\hat{\mu}_k}; f, \vec x_k)}.
\end{align}
This means that the asymptotic growth rate (per time step) of the inverse MSE of $\hat{\mu}_k$ should be roughly 
\begin{align}
\hat{R}_0(\Pi^*; f, \vec x) := \dfrac{\mathcal{I}(\arccosp{\Pi^*}; f, \vec x)}{(2L+1)(1-(\Pi^*)^2)},
\label{eq:r0_factor}
\end{align}
where $\vec x \in \mathbb{R}^{2L}$ is chosen such that $\mathcal{I}(\arccosp{\Pi^*}; f, \vec x)$ is maximized. We will compare this rate with the empirical growth rate (per time step) of the inverse MSE of $\hat{\mu}_k$ in Section \ref{sec:simulatin_results}.

\section{Efficient heuristic algorithms for circuit parameter tuning and Bayesian inference}
\label{sec:optalg}

In this section, we present heuristic algorithms for tuning the parameters $\vec x$ of the circuit in Figure~\ref{fig:circuit_diagram} and describe how to efficiently carry out Bayesian inference with the resultant likelihood functions. 

\subsection{Efficient maximization of proxies of the variance reduction factor}
\label{subsec:proxyopt}
Our algorithms for tuning the circuit parameters $\vec x$ are based on maximizing
two proxies of the variance reduction factor $\mathcal{V}$ -- the Fisher information and slope of the likelihood function $\mathbb{P}(d|\theta; f, \vec x)$. All of these algorithms require efficient procedures for evaluating the CSBD coefficient functions of $\Delta(\theta; \vec x)$ and $\Delta'(\theta; \vec x)$ with respect to $x_j$ for $j=1,2,\dots, 2L$. Recall that we have shown in Section \ref{subsec:qc_elf} that $\Delta(\theta;\vec x)$ is trigono-multiquadratic in $\vec x$. Namely, for any $j \in \{1,2,\dots, 2L\}$, there exist functions $C_j(\theta; \vec x_{\neg j})$, $S_j(\theta; \vec x_{\neg j})$ and $B_j(\theta; \vec x_{\neg j})$ of $\vec x_{\neg j}:=(x_1, \dots, x_{j-1}, x_{j+1}, \dots, x_{2L})$ such that
\begin{align}
\Delta(\theta; \vec x) &= C_j(\theta; \vec x_{\neg j}) \cosp{2 x_j} + S_j(\theta; \vec x_{\neg j}) \sinp{2 x_j} \nonumber \\
&\quad + B_j(\theta; \vec x_{\neg j}).
\end{align}
It follows that
\begin{align}
\Delta'(\theta; \vec x) &= C_j'(\theta; \vec x_{\neg j}) \cosp{2 x_j} + S_j'(\theta; \vec x_{\neg j}) \sinp{2 x_j} \nonumber \\
&\quad + B'_j(\theta; \vec x_{\neg j})
\end{align}
is also trigono-multiquadratic in $\vec x$, where $C'_j(\theta;\vec x_{\neg j}) = \partial_{\theta} C_j(\theta; \vec x_{\neg j})$, $S'_j(\theta;\vec x_{\neg j}) = \partial_{\theta} S_j(\theta; \vec x_{\neg j})$, $B'_j(\theta;\vec x_{\neg j}) = \partial_{\theta} B_j(\theta; \vec x_{\neg j})$ are the partial derivatives of $C_j(\theta; \vec x_{\neg j})$, $S_j(\theta; \vec x_{\neg j})$, $B_j(\theta; \vec x_{\neg j})$
with respect to $\theta$, respectively. It turns out that given $\theta$ and $\vec x_{\neg j}$, $C_j(\theta; \vec x_{\neg j})$, each of $S_j(\theta; \vec x_{\neg j})$, $B_j(\theta; \vec x_{\neg j})$,
$C'_j(\theta; \vec x_{\neg j})$, $S'_j(\theta; \vec x_{\neg j})$ 
and $B'_j(\theta; \vec x_{\neg j})$ can be computed in $O(L)$ time.

\begin{lemma}
Given $\theta$ and $\vec x_{\neg j}$, each of
$C_j(\theta; \vec x_{\neg j})$, $S_j(\theta; \vec x_{\neg j})$, 
$B_j(\theta; \vec x_{\neg j})$,
$C'_j(\theta; \vec x_{\neg j})$, $S'_j(\theta; \vec x_{\neg j})$ 
and $B'_j(\theta; \vec x_{\neg j})$ can be computed in $O(L)$ time.
\label{lem:eval_coeff_func_af}
\end{lemma}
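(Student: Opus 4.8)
The plan is to exploit that, restricted to the virtual qubit $S=\mathrm{span}\{\ket{\bar 0},\ket{\bar 1}\}$, every factor of $Q(\theta;\vec x)$ is a $2\times 2$ matrix, so individual matrix--matrix and matrix--vector products cost $O(1)$ and the only quantity scaling with the circuit is the number $2L$ of factors. Writing $Q(\theta;\vec x)=G_{2L}G_{2L-1}\cdots G_1$ with $G_{2m-1}=U(\theta;x_{2m-1})$ and $G_{2m}=V(x_{2m})$, I would isolate the single factor $G_j$ carrying $x_j$,
\[
Q(\theta;\vec x)=W_j\,G_j\,\Phi_{j-1},\qquad W_j:=G_{2L}\cdots G_{j+1},\quad \Phi_{j-1}:=G_{j-1}\cdots G_1 .
\]
Setting $\ket{\psi}:=\Phi_{j-1}\ket{\bar 0}$ and $M:=W_j^\dagger P(\theta)W_j$, Eq.~(\ref{eq:delta_af}) becomes $\Delta(\theta;\vec x)=\bra{\psi}G_j^\dagger M G_j\ket{\psi}$, which isolates all of the $x_j$-dependence in the single factor $G_j$.

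I would then read off the CSBD coefficients. Each factor has the form $G_j=\cos(x_j)\,G_j^c-\i\sin(x_j)\,G_j^s$ with $G_j^c=\bar I$ in both cases, $G_j^s=\bar Z$ for even $j$, and $G_j^s=P(\theta)$ for odd $j$. Substituting this into $\bra{\psi}G_j^\dagger M G_j\ket{\psi}$ and applying the double-angle identities $\cos^2(x_j)=\tfrac12(1+\cos(2x_j))$, $\sin^2(x_j)=\tfrac12(1-\cos(2x_j))$, $\cos(x_j)\sin(x_j)=\tfrac12\sin(2x_j)$ puts $\Delta$ in the form $C_j\cos(2x_j)+S_j\sin(2x_j)+B_j$, where $C_j,S_j,B_j$ are explicit quadratic forms $\tfrac12\bra{\psi}(\cdots)\ket{\psi}$ assembled from $\ket{\psi}$, $M$, $G_j^c$, $G_j^s$. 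Once $\ket{\psi}$ and $M$ are available, each of these three numbers costs $O(1)$.

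Next I would bound the cost of producing $\ket{\psi}$ and $M$. The state $\ket{\psi}=\ket{\phi_{j-1}}$ comes from the forward recursion $\ket{\phi_i}=G_i\ket{\phi_{i-1}}$ with $\ket{\phi_0}=\ket{\bar 0}$, and $W_j$ from the backward recursion $W_{i-1}=W_iG_i$ with $W_{2L}=\bar I$; each uses at most $2L$ single-step updates, so $\ket{\psi}$ and $M$ are obtained in $O(L)$ time. For the derivative coefficients $C_j'=\partial_\theta C_j$, $S_j'$, $B_j'$, the only new ingredients are $\partial_\theta\ket{\psi}$, $\partial_\theta M$, and, for odd $j$, $\partial_\theta G_j^s=-\sin(\theta)\bar Z+\cos(\theta)\bar X$; differentiating the explicit quadratic forms by the product rule then again costs $O(1)$.

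The step I expect to be the main obstacle is keeping the $\theta$-derivatives at $O(L)$ rather than $O(L^2)$: naively differentiating the length-$(j{-}1)$ product $\Phi_{j-1}$ (or $W_j$) term by term yields $O(L)$ summands, each an $O(L)$ product. I would avoid this by propagating derivatives alongside the recursions in adjoint (backpropagation) style, carrying the pairs $(\ket{\phi_i},\partial_\theta\ket{\phi_i})$ and $(W_i,\partial_\theta W_i)$ and updating via $\partial_\theta\ket{\phi_i}=(\partial_\theta G_i)\ket{\phi_{i-1}}+G_i\,\partial_\theta\ket{\phi_{i-1}}$ and the analogous rule for $W_i$, using $\partial_\theta G_i=0$ on the $V$-factors. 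Each update stays $O(1)$, so all six quantities are produced in $O(L)$ time, proving the claim.
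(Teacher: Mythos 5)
Your proposal is correct, and its core skeleton matches the paper's: work in the two-dimensional virtual-qubit picture so every factor is a $2\times 2$ matrix, isolate the unique occurrence of $x_j$, write $\Delta$ as a conjugation $\bra{\psi}G_j^\dagger M G_j\ket{\psi}$ with $G_j=\cos(x_j)\bar I-\i\sin(x_j)G_j^s$ and $G_j^s$ Hermitian, and read off the $\cos(2x_j)$, $\sin(2x_j)$, and bias components via double-angle identities --- this is exactly the role played by the paper's identities for $\bra{\bar 0}A\,V(-x)\,B\,V(x)\,C\ket{\bar 0}$ and $\bra{\bar 0}A\,U(\theta;-x)\,B\,U(\theta;x)\,C\ket{\bar 0}$, with your $M=W_j^\dagger P(\theta)W_j$ absorbing the mirrored occurrence of $x_j$ in $Q^\dagger$ that the paper tracks explicitly. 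Where you genuinely diverge is in the treatment of the $\theta$-derivative. The paper expands $Q'(\theta;\vec x)$ as an explicit sum of $L$ terms, one per derivative insertion into a $U$-factor, partitions those terms according to whether the insertion lies to the left or right of position $j$, and evaluates the resulting partial sums (its $A_t^{(i)}, B_t^{(i)}, C_t^{(i)}$) term by term using a table of precomputed prefix and suffix products --- each term is then an $O(1)$ product of cached $2\times 2$ matrices, giving $O(L)$ total but at the cost of a lengthy two-case bookkeeping. You instead propagate $\partial_\theta$ forward and backward through the recursions for $\ket{\phi_i}$ and $W_i$ via the product rule, which computes the same sums incrementally in $O(1)$ per step without ever writing them down. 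Your correctly identified obstacle --- that naive term-by-term differentiation gives $O(L)$ summands each costing $O(L)$ --- is precisely the issue the paper's dynamic-programming table is designed to defeat; your adjoint-style recursion defeats it more compactly and with a single uniform argument for even and odd $j$ (modulo the extra $\partial_\theta G_j^s=P'(\theta)$ term for odd $j$, which you handle). Both routes deliver the claimed $O(L)$ bound.
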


\begin{proof}
See Appendix \ref{sec:eval_coeff_func_af}.
\end{proof}

\subsubsection{Maximizing the Fisher information of the likelihood function}
\label{subsubsec:opt_fi_af}

We propose two algorithms for maximizing the Fisher information of the likelihood function $\mathbb{P}(d|\theta; f, \vec x)$ at a given point $\theta=\mu$ (i.e. the prior mean of $\theta$). Namely, our goal is to find $\vec x \in \mathbb{R}^{2L}$ that maximizes
\begin{align}
\mathcal{I}(\mu; f, \vec x) = \dfrac{f^2 (\Delta'(\mu; \vec x))^2}{1 - f^2 \Delta(\mu; \vec x)^2}.
\end{align}

The first algorithm is based on \emph{gradient ascent}. Namely, it starts with a random initial point, and keeps taking steps proportional to the gradient of $\mathcal{I}$ at the current point, until a convergence criterion is satisfied. Specifically, let $\vec x^{(t)}$ be the parameter vector at iteration $t$. We update it as follows:
\begin{align}
\vec x^{(t+1)} = \vec x^{(t)} + \delta(t) \nabla{\mathcal{I}(\mu; f, \vec x)}|_{\vec x=\vec x^{(t)}}.
\end{align}
where $\delta: \mathbb{Z}^{\ge 0} \to \mathbb{R}^+$ is the step size schedule\footnote{In the simplest case, $\delta(t)=\delta$ is constant. But in order to achieve better performance, we might want $\delta(t) \to 0$ as $t \to \infty$.}. This requires the calculation of the partial derivative of $\mathcal{I}(\mu; f, \vec x)$ with respect to each $x_j$, which can be done as follows. We first use the procedures in Lemma \ref{lem:eval_coeff_func_af}  to compute $C_j:=C_j(\mu; \vec x_{\neg j})$, $S_j:=S_j(\mu; \vec x_{\neg j})$, $B_j:=B_j(\mu; \vec x_{\neg j})$, $C'_j:=C'_j(\mu; \vec x_{\neg j})$, $S'_j:=S'_j(\mu; \vec x_{\neg j})$ and $B'_j:=B'_j(\mu; \vec x_{\neg j})$ for each $j$. Then we get
\begin{align}
\Delta &:= \Delta(\mu; \vec x) \\
&= C_j \cosp{2x_j} + S_j \sinp{2x_j} + B_j, \\
\Delta' &:= \Delta'(\mu; \vec x) \\
& = C'_j \cosp{2x_j} + S'_j \sinp{2x_j} + B'_j, \\
\chi_j &:= \dfrac{\partial \Delta(\mu; \vec x)}{\partial x_j} \\
&= 2 \lrbb{-C_j \sinp{2 x_j} + S_j \cosp{2x_j}},  \\
\chi'_j &:= \dfrac{\partial \Delta'(\mu; \vec x)}{\partial x_j} \\
&= 2 \lrbb{-C'_j \sinp{2 x_j} + S'_j \cosp{2x_j}}.
\end{align}
Knowing these quantities, we can compute the partial derivative of $\mathcal{I}(\mu; f, \vec x)$ with respect to $x_j$ as follows:
\begin{align}
\gamma_j &:= {\dfrac{\partial \mathcal{I}(\mu; f, \vec x)}{\partial x_j}} \\
&= \dfrac{2f^2 \lrbb{(1-f^2 \Delta^2) \Delta' \chi'_j
+f^2 \Delta \chi_j (\Delta')^2}
}{\lrb{1-f^2 \Delta^2}^2}.
\end{align}
Repeat this procedure for $j=1,2,\dots, 2L$. Then we obtain $\nabla{\mathcal{I}(\mu; f, \vec x)}=(\gamma_1, \gamma_2, \dots, \gamma_{2L})$. Each iteration of the algorithm takes $O(L^2)$ time. The number of iterations in the algorithm depends on the initial point, the termination criterion and the step size schedule $\delta$. See Algorithm \ref{alg:ga_opt_v0_af} for more details.

The second algorithm is based on \emph{coordinate ascent}. Unlike gradient ascent, this algorithm does not require step sizes, and allows each variable to change dramatically in a single step. As a consequence, it may converge faster than the previous algorithm. Specifically, this algorithm starts with a random initial point, and successively maximizes the objective function $\mathcal{I}(\mu; f, \vec x)$ along coordinate directions, until a convergence criterion is satisfied. At the $j$-th step of each round, it solves the following single-variable optimization problem for a coordinate $x_j$:
\begin{align}
\argmax_{z} \dfrac{f^2\lrb{C'_j \cosp{2z} + S'_j \sinp{2z} + B'_j
}^2}{1-f^2\lrb{C_j \cosp{2z}+ S_j \sinp{2z} + B_j}^2},
\end{align}
where $C_j= C_j(\mu; \vec x_{\neg j})$, $S_j = S_j(\mu; \vec x_{\neg j})$, $B_j= B_j(\mu; \vec x_{\neg j})$, $C'_j= C'_j(\mu; \vec x_{\neg j})$, $S'_j = S'_j(\mu; \vec x_{\neg j})$, $B'_j= B'_j(\mu; \vec x_{\neg j})$ can be computed in $O(L)$ time by the procedures in Lemma \ref{lem:eval_coeff_func_af}. This single-variable optimization problem can be tackled by standard gradient-based methods, and we set $x_j$ to be its solution. Repeat this procedure for $j=1,2,\dots, 2L$. This algorithm produces a sequence $\vec x^{(0)}$, $\vec x^{(1)}$, $\vec x^{(2)}$, $\dots$, such that $\mathcal{I}(\mu; f, \vec x^{(0)}) 
\le \mathcal{I}(\mu; f, \vec x^{(1)}) \le \mathcal{I}(\mu; f, \vec x^{(2)}) \le \dots$. Namely, the value of $\mathcal{I}(\mu; f, \vec x^{(t)})$ increases monotonically as $t$ grows. Each round of the algorithm takes $O(L^2)$ time. The number of rounds in the algorithm depends on the initial point and the termination criterion. See Algorithm \ref{alg:ca_opt_v0_af} for more details.

\begin{algorithm*}[ht]
 \KwIn{The prior mean $\mu$ of $\theta$, the number $L$ of circuit layers, the fidelity $f$ of the process for generating the ELF, the step size schedule $\delta: \mathbb{Z}^{\ge 0} \to \mathbb{R}^+$, the error tolerance $\epsilon$ for termination.}
 \KwOut{A set of parameters $\vec x = (x_1, x_2, \dots, x_{2L}) \in \mathbb{R}^{2L}$ that are a local maximum point of the function $\mathcal{I}(\mu; f, \vec x)$.}

Choose random initial point $\vec x^{(0)}=(x^{(0)}_1, x^{(0)}_2, \dots, x^{(0)}_{2L}) \in (-\pi, \pi]^{2L}$; \\
$t \leftarrow 0$; \\
\While{True}{

\For{$j \leftarrow 1$ \KwTo $2L$}{
    Let $\vec x^{(t)}_{\neg j} = (x^{(t)}_1, \dots, x^{(t)}_{j-1}, x^{(t)}_{j+1}, \dots, x^{(t)}_{2L})$; \\
    Compute $C^{(t)}_j:= C_j(\mu; \vec x^{(t)}_{\neg j})$, $S^{(t)}_j:= S_j(\mu; \vec x^{(t)}_{\neg j})$, $B^{(t)}_j:= B_j(\mu; \vec x^{(t)}_{\neg j})$, $C'^{(t)}_j:= C'_j(\mu; \vec x^{(t)}_{\neg j})$, $S'^{(t)}_j:= S'_j(\mu; \vec x^{(t)}_{\neg j})$, $B'^{(t)}_j:= B'_j(\mu; \vec x^{(t)}_{\neg j})$ by using the procedures in Lemma \ref{lem:eval_coeff_func_af}; \\
    Compute $\Delta(\mu; \vec x)$, $\Delta'(\mu; \vec x)$ and their partial derivatives with respect to $x_j$ at $\vec x=\vec x^{(t)}$ as follows:
    \begin{align}
    &\Delta^{(t)} := \Delta(\mu; \vec x^{(t)}) = C^{(t)}_j \cosp{2x_j} + S^{(t)}_j \sinp{2x_j} + B^{(t)}_j, \\
    &\Delta'^{(t)} := \Delta'(\mu; \vec x^{(t)}) = C'^{(t)}_j \cosp{2x_j} + S'^{(t)}_j \sinp{2x_j} + B'^{(t)}_j, \\
    &\chi^{(t)}_j := \dfrac{\partial \Delta(\mu; \vec x)}{\partial x_j}|_{\vec x = \vec x^{(t)}}
    = 2 \lrb{-C^{(t)}_j \sinp{2x_j} + S^{(t)}_j \cosp{2x_j}},  \\
    &\chi'^{(t)}_j := \dfrac{\partial \Delta'(\mu; \vec x)}{\partial x_j}|_{\vec x = \vec x^{(t)}}
    = 2 \lrb{-C'^{(t)}_j \sinp{2x_j} + S'^{(t)}_j \cosp{2x_j}}; 
    \end{align}
    Compute the partial derivative of $\mathcal{I}(\mu; f, \vec x)$ with respect to $x_j$ at $\vec x=\vec x^{(t)}$ as follows:
    \begin{align}
        \gamma^{(t)}_j := {\dfrac{\partial \mathcal{I}(\mu; f, \vec x)}{\partial x_j}}|_{\vec x=\vec x^{(t)}} = \dfrac{2f^2 \lrbb{(1-f^2 (\Delta^{(t)})^2) \Delta'^{(t)} \chi'^{(t)}_j
        +f^2 \Delta^{(t)} \chi^{(t)}_j (\Delta'^{(t)})^2}
        }{\lrbb{1-f^2 (\Delta^{(t)})^2}^2}
    \end{align}
} 
Set $\vec x^{(t+1)}=\vec x^{(t)} + \delta(t) \nabla{\mathcal{I}(\mu; f, \vec x)} |_{\vec x=\vec x^{(t)}}$, where
$\nabla{\mathcal{I}(\mu; f, \vec x)}|_{\vec x=\vec x^{(t)}}=(\gamma^{(t)}_1, \gamma^{(t)}_2, \dots, \gamma^{(t)}_{2L})$;\\
\If{$|\mathcal{I}(\mu; f, \vec x^{(t+1)}) - \mathcal{I}(\mu; f, \vec x^{(t)})| < \epsilon$}{break;}
$t \leftarrow t+1$;
}
Return $\vec x^{(t+1)}=(x_1^{(t+1)}, x_2^{(t+1)}, \dots, x_{2L}^{(t+1)})$ as the optimal parameters.
\caption{Gradient ascent for Fisher information maximization in the ancilla-free case}
\label{alg:ga_opt_v0_af}
\end{algorithm*}

\begin{algorithm*}[ht]
 \KwIn{The prior mean $\mu$ of $\theta$, the number $L$ of circuit layers, the fidelity $f$ of the process for generating the ELF, the error tolerance $\epsilon$ for termination.}
 \KwOut{A set of parameters $\vec x = (x_1, x_2, \dots, x_{2L}) \in (-\pi, \pi]^{2L}$ that are a local maximum point of the function $\mathcal{I}(\mu; f, \vec x)$.}

Choose random initial point $\vec x^{(0)}=(x^{(0)}_1, x^{(0)}_2, \dots, x^{(0)}_{2L}) \in (-\pi, \pi]^{2L}$; \\
$t \leftarrow 1$; \\
\While{True}{

\For{$j \leftarrow 1$ \KwTo $2L$}{
    Let $\vec x^{(t)}_{\neg j} = (x^{(t)}_1, \dots, x^{(t)}_{j-1}, x^{(t-1)}_{j+1}, \dots, x^{(t-1)}_{2L})$; \\
    Compute $C^{(t)}_j:= C_j(\mu; \vec x^{(t)}_{\neg j})$, $S^{(t)}_j:= S_j(\mu; \vec x^{(t)}_{\neg j})$, $B^{(t)}_j:= B_j(\mu; \vec x^{(t)}_{\neg j})$, $C'^{(t)}_j:= C'_j(\mu; \vec x^{(t)}_{\neg j})$, $S'^{(t)}_j:= S'_j(\mu; \vec x^{(t)}_{\neg j})$, $B'^{(t)}_j:= B'_j(\mu; \vec x^{(t)}_{\neg j})$ by using the procedures in Lemma \ref{lem:eval_coeff_func_af}; \\
    Solve the single-variable optimization problem 
    $$\argmax_{z} \dfrac{f^2\lrb{C'^{(t)}_j \cosp{2z} + S'^{(t)}_j \sinp{2z} + B'^{(t)}_j
}^2}{1-f^2\lrb{C^{(t)}_j \cosp{2z}+ S^{(t)}_j \sinp{2z} + B^{(t)}_j}^2}$$
    by standard gradient-based methods and set $x^{(t)}_j$ to be its solution;
} 
\If{$|\mathcal{I}(\mu; f, \vec x^{(t)}) - \mathcal{I}(\mu; f, \vec x^{(t-1)})| < \epsilon$}{break;}
$t \leftarrow t+1$;
}
Return $\vec x^{(t)}=(x_1^{(t)}, x_2^{(t)}, \dots, x_{2L}^{(t)})$ as the optimal parameters.

\caption{Coordinate ascent for Fisher information maximization in the ancilla-free case}
\label{alg:ca_opt_v0_af}
\end{algorithm*}

We have used Algorithms \ref{alg:ga_opt_v0_af} and \ref{alg:ca_opt_v0_af} to find the parameters $\vec x_{\theta} \in \mathbb{R}^{2L}$ that maximize $\mathcal{I}(\theta; f, \vec x)$ \footnote{Since Algorithms \ref{alg:ga_opt_v0_af} and \ref{alg:ca_opt_v0_af} only output a local maximum point of $\mathcal{I}(\theta; f, \vec x)$ for given $\theta$ and $f$, we need to run them multiple times with random initial points to find a global maximum point of the same function. We find that this does not require many trials. For example, for $L \le 6$, a global-optimal solution can be found in $10$ trials with high probability. Similar statements hold for the other algorithms for parameter tuning in this paper.} for various $\theta \in (0, \pi)$ (fixing $f$) and obtained Figure \ref{fig:fisher_information_af}. This figure indicates that the Fisher information of ELF is larger than that of CLF for the majority of $\theta \in (0, \pi)$. Consequently, the estimation algorithm based on ELF is more efficient than the one based on CLF, as will be demonstrated in Section \ref{sec:simulatin_results}.

\begin{figure}[!ht]
\includegraphics[width=0.9\linewidth]{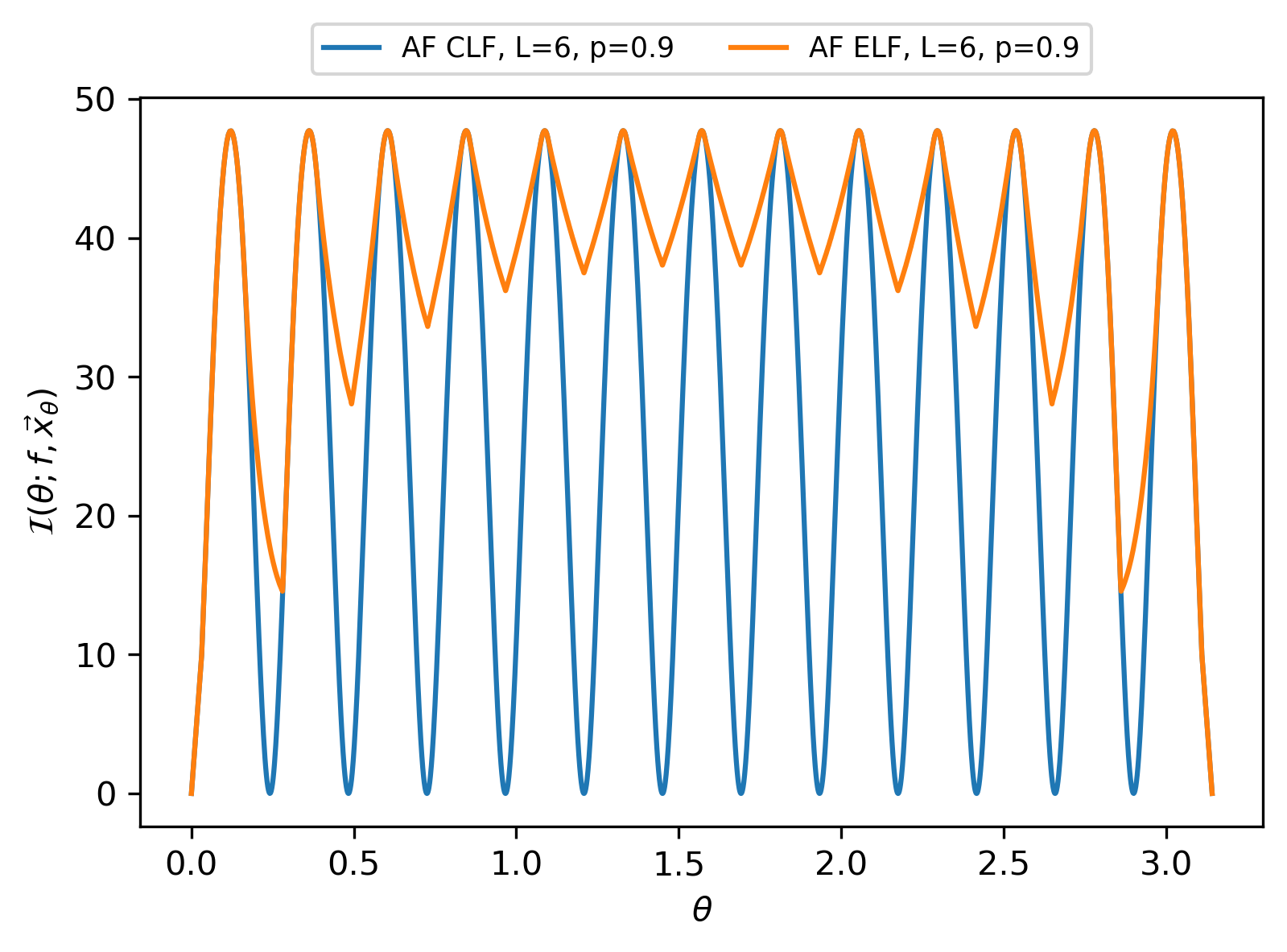}
\caption{This figure compares the Fisher information of ELF and CLF for various $\theta \in (0, \pi)$, when the number of circuit layers is $L=6$, the fidelity of each layer is $p=0.9$, and there is no SPAM error (i.e. $\bar{p}=1)$. For ELF, $\vec x_{\theta}$ is a global maximum point of $\mathcal{I}(\theta; f, \vec x)$ for given $\theta$ and $f=\bar{p}p^L=0.531441$. For CLF, $\vec x_{\theta}=(\pi/2, \pi/2, \dots, \pi/2)$ is fixed. One can see that the Fisher information of ELF is larger than that of CLF for the majority of $\theta \in (0, \pi)$. Furthermore, the Fisher information of CLF changes dramatically for different $\theta$'s (in fact, it is exactly $0$ when $\theta = j \pi/(2L+1)$ for $j=0, 1, \dots, 2L+1$), whereas the Fisher information of ELF is less sensitive to the value of $\theta$.}
\label{fig:fisher_information_af}
\end{figure}

\subsubsection{Maximizing the slope of the likelihood function}
\label{subsubsec:opt_slope_af}
We also propose two algorithms for maximizing the slope of the likelihood function $\mathbb{P}(d|\theta; f, \vec x)$ at a given point $\theta=\mu$ (i.e. the prior mean of $\theta$). Namely, our goal is to find $\vec x \in \mathbb{R}^{2L}$ that maximizes $|\mathbb{P}'(\mu; f, \vec x)| = f |\Delta'(\mu; \vec x)|/2$. These algorithms are similar to Algorithms \ref{alg:ga_opt_v0_af} and \ref{alg:ca_opt_v0_af} for Fisher information maximization, in the sense that they are also based on gradient ascent and coordinate ascent, respectively. They are formally described in Algorithms \ref{alg:ga_opt_slope_af} and \ref{alg:ca_opt_slope_af} in Appendix \ref{sec:alg_opt_slope_af}, respectively. We have used them to find the parameters $\vec x_{\theta} \in \mathbb{R}^{2L}$ that maximize $|\Delta'(\theta; \vec x)|$ for various $\theta \in (0, \pi)$ and obtained Figure \ref{fig:slope_af}. This figure implies that the slope-based ELF is steeper than CLF for the majority of $\theta \in (0, \pi)$ and hence has more statistical power than CLF (at least) in the low-fidelity setting by Eq.~\eqref{eq:v_slope_af}.

\begin{figure}[!ht]
\includegraphics[width=0.9\linewidth]{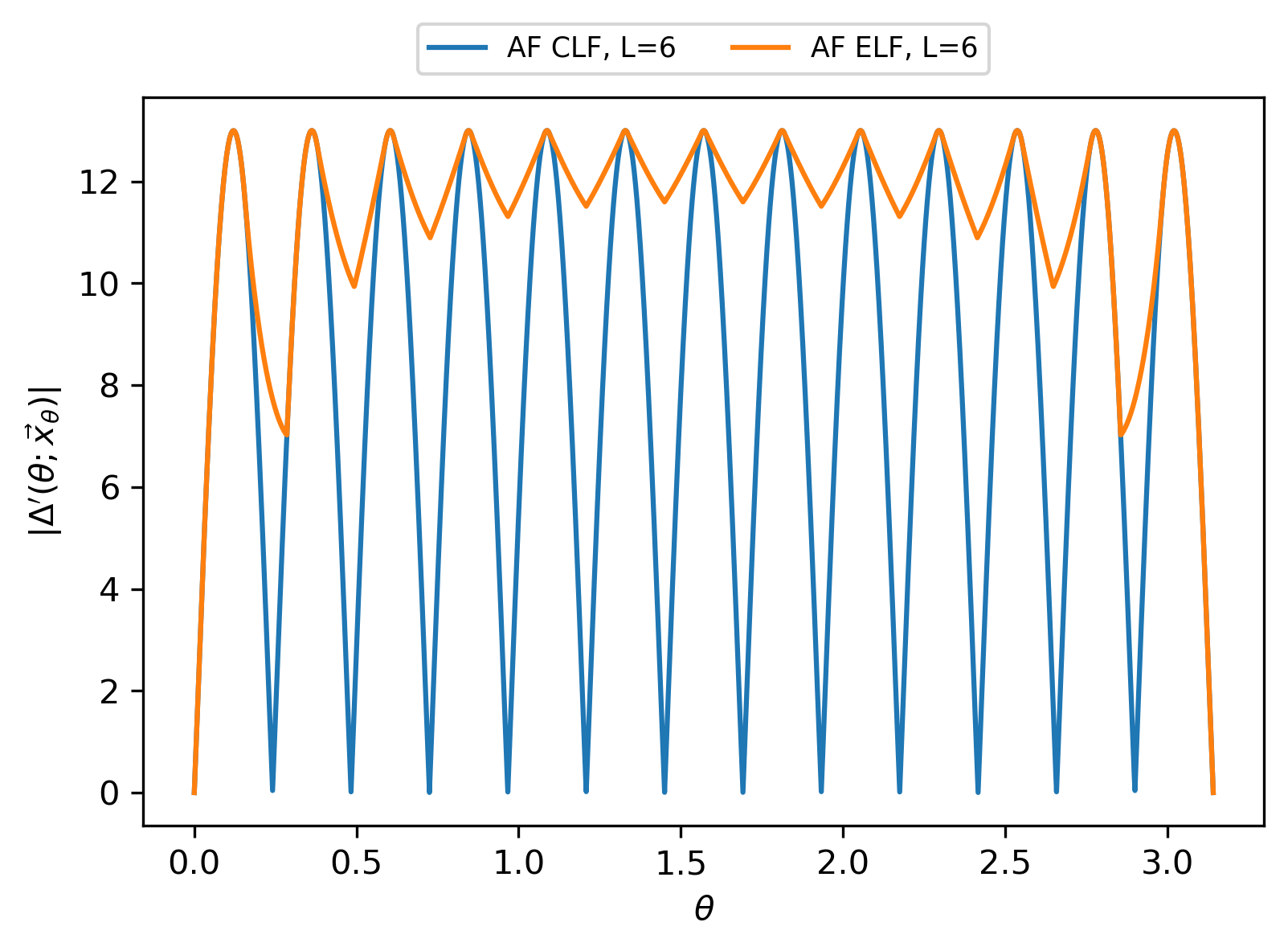}
\caption{This figure compares the values of $|\Delta'(\theta; \vec x_{\theta})|$ for slope-based ELF and CLF for various $\theta \in (0, \pi)$, when the number of circuit layers is $L=6$. For ELF, $\vec x_{\theta}$ is a global maximum point of $|\Delta'(\theta; \vec x)|$ for given $\theta$. For CLF, $\vec x_{\theta}=(\pi/2, \pi/2, \dots, \pi/2)$ is fixed. This figure implies that the slope-based ELF is steeper than CLF for the majority of $\theta \in (0, \pi)$. Furthermore, the slope of CLF changes dramatically for different $\theta$'s (in fact, it is exactly $0$ when $\theta = j \pi/(2L+1)$ for $j=0, 1, \dots, 2L+1$), whereas the slope of ELF is less sensitive to the value of $\theta$.}
\label{fig:slope_af}
\end{figure}

\subsection{Approximate Bayesian inference with engineered likelihood functions}
\label{sec:bi_elf_af}
With the algorithms for tuning the circuit parameters $\vec x$ in place, we now describe how to efficiently carry out Bayesian inference with the resultant likelihood functions. In principle, we can compute the posterior mean and variance of $\theta$ directly after receiving a measurement outcome $d$. But this approach is time-consuming, as it involves numerical integration. By taking advantage of certain property of the engineered likelihood functions, we can greatly accelerate this process.

Suppose $\theta$ has prior distribution $\mathcal{N}(\mu, \sigma^2)$, where $\sigma \ll 1/L$, and the fidelity of the process for generating the ELF is $f$. We find that the parameters $\vec x=(x_1, x_2, \dots, x_{2L})$ that maximize $\mathcal{I}(\mu; f, \vec x)$ (or $|\Delta'(\mu; \vec x)|$) satisfy the following property: When $\theta$ is close to $\mu$, i.e. $\theta \in [\mu-O(\sigma), \mu+O(\sigma)]$, we have
\begin{align}
\mathbb{P}(d|\theta; f, \vec x) \approx \dfrac{1+(-1)^d f \sinp{r\theta+b}}{2}    
\label{eq:lfapprox}
\end{align}
for some $r, b \in \mathbb{R}$. Namely, $\Delta(\theta; \vec x)$ can be approximated by a sinusoidal function in this region of $\theta$. Figure~\ref{fig:fit_lf_af} illustrates one such example.

We can find the best-fitting $r$ and $b$ by solving the following least squares problem:
\begin{align}
(r^*, b^*) = \argmin_{r, b} \sum_{\theta \in \Theta}  \left|\arcsinp{\Delta(\theta;\vec x)}-r\theta-b \right|^2,
\label{eq:lstsq_af}
\end{align}
where $\Theta=\{\theta_1, \theta_2, \dots, \theta_k\} \subseteq [\mu-O(\sigma), \mu+O(\sigma)]$. This least-squares problem has the following analytical solution:
\begin{align}
\begin{pmatrix}
r^*\\
b^*
\end{pmatrix}
= A^+ z
= (A^TA)^{-1} A^T z,
\end{align}
where 
\begin{align}
A =
\begin{pmatrix}
\theta_1 &  1 \\
\theta_2 &  1 \\
\vdots & \vdots\\
\theta_k & 1
\end{pmatrix}, 
\quad
z = \begin{pmatrix}
\arcsinp{\Delta(\theta_1;\vec x)} \\
\arcsinp{\Delta(\theta_2;\vec x)} \\
\vdots \\
\arcsinp{\Delta(\theta_k;\vec x)}
\end{pmatrix}.
\end{align}
Figure~\ref{fig:fit_lf_af} demonstrates an example of the true and fitted likelihood functions.

Once we obtain the optimal $r$ and $b$, we can approximate the posterior mean and variance of $\theta$ by the ones for 
\begin{align}
\mathbb{P}(d|\theta; f) = \dfrac{1+(-1)^d f \sinp{r\theta+b}}{2},
\end{align}
which have analytical formulas. Specifically, suppose $\theta$ has prior distribution $\mathcal{N}(\mu_k, \sigma_k^2)$ at round $k$. Let $d_k$ be the measurement outcome and $(r_k, b_k)$ be the best-fitting parameters at this round. Then we approximate the posterior mean and variance of $\theta$ by
\begin{widetext}
\begin{align}
\mu_{k+1} & = 
\mu_k+\dfrac{(-1)^{d_k} f e^{-r_k^2\sigma_k^2/2}r_k \sigma_k^2\cosp{r_k\mu_k+b_k}}{1+(-1)^{d_k} f  e^{-r_k^2\sigma_k^2/2}\sinp{r_k\mu_k+b_k}},\label{eq:bayes_update_mu_af} \\
\sigma_{k+1}^2 & =  
\sigma_k^2 \left (1-\dfrac{f r_k^2 \sigma_k^2 e^{-r_k^2\sigma_k^2/2} [f e^{-r_k^2\sigma_k^2/2}+(-1)^{d_k}\sinp{r_k\mu_k+b_k}]}{[1+(-1)^{d_k} f e^{-r_k^2\sigma_k^2/2}\sinp{r_k\mu_k+b_k}]^2}\right ).
\label{eq:bayes_update_sigma_af}
\end{align}
\end{widetext}
After that, we proceed to the next round, setting $\mathcal{N}(\mu_{k+1}, \sigma_{k+1}^2)$ as the prior distribution of $\theta$ for that round.

\begin{figure}[!ht]
\includegraphics[width=0.9\linewidth]{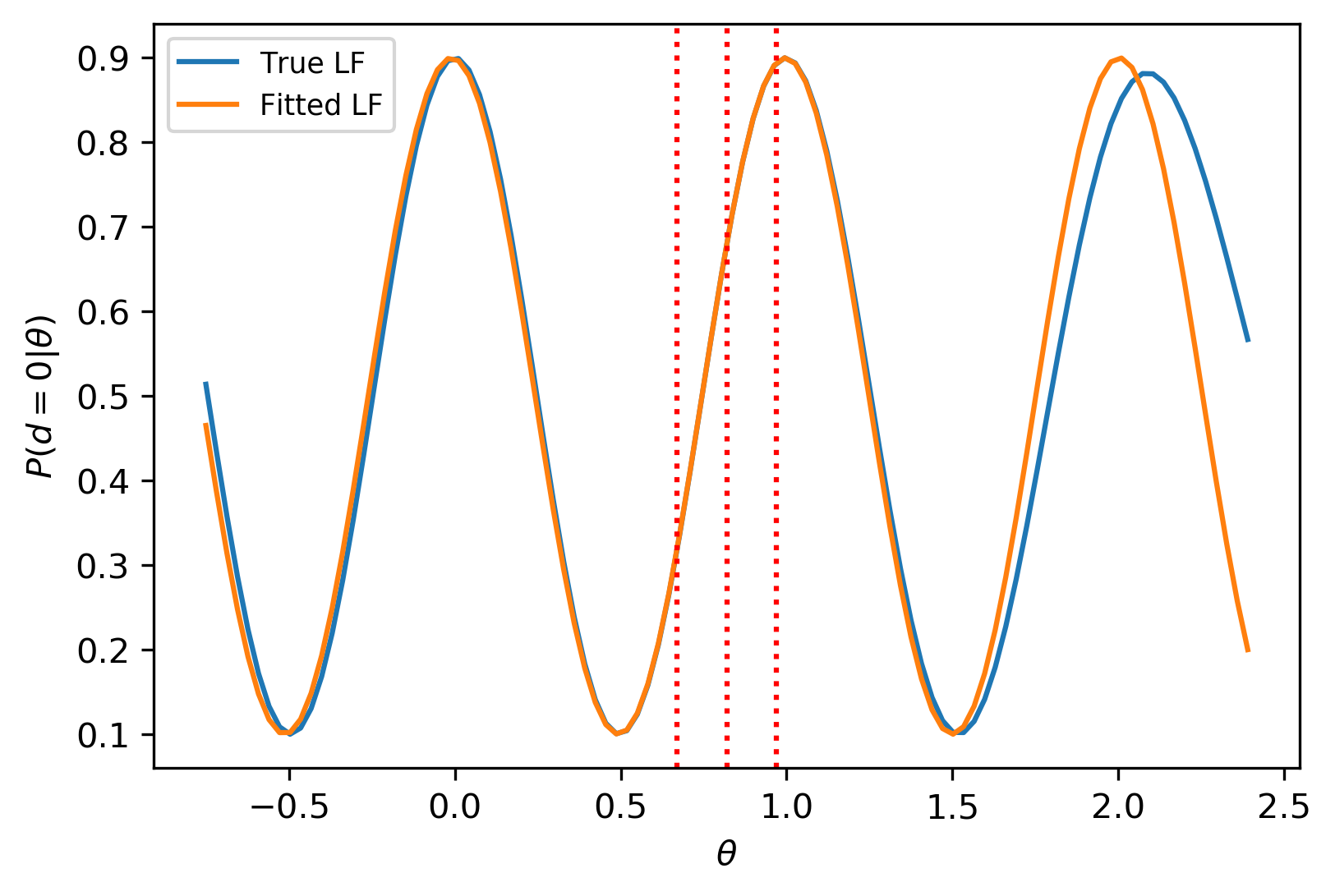}
\caption{The true and fitted likelihood functions when $L=3$, $f=0.8$, and $\theta$ has prior distribution $\mathcal{N}(0.82, 0.0009)$. The true likelihood function is generated by Algorithm \ref{alg:ca_opt_v0_af}. During the sinusoidal fitting of this function, we
set $\Theta=\{\mu-\sigma, \mu-0.8\sigma, \dots, \mu+0.8\sigma, \mu+\sigma\}$ (i.e. $\Theta$ contains $11$ uniformly distributed points in $[\mu-\sigma, \mu+\sigma]$) in Eq.~(\ref{eq:lstsq_af}). The fitted likelihood function is $\mathbb{P}(d|\theta)=(1+(-1)^d f\sinp{r \theta + b })/2$, where $r=6.24$ and $b=-4.65$. Note that the true and fitted likelihood functions are close for $\theta \in [0.67, 0.97]$. }
\label{fig:fit_lf_af}
\end{figure}

Note that, as Figure~\ref{fig:fit_lf_af} illustrates, the difference between the true and fitted likelihood functions can be large when $\theta$ is far from $\mu$, i.e. $|\theta-\mu|\gg \sigma$. But since the prior distribution $p(\theta) = \frac{1}{\sqrt{2\pi}\sigma}e^{-\frac{(\theta-\mu)^2}{2 \sigma^2}}$ decays exponentially in $|\theta-\mu|$, such $\theta$'s have little contribution to the computation of posterior mean and variance of $\theta$. So Eqs.~(\ref{eq:bayes_update_mu_af}) and (\ref{eq:bayes_update_sigma_af}) give highly accurate estimates of the posterior mean and variance of $\theta$, and their errors have negligible impact on the performance of the whole algorithm. 

\section{Simulation results}
\label{sec:simulatin_results}

In this section, we present the simulation results of Bayesian inference with engineered likelihood functions for amplitude estimation. These results demonstrate the advantages of engineered likelihood functions over unengineered ones, as well as the impacts of circuit depth and fidelity on their performance. 

\subsection{Experimental details}
\label{subsec:experimental_details}

In our experiments, we assume that the ansatz circuit $A$ and its inverse $A^{\dagger}$ contribute most significantly to the duration of the circuit (in Figure \ref{fig:circuit_diagram} or \ref{fig:elf_circuit_ab}) for generating the likelihood function. So when the number of circuit layer is $L$, the time cost of an inference round is roughly $2L+1$, where time is in units of $A$'s duration. Moreover, we assume that there is no SPAM error, i.e. $\bar{p}=1$, in the experiments.

Suppose we aim to estimate the expectation value $\Pi=\cosp{\theta}=\bra{A}P\ket{A}$. Let $\hat{\mu}_t$ be the estimator of $\Pi$ at time $t$. Note that $\hat{\mu}_t$ is a random variable, since it depends on the history of random measurement outcomes up to time $t$. We measure the performance of a scheme by the root-mean-squared error (RMSE) of $\hat{\mu}_t$, that is given by
\begin{align}
\operatorname{RMSE}_t := \sqrt{\operatorname{MSE}_t} = \sqrt{\mathbb{E}[(\hat{\mu}_t - \Pi)^2]}.    
\end{align}
We will investigate how fast $\operatorname{RMSE}_t$ decreases as $t$ grows for various schemes, including the ancilla-based Chebyshev likelihood function (AB CLF), ancilla-based engineered likelihood function (AB ELF), ancilla-free Chebyshev likelihood function (AF CLF), and ancilla-free engineered likelihood function (AF ELF).

In general, the distribution of $\hat{\mu}_t$ is difficult to characterize, and there is no analytical formula for $\operatorname{RMSE}_t$. To estimate this quantity, we simulate the inference process $M$ times, and collect $M$ samples $\hat{\mu}^{(1)}_t$, $\hat{\mu}^{(2)}_t$, $\dots$, $\hat{\mu}^{(M)}_t$ of $\hat{\mu}_t$, where $\hat{\mu}^{(i)}_t$ is the estimate of $\Pi$ at time $t$ in the $i$-th run, for $i=1,2,\dots,M$. Then we use the quantity
\begin{align}
\overline{\operatorname{RMSE}}_t := \sqrt{\dfrac{1}{M}\sum_{i=1}^M (\hat{\mu}^{(i)}_t - \Pi)^2}.
\end{align}
to approximate the true $\operatorname{RMSE}_t$. In our experiments, we set $M=300$ and find that this leads to a low enough empirical variance in the estimate to yield meaningful results.

In each experiment, we set the true value of $\Pi$ and choose a prior distribution of $\Pi$, and run the algorithm in Figure \ref{fig:ELF_diagram} to test its performance for estimating this quantity. The only quantum component of this algorithm, i.e. executing the circuit in Figure \ref{fig:circuit_diagram} and measuring the outcome $d \in \{0, 1\}$, is simulated by an efficient classical procedure, since the probability distribution of the outcome $d$ is characterized by Eq. \eqref{eq:noisy_elf} or \eqref{eq:noisy_elf_ab}, depending on whether the scheme is ancilla-free or ancilla-based. Namely, we synthesize the outcome $d$ by a classical pseudo-random number generator (based on Eq. \eqref{eq:noisy_elf} or \eqref{eq:noisy_elf_ab}) instead of simulating the noisy quantum circuit. This not only greatly accelerates our simulation, but also makes our results applicable to a wide range of scenarios (i.e. they are not specific to certain ansatz circuit $A$ or observable $P$). 

For engineering the likelihood function, we use Algorithm \ref{alg:ca_opt_v0_af} or \ref{alg:ca_opt_v0_ab} to optimize the circuit parameters $\vec x$, depending on whether the scheme is ancilla-free or ancilla-based. We find that Algorithms 1 and 2 generate the same likelihood function (as they both find the optimal parameters within a reasonable number of trials), and the same holds for Algorithms 5 and 6. So the simulation results in  Section \ref{sec:simulatin_results} will not change if we have used Algorithms 1 and 5 instead, as the AF ELF and AB ELF will remain intact. Meanwhile, we do not use Algorithm 3, 4, 7 or 8 in our experiments, because these algorithms are used to maximize the slope of the AF/AB likelihood function, which is a good proxy of the variance reduction factor $\mathcal{V}$ only when the fidelity $f$ is close to zero (see Eqs.~\eqref{eq:v_slope_af} and \eqref{eq:v_slope_ab}). In our experiments, $f$ is mostly between $0.5$ and $0.9$, and in such case, the slope is not a good proxy of the variance reduction factor and we need to maximize the Fisher information by Algorithms 1, 2, 5 or 6 instead. 

Our estimation algorithm needs to tune the circuit parameters at each round, and since it can have tens of thousands of rounds, this component can be quite time-consuming. We use the following method to mitigate this issue. Given the number $L$ of circuit layers and the fidelity $f$ of the process for generating the ELF, we first construct a ``lookup table" which consists of the optimal parameters $\vec x_{\Pi}$ for all $\Pi$'s in a discrete set $\mathcal{S}=\{\Pi_1, \Pi_2, \dots, \Pi_N \} \subset [-1, 1]$. Then at any stage of the estimation algorithm, we find the $\Pi_j \in \mathcal{S}$ that is closest to the current estimate of $\Pi$, and use the optimal parameters for $\Pi_j$ to approximate the optimal parameters for $\Pi$. In our experiments, we set $\mathcal{S}=\{-1.0, -0.9995, -0.999, \dots, 0.999, 0.9995, 1.0\}$ (i.e. $\mathcal{S}$ contains $4001$ uniformly distributed points in $[-1, 1]$) and find that this greatly accelerates the estimation algorithm without deteriorating its performance much.

For Bayesian update with an ELF, we use the method in Section \ref{sec:bi_elf_af} or Appendix \ref{sec:bi_elf_ab} to compute the posterior mean and variance of $\theta$, depending on whether the scheme is ancilla-free or ancilla-based. In particular, during the sinusoidal fitting of the ELF, we set $\Theta=\{\mu-\sigma, \mu-0.8\sigma, \dots, \mu+0.8\sigma, \mu+\sigma\}$ (i.e. $\Theta$ contains $11$ uniformly distributed points in $[\mu-\sigma, \mu+\sigma]$) in Eq.~(\ref{eq:lstsq_af}) or (\ref{eq:lstsq_ab}). We find that this is sufficient for obtaining a high-quality sinusoidal fit of the true likelihood function.

\subsection{Comparing the performance of various schemes}
\label{subsec:compare_schemes}

To compare the performance of various quantum-generated likelihood functions, including AB CLF, AB ELF, AF CLF and AF ELF, we run Bayesian inference with each of them, fixing the number of circuit layers $L=6$ and layer fidelity $p=0.9$ (note that this layer fidelity corresponds to a $12$-qubit experiment with two-qubit gate depth of $12$ and two-qubit gate fidelity $99.92\%$, which is almost within reach for today's quantum devices). Figures \ref{fig:normal_elf_clf_separation}, \ref{fig:normal_elf_clf_separation2},
\ref{fig:failed_ab_clf}, \ref{fig:failed_af_clf} and \ref{fig:failed_ab_af_clf} illustrate the performance of different schemes with respect to various true values of $\Pi$. These results suggest that:
\begin{itemize}
\item In both the ancilla-based and ancilla-free cases, ELF performs better than (or as well as) CLF. This means that by tuning the generalized reflection angles, we do enhance the rate of information gain and make the estimation of $\Pi$ more efficient.
\item AF ELF always performs better than AB ELF, whereas AF CLF may perform better or worse than AB CLF, depending on the true value of $\Pi$, but on average, AF CLF outperforms AB CLF. So overall the ancilla-free schemes are superior to the ancilla-based ones.
\item While $\operatorname{RMSE}_t \to 0$ as $t \to \infty$ for AB ELF and AF ELF, the same is not always true for AB CLF and AF CLF. In fact, the performance of AB CLF and AF CLF depends heavily on the true value of $\Pi$, while the performance of AB ELF and AF ELF is not much affected by this value. 
\end{itemize}

\begin{figure*}[!ht]
\begin{minipage}{.48\textwidth}
\center
\includegraphics[width=0.95\linewidth]{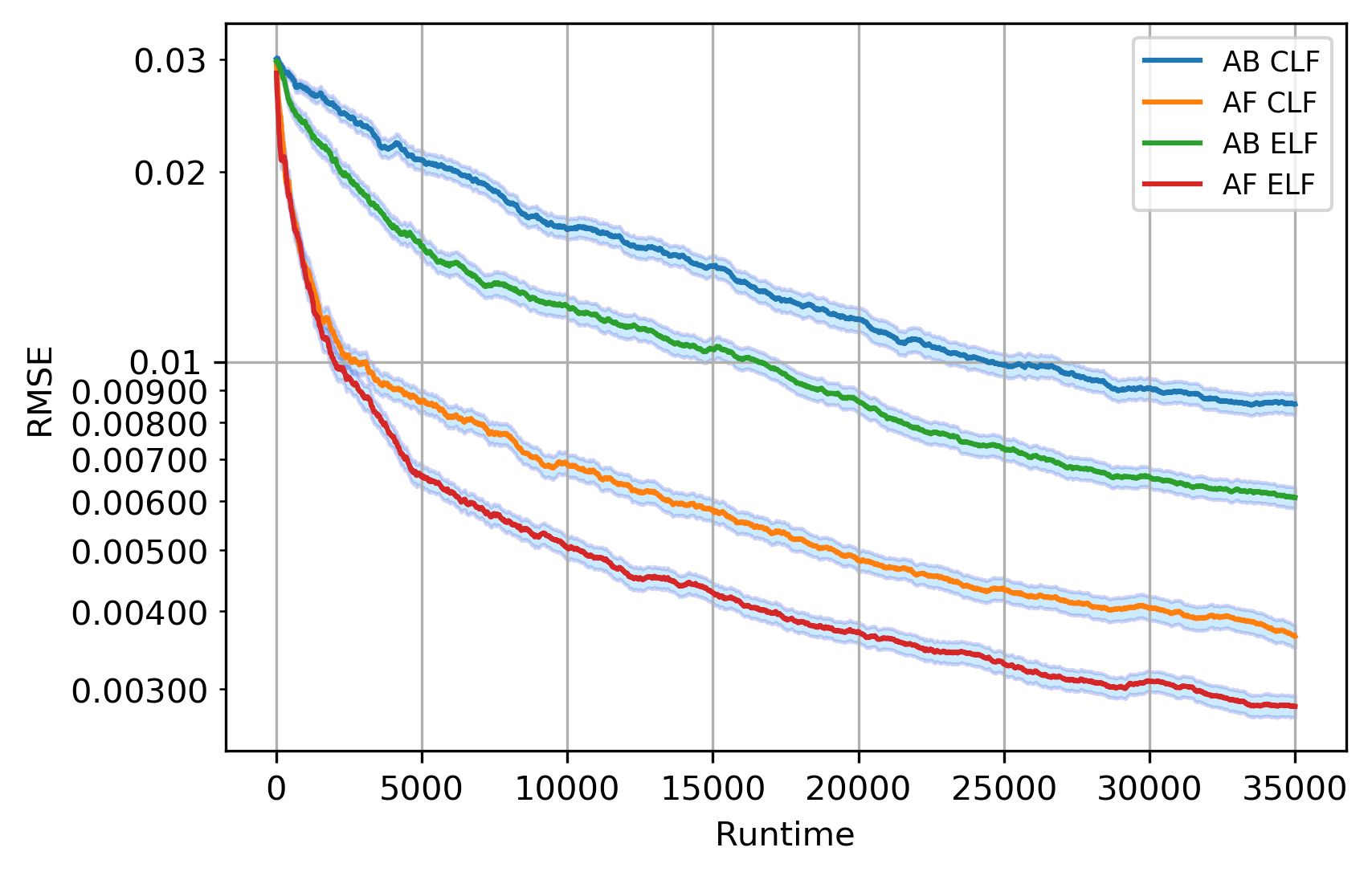}
\end{minipage}
\begin{minipage}{.48\textwidth}
\center
\includegraphics[width=0.95\linewidth]{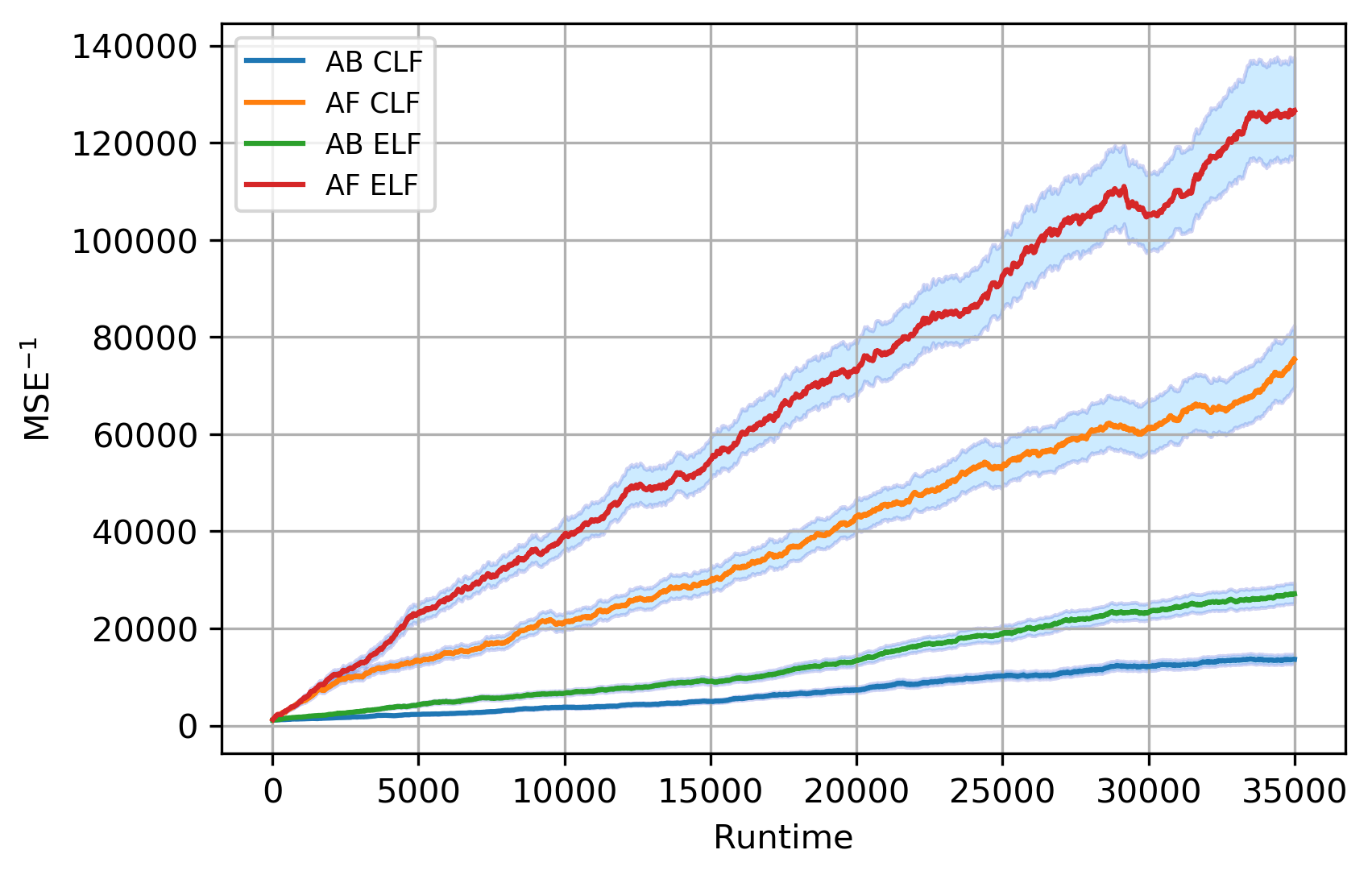}
\end{minipage}
\caption{This figure compares the performance of AB CLF, AB ELF, AF CLF and AF ELF when the expectation value $\Pi$ has true value $-0.4$ and prior distribution $\mathcal{N}(-0.43, 0.0009)$, the number of circuit layers is $L=6$, and the layer fidelity is $p=0.9$. Note that ELF outperforms CLF in both the ancilla-free and ancilla-based cases, and the ancilla-free schemes outperform the ancilla-based ones.}
\label{fig:normal_elf_clf_separation}
\end{figure*}

\begin{figure*}[!ht]
\begin{minipage}{.48\textwidth}
\center
\includegraphics[width=0.95\linewidth]{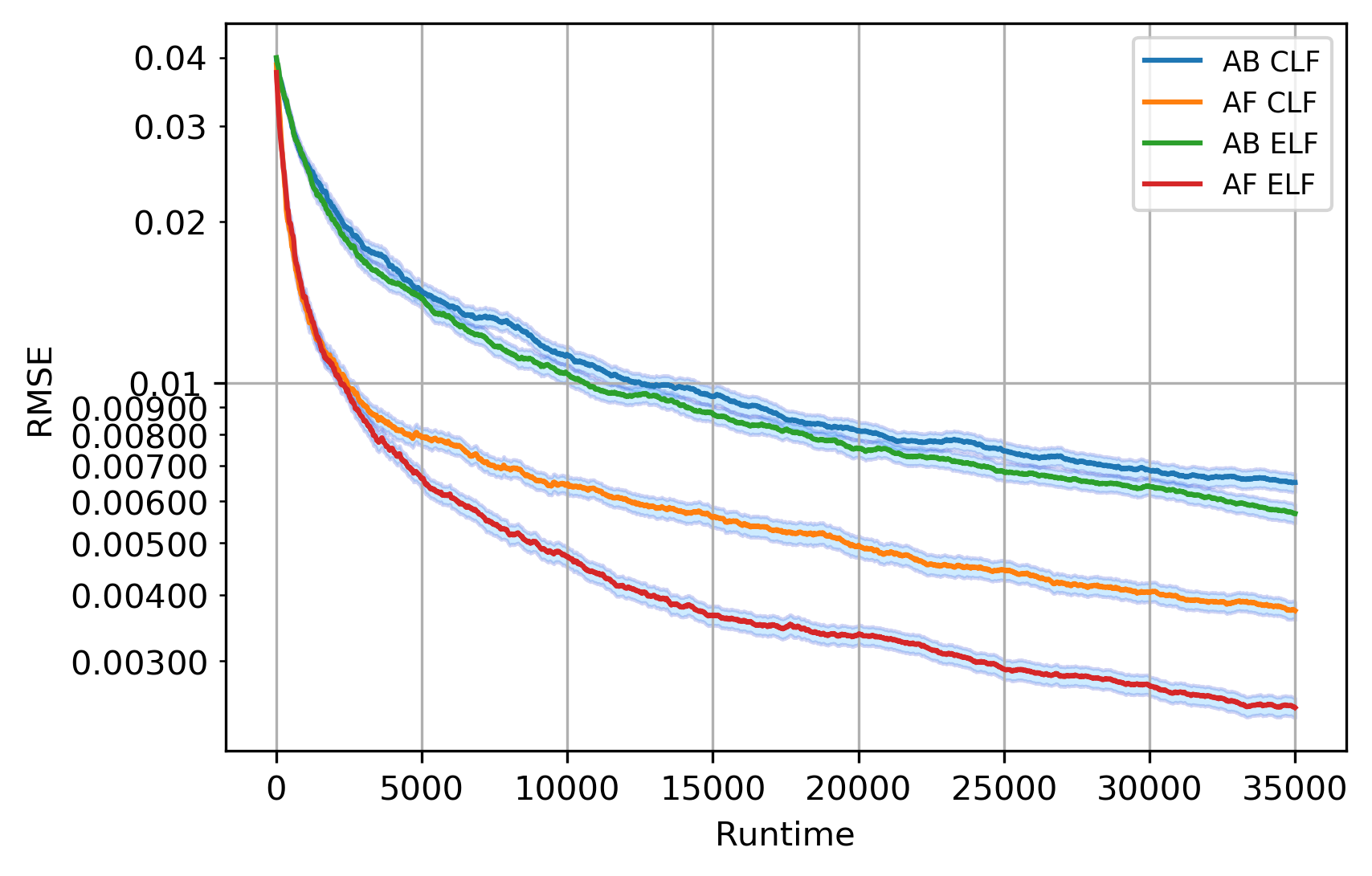}
\end{minipage}
\begin{minipage}{.48\textwidth}
\center
\includegraphics[width=0.95\linewidth]{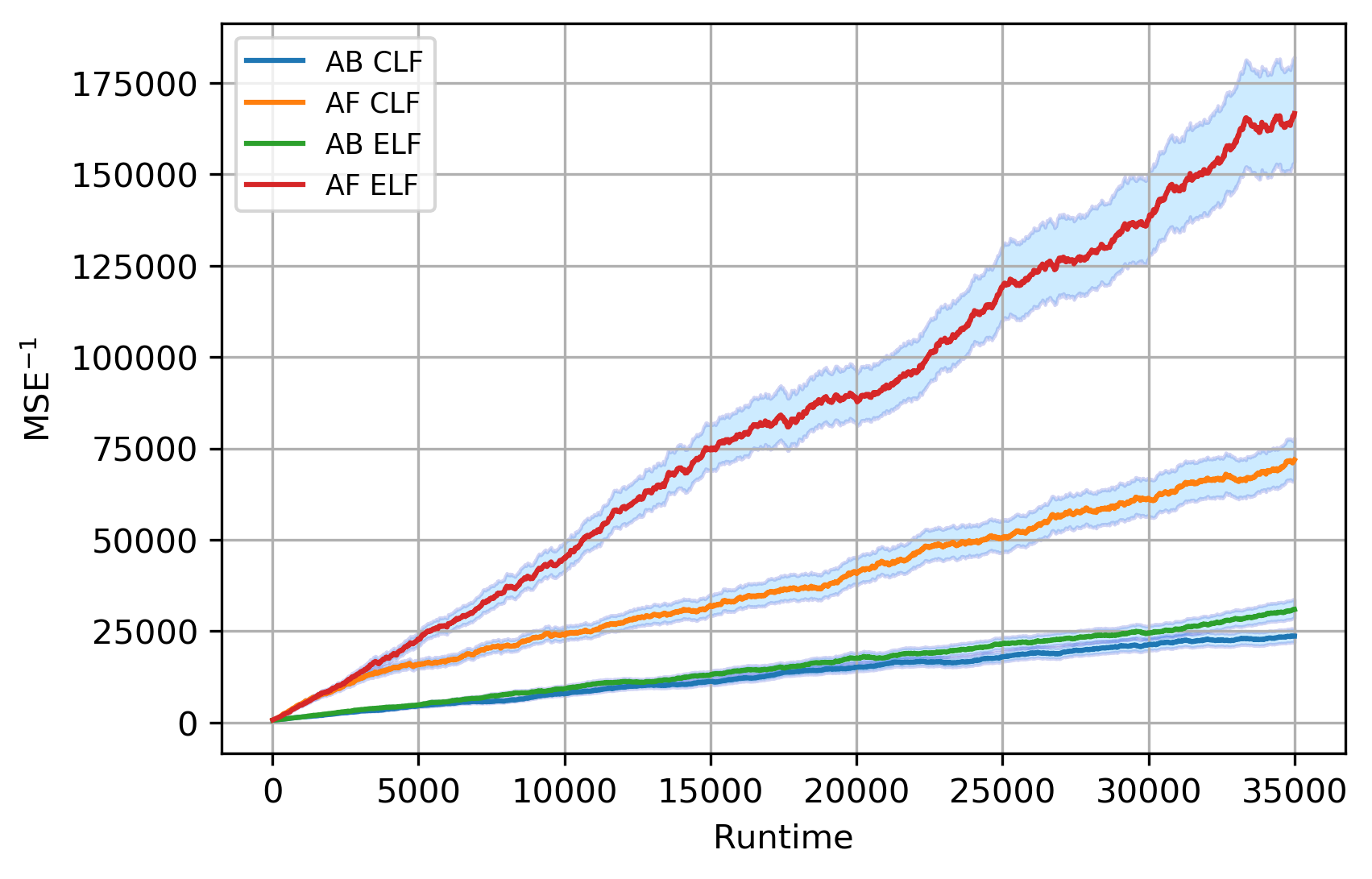}
\end{minipage}
\caption{This figure compares the performance of AB CLF, AB ELF, AF CLF and AF ELF when the expectation value $\Pi$ has true value $0.6$ and prior distribution  $\mathcal{N}(0.64, 0.0009)$, the number of circuit layers is $L=6$, and the layer fidelity is $p=0.9$. Note that AB ELF slightly outperforms AB CLF, while AF ELF outperforms AF CLF to a larger extent.}
\label{fig:normal_elf_clf_separation2}
\end{figure*}

\begin{figure*}[!ht]
\begin{minipage}{.48\textwidth}
\center
\includegraphics[width=0.95\linewidth]{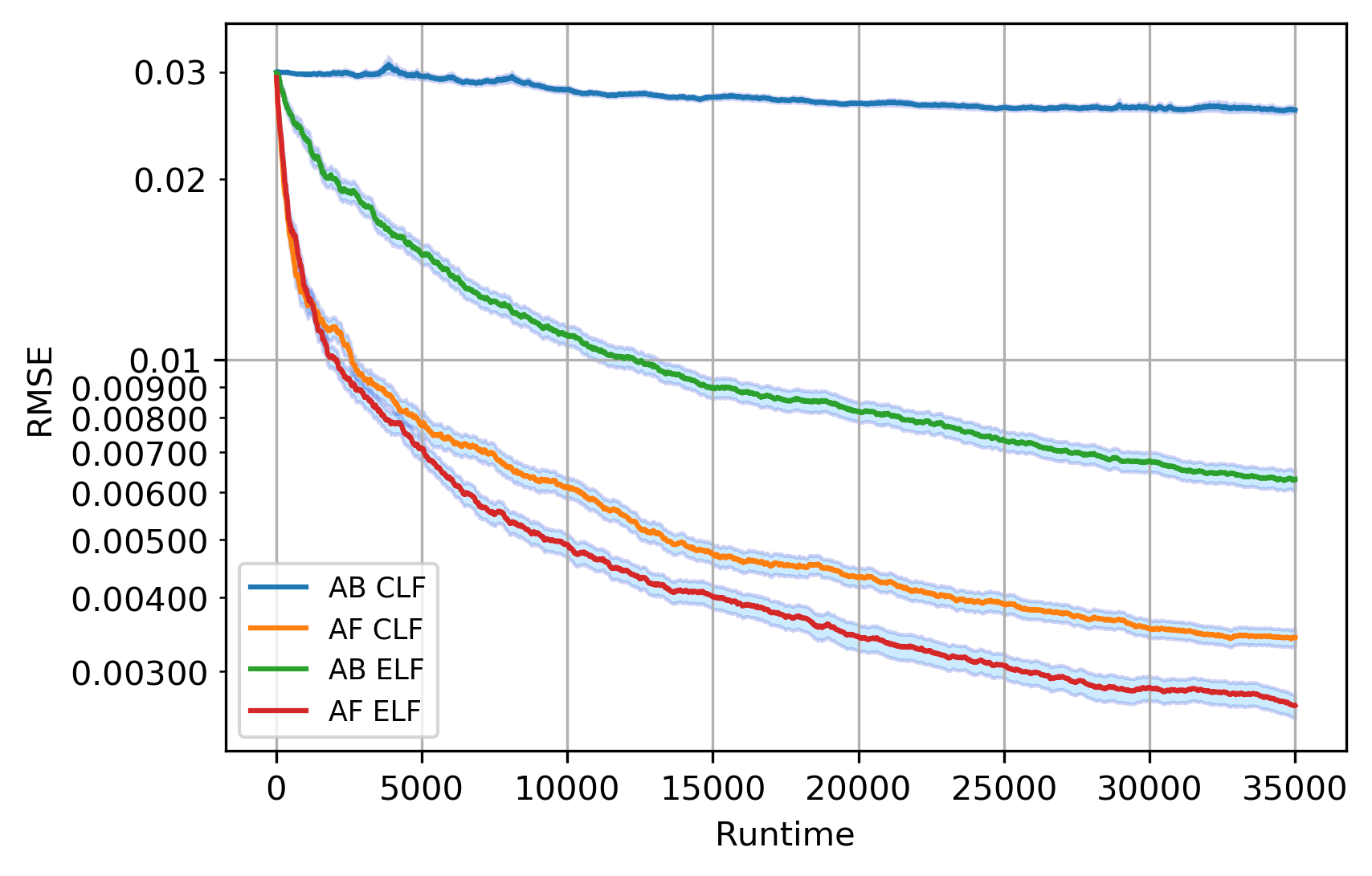}
\end{minipage}
\begin{minipage}{.48\textwidth}
\center
\includegraphics[width=0.95\linewidth]{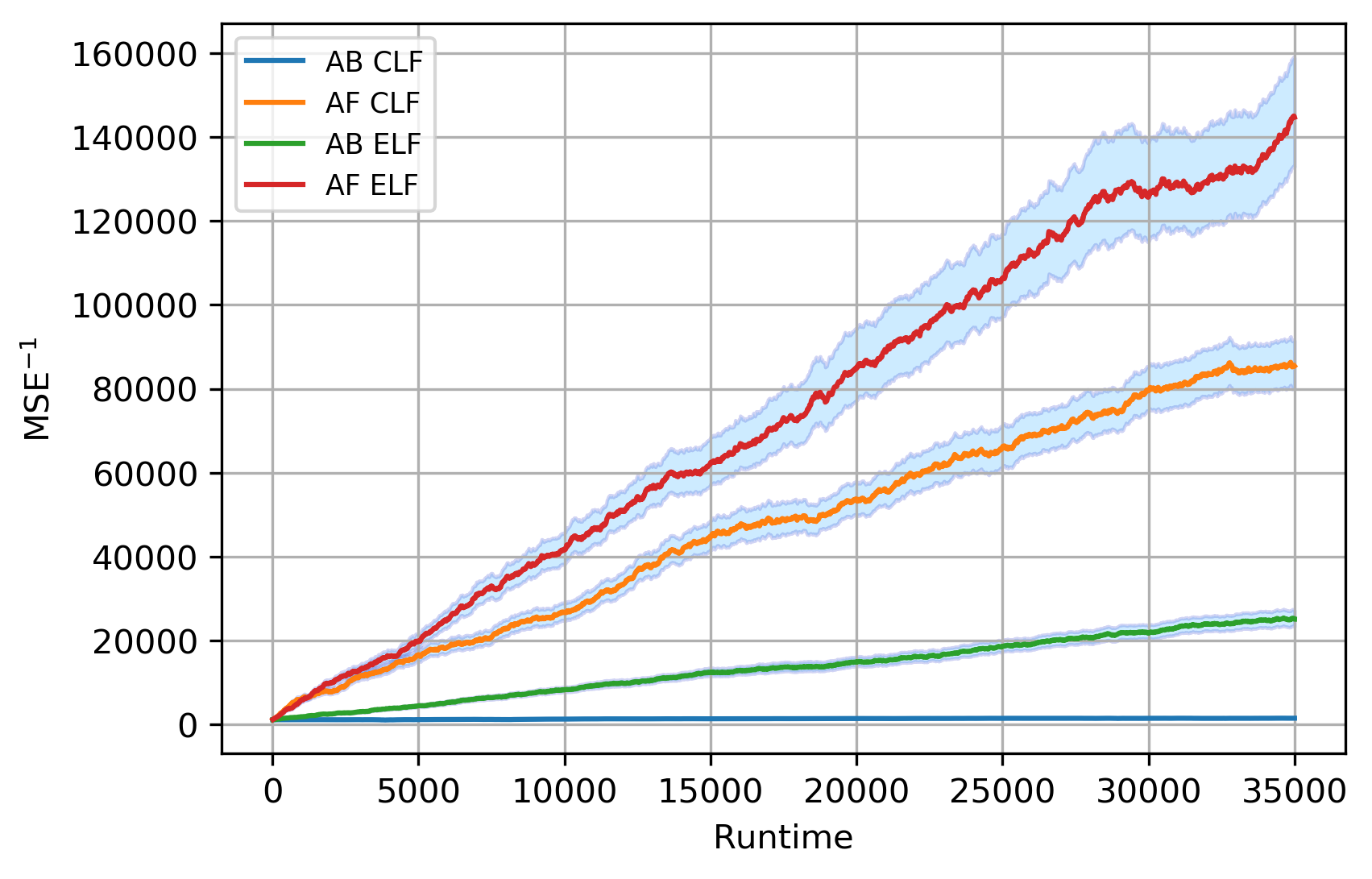}
\end{minipage}
\caption{This figure compares the performance of AB CLF, AB ELF, AF CLF and AF ELF when the expectation value $\Pi$ has true value $0.52$ and prior distribution $\mathcal{N}(0.49, 0.0009)$, the number of circuit layers is $L=6$, and the layer fidelity is $p=0.9$. Note that the RMSE of the estimator of $\Pi$ converges to $0$ for all schemes except AB CLF. AF ELF achieves the best performance.}
\label{fig:failed_ab_clf}
\end{figure*}

\begin{figure*}[!ht]
\begin{minipage}{.48\textwidth}
\center
\includegraphics[width=0.95\linewidth]{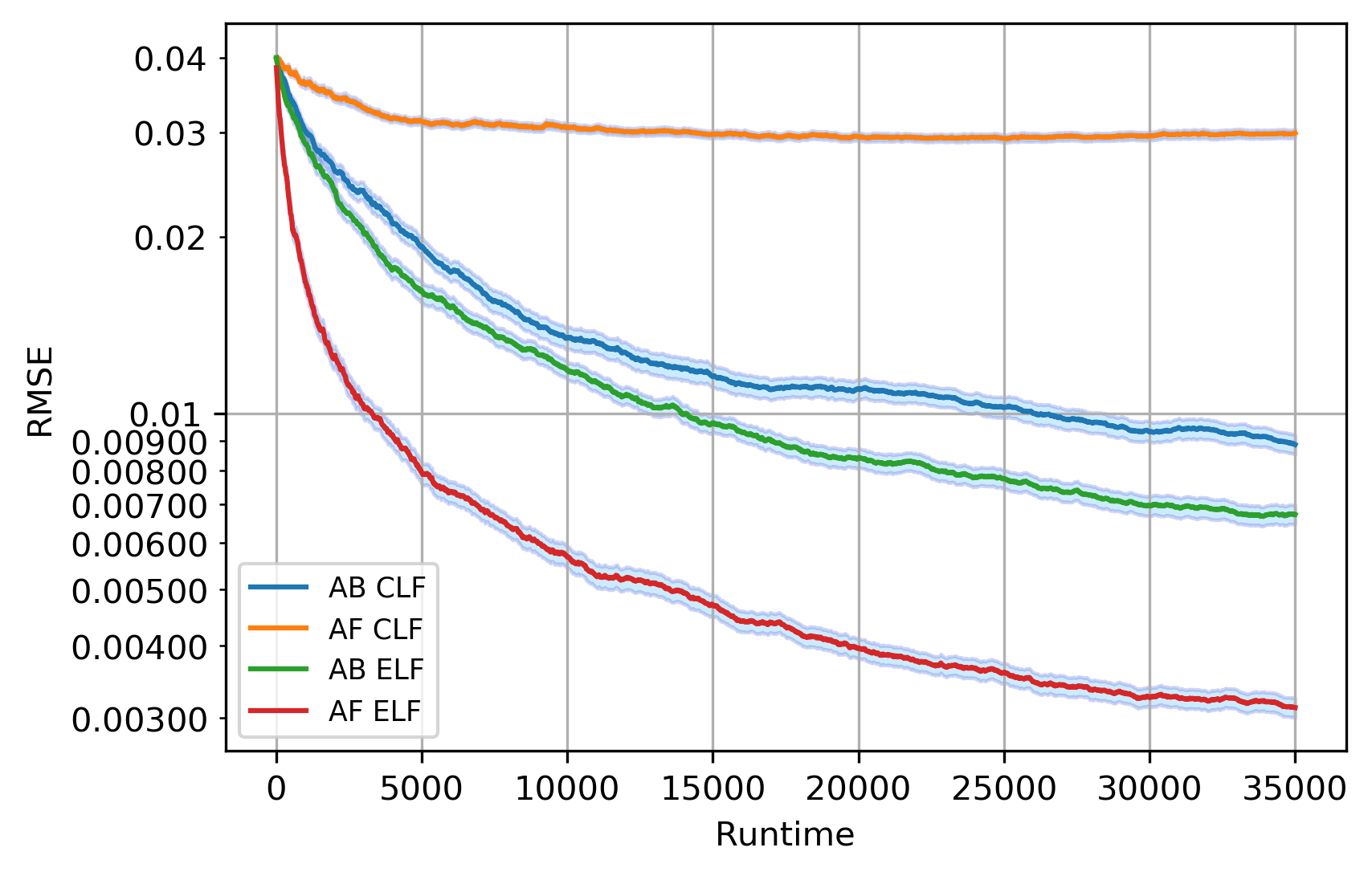}
\end{minipage}
\begin{minipage}{.48\textwidth}
\center
\includegraphics[width=0.95\linewidth]{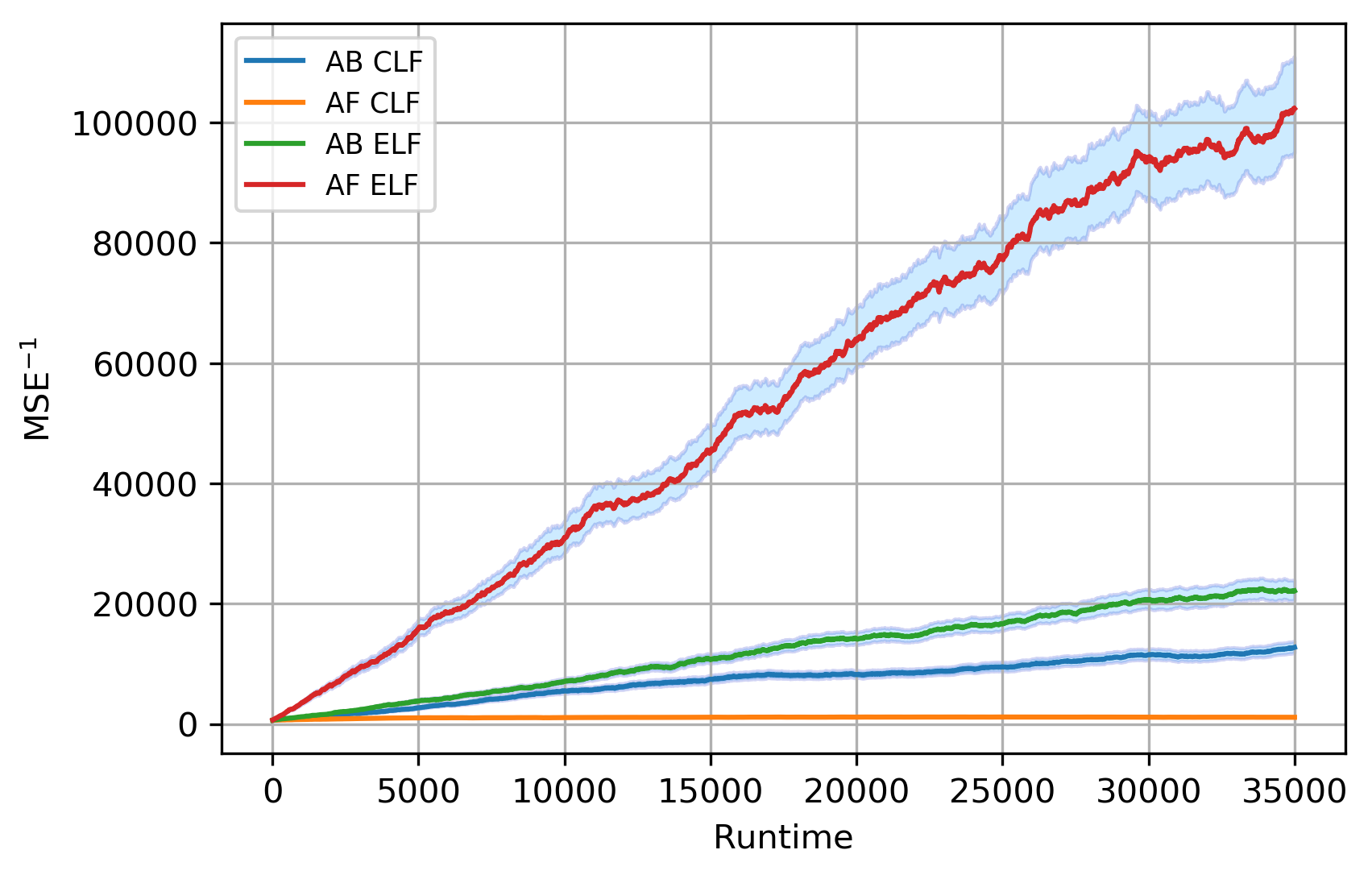}
\end{minipage}
\caption{This figure compares the performance of AB CLF, AB ELF, AF CLF and AF ELF when the expectation value $\Pi$ has true value $-0.1$ and prior distribution $\mathcal{N}(-0.14, 0.0009)$, the number of circuit layers is $L=6$, and the layer fidelity is $p=0.9$. Note that the RMSE of the estimator of $\Pi$ converges to $0$ for all schemes except AF CLF. AF ELF achieves the best performance.}
\label{fig:failed_af_clf}
\end{figure*}

\begin{figure*}[!ht]
\begin{minipage}{.48\textwidth}
\center
\includegraphics[width=0.95\linewidth]{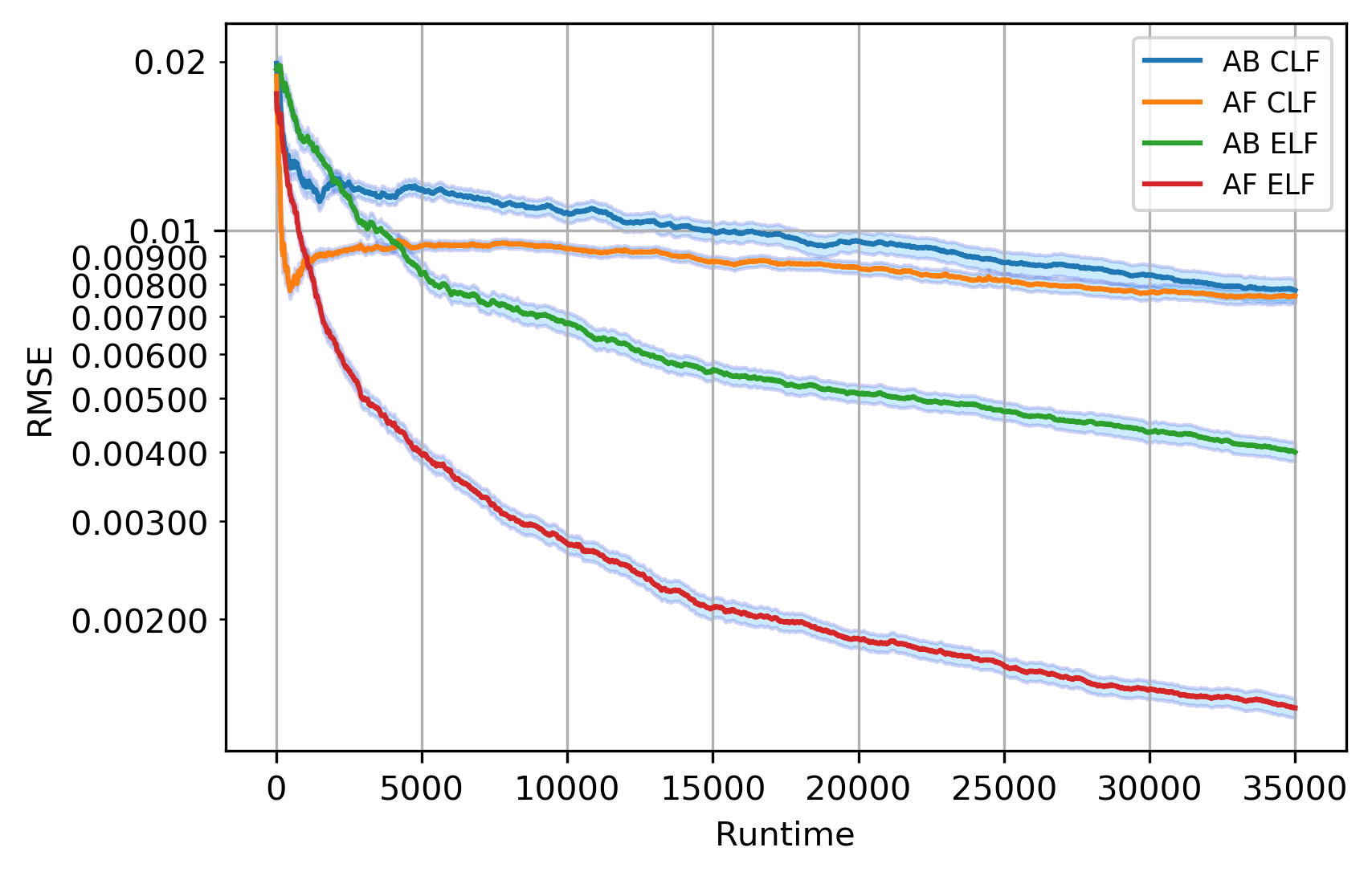}
\end{minipage}
\begin{minipage}{.48\textwidth}
\center
\includegraphics[width=0.95\linewidth]{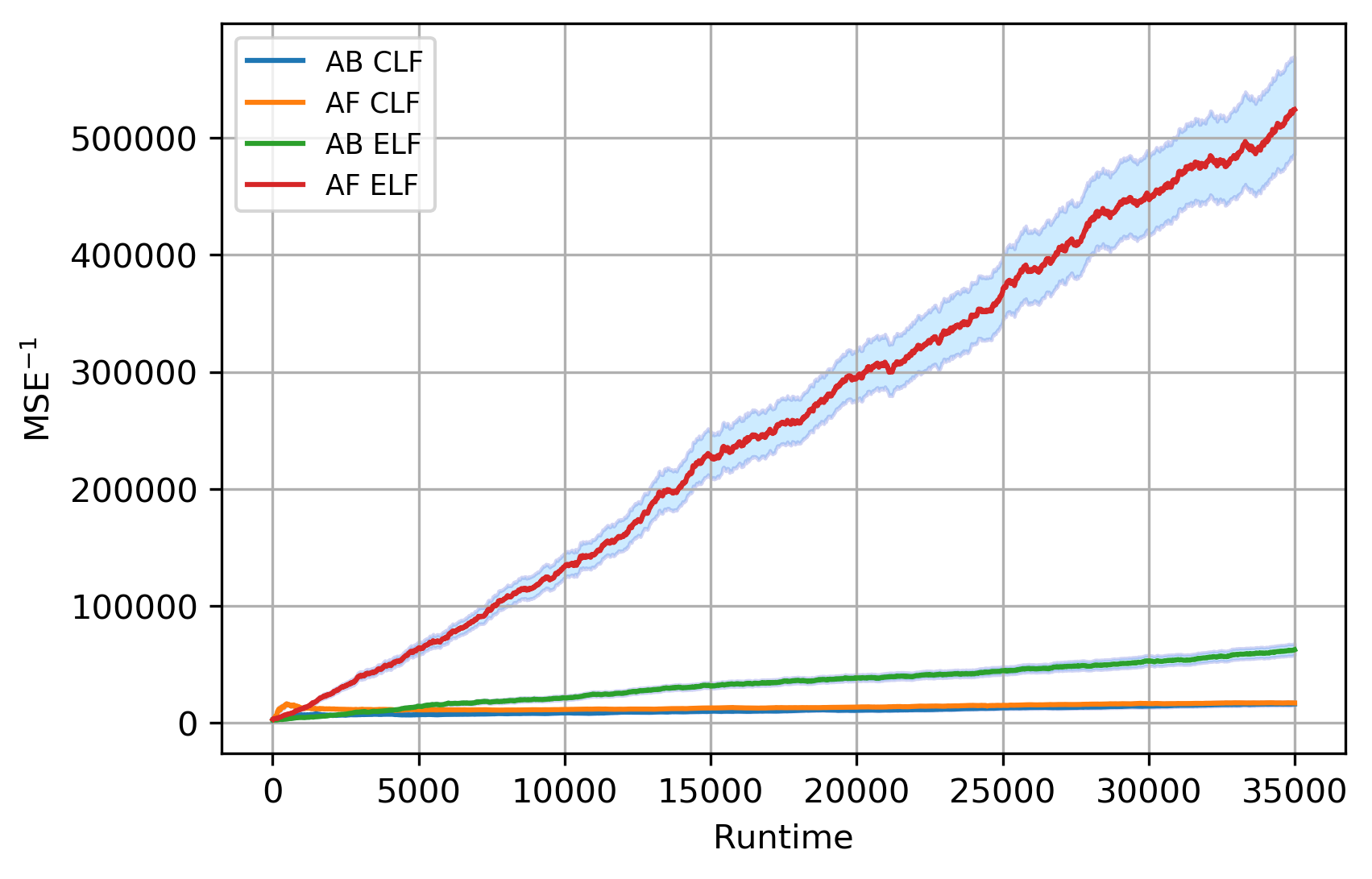}
\end{minipage}
\caption{This figure compares the performance of AB CLF, AB ELF, AF CLF and AF ELF when the expectation value $\Pi$ has true value $0.9$ and prior distribution $\mathcal{N}(0.92, 0.0009)$, the number of circuit layers is $L=6$, and the layer fidelity is $p=0.9$. Note that the RMSE of the estimator of $\Pi$ fails to converge to $0$ for AF CLF and decreases slowly for AB CLF. Both AB CLF and AF CLF are outperformed by AB ELF and AF ELF, with AF ELF achieving the best performance.}
\label{fig:failed_ab_af_clf}
\end{figure*}

We may compare the above results with Figure \ref{fig:R0-L_6_p_0.9} that illustrates the $\hat{R}_0$ factors (as defined in Eq.~(\ref{eq:r0_factor})) of AB CLF, AB ELF, AF CLF and AF ELF in the same setting. One can observe from this figure that:
\begin{itemize}
    \item The $\hat{R}_0$ factor of AB ELF is equal to or larger than that of AB CLF, and the same is true for AF ELF versus AB CLF. This explains why ELF outperforms CLF in both the ancilla-based and ancilla-free cases. 
    \item The $\hat{R}_0$ factor of AF ELF is larger than that of AB ELF. Meanwhile, The $\hat{R}_0$ factor of AF CLF can be larger or smaller than that of AB CLF, depending on the value of $\Pi$, but on average AF CLF has larger $\hat{R}_0$ factor than AB CLF. This explains the superiority of the ancilla-free schemes over the ancilla-based ones. 
    \item The $\hat{R}_0$ factors of AB ELF and AF ELF are bounded away from $0$ for all $\Pi \in [-1, 1]$ \footnote{Though not shown in Figure \ref{fig:R0-L_6_p_0.9}, the $\hat{R}_0$ factors of AB ELF and AF ELF actually diverge to $+\infty$ as $\Pi \to \pm 1$, and this is true for any $L \in \mathbb{Z}^{+}$.}. This explains why their performance is stable regardless of the true value of $\Pi$. On the other hand, the $\hat{R}_0$ factors of AB CLF and AF CLF change dramatically for different $\Pi$'s. In fact, the $\hat{R}_0$ factor of AB CLF is $0$ when $\Pi=\cosp{{j \pi}/{L}}$ for $j=0,1,\dots,L$, and the $\hat{R}_0$ factor of AF CLF is $0$ when $\Pi=\cosp{{j \pi}/{(2L+1)}}$ for $j=0,1,\dots,2L+1$. This means that if the true value of $\Pi$ is close to one of these ``dead spots", then its estimator will struggle to improve and hence the performance of AB/AF CLF will suffer (see Figures \ref{fig:failed_ab_clf}, \ref{fig:failed_af_clf} and \ref{fig:failed_ab_af_clf} for examples.). AB ELF and AF ELF, on the other hand, do not have this weakness. 
\end{itemize}

\begin{figure}[!ht]
\center
\includegraphics[width=0.9\linewidth]{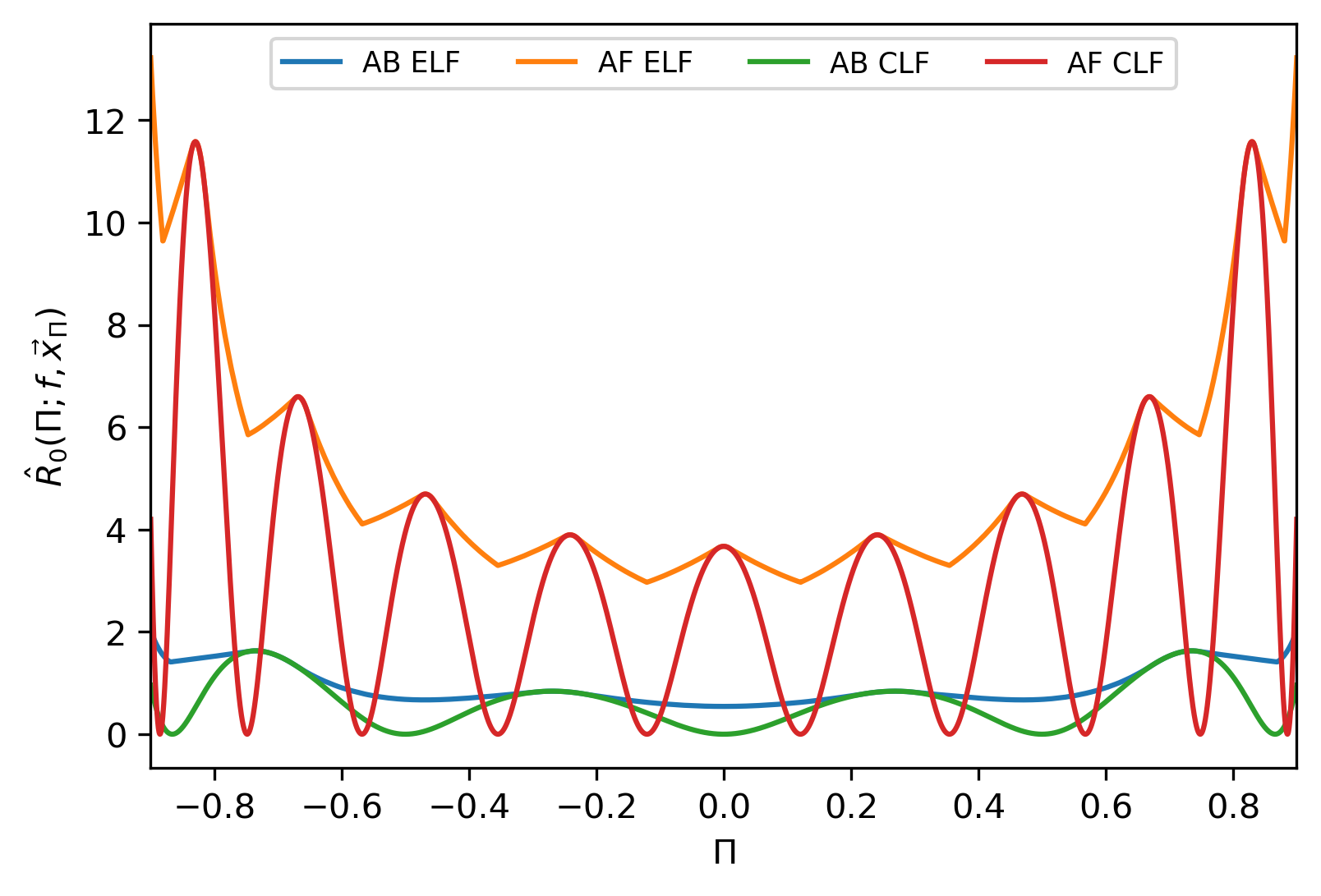}
\caption{This figure compares the $\hat{R}_0$ factors of AB CLF, AB ELF, AF CLF and AF ELF for $\Pi \in [-0.9, 0.9]$, when the number of circuit layers is $L=6$ and the layer fidelity is $p=0.9$. Here $\vec x_{\Pi} \in \mathbb{R}^{2L}$ is a global maximum point of $\hat{R}_0(\Pi; f, \vec x)$ for given $\Pi$ and $f=p^L=0.531441$. All of the four curves are plotted on a fine grid with interval length $0.0005$ (i.e. there are $3601$ equally-spaced grid points in [-0.9, 0.9]). Note that the $\hat{R}_0$ factors of AB CLF and AF CLF change dramatically for different $\Pi$'s. In fact, they can be close to $0$ for certain $\Pi$'s. By contrast, the $\hat{R}_0$ factors of AB ELF and AF ELF are bounded away from $0$ for all $\Pi$'s. }
\label{fig:R0-L_6_p_0.9}
\end{figure}

\subsection{Understanding the performance of Bayesian inference with ELFs}
\label{subsec:decay_rmse_elf}

Having established the improved performance of ELFs over CLFs, we now analyze the performance of Bayesian inference with ELFs in more detail. Note that Figures \ref{fig:normal_elf_clf_separation}, \ref{fig:normal_elf_clf_separation2},
\ref{fig:failed_ab_clf}, \ref{fig:failed_af_clf} and
\ref{fig:failed_ab_af_clf} suggest that the inverse MSEs of both AB-ELF-based and AF-ELF-based estimators of $\Pi$ grow linearly in time when the circuit depth is fixed. By fitting a linear model to the data, we obtain the empirical growth rates of these quantities, which are shown in
Tables \ref{table:inverse_mse_growth_rate_ab}
and \ref{table:inverse_mse_growth_rate_af}, respectively. We also compare these rates with the $\hat{R}_0$ factors of AB ELF and AF ELF in the same setting. It turns out that the $\hat{R}_0$ factor is a rough estimate of the true growth rate of the inverse MSE of an ELF-based estimator of $\Pi$, but it can be unreliable sometimes. We leave it as an open question to give a more precise characterization of the decay of the RMSEs of ELF-based estimators of $\Pi$ during the inference process.

\begin{table}[!ht]
\caption{The predicted and empirical growth rates of the inverse MSEs of AB-ELF-based estimators of $\Pi$ in the five experiments in Section \ref{subsec:compare_schemes}. In all of these experiments, the number of circuit layers is $L=6$ and the layer fidelity is $p=0.9$.}
\label{table:inverse_mse_growth_rate_ab}
\begin{ruledtabular}
\begin{tabular}{cccc} 
True value & Prior distribution & Predicated  & Empirical  \\
of $\Pi$ & of $\Pi$ & growth rate  & growth rate \\
& & of $\operatorname{MSE}_t^{-1}$ & of $\operatorname{MSE}_t^{-1}$ \\
 \hline
 $-0.4$ & $\mathcal{N}(-0.43, 0.0009)$  & 0.71 & 0.75  \\ 
 $0.6$ & $\mathcal{N}(0.64, 0.0009)$  & 0.91 & 0.86  \\ 
 $0.52$ & $\mathcal{N}(0.49, 0.0009)$  & 0.70  & 0.75  \\ 
 $-0.1$ & $\mathcal{N}(-0.14, 0.0009)$  & 0.60 & 0.68 \\ 
 $0.9$ & $\mathcal{N}(0.92, 0.0009)$  & 2.03 & 1.84  \\ 
\end{tabular}
\end{ruledtabular}
\end{table}

\begin{table}[!ht]
\caption{The predicted and empirical growth rates of the inverse MSEs of AF-ELF-based estimators of $\Pi$ in the five experiments in Section \ref{subsec:compare_schemes}. In all of these experiments, the number of circuit layers is $L=6$ and the layer fidelity is $p=0.9$.}
\label{table:inverse_mse_growth_rate_af}
\begin{ruledtabular}
\begin{tabular}{cccc} 
True value & Prior distribution & Predicated  & Empirical  \\
of $\Pi$ & of $\Pi$ & growth rate  & growth rate \\
& & of $\operatorname{MSE}_t^{-1}$ & of $\operatorname{MSE}_t^{-1}$ \\
 \hline
 $-0.4$ & $\mathcal{N}(-0.43, 0.0009)$  & 3.77 & 3.70 \\ 
 $0.6$ & $\mathcal{N}(0.64, 0.0009)$  & 4.73 & 4.68 \\ 
 $0.52$ & $\mathcal{N}(0.49, 0.0009)$  & 4.36 & 4.19 \\ 
 $-0.1$ & $\mathcal{N}(-0.14, 0.0009)$  & 3.07 & 3.08 \\ 
 $0.9$ & $\mathcal{N}(0.92, 0.0009)$  & 13.22 & 14.74 \\  
\end{tabular}
\end{ruledtabular}
\end{table}

\subsubsection{Analyzing the impact of layer fidelity on the performance of estimation}
\label{subsec:impact_layer_fidelity}

To investigate the influence of layer fidelity on the performance of estimation, we run Bayesian inference with AB/AF ELF for fixed circuit depth but varied layer fidelity. Specifically, we set the number $L$ of circuit layers to be $6$, and vary the layer fidelity $p$ from $0.75, 0.8, 0.85, 0.9$ to $0.95$.
Figures \ref{fig:layer_fidelity_on_ab_elf} and \ref{fig:layer_fidelity_on_af_elf} illustrate the simulation results in the ancilla-based and ancilla-free cases, respectively. As expected, higher layer fidelity leads to better performance of the algorithm. Namely, the less noisy the circuit is, the faster the RMSE of the estimator of $\Pi$ decreases. This is consistent with the fact that the $\hat{R}_0$ factors of AB ELF and AF ELF are monotonically increasing functions of $f=p^L$, as demonstrated by Figure \ref{fig:R0_various_fidelities}.

\begin{figure*}[!ht]
\begin{minipage}{.48\textwidth}
\center
\includegraphics[width=0.95\linewidth]{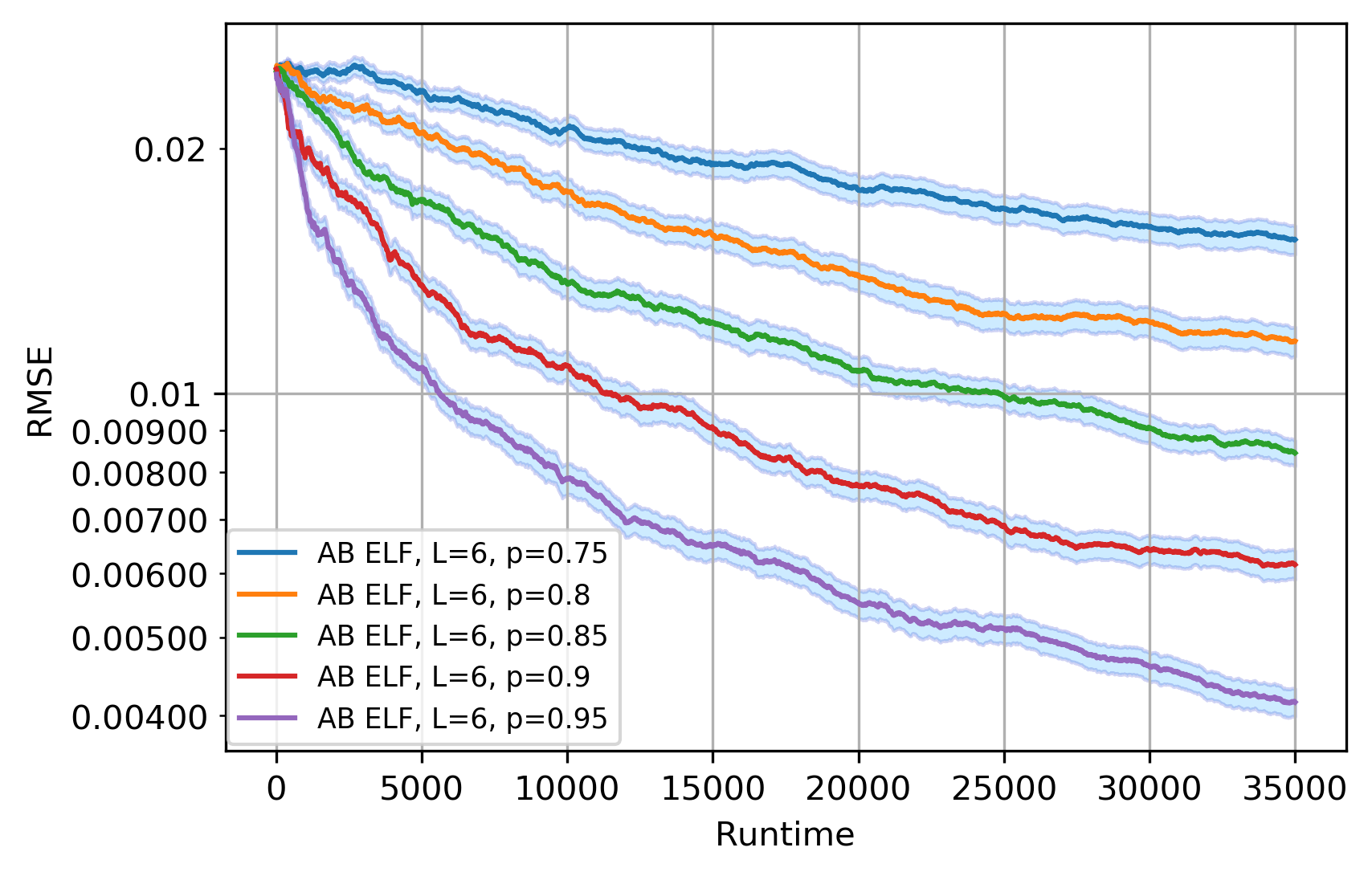}
\end{minipage}
\begin{minipage}{.48\textwidth}
\center
\includegraphics[width=0.95\linewidth]{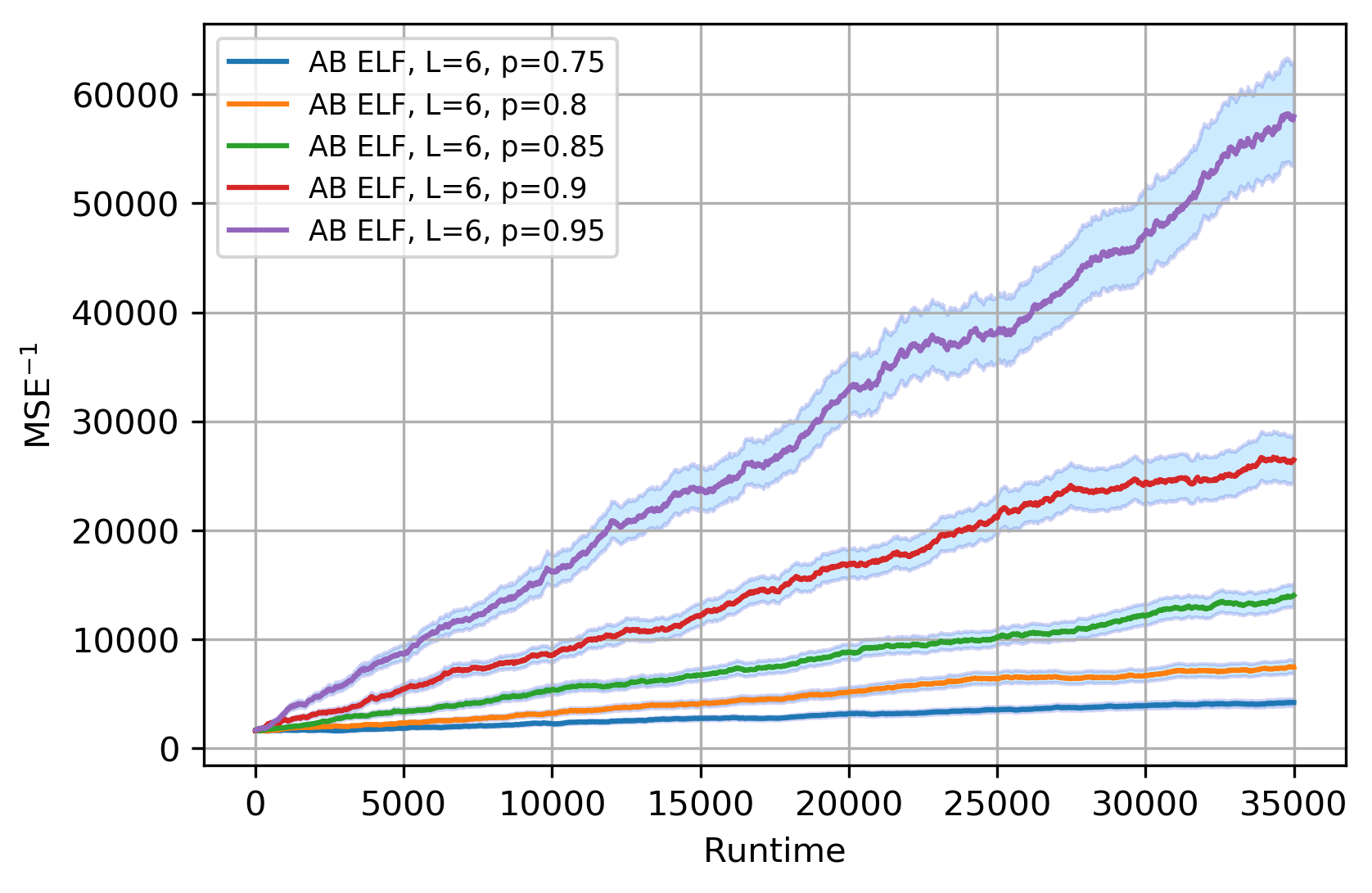}
\end{minipage}
\caption{This figure demonstrates the impact of layer fidelity on the performance of AB ELF. Here $\Pi$ has true value $0.18$ and prior distribution $\mathcal{N}(0.205, 0.0009)$, the number $L$ of circuit layers is $6$, and the layer fidelity $p$ is varied from $0.75, 0.8, 0.85, 0.9$ to $0.95$. Note that higher layer fidelity leads to better performance of estimation.}
\label{fig:layer_fidelity_on_ab_elf}
\end{figure*}

\begin{figure*}[!ht]
\begin{minipage}{.48\textwidth}
\center
\includegraphics[width=0.95\linewidth]{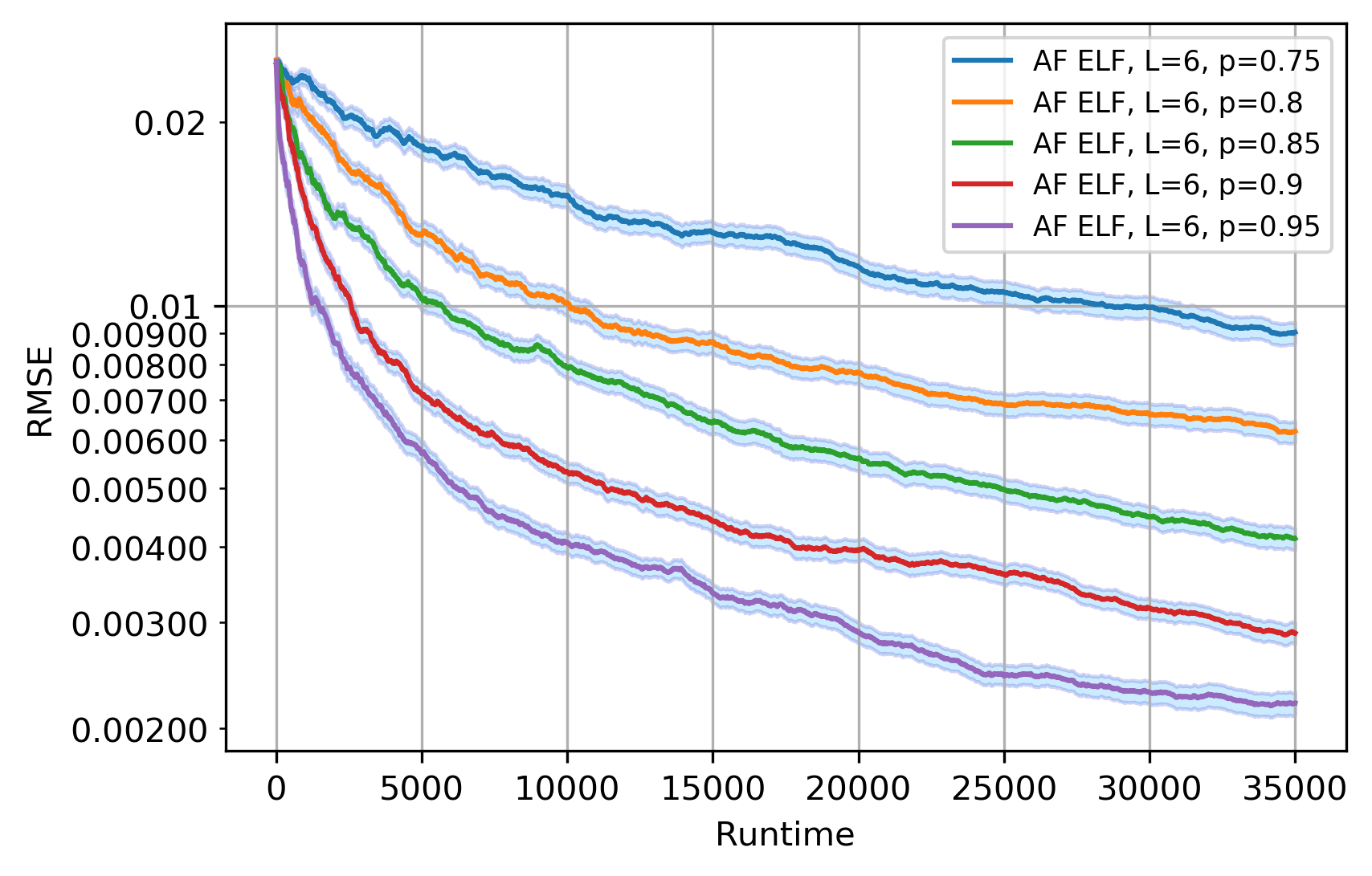}
\end{minipage}
\begin{minipage}{.48\textwidth}
\center
\includegraphics[width=0.95\linewidth]{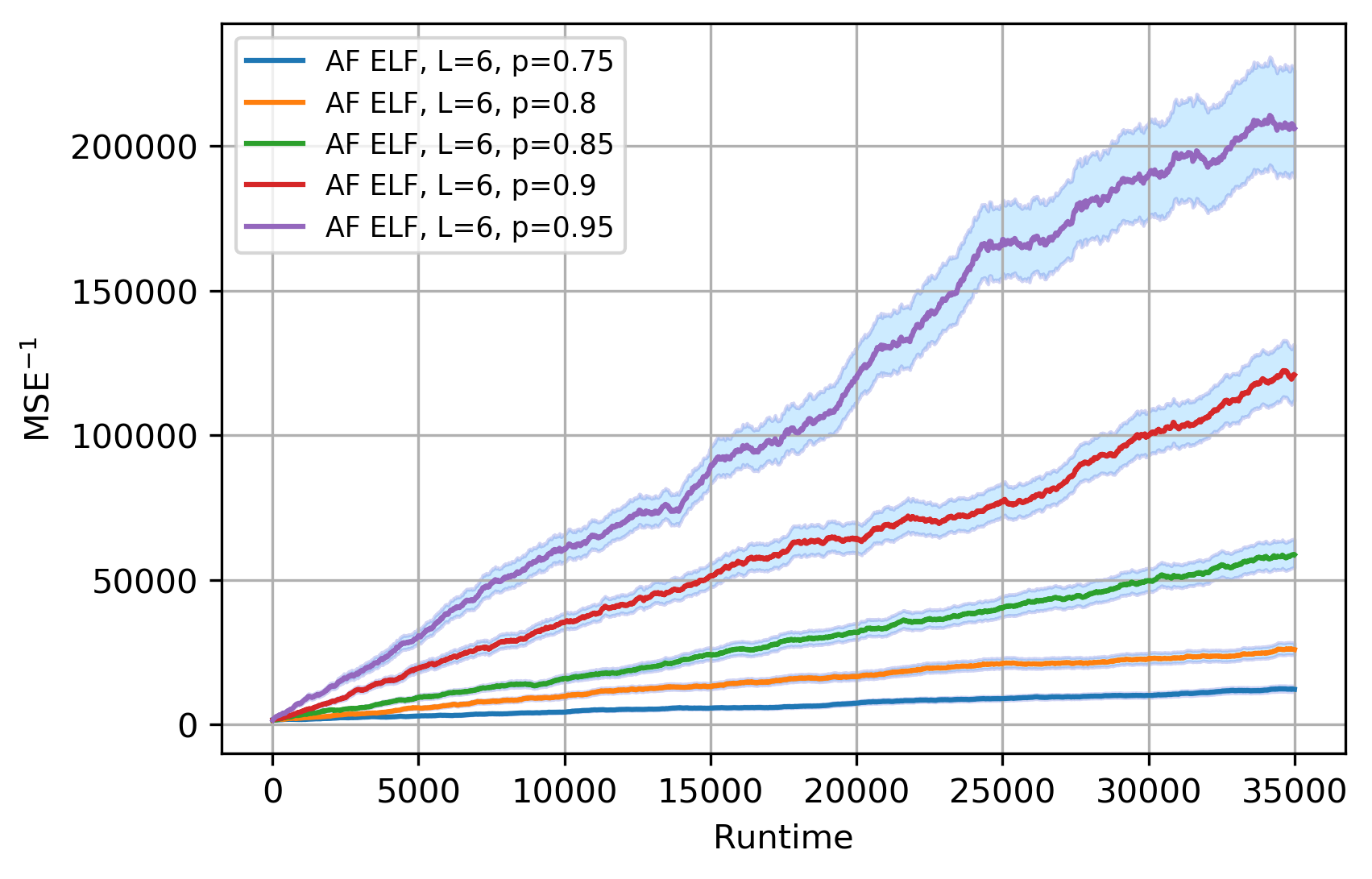}
\end{minipage}
\caption{This figure demonstrates the impact of layer fidelity on the performance of AF ELF. Here $\Pi$ has true value $0.18$ and prior distribution $\mathcal{N}(0.205, 0.0009)$, the number $L$ of circuit layers is $6$, and the layer fidelity $p$ is varied from $0.75, 0.8, 0.85, 0.9$ to $0.95$. Note that higher layer fidelity leads to better performance of estimation.}
\label{fig:layer_fidelity_on_af_elf}
\end{figure*}

\begin{figure*}[!ht]
\begin{minipage}{.48\textwidth}
\center
\includegraphics[width=0.95\linewidth]{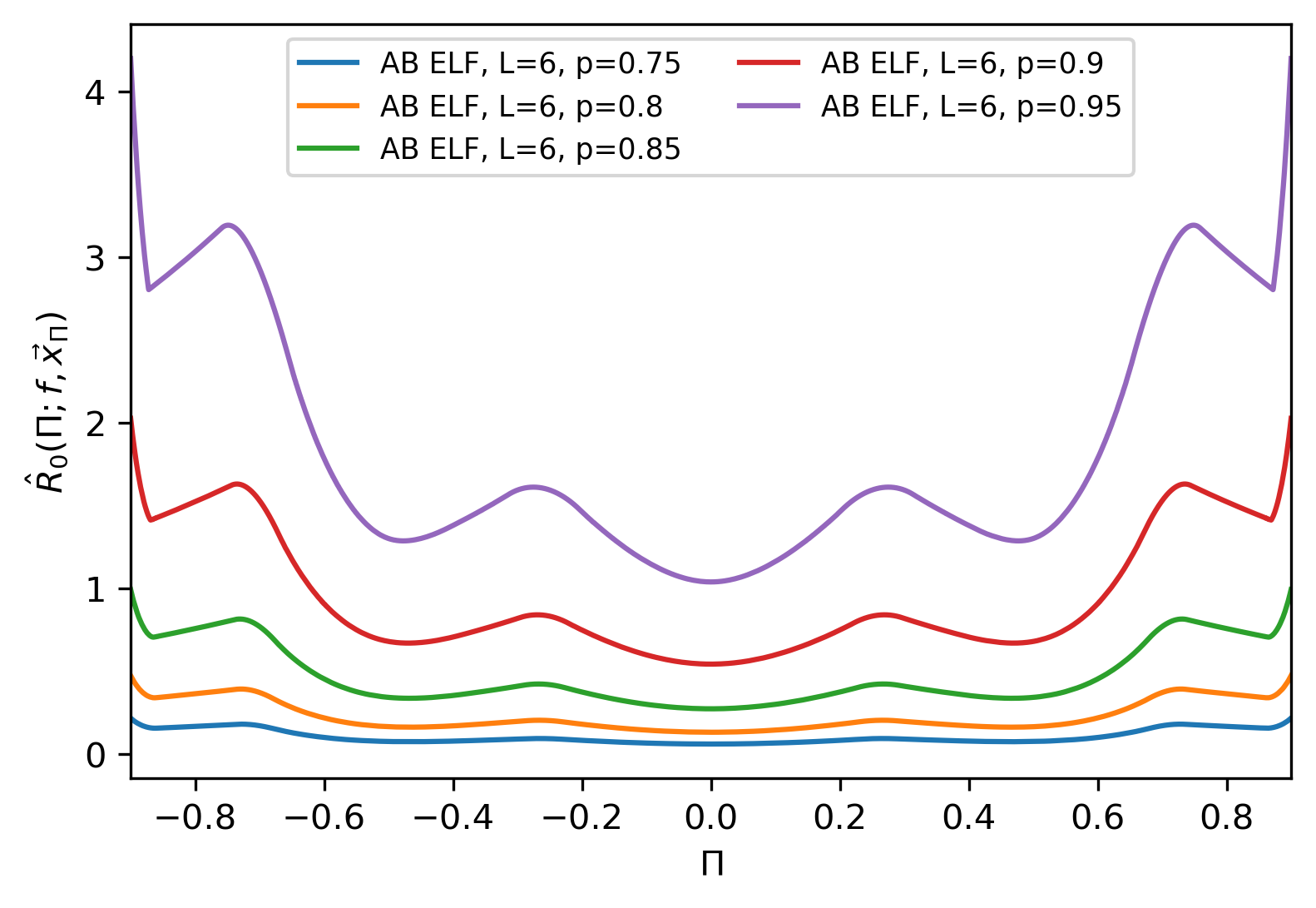}
\end{minipage}
\begin{minipage}{.48\textwidth}
\center
\includegraphics[width=0.95\linewidth]{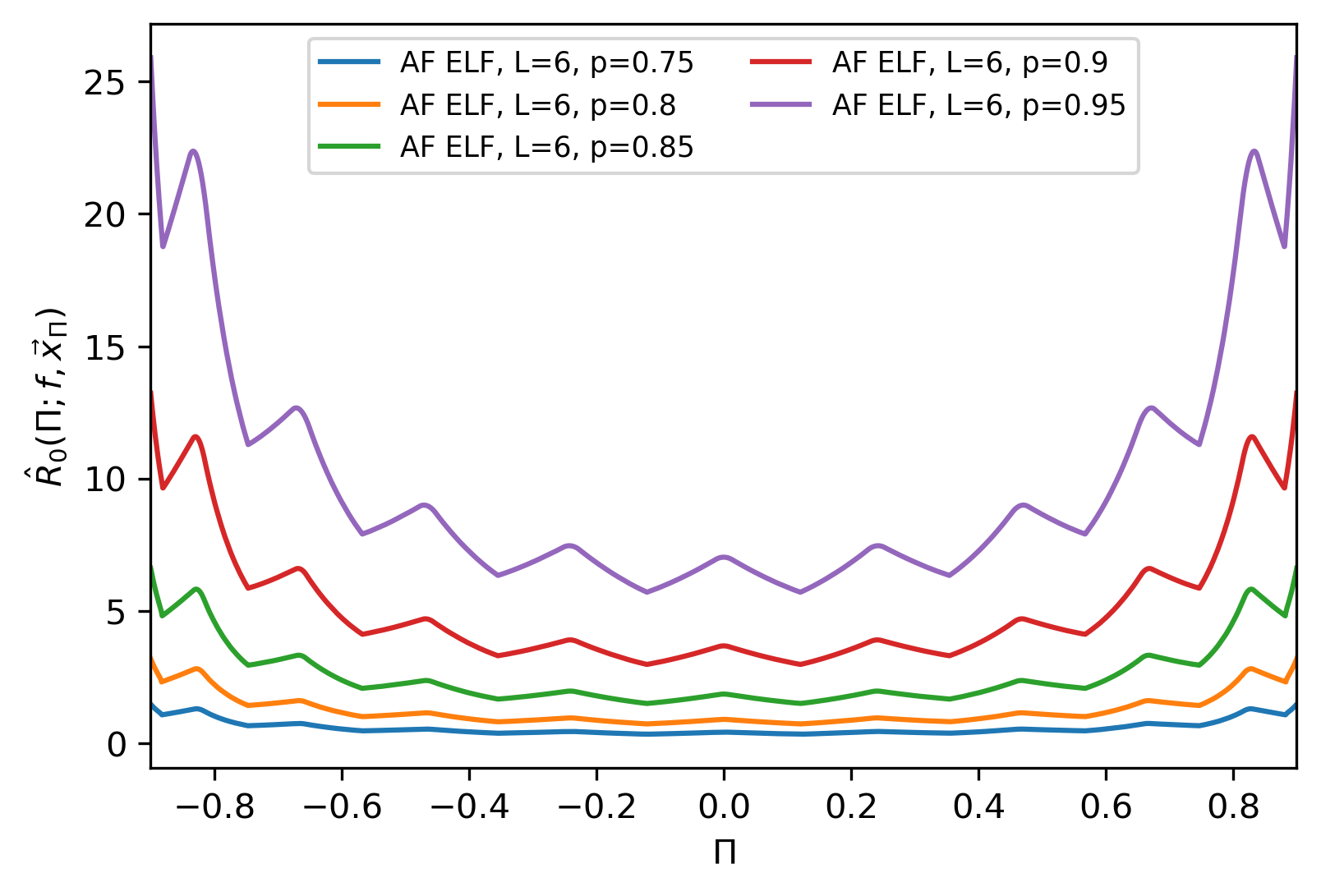}
\end{minipage}
\caption{This figure shows the $\hat{R}_0$ factors of AB ELF and AF ELF for $\Pi \in [-0.9, 0.9]$, when the number $L$ of circuit layers is fixed to $6$ and the layer fidelity $p$ is varied from $0.75, 0.8, 0.85, 0.9$ to $0.95$. Here $\vec x_{\Pi} \in \mathbb{R}^{2L}$ is a global maximum point of $\hat{R}_0(\Pi; f, \vec x)$ for given $\Pi$ and $f=p^L$. Note that higher layer fidelity leads to larger $\hat{R}_0$ factors of AB ELF and AF ELF.}
\label{fig:R0_various_fidelities}
\end{figure*}

\subsubsection{Analyzing the impact of circuit depth on the performance of estimation}
\label{subsec:impact_circuit_depth}
To investigate the influence of circuit depth on the performance of estimation, we run Bayesian inference with AB/AF ELF for fixed layer fidelity but varied circuit depth. Specifically, we set the layer fidelity $p$ to be $0.9$, and vary the number $L$ of circuit layers from $1$ to $5$. Figures \ref{fig:circuit_depth_on_ab_elf} and \ref{fig:circuit_depth_on_af_elf} illustrate the simulation results in the ancilla-based and ancilla-free cases, respectively. These results indicate that larger $L$ (i.e. deeper circuit) does not necessarily lead to better performance. The optimal choice of $L$ is indeed a subtle issue. This can be intuitively understood as follows. As $L$ increases, the likelihood function becomes steeper \footnote{More precisely, the slopes of AB ELF and AF ELF scale linearly in the number $L$ of circuit layers.} and hence gains more statistical power, if the circuit for generating it is noiseless. But on the other hand, the true fidelity of the circuit decreases exponentially in $L$ and the implementation cost of this circuit grows linearly in $L$. So one must find a perfect balance among these factors in order to maximize the performance of the estimation algorithm.

The above results are consistent with Figure \ref{fig:R0-various-depths} which illustrates the $\hat{R}_0$ factors of AB ELF and AF ELF in the same setting. Note that larger $L$ does not necessarily lead to larger $\hat{R}_0$ factor of AB/AF ELF. One can evaluate this factor for different $L$'s and choose the one that maximizes this factor. This often enables us to find a satisfactory (albeit not necessarily optimal) $L$. It remains an open question to devise an efficient strategy for determining the  $L$ that optimizes the performance of estimation given the layer fidelity $p$ and a prior distribution of $\Pi$.

\begin{figure*}[!ht]
\begin{minipage}{.48\textwidth}
\center
\includegraphics[width=0.95\linewidth]{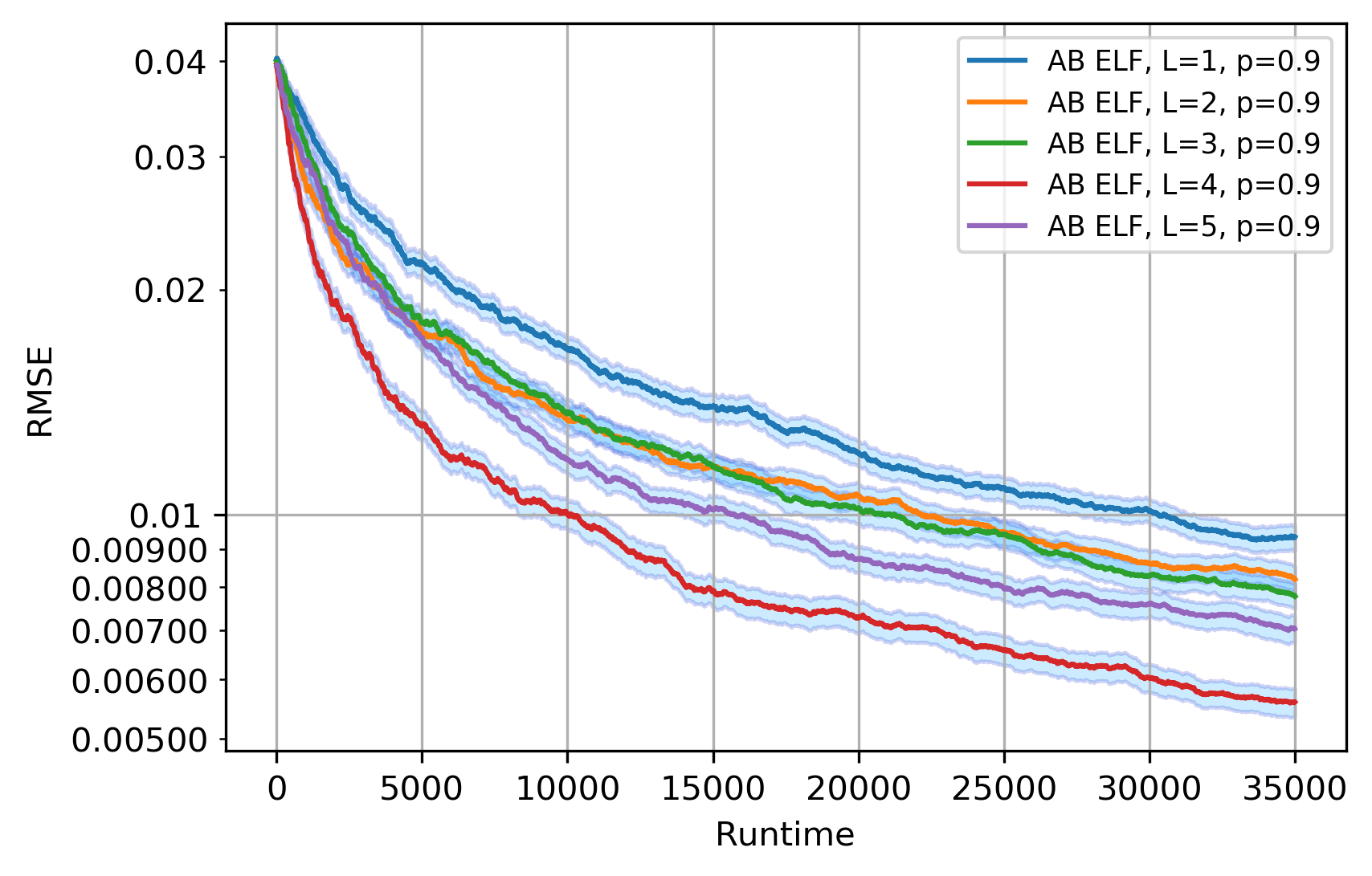}
\end{minipage}
\begin{minipage}{.48\textwidth}
\center
\includegraphics[width=0.95\linewidth]{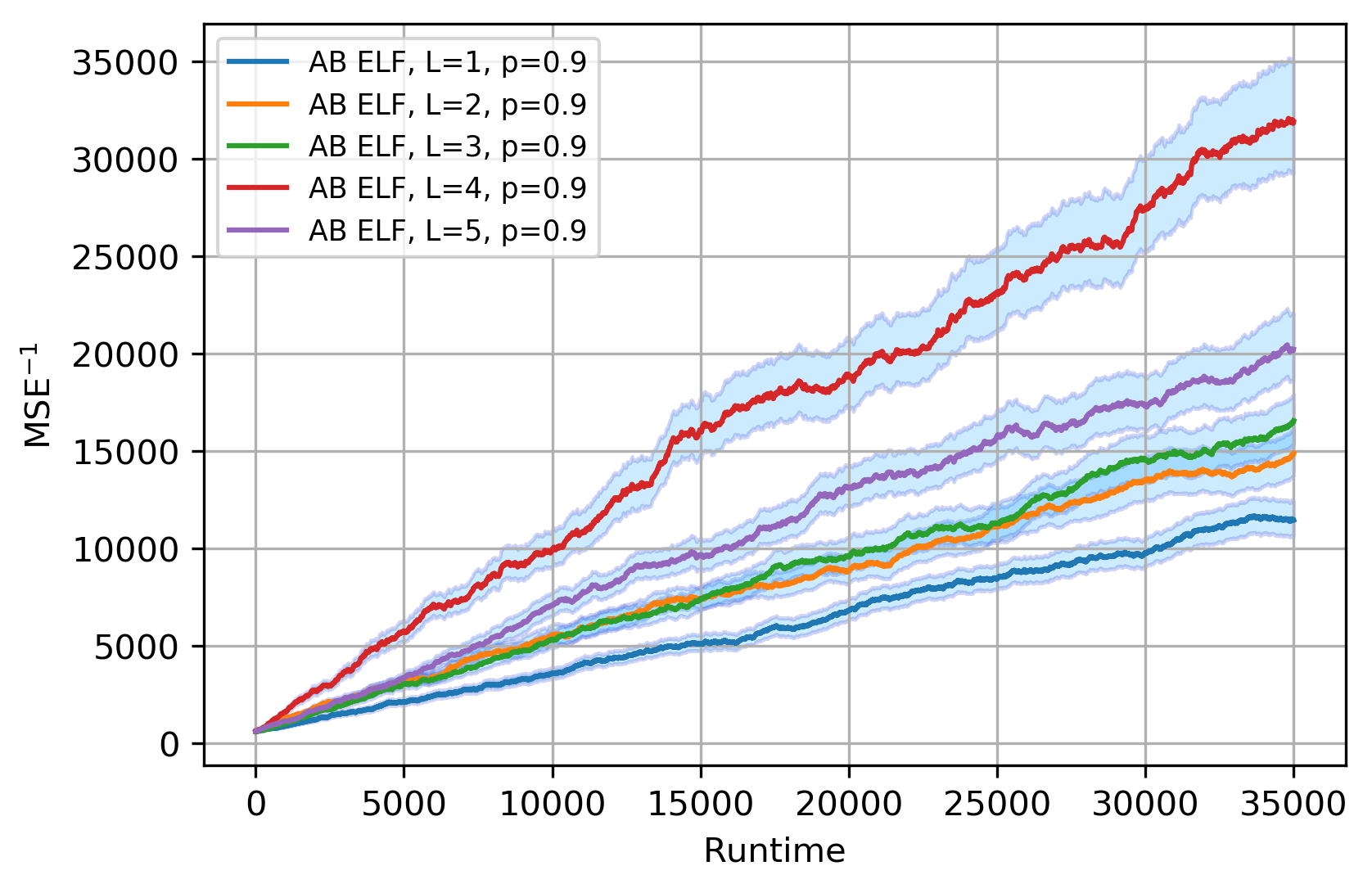}
\end{minipage}
\caption{This figure demonstrates the impact of circuit depth on the performance of AB ELF. Here $\Pi$ has true value $0.35$ and prior distribution $\mathcal{N}(0.39, 0.0016)$, the layer fidelity $p$ is $0.9$, and the number $L$ of layers is varied from $1$ to $5$. Note that the best performance is achieved by $L=4$ instead of $L=5$. }
\label{fig:circuit_depth_on_ab_elf}
\end{figure*}

\begin{figure*}[!ht]
\begin{minipage}{.48\textwidth}
\center
\includegraphics[width=0.95\linewidth]{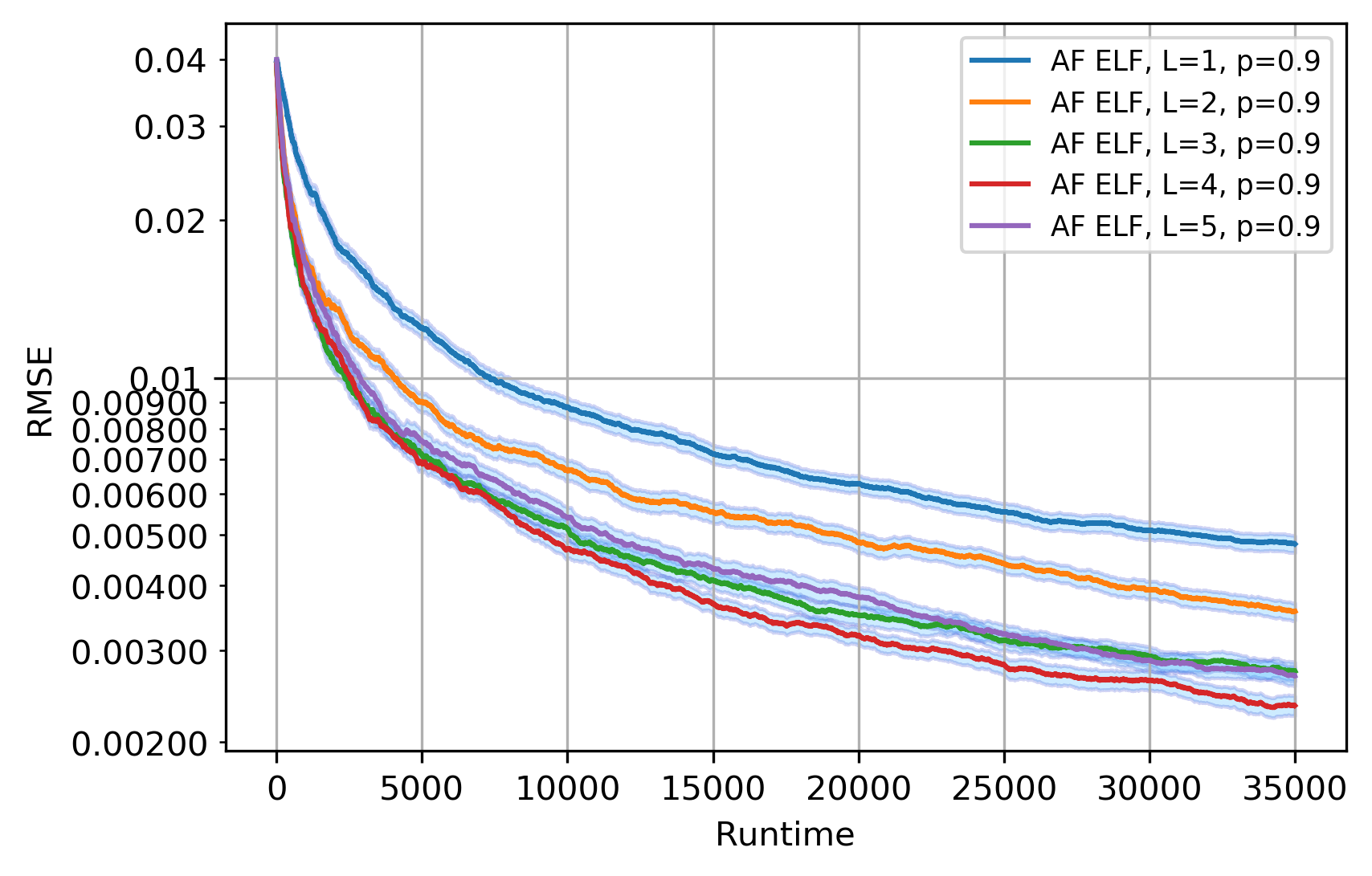}
\end{minipage}
\begin{minipage}{.48\textwidth}
\center
\includegraphics[width=0.95\linewidth]{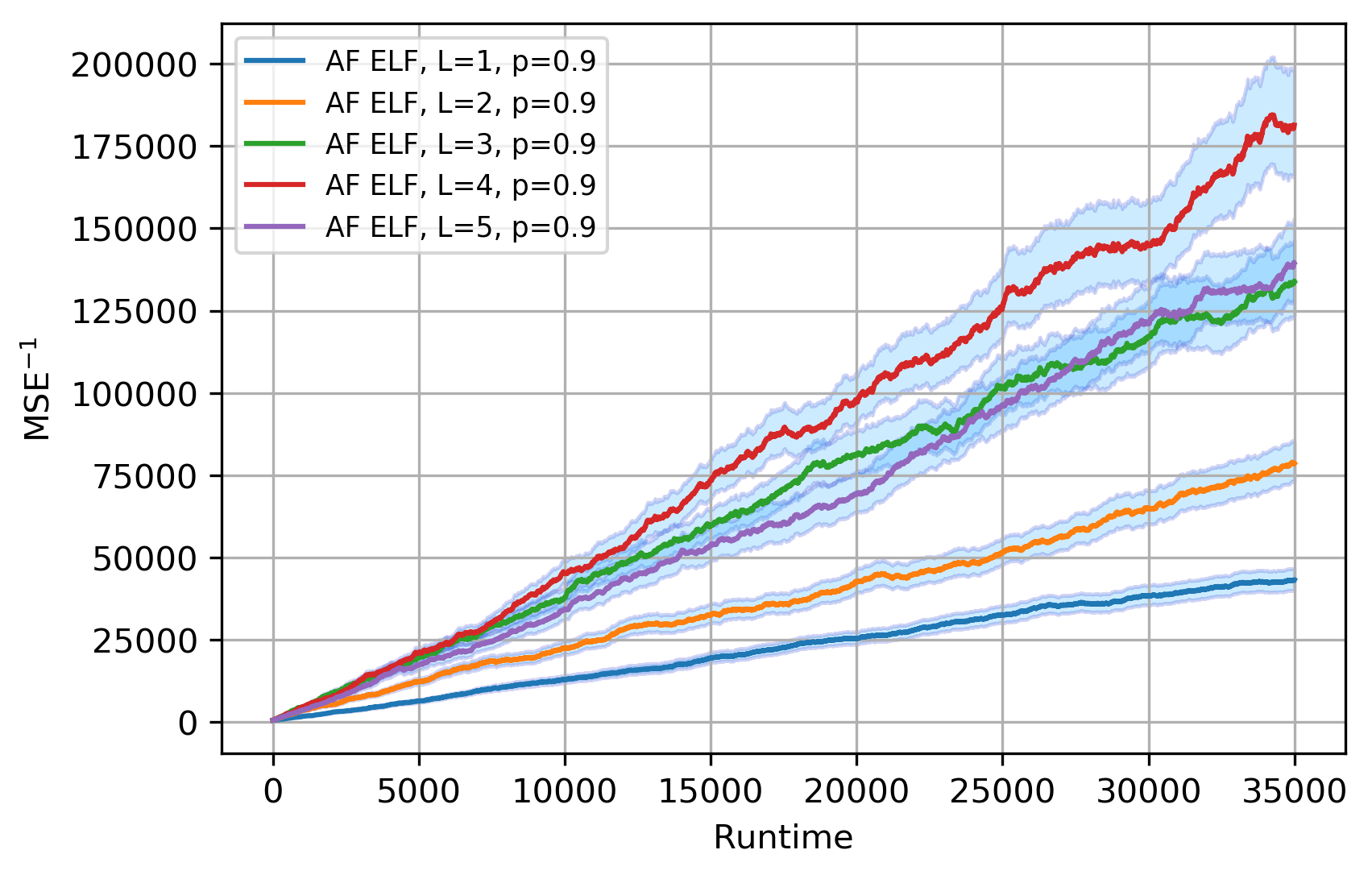}
\end{minipage}
\caption{This figure demonstrates the impact of circuit depth on the performance of AF ELF. Here $\Pi$ has true value $0.35$ and prior distribution $\mathcal{N}(0.39, 0.0016)$, the layer fidelity $p$ is $0.9$, and the number $L$ of layers is varied from $1$ to $5$. Note that the best performance is achieved by $L=4$ instead of $L=5$.}
\label{fig:circuit_depth_on_af_elf}
\end{figure*}

\begin{figure*}[!ht]
\begin{minipage}{.48\textwidth}
\center
\includegraphics[width=0.95\linewidth]{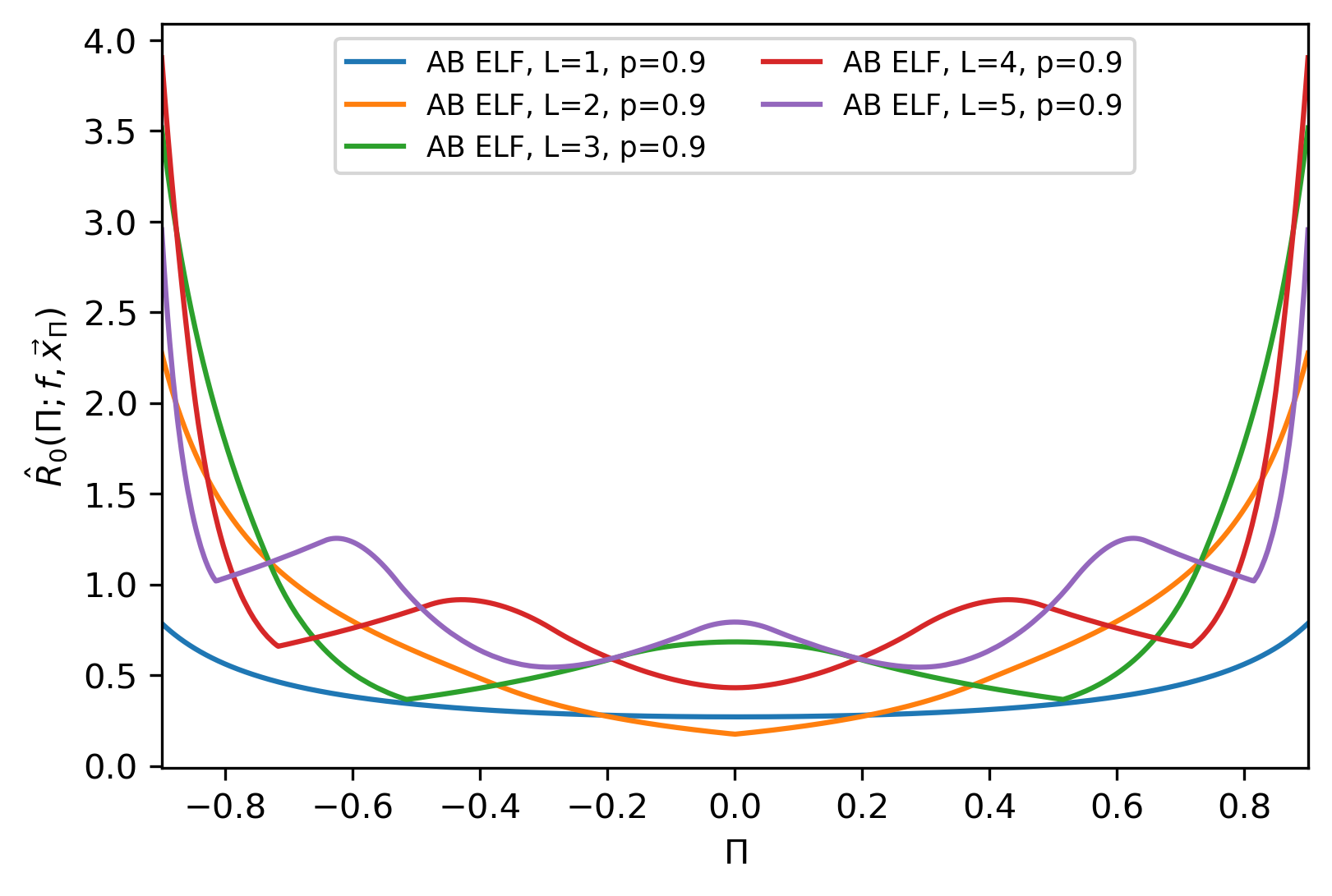}
\end{minipage}
\begin{minipage}{.48\textwidth}
\center
\includegraphics[width=0.95\linewidth]{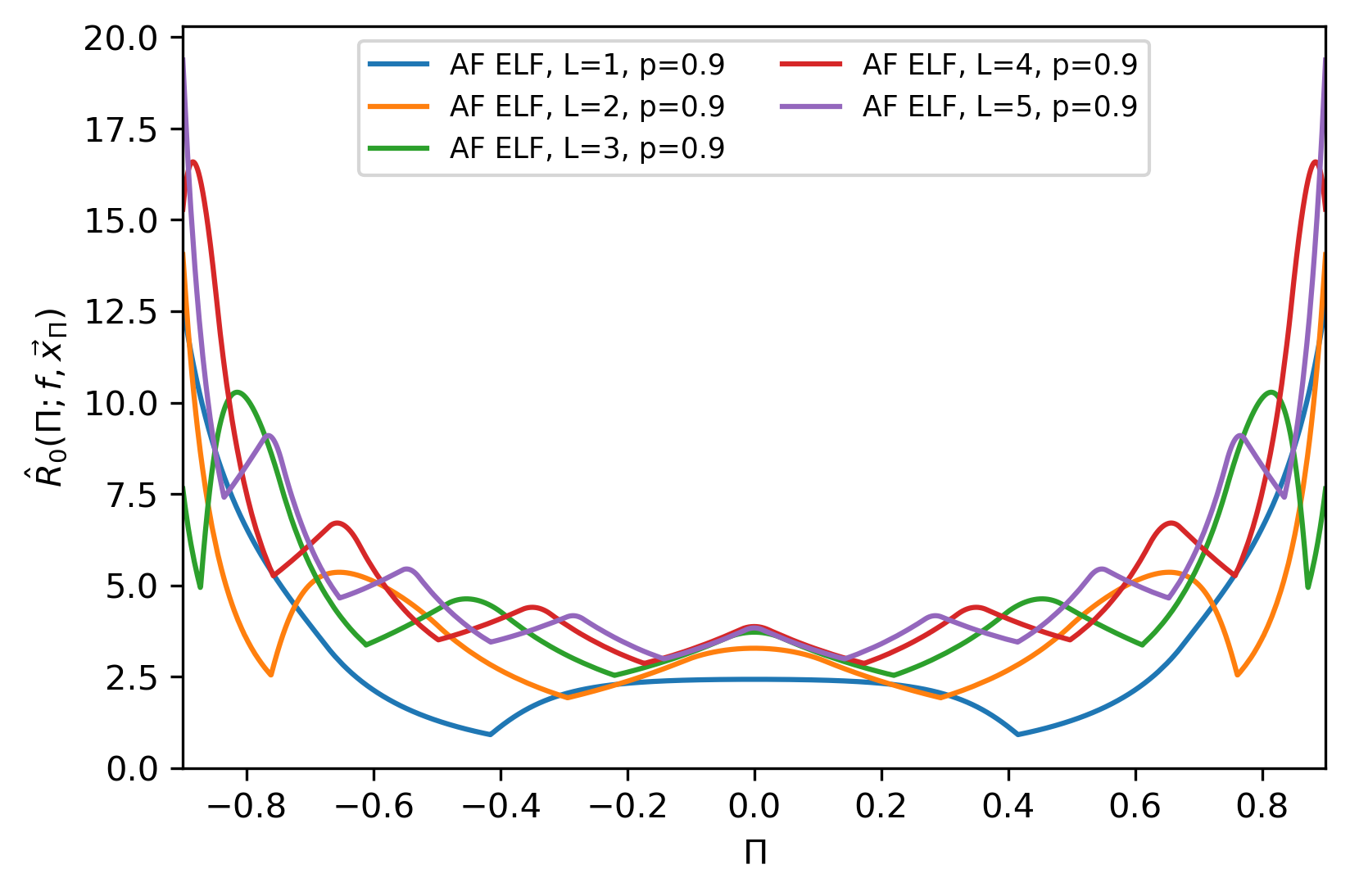}
\end{minipage}
\caption{This figure shows the $\hat{R}_0$ factors of AB ELF and AF ELF for $\Pi \in [-0.9, 0.9]$, when the number $L$ of circuit layers is varied from $1$ to $5$ and the layer fidelity $p$ is fixed to $0.9$. Here $\vec x_{\Pi} \in \mathbb{R}^{2L}$ is a global maximum point of $\hat{R}_0(\Pi; f, \vec x)$ for given $\Pi$ and $f=p^L$. Note that the $L$ that maximizes the $\hat{R}_0$ factor depends heavily on the value of $\Pi$ in both the ancilla-based and ancilla-free cases.}
\label{fig:R0-various-depths}
\end{figure*}

\section{A model for noisy algorithm performance}
\label{sec:runtimemodel}

Our aim is to build a model for the runtime needed to achieve a target mean-squared error in the estimate of $\Pi$ as it is scaled to larger systems and run on devices with better gate fidelities.
This model will be built on two main assumptions.
The first is that the growth rate of the inverse mean squared error is well-described by half the inverse variance rate expression (c.f. Eq.~(\ref{eq:fisherrate}) ).
The variance contribution to the MSE is the variance in the estimator, which is not necessarily the same as the posterior variance.
Appendix \ref{sec:evidences} shows that the variance closely tracks the variance of the estimator (c.f. Figure \ref{fig:two_variances}). We will use the posterior variance $\sigma^2_{i+1}$ in place of the estimator variance.
The half is due to the conservative estimate that the variance and squared bias contribute equally to the mean squared error. In Appendix \ref{sec:evidences} we show evidence supporting this assumption of a small bias (c.f. Figure  \ref{fig:bias_vs_std}).
The second assumption is an empirical lower bound on the variance reduction factor, which is motivated by numerical investigations of the Chebyshev likelihood function.

We carry out analysis for the MSE with respect to the estimate of $\theta$.
We will then convert the MSE of this estimate to an estimate of MSE with respect to $\Pi$.
Our strategy will be to integrate upper and lower bounds for the rate expression $R(\mu, \sigma; f, m)$ in Eq.~(\ref{eq:fisherrate}) to arrive at bounds for inverse MSE as a function of time.
To help our analysis we make the substitution $m=T(L)=2L+1$ and reparameterize the way noise is incorporated by introducing $\lambda$ and $\alpha$ such that $f^2=\bar{p}^2p^{2L}=e^{-\lambda(2L+1)-\alpha}=e^{-\lambda m-\alpha}$.

The upper and lower bounds on this rate expression are based on findings for the Chebyshev likelihood functions, where $\vec{x}=(\frac{\pi}{2})^{\oplus 2L}:=(\frac{\pi}{2}, \frac{\pi}{2}, \dots, \frac{\pi}{2}) \in \mathbb{R}^{2L}$.
Since the Chebyshev likelihood functions are a subset of the engineered likelihood functions, a lower bound on the Chebyshev performance gives a lower bound on the ELF performance.
We leave as a conjecture that the upper bound for this rate in the case of ELF is a small multiple (e.g. 1.5) of the upper bound we have established for the Chebyshev rate.

The Chebyshev upper bound is established as follows.
For fixed $\sigma$, $\lambda$, and $m$, one can show\footnote{
For the Chebyshev likelihood functions, we can express the variance reduction factor as $\mathcal V(\mu,\sigma; f,
    (\tfrac \pi 2)^{\oplus 2L}
    )=m_L^2/\left(1+\left(f^{-2} \e^{m_L^2 \sigma^2}-1\right) \csc^2(m_L\mu)\right)$
whenever $\sin(m_L\mu) \neq 0$. Then, $\csc^2(m_L \mu) \geq 1$ implies that $\mathcal V(\mu,\sigma; f,
    (\tfrac \pi 2)^{\oplus 2L}
    ) \leq f^2 m_L^2 \e^{-m_L^2\sigma^2}$. Here $(\frac{\pi}{2})^{\oplus 2L}=(\frac{\pi}{2}, \frac{\pi}{2}, \dots, \frac{\pi}{2}) \in \mathbb{R}^{2L}$.} that the variance reduction factor achieves a maximum value of $\mathcal{V}=m^2 \exp(-m^2\sigma^2-\lambda m -\alpha)$, occurring at $\mu=\pi/2$.
This expression is less than $m^2 \e^{-m^2 \sigma^2}$, which achieves a maximum of $(\e\sigma^2)^{-1}$ at $m =\tfrac 1{\sigma}$.
Thus, the factor $1/(1-\sigma^2\mathcal{V})$ cannot exceed $1/(1-e^{-1})\approx 1.582$.
Putting this all together, for fixed $\sigma$, $\lambda$, and $m$, the maximum rate is upper bounded as $R(\mu, \sigma; \lambda, \alpha, m)\leq  \frac{em}{e-1}\exp(-m^2\sigma^2-\lambda m -\alpha)$.
This follows from the fact that $R$ is monotonic in $\mathcal{V}$ and that $\mathcal{V}$ is maximized at $\mu=\pi/2$.
In practice, we will aim to choose a value of $L$ that maximizes the inverse variance rate.
The rate achieved by discrete $L$ cannot exceed the value we obtain when optimizing the above upper bound over continuous value of $m$.
This optimal value is realized for $1/m=\frac{1}{2}\left(\sqrt{\lambda^2+8\sigma^2}+\lambda\right)$.
We define $\bar{R}(\sigma; \lambda, \alpha)$ by evaluating $R(\pi/2, \sigma; \lambda, \alpha, m)$ at this optimum value,
\begin{widetext}
\begin{align}
    \label{eq:optrate}
    \bar{R}(\sigma;\lambda,\alpha)=\frac{2e^{-\alpha-1}}{\sqrt{\lambda^2+8\sigma^2}+\lambda}\exp\left(\frac{2\sigma^2}{4\sigma^2+\lambda^2+\lambda^2\sqrt{8\sigma^2/\lambda^2+1}}\right),
\end{align}
\end{widetext}
which gives the upper bound on the Chebyshev rate
\begin{align}
    R_C^*(\mu, \sigma; \lambda, \alpha) = \max_L R(\mu, \sigma; \lambda, \alpha, m) \leq \frac{e}{e-1}\bar{R}(\sigma;\lambda, \alpha).
\end{align}
We do not have an analytic lower bound on the Chebyshev likelihood performance.
We can establish an empirical lower bound based on numerical checks.
For any fixed $L$, the inverse variance rate is zero at the $2L+2$ points $\mu \in \{0, \pi/(2L+1), 2\pi/(2L+1), \ldots, 2L\pi/(2L+1), \pi\}$.
Since the rate is zero at these end points for all $L$, the global lower bound on $R_C^*$ is zero.
However, we are not concerned with the poor performance of the inverse variance rate near these end points.
When we convert the estimator from $\hat{\theta}$ to $\hat{\Pi}=\cos \hat{\theta}$, the information gain near these end point actually tends to a large value.
For the purpose of establishing useful bounds, we will restrict $\mu$ to be in the range $[0.1\pi, 0.9\pi]$.
In the numerical tests
\footnote{We searched over a uniform grid of 50000 values of $\theta$, $L$ values from $L^*/3$ to $3L^*$, where $L^*$ is to the optimized value used to arrive at Eq. (\ref{eq:optrate}), and $\sigma$ and $\lambda$ ranging over $[10^{-1}, 10^{-2}, \ldots, 10^{-5}]$. For each $(\sigma, \lambda)$ pair we found the $\theta$ for which the maximum inverse variance rate (over $L$) is a minimum. For all $(\sigma,\lambda)$ pairs checked, this worst-case rate was always between $0.4$ and $0.5$, with the smallest value found being $R=0.41700368\geq (e-1)^2/e^2$.}
we find that for all $\mu\in[0.1\pi, 0.9\pi]$, there is always a choice of $L$ for which the inverse variance rate is above $(e-1)^2/e^2\approx 0.40$ times the upper bound.
Putting these together, we have
\begin{align}
\label{eq:ratebounds}
\frac{e-1}{e}\bar{R}(\sigma; \lambda, \alpha) \leq R_C^*(\mu, \sigma; \lambda, \alpha) \leq \frac{e}{e-1}\bar{R}(\sigma;\lambda,\alpha).
\end{align}
It is important to note that by letting $m$ be continuous, certain values of $\sigma$ and $\lambda$ can lead to an optimal $m$ for which $L=(m-1)/2$ is negative.
Therefore, these results apply only in the case that $\lambda\leq 1$, which ensures that $m\geq1$. We expect this model to break down in the large-noise regime (i.e. $\lambda \geq 1$).

For now, we will assume that the rate tracks the geometric mean of these two bounds, i.e. $R^*_C(\sigma, \lambda, \mu)=\bar{R}(\sigma, \lambda)$, keeping in mind that the upper and lower bounds are small constant factors off of this.
We assume that the inverse variance grows continuously in time at a rate given by the difference quotient expression captured by the inverse-variance rate, $R^*=\frac{\d}{\d t}\frac{1}{\sigma^2}$. Letting $F=1/\sigma^2$ denote this inverse variance, the rate equation above can be recast as a differential equation for $F$,
\begin{widetext}
\begin{align}
    \frac{\d F}{\d t} = \frac{2e^{-\alpha-1}}{\lambda\sqrt{1+8/(F\lambda^2)}+\lambda}\exp\left(\frac{2}{4+\lambda^2F+\lambda^2F\sqrt{1+8/(F\lambda^2)}}\right).
    \label{eq:diffeq_df_dt}
\end{align}
\end{widetext}
Through this expression, we can identify both the Heisenberg limit behavior and shot-noise limit behavior.
For $F\ll 1/\lambda^2$, the differential equation becomes
\begin{align}
    \frac{\d F}{\d t} = \frac{e^{-\alpha-1/2}}{\sqrt{2}}\sqrt{F},
\end{align}
which integrates to a quadratic growth of the inverse squared error $F(t) \sim t^2$.
This is the signature of the Heisenberg limit regime.
For $F\gg 1/\lambda^2$, the rate approaches a constant,
\begin{align}
    \frac{\d F}{\d t} = \frac{e^{-\alpha-1}}{\lambda}.
\end{align}
This regime yields a linear growth in the inverse squared error $F(t) \sim t$, indicative of the shot-noise limit regime.

In order to make the integral tractable, we can replace the rate expression with integrable upper and lower bound expressions (to be used in tandem with our previous bounds).
Letting $x=\lambda^2 F$, these bounds are re-expressed as,
\begin{align}
    \frac{2e^{-\alpha-1}\lambda}{1+\frac{1}{\sqrt{12x}}+\frac{(x+4)}{\sqrt{x^2+8x}}} \geq \frac{\d x}{\d t} \geq \frac{2e^{-\alpha-1}\lambda}{1+\frac{1}{\sqrt{4x}}+\frac{(x+4)}{\sqrt{x^2+8x}}}.
\end{align}
From the upper bound we can establish a lower bound on the runtime, by treating time as a function of $x$ and integrating,
\begin{align}
    \int_{0}^{t} dt &\geq \int_{x_0}^{x_f}\d x\frac{e^{\alpha + 1}}{2\lambda}\left(1+\dfrac{1}{\sqrt{12x}}+\dfrac{x+4}{\sqrt{x^2+8x}}\right)\\
    &=\frac{e^{\alpha + 1}}{2\lambda}\left (x_f+\sqrt{\dfrac{x_f}{3}}+\frac{1}{2}\sqrt{x_f^2+8x_f}-x_0 \right.
    \nonumber\\
    &\quad \left.-\sqrt{\dfrac{x_0}{3}}-\frac{1}{2}\sqrt{x_0^2+8x_0}\right).
\end{align}
Similarly, we can use the lower bound to establish an upper bound on the runtime.
Here we introduce our assumption that, in the worst case, the MSE of the phase estimate $\varepsilon^2_{\theta}$ is twice the variance (i.e. the variance equals the bias),
so the variance must reach half the MSE:
$\sigma^2=\varepsilon_{\theta}^2/2=\lambda^2/x$.
In the best case, we assume the bias in the estimate is zero and set $\varepsilon_{\theta}^2=\lambda^2/x$.
We combine these bounds with the upper and lower bounds of Eq.~(\ref{eq:ratebounds}) to arrive at the bounds on the estimation runtime as a function of target MSE,
\begin{widetext}
\begin{align}
\label{eq:runtimebounds}
   (e-1)\frac{e^{-\lambda}}{2\bar{p}^2}\left(\frac{ \lambda }{\varepsilon_{\theta}^2} + \frac{1}{\sqrt{3}\varepsilon_{\theta}} +\sqrt{\left(\frac{\lambda}{\varepsilon_{\theta}^2}\right)^2+\left(\frac{2\sqrt{2}}{\varepsilon_{\theta}}\right)^2}\right) \leq t_{\varepsilon_{\theta}} \leq \frac{e^2}{e-1}\frac{e^{-\lambda}}{\bar{p}^2}\left(\frac{ \lambda }{\varepsilon_{\theta}^2} + \frac{1}{\sqrt{2}\varepsilon_{\theta}} +\sqrt{\left(\frac{\lambda}{\varepsilon_{\theta}^2}\right)^2+\left(\frac{2\sqrt{2}}{\varepsilon_{\theta}}\right)^2}\right),
\end{align}
\end{widetext}
where $\theta\in[0.1\pi, 0.9\pi]$.

At this point, we can convert our phase estimate $\hat{\theta}$ back into the amplitude estimate $\hat{\Pi}$.
The MSE with respect to the amplitude estimate $\varepsilon^2_{\Pi}$ can be approximated in terms of the phase estimate MSE as
\begin{align}
    \varepsilon^2_{\Pi}&=\mathbb{E}(\hat{\Pi}-\Pi)^2\nonumber\\
    &=\mathbb{E}(\cos\hat{\theta}-\cos\theta)^2\nonumber\\
    &\approx\mathbb{E}((\hat{\theta}-\theta)\frac{\d \cos\theta}{\d \theta})^2\nonumber\\
    &=\varepsilon^2_{\theta}\sin^2\theta,
\end{align}
where we have assumed that the distribution of the estimator is sufficiently peaked about $\theta$ to ignore higher-order terms.
This leads to $\varepsilon^2_{\theta}=\varepsilon^2_{\Pi}/(1-\Pi^2)$, which can be substituted into the above expressions for the bounds, which hold for $\Pi\in [\cos0.9\pi, \cos 0.1\pi] \approx [-0.95, 0.95]$.
Dropping the estimator subscripts (as they only contribute constant factors), we can establish the runtime scaling in the low-noise and high-noise limits,
\begin{align}
    t_{\varepsilon}=\begin{cases}
    O(e^{\alpha}/\varepsilon) & \lambda\ll \varepsilon,\\
    O(e^{\alpha}\lambda/\varepsilon^2) & \lambda\gg \varepsilon,
    \end{cases}
\end{align}
observing that the Heisenberg-limit scaling and shot-noise limit scaling are each recovered.

We arrived at these bounds using properties of Chebyshev likelihood functions.
As we have shown in the previous section, by engineering likelihood functions, in many cases we can reduce estimation runtimes.
Motivated by our numerical findings of the variance reduction factors of engineered likelihood functions (see, e.g.
Figure \ref{fig:R0-L_6_p_0.9}), we conjecture that using engineered likelihood functions increases the worst case inverse-variance rate in Eq.~(\ref{eq:ratebounds}) to $\bar{R}(\sigma; \lambda, \alpha)\leq R^*_C(\mu, \sigma; \lambda, \alpha)$.

In order to give more meaning to this model, we will refine it to be in terms of number of qubits $n$ and two-qubit gate fidelities $f_{2Q}$.
We consider the task of estimating the expectation value of a Pauli string $P$ with respect to state $\ket{A}$.
Assume that $\Pi=\bra{A}P\ket{A}$ is very near zero so that $\varepsilon^2=\varepsilon^2_{\Pi}\approx\varepsilon^2_{\theta}$.
Let the two-qubit gate depth of each of the $L$ layers be $D$.
We model the total layer fidelity as $p=f_{2Q}^{nD/2}$, where we have ignored errors due to single-qubit gates.
From this, we have $\lambda = \frac{1}{2}nD \ln (1/f_{2Q})$
and $\alpha = 2\ln (1/\bar{p})-\frac{1}{2}nD\ln(1/f_{2Q})$.
Putting these together and using the lower bound expression in Eq. \eqref{eq:runtimebounds}, we arrive at the runtime expression,
\begin{widetext}
\begin{align}
    t_{\varepsilon} = e\frac{f_{2Q}^{nD/2}}{2\bar{p}^2}\left(\frac{ nD\ln (1/f_{2Q}) }{2\varepsilon^2} + \frac{1}{\sqrt{3}\varepsilon} +\sqrt{\left(\frac{nD\ln (1/f_{2Q})}{2\varepsilon^2}\right)^2+\left(\frac{2\sqrt{2}}{\varepsilon}\right)^2}\right).
\end{align}
\end{widetext}

Finally, we will put some meaningful numbers in this expression and estimate the required runtime in seconds as a function of two-qubit gate fidelities.
To achieve quantum advantage we expect that the problem instance will require on the order of $n=100$ logical qubits and that the two-qubit gate depth is on the order of the number of qubits, $D=200$.
Furthermore, we expect that target accuracies $\varepsilon$ will need to be on the order of $\varepsilon = 10^{-3}$ to $10^{-5}$.
The runtime model measures time in terms of ansatz circuit durations.
To convert this into seconds we assume each layer of two-qubit gates will take time $G=10^{-8}$s, which is an optimistic assumption for today's superconducting qubit hardware.
Figure \ref{fig:runtime_model} shows this estimated runtime as a function of two-qubit gate fidelity.

\begin{figure}[!ht]
% \center
\includegraphics[width=0.95\linewidth]{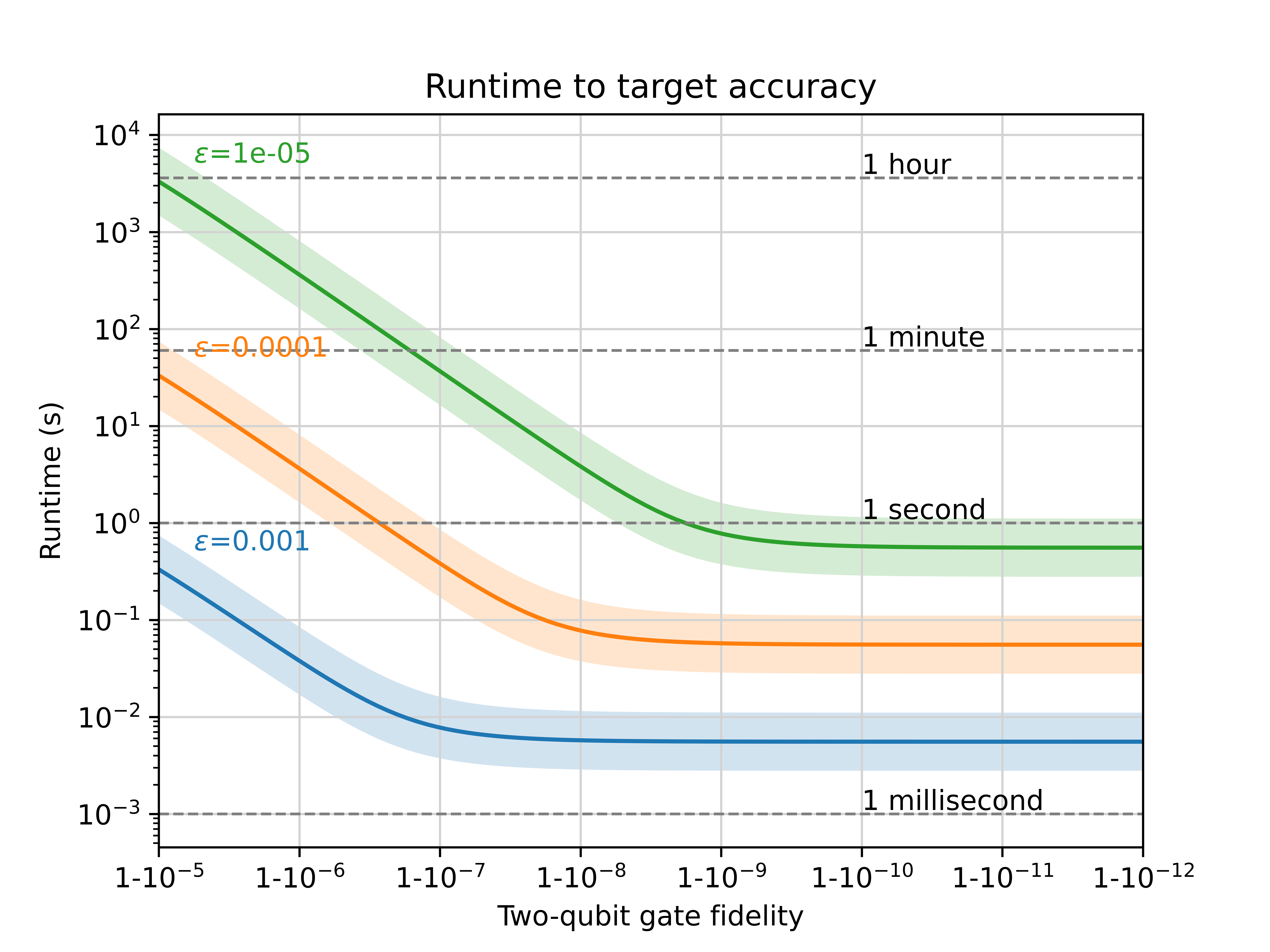}
\caption{As two-qubit gate fidelities are improved, deeper enhanced sampling circuits warrant being implemented, yielding shorter estimation runtimes. Here, we consider the case of $n=100$ qubits, $D=200$ two-qubit gate depth per layer, and target accuracies of $\varepsilon=10^{-3}$, $\varepsilon=10^{-4}$, and $\varepsilon=10^{-5}$. The bands indicate the upper and lower bounds of Eq. \eqref{eq:runtimebounds}.}
\label{fig:runtime_model}
\end{figure}

The two-qubit gate fidelities required to reduce runtimes into a practical region will most likely require error correction.
Performing quantum error correction requires an overhead that increases these runtimes.
In designing quantum error correction protocols, it is essential that the improvement in gate fidelities is not outweighed by the increase in estimation runtime.
The proposed model gives a means of quantifying this trade-off:
the product of gate infidelity and (error-corrected) gate time should decrease as useful error correction is incorporated.
In practice, there are many subtleties that should be accounted for to make a more rigorous statement.
These include considering the variation in gate fidelities among gates in the circuit and the varying time costs of different types of gates.
Nevertheless, the cost analyses afforded by this simple model may a useful tool in the design of quantum gates, quantum chips, error correcting schemes, and noise mitigation schemes.

\section{Outlook}
\label{sec:outlook}
This work was motivated by the impractical runtimes required by many NISQ-amenable quantum algorithms.
We aimed to improve the performance of estimation subroutines that have relied on standard sampling, as used in VQE.
Drawing on the recent alpha-VQE \cite{wang2019accelerated} and quantum metrology \cite{giovannetti2006quantum}, we investigated the technique of enhanced sampling to explore the continuum between standard sampling and quantum amplitude (or phase) estimation.
In this continuum, we can make optimal use of the quantum coherence available on a given device to speed up estimation.
Similar to standard sampling in VQE, enhanced sampling does not require ancilla qubits.
Quantum advantage for tasks relying on estimation will likely occur within this continuum rather than at one of the extremes.

Our central object of study was the quantum generated likelihood function, relating measurement outcome data to a parameter of interest encoded in a quantum circuit.
We explored engineering likelihood functions to optimize their statistical power.
This led to several insights for improving estimation.
First, we should incorporate a well-calibrated noise model directly in the likelihood function to make inference robust to certain error.
Second, we should choose a circuit depth (reflected in the number of enhanced sampling circuit layers) that balances gain in statistical power with accrual of error.
Finally, we should tune generalized reflection angles to mitigate the effect of ``deadspots'' during the inference process.

We used engineered likelihood functions to carry out adaptive approximate Bayesian inference for parameter estimation.
Carrying out this process in simulation required us to build mathematical and algorithmic infrastructure.
We developed mathematical tools for analyzing a class of quantum generated likelihood functions.
From this analysis, we proposed several optimization algorithms for tuning circuit parameters to engineer likelihood functions.
We investigated the performance of estimation using engineered likelihood functions and compared this to estimation using fixed likelihood functions.
Finally, we proposed a model for predicting the performance of enhanced sampling estimation algorithms as the quality of quantum devices is improved.

These simulations and the model led to several insights.
As highlighted in Section \ref{subsec:compare_schemes}, for the degree of device error expected in the near term (two-qubit gate fidelities of $\sim 99.92\%$), we have shown that enhanced sampling and engineered likelihood functions can be used to outperform standard sampling used in VQE.
Furthermore, these simulations suggest a non-standard perspective on the tolerance of error in quantum algorithm implementations.
We found that, for fixed gate fidelities, the best performance is achieved when we push circuit depths to a point where circuit fidelities are around the range of $0.5-0.7$. 
This suggests that, compared to the logical circuit fidelities suggested in other works (e.g. $0.99$ in \cite{Babbush2018}), we can afford a 50-fold increase in circuit depth.
We arrive at this balance between fidelity and statistical power by taking estimation runtime to be the cost to minimize.

The runtime model developed in Section \ref{sec:runtimemodel} sheds light on the trade-off between gate times and gate fidelity for estimation.
For gate times that are one-thousand times slower, the gate fidelities must have three more nines to achieve the same estimation runtimes.
The runtime model gives insight on the role of quantum error correction in estimation algorithms.
Roughly, we find that for quantum error correction to be useful for estimation, the factor of increase in runtime from error correction overhead must be less than the factor of decrease in logical gate error rates.
Additionally, the runtime model predicts that for a given estimation task, there is a level of logical gate fidelity beyond which further improvements effectively do not reduce runtimes (espectially if time overhead is taken into account).
For the 100-qubit example considered, seven nines in two-qubit gate fidelities sufficed.

We leave a number of questions for future investigation.
In VQE a set of techniques referred to as ``grouping'' are used to reduce the measurement count \cite{kandala2017hardware, verteletskyi2020measurement, izmaylov2019unitary, crawford2019efficient, zhao2019measurement}.
These grouping techniques allow sampling of multiple operators at once, providing a type of measurement parallelization.
The grouping method introduced in \cite{izmaylov2019unitary, zhao2019measurement} decomposes a Pauli Hamiltonian into sets of mutually-anticommuting Pauli strings, which ensures that the sum within each set is a Hermitian reflection.
This method of grouping is compatible with enhanced sampling, as the Hermitian reflections can be both measured and implemented as generalized reflections (i.e. an example of operator $P$).
However, it remains to explore if the additional circuit depth incurred by implementing these reflections is worth the variance reduction in the resulting estimators.
Beyond existing grouping techniques, we anticipate opportunities for parallelizing measurements that are dedicated to enhanced sampling.

Our work emphasizes the importance of developing accurate error models at the algorithmic level. 
Efficiently learning the ``nuisance parameters'' of these models, or \emph{likelihood function calibration}, will be an essential ingredient to realizing the performance gain of any enhanced sampling methods in the near term.
The motivation is similar to that of randomized benchmarking \cite{hincks2018bayesian}, where measurement data is fit to models of gate noise.
An important problem that we leave for future work is to improve methods for likelihood function calibration.
Miscalibration can introduce a systematic bias in parameter estimates.
A back-of-the-envelope calculation predicts that the relative error in estimation due to miscalibration bias is inversely proportional to the number of Grover iterates $L$ (which is set proportionally to $\lambda$) and proportional to the absolute error in the likelihood function.
Assuming that this absolute error grows sublinearly in $L$, we would expect a degree of robustness in the estimation procedure as $L$ is increased.
In future work we will explore in more detail to what precision we must calibrate the likelihood function so that the bias introduced by miscalibration (or model error) is negligible.
Finally, we leave investigations into analytical upper and lower bounds on estimation performance to future work.

We have aimed to present a viable solution to the ``measurement problem'' \cite{gonthier2020identifying} that plagues VQE.
It is likely that these methods will be needed to achieve quantum advantage for problems in quantum chemistry and materials.
Furthermore, such amplitude estimation techniques may help to achieve quantum advantage for applications in finance and machine learning tasks as well.
We hope that our model for estimation performance as a function of device metrics is useful in assessing the relative importance of a variety of quantum resources including qubit number, two-qubit gate fidelity, gate times, qubit stability, error correction cycle time, readout error rate, and others.

\section*{Acknowledgments}

We would like to thank Pierre-Luc Dallaire-Demers, Amara Katabarwa, Jhonathan Romero, Max Radin, Peter Love, Yihui Quek, Jonny Olson, Hannah Sim, and Jens Eisert for insightful conversations and valuable feedback, and Christopher Savoie for grammatical tidying.

% \onecolumngrid
\appendix

\section{A sufficient model of circuit noise}
\label{sec:app_noise_model}
Here we describe one circuit noise model which yields the likelihood function noise model of Eq.~\eqref{eq:noisy_elf}.
We expect that this circuit noise model is too simplistic to describe the output density matrix of the circuit.
However, we also expect that physically realistic variants of this circuit noise model will also lead to an exponentially decaying likelihood function bias.
In other words, this circuit noise model is sufficient, but not necessary for yielding the likelihood function noise model used in this work.
The circuit noise model assumes that the noisy version of each circuit layer $V(x_{2j})U(\theta; x_{2j-1})$ implements a mixture of the target operation and the completely depolarizing channel acting on the same input state, i.e.
\begin{align}
\mathcal{U}_j(\rho) &= p V(x_{2j})U(\theta; x_{2j-1})\rho U^\dagger(\theta; x_{2j-1}) V^\dagger(x_{2j})\nonumber\\
&\quad +(1-p)\dfrac{I}{2^n},
\end{align}
where $p$ is the \emph{fidelity} of this layer. Under composition of such imperfect operations, the output state of the $L$-layer circuit becomes
\begin{align}
\rho_L = p^L Q(\theta; \vec x)\ketbra{A} Q^\dagger(\theta; \vec x)+(1-p^L)\dfrac{\identity}{2^n}.
\end{align}
This imperfect circuit is preceded by an imperfect preparation of $\ket{A}$ and followed by an imperfect measurement of $P$. In the context of randomized benchmarking, such errors are referred to as \emph{state preparation and measurement} (SPAM) errors \cite{gambetta2012characterization}.
We will also model SPAM error with a depolarizing model, taking the noisy preparation of $\ket{A}$ to be $p_{SP}\ketbra{A}+(1-p_{SP})\frac{I}{2^n}$ and taking the noisy measurement of $P$ to be the POVM $\{p_{M} \frac{I+P}{2} + (1-p_{M}) \frac{I}{2}, p_{M} \frac{I-P}{2} + (1-p_{M}) \frac{I}{2}\}$. Combining the SPAM error parameters into $\bar{p}=p_{SP}p_{M}$, we arrive at a model for the noisy likelihood function
\begin{align}
\mathbb{P}(d|\theta; f, \vec x)=\frac 12 \left[ 
1+(-1)^d \bar{p}p^L\Delta(\theta; \vec x) \right],
\label{eq:app_noisy_elf}
\end{align}
where $f=\bar{p}p^L$ is the \emph{fidelity} of the whole process for generating the ELF, and $\Delta(\theta,\vec{x})$ is the bias of the ideal likelihood function as defined in Eq.~(\ref{eq:delta_af}).

\section{Proof of Lemma \ref{lem:eval_coeff_func_af}} 
\label{sec:eval_coeff_func_af}
In this appendix, we prove Lemma \ref{lem:eval_coeff_func_af}, which states that the CSBD coefficient functions of $\Delta(\theta; \vec x)$ and $\Delta'(\theta; \vec x)$ with respect to $x_j$ can be evaluated in $O(L)$ time for any $j \in \{1,2,\dots, 2L\}$. 

For convenience, we introduce the following notation. Let $W_{2i}=U^\dagger (\theta;x_{2i+1})=U(\theta;-x_{2i+1})$, $W_{2i+1}=V^\dagger(x_{2i+2}) = V(-x_{2i+2})$, $W_{4L-2i}=U(\theta;x_{2i+1})$, and $W_{4L-2i-1}=V(x_{2i+2})$, 
for $i=0,1,\dots, L-1$, and $W_{2L}=P(\theta)$. Furthermore, let $W'_j = \partial_{\theta} W_j$ for $j=0, 1, \dots, 4L$. Note that $W'_j=0$ if $j$ is odd. Then we define $P_{a,b}=W_aW_{a+1}\dots W_b$ if $0 \le a \le  b \le 4L$, and $P_{a,b}=I$ otherwise. 

With this notation, Eq.~(\ref{eq:q_theta_vecx}) implies that
\begin{widetext}
\begin{align}
Q^\dagger(\theta;\vec x) = P_{0, a-1} W_a P_{a+1, 2L-1}, &\quad \forall 0 \le a \le 2L-1, 
\label{eq:qdthetax_af}
\end{align}
\begin{align}
Q(\theta;\vec x) = P_{2L+1, b-1} W_b P_{b+1, 4L}, &\quad \forall 2L+1 \le b \le 4L, 
\label{eq:qthetax_af}
\end{align}
\begin{align}
Q^\dagger(\theta;\vec x)  P(\theta)Q(\theta;\vec x) = P_{0, a-1} W_a P_{a+1, b-1} W_b P_{b+1, 4L}, &\quad \forall 0 \le a < b \le 4L. 
\label{eq:qdpqthetax_af}
\end{align}
Moreover, taking the partial derivative of Eq.~(\ref{eq:q_theta_vecx}) with respect to $\theta$ yields
\begin{align}
Q'(\theta; \vec x)  &= \frac{\partial Q(\theta;\vec x)}{\partial \theta} \\
&= V(x_{2L})U'(\theta; x_{2L-1}) V(x_{2L-2}) U(\theta; x_{2L-3})\dots V(x_4)U(\theta; x_3)V(x_2)U(\theta; x_1) \nonumber \\
&\quad +
V(x_{2L})U(\theta; x_{2L-1}) V(x_{2L-2}) U'(\theta; x_{2L-3})\dots V(x_4)U(\theta; x_3)V(x_2)U(\theta; x_1) \nonumber \\
&\quad + \dots \nonumber \\
&\quad + 
V(x_{2L})U(\theta; x_{2L-1}) V(x_{2L-2}) U(\theta; x_{2L-3})\dots V(x_4)U'(\theta; x_3)V(x_2)U(\theta; x_1) \nonumber \\
&\quad + 
V(x_{2L})U(\theta; x_{2L-1}) V(x_{2L-2}) U(\theta; x_{2L-3})\dots V(x_4)U(\theta; x_3)V(x_2)U'(\theta; x_1),
\label{eq:derivative_qalphabeta}
\end{align}
where 
\begin{align}
U'(\theta; \alpha) = \frac{\partial U(\theta; \alpha)}{\partial \theta}  
=-\i \sinp{\alpha} P'(\theta)  
= \i \sinp{\alpha} (\sinp{\theta} \bar{Z} - \cosp{\theta} \bar{X})    
\label{eq:derivative_ualpha}
\end{align}
is the partial derivative of $U(\theta; \alpha)$ with respect to $\theta$, in which 
\begin{align}
P'(\theta) = -\sinp{\theta} \bar{Z} + \cosp{\theta} \bar{X}    
\end{align}
is the derivative of $P(\theta)$ with respect to $\theta$. It follows that
\begin{align}
Q'(\theta;\vec x) &= P_{2L+1, 2L+1} W'_{2L+2} P_{2L+3, 4L} 
 +P_{2L+1, 2L+3} W'_{2L+4} P_{2L+5, 4L} 
 + \dots \nonumber \\
&\quad +P_{2L+1, 4L-3} W'_{4L-2} P_{4L-1, 4L} + P_{2L+1, 4L-1} W'_{4L}. 
\label{eq:derivative_qthetax_af}
\end{align}

The following facts will be useful. Suppose $A$, $B$ and $C$ are arbitrary linear operators on the Hilbert space $\mathcal{H}=\mathrm{span}\{\ket{\bar{0}}, \ket{\bar{1}}\}$. Then by direct calculation, one can verify that
\begin{align}
\bra{\bar{0}} A V(-x) B V(x) C \ket{\bar{0}}
&= 
\bra{\bar{0}} A \lrbb{\cosp{x}\bar{I} + \i \sinp{x} \bar{Z}} B \lrbb{\cosp{x}\bar{I} - \i \sinp{x} \bar{Z}} C \ket{\bar{0}}
\\
&= \dfrac{1}{2} [\cosp{2x} \bra{\bar{0}} A\lrb{B-\bar{Z}B\bar{Z}}C \ket{\bar{0}} - \i \sinp{2x} \bra{\bar{0}} A\lrb{B\bar{Z}-\bar{Z}B} C  \ket{\bar{0}} \nonumber \\
&\quad + \bra{\bar{0}} A\lrb{B+\bar{Z}B\bar{Z}}C  \ket{\bar{0}} ],
\label{eq:avbvc}
\end{align}
\begin{align}
\bra{\bar{0}} A U(\theta; -x) B U(\theta; x) C \ket{\bar{0}}
&=
\bra{\bar{0}} A \lrbb{\cosp{x} \bar{I} + \i \sinp{x} P(\theta)} B \lrbb{\cosp{x} \bar{I} - \i \sinp{x} P(\theta)} C \ket{\bar{0}} \\
&= \dfrac{1}{2} [\cosp{2x} \bra{\bar{0}} A\lrb{B  -  P(\theta)BP(\theta)} C \ket{\bar{0}} - \i \sinp{2x} \bra{\bar{0}} A\lrb{BP(\theta) - P(\theta)B} C  \ket{\bar{0}} \nonumber \\
&\quad + \bra{\bar{0}} A\lrb{B  +  P(\theta)BP(\theta)} C \ket{\bar{0}} ],
\label{eq:aubuc}
\end{align}
and
\begin{align}
\bra{\bar{0}} A U(\theta; -x) B U'(\theta; x) C \ket{\bar{0}}
&=\bra{\bar{0}} A \lrbb{\cosp{x}\bar{I} + \i \sinp{x} P(\theta)} B \lrbb{- \i \sinp{x} P'(\theta)} C \ket{\bar{0}} \\
&= \dfrac{1}{2}[-\cosp{2x} \bra{\bar{0}} AP(\theta)BP'(\theta)C\ket{\bar{0}} - \i \sinp{2x} \bra{\bar{0}} A B P'(\theta)C \ket{\bar{0}} \nonumber \\
&\quad + \bra{\bar{0}} A P(\theta) B P'(\theta) C\ket{\bar{0}}].
\label{eq:aubupc}
\end{align}

The following fact will be also useful. Taking the partial derivative of Eq.~(\ref{eq:delta_af}) with respect to $\theta$ yields
\begin{align}
\Delta'(\theta; \vec x) &= \bra{\bar{0}} Q^\dagger(\theta; \vec x) P(\theta) Q'(\theta; \vec x) \ket{\bar{0}}  
+
\bra{\bar{0}} Q^\dagger(\theta; \vec x) P'(\theta) Q(\theta; \vec x) \ket{\bar{0}} +
\bra{\bar{0}} (Q'(\theta; \vec x))^\dagger P(\theta) Q(\theta; \vec x) \ket{\bar{0}} \\
&= 2 ~\mathrm{Re}(\bra{\bar{0}} Q^\dagger(\theta; \vec x) P(\theta) Q'(\theta; \vec x) \ket{\bar{0}}) 
+
\bra{\bar{0}} Q^\dagger(\theta; \vec x) P'(\theta) Q(\theta; \vec x) \ket{\bar{0}}.
\label{eq:derivative_delta_af}
\end{align}

In order to evaluate $C_j(\theta; \vec x_{\neg j})$, $S_j(\theta; \vec x_{\neg j})$, 
$B_j(\theta; \vec x_{\neg j})$,
$C'_j(\theta; \vec x_{\neg j})$, $S'_j(\theta; \vec x_{\neg j})$ 
and $B'_j(\theta; \vec x_{\neg j})$ for given $\theta$ and $\vec x_{\neg j}$, we consider the case $j$ is even and the case $j$ is odd separately.

\begin{itemize}
\item Case 1: $j=2(t+1)$ is even, where $0 \le t \le L-1$. In this case, $W_{2t+1}=V(-x_j)$, and $W_{4L-2t-1}=V(x_j)$. Then by Eqs.~(\ref{eq:delta_af}), (\ref{eq:qdpqthetax_af}) and (\ref{eq:avbvc}), we obtain 
\begin{align}
\Delta(\theta; \vec x) &= \bra{\bar{0}} P_{0, 2t} V(-x_j) P_{2t+2, 4L-2t-2} V(x_j) P_{4L-2t, 4L}  \ket{\bar{0}} \\
&= 
C_j(\theta;\vec x_{\neg j}) \cosp{2x_j} 
+
S_j(\theta;\vec x_{\neg j}) \sinp{2x_j} 
+
B_j(\theta;\vec x_{\neg j}), 
\end{align}
where
\begin{align}
C_j(\theta;\vec x_{\neg j}) &=
\dfrac{1}{2}  \bra{\bar{0}} P_{0, 2t}\lrb{P_{2t+2, 4L-2t-2} - \bar{Z}P_{2t+2, 4L-2t-2}\bar{Z}}P_{4L-2t, 4L} \ket{\bar{0}}, 
\label{eq:cj_af1}
\\
S_j(\theta;\vec x_{\neg j}) &=
- \dfrac{i}{2} 
\bra{\bar{0}} P_{0, 2t} \lrb{P_{2t+2, 4L-2t-2}\bar{Z} - \bar{Z}P_{2t+2, 4L-2t-2}} P_{4L-2t, 4L} \ket{\bar{0}}, 
\label{eq:sj_af1}
\\
B_j(\theta;\vec x_{\neg j}) &= \dfrac{1}{2} \bra{\bar{0}} P_{0, 2t} \lrb{P_{2t+2, 4L-2t-2} + \bar{Z}P_{2t+2, 4L-2t-2}\bar{Z}} P_{4L-2t, 4L} \ket{\bar{0}}.
\label{eq:bj_af1}
\end{align}
Given $\theta$ and $\vec x_{\neg j}$, we first compute $P_{0, 2t}$, $P_{2t+2, 4L-2t-2}$ and $P_{4L-2t, 4L}$ in $O(L)$ time. Then we calculate $C_j(\theta;\vec x_{\neg j})$, $S_j(\theta;\vec x_{\neg j})$ and $B_j(\theta;\vec x_{\neg j})$ by Eqs.~(\ref{eq:cj_af1}-\ref{eq:bj_af1}). This procedure takes only $O(L)$ time. 

Next, we show how to compute $C'_j(\theta;\vec x_{\neg j})$, $S'_j(\theta;\vec x_{\neg j})$ and $B'_j(\theta;\vec x_{\neg j})$. Using Eq.~(\ref{eq:derivative_qthetax_af}) and the fact $P_{a,b}=P_{a, 4L-2t-2} W_{4L-2t-1} P_{4L-2t, b}$ for any $a \le 4L-2t-1 \le b$, we obtain
\begin{align}
Q'(\theta;\vec x) &=
P_{2L+1, 2L+1} W'_{2L+2} P_{2L+3, 4L-2t-2} W_{4L-2t-1} P_{4L-2t, 4L} \nonumber \\
&\quad +P_{2L+1, 2L+3} W'_{2L+4} P_{2L+5, 4L-2t-2} W_{4L-2t-1}
P_{4L-2t, 4L} \nonumber\\
&\quad + \dots \nonumber \\
&\quad +P_{2L+1, 4L-2t-3} W'_{4L-2t-2} W_{4L-2t-1} P_{4L-2t, 4L} \nonumber\\
&\quad +P_{2L+1, 4L-2t-2} W_{4L-2t-1} W'_{4L-2t} P_{4L-2t+1, 4L} \nonumber \\
&\quad + \dots \nonumber \\
&\quad +P_{2L+1, 4L-2t-2} W_{4L-2t-1}P_{4L-2t, 4L-3} W'_{4L-2} P_{4L-1, 4L} \nonumber\\
&\quad +P_{2L+1, 4L-2t-2} W_{4L-2t-1}P_{4L-2t, 4L-1} W'_{4L}.
\label{eq:derivative_qthetax_af2}
\end{align}
Then it follows from Eqs.~(\ref{eq:qdthetax_af}) and (\ref{eq:derivative_qthetax_af2}) that
\begin{align}
Q^\dagger(\theta;\vec x) P(\theta) Q'(\theta;\vec x) &=
A_t^{(1)} W_{2t+1} B_t^{(1)} W_{4L-2t-1} C_t^{(1)}
+
A_t^{(2)} W_{2t+1} B_t^{(2)} W_{4L-2t-1} C_t^{(2)}, \\
&=
A_t^{(1)} V(-x_j) B_t^{(1)} V(x_j) C_t^{(1)}
+
A_t^{(2)} V(-x_j) B_t^{(2)} V(x_j) C_t^{(2)},
\end{align}
where
\begin{align}
A_t^{(1)} &= P_{0, 2t}, \label{eq:at11}\\
B_t^{(1)} &= P_{2t+2, 4L-2t-2}, \label{eq:bt11}\\
C_t^{(1)} &= \sum_{k=0}^t P_{4L-2t, 4L-2k-1} W'_{4L-2k} P_{4L-2k+1, 4L} \\
&= \sum_{k=0}^t P_{4L-2t, 4L-2k-1} U'(\theta; x_{2k+1}) P_{4L-2k+1, 4L}, \label{eq:ct11} \end{align}
\begin{align}
A_t^{(2)} &= P_{0, 2t}, \label{eq:at21} \\
B_t^{(2)} &= \sum_{k=t+1}^{L-1} P_{2t+2, 4L-2k-1} W'_{4L-2k} P_{4L-2k+1, 4L-2t-2} \\
&= \sum_{k=t+1}^{L-1} P_{2t+2, 4L-2k-1} U'(\theta; x_{2k+1}) P_{4L-2k+1, 4L-2t-2}, \label{eq:bt21} \\
C_t^{(2)} &= P_{4L-2t, 4L}.\label{eq:ct21}
\end{align}
Meanwhile, we have
\begin{align}
Q^\dagger(\theta;\vec x) P'(\theta) Q(\theta;\vec x)
= A_t^{(3)} W_{2t+1} B_t^{(3)} W_{4L-2t-1} C_t^{(3)} 
= A_t^{(3)} V(-x_j) B_t^{(3)} V(x_j) C_t^{(3)},
\end{align}
where
\begin{align}
A_t^{(3)} &= P_{0, 2t}, \label{eq:at31}\\
B_t^{(3)} &= P_{2t+2, 2L-1} P'(\theta) P_{2L+1, 4L-2t-2}, \label{eq:bt31}\\
C_t^{(3)} &= P_{4L-2t, 4L}. \label{eq:ct31}
\end{align}
Combining the above facts with Eqs.~(\ref{eq:avbvc}) and (\ref{eq:derivative_delta_af}) yields
\begin{align}
\Delta'(\theta;\vec x)
= C'_j(\theta;\vec x_{\neg j}) \cosp{2x_j} + S'_j(\theta;\vec x_{\neg j}) \sinp{2x_j} + B'_j(\theta;\vec x_{\neg j}),
\end{align}
where 
\begin{align}
C'_j(\theta;\vec x_{\neg j}) &= \rep{\bra{\bar{0}} A_t^{(1)}\lrb{B_t^{(1)}-\bar{Z}B_t^{(1)}\bar{Z}}C_t^{(1)}
\ket{\bar{0}}}  +\rep{\bra{\bar{0}} A_t^{(2)}\lrb{B_t^{(2)}-\bar{Z} B_t^{(2)} \bar{Z}}C_t^{(2)}\ket{\bar{0}}} \nonumber  \\
&\quad +\dfrac{1}{2}\bra{\bar{0}} A_t^{(3)}\lrb{B_t^{(3)}-\bar{Z} B_t^{(3)} \bar{Z}}C_t^{(3)} \ket{\bar{0}}, 
\label{eq:cjp_af1}\\
S'_j(\theta;\vec x_{\neg j}) &=
\imp{ \bra{\bar{0}} A_t^{(1)}\lrb{B_t^{(1)} \bar{Z}-\bar{Z} B_t^{(1)}} C_t^{(1)}
\ket{\bar{0}}}  +\imp{ \bra{\bar{0}} A_t^{(2)}\lrb{B_t^{(2)} \bar{Z} - \bar{Z} B_t^{(2)}} C_t^{(2)}\ket{\bar{0}}} \nonumber \\
&\quad -\dfrac{i}{2} \bra{\bar{0}}\lrbb{A_t^{(3)} \lrb{B_t^{(3)} \bar{Z} - \bar{Z} B_t^{(3)}} C_t^{(3)}} \ket{\bar{0}}
\label{eq:sjp_af1}\\
B'_j(\theta;\vec x_{\neg j}) &=
\rep{\bra{\bar{0}} A_t^{(1)}\lrb{B_t^{(1)}+\bar{Z}B_t^{(1)}\bar{Z}}C_t^{(1)}
\ket{\bar{0}}}  +\rep{\bra{\bar{0}} A_t^{(2)}\lrb{B_t^{(2)}+\bar{Z} B_t^{(2)} \bar{Z}}C_t^{(2)}\ket{\bar{0}}} \nonumber  \\
&\quad +\dfrac{1}{2}\bra{\bar{0}} A_t^{(3)}\lrb{B_t^{(3)}+\bar{Z}B_t^{(3)}\bar{Z}}C_t^{(3)} \ket{\bar{0}}.
\label{eq:bjp_af1}
\end{align}
Given $\theta$ and $\vec x_{\neg j}$, we first compute the following matrices in a total of $O(L)$ time by standard dynamic programming technique:
\begin{itemize}
    \item $P_{0, 2t}$, $P_{2t+2, 4L-2t-2}$, $P_{4L-2t, 4L}$,    $P_{2t+2, 2L-1}$, $P_{2L+1, 4L-2t-2}$; 
    \item $P_{4L-2t, 4L-2k-1}$ and $P_{4L-2k+1, 4L}$ for $k=0, 1, \dots, t$;
    \item $P_{2t+2, 4L-2k-1}$ and $P_{4L-2k+1, 4L-2t-2}$ for $k=t+1, t+2, \dots, L-1$. 
\end{itemize}
Then we compute $A_t^{(i)}$, $B_t^{(i)}$ and $C_t^{(i)}$ for $i=1,2,3$ by Eqs.~(\ref{eq:at11}-\ref{eq:ct11}), (\ref{eq:at21}-\ref{eq:ct21}) and (\ref{eq:at31}-\ref{eq:ct31}). After that, we calculate $C'_j(\theta;\vec x_{\neg j})$, $S'_j(\theta;\vec x_{\neg j})$ and $B'_j(\theta;\vec x_{\neg j})$ by Eqs.~(\ref{eq:cjp_af1}-\ref{eq:bjp_af1}). Overall, this procedure takes $O(L)$ time.

\item Case 2: $j=2t+1$ is odd, where $0 \le t \le L-1$. In this case,  $W_{2t}=U(\theta;-x_j)$, and $W_{4L-2t}=U(\theta; x_j)$. They by Eqs.~(\ref{eq:delta_af}), (\ref{eq:qdpqthetax_af}) and (\ref{eq:aubuc}), we get
\begin{align}
\Delta(\theta; \vec x) &= \bra{\bar{0}} P_{0, 2t-1} U(\theta; -x_j) P_{2t+1, 4L-2t-1} U(\theta; x_j) P_{4L-2t+1, 4L}  \ket{\bar{0}} \\
&=  
C_j(\theta;\vec x_{\neg j}) \cosp{2x_j}
+
S_j(\theta;\vec x_{\neg j}) \sinp{2x_j} 
+
B_j(\theta;\vec x_{\neg j}), 
\end{align}
where
\begin{align}
C_j(\theta;\vec x_{\neg j}) &=
\dfrac{1}{2}  \bra{\bar{0}} P_{0, 2t-1}\lrb{ P_{2t+1, 4L-2t-2} - P(\theta)P_{2t+1, 4L-2t-1}P(\theta)}P_{4L-2t+1, 4L} \ket{\bar{0}},  \label{eq:cj_af2}\\
S_j(\theta;\vec x_{\neg j}) &=
- \dfrac{i}{2} 
\bra{\bar{0}} P_{0, 2t-1} \lrb{P_{2t+1, 4L-2t-1}P(\theta) - P(\theta)P_{2t+1, 4L-2t-1}}P_{4L-2t+1, 4L}  \ket{\bar{0}}, \label{eq:sj_af2} \\
B_j(\theta;\vec x_{\neg j}) &= \dfrac{1}{2} \bra{\bar{0}} P_{0, 2t-1}
\lrb{P_{2t+1, 4L-2t-1} + P(\theta)P_{2t+1, 4L-2t-1}P(\theta)}P_{4L-2t+1, 4L}  \ket{\bar{0}}.
\label{eq:bj_af2}
\end{align}
Given $\theta$ and $\vec x_{\neg j}$, we first compute $P_{0, 2t-1}$, $P_{2t+1, 4L-2t-1}$ and $P_{4L-2t+1, 4L}$ in $O(L)$ time. Then we calculate $C_j(\theta;\vec x_{\neg j})$, $S_j(\theta;\vec x_{\neg j})$ and $B_j(\theta;\vec x_{\neg j})$ by Eqs.~(\ref{eq:cj_af2}-\ref{eq:bj_af2}). This procedure takes only $O(L)$ time.

Next, we describe how to compute $C'_j(\theta;\vec x_{\neg j})$, $S'_j(\theta;\vec x_{\neg j})$ and $B'_j(\theta;\vec x_{\neg j})$. Using Eq.~(\ref{eq:derivative_qthetax_af}) and the fact $P_{a,b}=P_{a, 4L-2t-1} W_{4L-2t} P_{4L-2t+1, b}$ for any $a \le 4L-2t \le b$, we obtain
\begin{align}
Q'(\theta;\vec x) &=
P_{2L+1, 2L+1} W'_{2L+2} P_{2L+3, 4L-2t-1} W_{4L-2t} P_{4L-2t+1, 4L} \nonumber \\
&\quad +P_{2L+1, 2L+3} W'_{2L+4} P_{2L+5, 4L-2t-1} W_{4L-2t}
P_{4L-2t+1, 4L} \nonumber\\
&\quad + \dots \nonumber \\
&\quad +P_{2L+1, 4L-2t-3} W'_{4L-2t-2} P_{4L-2t-1, 4L-2t-1} W_{4L-2t} P_{4L-2t+1, 4L} \nonumber\\
&\quad +P_{2L+1, 4L-2t-1} W'_{4L-2t} P_{4L-2t+1, 4L} \nonumber \\
&\quad +P_{2L+1, 4L-2t-1} W_{4L-2t} P_{4L-2t+1, 4L-2t+1} W'_{4L-2t+2} P_{4L-2t+3, 4L} \nonumber \\
&\quad + \dots \nonumber \\
&\quad +P_{2L+1, 4L-2t-1} W_{4L-2t}P_{4L-2t+1, 4L-3} W'_{4L-2} P_{4L-1, 4L} \nonumber\\
&\quad +P_{2L+1, 4L-2t-1} W_{4L-2t}P_{4L-2t+1, 4L-1} W'_{4L}.
\label{eq:derivative_qthetax_af3}
\end{align}
Then it follows from Eqs.~(\ref{eq:qdthetax_af}) and (\ref{eq:derivative_qthetax_af2}) that
\begin{align}
Q^\dagger(\theta;\vec x) P(\theta) Q'(\theta;\vec x) &=
A_t^{(1)} W_{2t} B_t^{(1)} W_{4L-2t} C_t^{(1)} +
A_t^{(2)} W_{2t} B_t^{(2)} W'_{4L-2t} C_t^{(2)} +
A_t^{(3)} W_{2t} B_t^{(3)} W_{4L-2t} C_t^{(3)} \\
&=
A_t^{(1)} U(\theta;-x_j) B_t^{(1)} U(\theta;x_j) C_t^{(1)} +
A_t^{(2)} U(\theta;-x_j) B_t^{(2)} U'(\theta;x_j) C_t^{(2)} \nonumber \\
&\quad+
A_t^{(3)} U(\theta;-x_j) B_t^{(3)} U(\theta;x_j) C_t^{(3)},
\end{align}
where
\begin{align}
A_t^{(1)} &= P_{0, 2t-1}, \label{eq:at12}\\
B_t^{(1)} &= P_{2t+1, 4L-2t-1}, \label{eq:bt12}\\
C_t^{(1)} &= \sum_{k=0}^{t-1} P_{4L-2t+1, 4L-2k-1} W'_{4L-2k} P_{4L-2k+1, 4L} \\
&= \sum_{k=0}^{t-1} P_{4L-2t+1, 4L-2k-1} U'(\theta; x_{2k+1}) P_{4L-2k+1, 4L}, \label{eq:ct12} \\
A_t^{(2)} &= P_{0, 2t-1}, \label{eq:at22} \\
B_t^{(2)} &= P_{2t+1, 4L-2t-1}, \label{eq:bt22} \\
C_t^{(2)} &= P_{4L-2t+1, 4L},\label{eq:ct22} \\
A_t^{(3)} &= P_{0, 2t-1}, \label{eq:at32} \\
B_t^{(3)} &= \sum_{k=t+1}^{L-1} P_{2t+1, 4L-2k-1} W'_{4L-2k} P_{4L-2k+1, 4L-2t-1} \\
&= \sum_{k=t+1}^{L-1} P_{2t+1, 4L-2k-1} U'(\theta; x_{2k+1}) P_{4L-2k+1, 4L-2t-1}, \label{eq:bt32} \\
C_t^{(3)} &= P_{4L-2t+1, 4L}.\label{eq:ct32}
\end{align}
Meanwhile, we have
\begin{align}
Q^\dagger(\theta;\vec x) P'(\theta) Q(\theta;\vec x)
= A_t^{(4)} W_{2t} B_t^{(4)} W_{4L-2t} C_t^{(4)} = A_t^{(4)} U(\theta;-x_j) B_t^{(4)} U(\theta;x_j) C_t^{(4)},
\end{align}
where
\begin{align}
A_t^{(4)} &= P_{0, 2t-1}, \label{eq:at4}\\
B_t^{(4)} &= P_{2t+1, 2L-1} P'(\theta) P_{2L+1, 4L-2t-1}, \label{eq:bt4}\\
C_t^{(4)} &= P_{4L-2t+1, 4L}. \label{eq:ct4}
\end{align}
Combining the above facts with Eqs.~(\ref{eq:aubuc}), (\ref{eq:aubupc}) and (\ref{eq:derivative_delta_af}) yields
\begin{align}
\Delta'(\theta;\vec x)
= C'_j(\theta;\vec x_{\neg j}) \cosp{2x_j} + S'_j(\theta;\vec x_{\neg j}) \sinp{2x_j} + B'_j(\theta;\vec x_{\neg j}),
\end{align}
where 
\begin{align}
C'_j(\theta;\vec x_{\neg j}) &= \rep{\bra{\bar{0}} A_t^{(1)}\lrb{B_t^{(1)}-P(\theta)B_t^{(1)}P(\theta)}C_t^{(1)}
\ket{\bar{0}}} \nonumber\\
&\quad -\rep{\bra{\bar{0}} A_t^{(2)} P(\theta) B_t^{(2)} P'(\theta) C_t^{(2)}\ket{\bar{0}}} \nonumber \\
&\quad + \rep{\bra{\bar{0}} A_t^{(3)}\lrb{B_t^{(3)}-P(\theta) B_t^{(3)} P(\theta)}C_t^{(3)}\ket{\bar{0}}} \nonumber \\
&\quad +\dfrac{1}{2}\bra{\bar{0}} A_t^{(4)}\lrb{B_t^{(4)}-P(\theta)B_t^{(4)}P(\theta)}C_t^{(4)} \ket{\bar{0}}, 
\label{eq:cjp_af2}\\
S'_j(\theta;\vec x_{\neg j}) &=
\imp{ \bra{\bar{0}} A_t^{(1)}\lrb{B_t^{(1)} P(\theta)-P(\theta) B_t^{(1)}} C_t^{(1)}
\ket{\bar{0}}} \nonumber \\
&\quad +\imp{\bra{\bar{0}} A_t^{(2)}B_t^{(2)}P'(\theta) C_t^{(2)}\ket{\bar{0}}} \nonumber \\
&\quad +\imp{ \bra{\bar{0}} A_t^{(3)}\lrb{B_t^{(3)} P(\theta) - P(\theta) B_t^{(3)}} C_t^{(3)}\ket{\bar{0}}} \nonumber \\
&\quad -\dfrac{i}{2} \bra{\bar{0}}\lrbb{A_t^{(4)} \lrb{B_t^{(4)} P(\theta) - P(\theta) B_t^{(4)}} C_t^{(4)}} \ket{\bar{0}},
\label{eq:sjp_af2}\\
B'_j(\theta;\vec x_{\neg j}) &=
\rep{\bra{\bar{0}} A_t^{(1)}\lrb{B_t^{(1)}+P(\theta)B_t^{(1)}P(\theta)}C_t^{(1)}
\ket{\bar{0}}} \nonumber\\
&\quad +\rep{\bra{\bar{0}} A_t^{(2)}P(\theta)B_t^{(2)}P'(\theta)C_t^{(2)}\ket{\bar{0}}} \nonumber \\
&\quad +\rep{\bra{\bar{0}} A_t^{(3)}\lrb{B_t^{(3)}+P(\theta) B_t^{(3)} P(\theta)}C_t^{(3)}\ket{\bar{0}}} \nonumber \\
&\quad +\dfrac{1}{2}\bra{\bar{0}} A_t^{(4)}\lrb{B_t^{(4)}+P(\theta)B_t^{(4)}P(\theta)}C_t^{(4)} \ket{\bar{0}}.
\label{eq:bjp_af2}
\end{align}
Given $\theta$ and $\vec x_{\neg j}$, we first compute the following matrices in a total of $O(L)$ time by standard dynamic programming technique:
\begin{itemize}
    \item $P_{0, 2t-1}$, $P_{2t+1, 4L-2t-1}$, $P_{4L-2t+1, 4L}$,    $P_{2t+1, 2L-1}$, $P_{2L+1, 4L-2t-1}$; 
    \item $P_{4L-2t+1, 4L-2k-1}$ and $P_{4L-2k+1, 4L}$ for $k=0, 1, \dots, t-1$;
    \item $P_{2t+1, 4L-2k-1}$ and $P_{4L-2k+1, 4L-2t-1}$ for $k=t+1, t+2, \dots, L-1$. 
\end{itemize}
Then we compute $A_t^{(i)}$, $B_t^{(i)}$ and $C_t^{(i)}$ for $i=1,2,3,4$ by Eqs.~(\ref{eq:at12}-\ref{eq:ct12}), (\ref{eq:at22}-\ref{eq:ct22}),  (\ref{eq:at32}-\ref{eq:ct32}) and (\ref{eq:at4}-\ref{eq:ct4}). After that, we calculate $C'_j(\theta;\vec x_{\neg j})$, $S'_j(\theta;\vec x_{\neg j})$ and $B'_j(\theta;\vec x_{\neg j})$ by Eqs.~(\ref{eq:cjp_af2}-\ref{eq:bjp_af2}). Overall, this procedure takes $O(L)$ time.
\end{itemize}
\end{widetext}

\section{Algorithms for maximizing the slope of the ancilla-free likelihood function}
\label{sec:alg_opt_slope_af}
In this appendix, we present two algorithms for maximizing the slope of the ancilla-free  likelihood function $\mathbb{P}(d|\theta; f, \vec x)$ at a given point $\theta=\mu$ (i.e. the prior mean of $\theta$). Namely, our goal is to find $\vec x \in \mathbb{R}^{2L}$ that maximizes $|\mathbb{P}'(\mu; f, \vec x)| = f |\Delta'(\mu; \vec x)|/2$. 

Similar to Algorithms \ref{alg:ga_opt_v0_af} and \ref{alg:ca_opt_v0_af} for Fisher information maximization, our algorithms for slope maximization are also based on gradient ascent and coordinate ascent, respectively. They both need to call the procedures in Lemma \ref{lem:eval_coeff_func_af} to evaluate $C'(\mu; \vec x_{\neg j})$, $S'(\mu; \vec x_{\neg j})$ and $B'(\mu; \vec x_{\neg j})$ for given $\mu$ and $\vec x_{\neg j}$. However, the gradient-ascent-based algorithm uses the above quantities to compute the partial derivative of $(\Delta'(\mu; \vec x))^2$ with respect to $x_j$, while the coordinate-ascent-based algorithm uses them to directly update the value of $x_j$. These algorithms are formally described in Algorithms \ref{alg:ga_opt_slope_af} and \ref{alg:ca_opt_slope_af}, respectively. 

\begin{algorithm*}[ht]
 \KwIn{The prior mean $\mu$ of $\theta$, the number $L$ of circuit layers, the step size schedule $\delta: \mathbb{Z}^{\ge 0} \to \mathbb{R}^+$, the error tolerance $\epsilon$ for termination.}
 \KwOut{A set of parameters $\vec x = (x_1, x_2, \dots, x_{2L}) \in \mathbb{R}^{2L}$ that are a local maximum point of the function $|\Delta'(\mu; \vec x)|$.}

Choose random initial point $\vec x^{(0)}=(x^{(0)}_1, x^{(0)}_2, \dots, x^{(0)}_{2L}) \in (-\pi, \pi]^{2L}$; \\
$t \leftarrow 0$; \\
\While{True}{

\For{$j \leftarrow 1$ \KwTo $2L$}{
    Let $\vec x^{(t)}_{\neg j} = (x^{(t)}_1, \dots, x^{(t)}_{j-1}, x^{(t)}_{j+1}, \dots, x^{(t)}_{2L})$; \\
    Compute $C'^{(t)}_j:= C'_j(\mu; \vec x^{(t)}_{\neg j})$, $S'^{(t)}_j:= S'_j(\mu; \vec x^{(t)}_{\neg j})$ and $B'^{(t)}_j:= B'_j(\mu; \vec x^{(t)}_{\neg j})$
    by using the procedures in Lemma \ref{lem:eval_coeff_func_af}; \\
    Compute $\Delta'(\mu; \vec x)$ at $\vec x=\vec x^{(t)}$ as follows:
    \begin{align}
    \Delta'^{(t)} := \Delta'(\mu; \vec x^{(t)}) = C'^{(t)}_j \cosp{2 x^{(t)}_j} + S'^{(t)}_j \sinp{2 x^{(t)}_j} + B'^{(t)}_j;
    \end{align}
    Compute the partial derivative of $\Delta'(\mu; \vec x)$ with respect to $x_j$ as follows: 
    \begin{align}
    \gamma^{(t)}_j := \dfrac{\partial \Delta'(\mu; \vec x)}{\partial x_j} |_{\vec x = \vec x^{(t)}} = 2 \lrb{-C'^{(t)}_j \sinp{2x^{(t)}_j}+S'^{(t)}_j \cosp{2 x^{(t)}_j}};    
    \end{align}
} 

Set $\vec x^{(t+1)}=\vec x^{(t)} + \delta(t) \nabla^{(t)}$, where
$\nabla^{(t)}:=  (2\Delta'^{(t)}\gamma^{(t)}_1, 2\Delta'^{(t)}\gamma^{(t)}_2, \dots, 2\Delta'^{(t)}\gamma^{(t)}_{2L})$ is the gradient of $(\Delta'(\mu; \vec x))^2$ at $\vec x=\vec x^{(t)}$;\\
\If{$|\Delta'(\mu; \vec x^{(t+1)}) - \Delta'(\mu; \vec x^{(t)})| < \epsilon$}{break;}
$t \leftarrow t+1$;
}
Return $\vec x^{(t+1)}=(x_1^{(t+1)}, x_2^{(t+1)}, \dots, x_{2L}^{(t+1)})$ as the optimal parameters.
\caption{Gradient ascent for slope maximization in the ancilla-free case}
\label{alg:ga_opt_slope_af}
\end{algorithm*}

\begin{algorithm*}[ht]
 \KwIn{The prior mean $\mu$ of $\theta$, the number $L$ of circuit layers, the error tolerance $\epsilon$ for termination.}
 \KwOut{A set of parameters $\vec x = (x_1, x_2, \dots, x_{2L}) \in (-\pi, \pi]^{2L}$ that are a local maximum point of the function $|\Delta'(\mu; \vec x)|$.}

Choose random initial point $\vec x^{(0)}=(x^{(0)}_1, x^{(0)}_2, \dots, x^{(0)}_{2L}) \in (-\pi, \pi]^{2L}$; \\
$t \leftarrow 1$; \\
\While{True}{

\For{$j \leftarrow 1$ \KwTo $2L$}{
    Let $\vec x^{(t)}_{\neg j} = (x^{(t)}_1, \dots, x^{(t)}_{j-1}, x^{(t-1)}_{j+1}, \dots, x^{(t-1)}_{2L})$; \\
    Compute $C'^{(t)}_j:= C'_j(\mu; \vec x^{(t)}_{\neg j})$, $S'^{(t)}_j:= S'_j(\mu; \vec x^{(t)}_{\neg j})$, $B'^{(t)}_j:= B'_j(\mu; \vec x^{(t)}_{\neg j})$
    by using the procedures in Lemma \ref{lem:eval_coeff_func_af}; \\
    Set $x^{(t)}_j = \arg{\sgn{B'^{(t)}_j} (C'^{(t)}_j + i S'^{(t)}_j)}/2$, where $\sgn{z}=1$ if $z\ge 0$ and $-1$ otherwise;
} 
\If{$|\Delta'(\mu; \vec x^{(t)}) - \Delta'(\mu; \vec x^{(t-1)})| < \epsilon$}{break;}
$t \leftarrow t+1$;
}
Return $\vec x^{(t)}=(x_1^{(t)}, x_2^{(t)}, \dots, x_{2L}^{(t)})$ as the optimal parameters.
\caption{Coordinate ascent for slope maximization in the ancilla-free case}
\label{alg:ca_opt_slope_af}
\end{algorithm*}

\section{Ancilla-based scheme}
\label{sec:ancilla_based}

In this appendix, we present an alternative scheme, called the \textit{ancilla-based} scheme. In this scheme, the engineered likelihood function (ELF) is generated by the quantum circuit in Figure~\ref{fig:elf_circuit_ab}, where $U(\theta; x_{2j-1})$, $V(x_{2j})$ and $Q(\theta; \vec x)$ are the same as in the ancilla-free scheme (i.e. they satisfy Eqs.~\eqref{eq:u_theta_x},~\eqref{eq:v_x} and ~\eqref{eq:q_theta_vecx}), in which $\vec x=(x_1, x_2, \dots, x_{2L-1}, x_{2L}) \in\mathbb R^{2L}$ are tunable parameters.

\begin{figure*}[!ht]
\begin{align*}
\LARGE
\Qcircuit @C=1.2em @R=0.36em @!R {
\lstick{\ket {0}} & \gate H & \ctrl 1 & \gate H & \meter & \rstick{d\in\{0,1\}} \cw \\
\lstick{\ket A} & \qw & \gate{Q(\theta;\vec x)} & \qw & \qw & 
}
\end{align*}
\begin{align*}
\Qcircuit @C=1.2em @R=0.36em @!R { 
& \gate{Q(\theta;\vec x)} & \qw 
& = & & \gate{U(\theta;x_1)} & \gate{V(x_2)} & \gate{U(\theta;x_3)} & \gate{V(x_4)} &\qw & \cdots & & \gate{U(\theta;x_{2L-1})} & \gate{V(x_{2L})} & \qw 
}
\end{align*}
\caption{Quantum circuit for the ancilla-based engineered likelihood functions}
\label{fig:elf_circuit_ab}
\end{figure*}
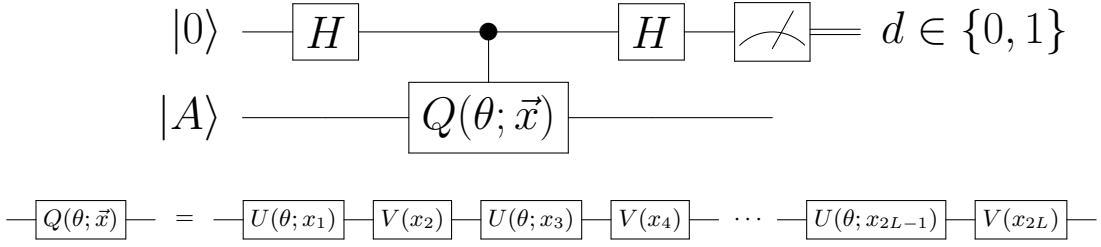

Assuming the circuit in Figure \ref{fig:elf_circuit_ab} is noiseless, the ancilla-based engineered likelihood function (AB ELF) is given by
\begin{align}
\mathbb{P}(d|\theta; \vec x)=
\frac 12 \left[ 
1+(-1)^d \Lambda(\theta; \vec x)\right],
~~\forall d \in \{0,1\}, 
\label{eq:noiseless_elf_ab}
\end{align}
where
\begin{align}
\Lambda(\theta; \vec x) = \mathrm{Re}(\bra{A} Q(\theta; \vec x) \ket{A})    
\label{eq:delta_ab}
\end{align}
is the bias of the likelihood function. In particular, if $\vec x=(\pi/2, \pi/2, \dots, \pi/2)$, we get $\Lambda(\theta; \vec x) = (-1)^L \cosp{L \theta}$ and 
call the corresponding likelihood function the ancilla-based Chebyshev likelihood function (AB CLF).

It turns out that most of the argument in Section \ref{subsec:qc_elf} still holds in the ancilla-based case, except that we need to replace  $\Delta(\theta; \vec x)$ with $\Lambda(\theta; \vec x)$. So we will use the same notation (e.g. $\ket{\bar{0}}$, $\ket{\bar{1}}$, $\bar{X}$, $\bar{Y}$, $\bar{Z}$, $\bar{I}$) as before, unless otherwise stated. In particular, when we take the errors in the circuit in Figure \ref{fig:elf_circuit_ab} into account, the noisy likelihood function is given by
\begin{align}
\mathbb{P}(d|\theta; f, \vec x)=
\frac 12 \left[ 
1+(-1)^d f \Lambda(\theta; \vec x)\right],
~~\forall d \in \{0,1\},
\label{eq:noisy_elf_ab}
\end{align}
where $f$ is the fidelity of the process for generating the ELF. Note that, however, there does exist a difference between $\Delta(\theta; \vec x)$ and $\Lambda(\theta; \vec x)$, as the former is trigono-multiqudaratic in $\vec x$, while the latter is trigono-multilinear in $\vec x$.

We will tune the circuit parameters $\vec x$ and perform Bayesian inference with the resultant ELFs in the same way as in Section \ref{subsec:bi_elf}. In fact, the argument in Section \ref{subsec:bi_elf} still holds in the ancilla-based case, except that
we need to replace $\Delta(\theta; \vec x)$ with $\Lambda(\theta; \vec x)$. So we will use the same notation as before, unless otherwise stated. In particular, we also define the variance reduction factor $\mathcal{V}(\mu, \sigma; f, \vec x)$ as in Eqs.~(\ref{eq:b}) and (\ref{eq:varReductionFactor}), replacing $\Delta(\theta; \vec x)$ with $\Lambda(\theta; \vec x)$. As shown in Appendix \ref{sec:proxy_comparison}, 
\begin{align}
\mathcal{V}(\mu, \sigma; f, \vec x) &\approx \mathcal{I}(\mu; f, \vec x) = \dfrac{f^2\lrb{\Lambda'(\theta; \vec x)}^2}{ 1 - f^2 \lrb{\Lambda(\theta; \vec x)}^2}, & \nonumber\\ & \qquad \mbox{when $\sigma$ is small}, 
\label{eq:v_fi_ab}
\end{align}
and
\begin{align}
\mathcal{V}(\mu, \sigma; f, \vec x) &\approx f^2 (\Lambda'(\mu; \vec x))^2,
\nonumber\\
& \qquad \mbox{when both $\sigma$ and $f$ are small}.  
\label{eq:v_slope_ab}
\end{align}
Namely, the Fisher information and slope of the likelihood function $\mathbb{P}(d|\theta; f, \vec x)$ at $\theta=\mu$ are two proxies of the variance reduction factor $\mathcal{V}(\mu, \sigma; f, \vec x)$ under reasonable assumptions. Since the direct optimization of $\mathcal{V}$ is hard in general, we will tune the parameters $\vec x$ by optimizing these proxies instead.

\subsection{Efficient maximization of proxies of the variance reduction factor}
Now we present efficient heuristic algorithms for maximizing two proxies of the variance reduction factor $\mathcal{V}$ -- the Fisher information and slope of the likelihood function $\mathbb{P}(d|\theta; f, \vec x)$. All of these algorithms make use of the following procedures for evaluating the CSD coefficient functions of the bias $\Lambda(\theta; \vec x)$ and its derivative $\Lambda'(\theta; \vec x)$ with respect to $x_j$ for $j=1,2,\dots, 2L$. 

\subsubsection{Evaluating the CSD coefficient functions of the bias and its derivative}
\label{sec:ecfbd_ab}

Since $\Lambda(\theta; \vec x)$ is trigono-multilinear in $\vec x$, for any $j\in \{1,2,...,2L\}$, there exist functions $C_j(\theta; \vec x_{\neg j})$ and $S_j(\theta; \vec x_{\neg j})$, that are trigono-multilinear in $\vec x_{\neg j}=(x_1, \dots, x_{j-1}, x_{j+1}, \dots, x_{2L})$, such that
\begin{align}
\Lambda(\theta; \vec x) = C_j(\theta; \vec x_{\neg j}) \cosp{x_j} + S_j(\theta; \vec x_{\neg j}) \sinp{x_j}.
\end{align}
It follows that
\begin{align}
\Lambda'(\theta; \vec x) = C_j'(\theta; \vec x_{\neg j}) \cosp{x_j} + S_j'(\theta; \vec x_{\neg j}) \sinp{x_j}
\end{align}
is also trigono-multilinear in $\vec x$, where
$C'_j(\theta;\vec x_{\neg j}) = \partial_{\theta} C_j(\theta; \vec x_{\neg j})$ and
$S'_j(\theta;\vec x_{\neg j}) = \partial_{\theta} S_j(\theta; \vec x_{\neg j})$
are the partial derivatives of $C_j(\theta; \vec x_{\neg j})$ and $S_j(\theta; \vec x_{\neg j})$ with respect to $\theta$, respectively. 

Our optimization algorithms require efficient procedures for evaluating 
$C_j(\theta; \vec x_{\neg j})$, $S_j(\theta; \vec x_{\neg j})$, $C_j'(\theta; \vec x_{\neg j})$ and $S_j'(\theta; \vec x_{\neg j})$ for given $\theta$ and $\vec x_{\neg j}$. It turns out that these tasks can be accomplished in $O(L)$ time.

\begin{lemma}
Given $\theta$ and $\vec x_{\neg j}$, each of $C_j(\theta; \vec x_{\neg j})$, $S_j(\theta; \vec x_{\neg j})$, $C_j'(\theta; \vec x_{\neg j})$ and $S_j'(\theta; \vec x_{\neg j})$ can be computed in $O(L)$ time.
\label{lem:eval_coeff_func_ab}
\end{lemma}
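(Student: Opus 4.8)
The plan is to adapt the argument of Appendix~\ref{sec:eval_coeff_func_af} to the ancilla-based bias $\Lambda(\theta;\vec x) = \rep{\bra{\bar{0}} Q(\theta;\vec x)\ket{\bar{0}}}$, exploiting that this case is structurally \emph{simpler} than the ancilla-free one: since $\Lambda$ is trigono-multilinear rather than trigono-multiquadratic, each variable $x_j$ contributes only the two coefficients $C_j, S_j$ (with no bias term $B_j$), and the quantity of interest is linear in $Q$ rather than bilinear as in the $\bra{\bar{0}} Q^\dagger P Q \ket{\bar{0}}$ expression of the ancilla-free case.

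First I would write $Q(\theta;\vec x) = G_{2L} G_{2L-1} \cdots G_2 G_1$, where $G_{2j-1} = U(\theta; x_{2j-1})$ and $G_{2j} = V(x_{2j})$ (higher index on the left), and isolate the $j$-th factor as $Q = \Pi_{>j}\, G_j\, \Pi_{<j}$, where $\Pi_{<j} = G_{j-1}\cdots G_1$ and $\Pi_{>j} = G_{2L}\cdots G_{j+1}$. By Eqs.~(\ref{eq:u_theta_x}) and (\ref{eq:v_x}) each factor has the form $G_j = \cosp{x_j}\bar{I} - \i \sinp{x_j} M_j$, with $M_j = P(\theta)$ for odd $j$ and $M_j = \bar{Z}$ for even $j$. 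Substituting and using $\rep{-\i z} = \imp{z}$ gives the closed forms
\begin{align}
C_j(\theta;\vec x_{\neg j}) &= \rep{\bra{\bar{0}}\Pi_{>j}\Pi_{<j}\ket{\bar{0}}}, \\
S_j(\theta;\vec x_{\neg j}) &= \imp{\bra{\bar{0}}\Pi_{>j} M_j \Pi_{<j}\ket{\bar{0}}},
\end{align}
each of which is evaluated in $O(1)$ time once the partial products are known.

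The main work lies in the $\theta$-derivatives $C_j' = \partial_\theta C_j$ and $S_j' = \partial_\theta S_j$. The naive product rule applied to $\Pi_{>j}\Pi_{<j}$ produces $O(L)$ terms -- one for each $U$-factor, whose derivative is $U'(\theta;\alpha) = -\i\sinp{\alpha}P'(\theta)$ by Eq.~(\ref{eq:derivative_ualpha}) -- plus, for odd $j$, a term from $\partial_\theta M_j = P'(\theta)$; evaluating each term separately would cost $O(L^2)$. The key step that keeps the cost at $O(L)$ is to maintain, alongside $\Pi_{<j}$ and $\Pi_{>j}$, their $\theta$-derivatives $D_{<j} := \partial_\theta \Pi_{<j}$ and $D_{>j} := \partial_\theta \Pi_{>j}$, generated by the coupled recurrences $\Pi_{<(j+1)} = G_j \Pi_{<j}$ and $D_{<(j+1)} = (\partial_\theta G_j)\,\Pi_{<j} + G_j D_{<j}$ (and analogously for the suffix products, built from the opposite end). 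Each step is a constant number of $2\times 2$ matrix operations, so all four families $\{\Pi_{<j}, \Pi_{>j}, D_{<j}, D_{>j}\}_j$ are obtained in $O(L)$ total time by dynamic programming. Then $\partial_\theta(\Pi_{>j}\Pi_{<j}) = D_{>j}\Pi_{<j} + \Pi_{>j} D_{<j}$, and the $M_j$-sandwiched expression is differentiated the same way (adding the $P'(\theta)$ contribution when $j$ is odd), yielding each of $C_j', S_j'$ in $O(1)$.

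I expect the only subtlety -- rather than a genuine obstacle -- to be the bookkeeping of the $\partial_\theta M_j$ term in the odd-$j$ case and its correct interleaving with the derivatives of the surrounding products; the fidelity factor $f$ and the outer $\rep{\cdot}$/$\imp{\cdot}$ operations commute past this structure and do not affect the $O(L)$ accounting. Overall this makes the ancilla-based proof a direct and in fact lighter version of the ancilla-free argument in Appendix~\ref{sec:eval_coeff_func_af}.
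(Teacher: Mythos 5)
Your proof is correct and follows essentially the same route as the paper's: isolate the $j$-th factor of $Q(\theta;\vec x)$, expand it as $\cosp{x_j}\bar I - \i\sinp{x_j}M_j$ to read off $C_j$ and $S_j$ from the prefix and suffix products, and obtain everything in $O(L)$ time by dynamic programming on those partial products. The only (cosmetic) difference is that you propagate the $\theta$-derivatives of the partial products via the product-rule recurrence $D_{<(j+1)} = (\partial_\theta G_j)\Pi_{<j} + G_j D_{<j}$, whereas the paper writes the same objects ($A_t$, $B_t$) as explicit sums over the positions of the differentiated $U$-factors and evaluates each term from precomputed prefix/suffix products; both give the identical $O(L)$ accounting.
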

\begin{proof}
For convenience, we introduce the following notation. Let $W_{2i}=V(x_{2L-2i})$, $W_{2i+1}=U(\theta;x_{2L-2i-1})$, for $i=0, 1, \dots, L-1$. Furthermore, let $W'_j=\partial_{\theta} W_j$ for $j=0, 1, \dots, 2L-1$. Note that $W'_j=0$ if $j$ is even.
Then we define $P_{a,b}=W_{a}W_{a+1}\dots W_b$ if $0 \le a \le  b \le 2L-1$, and $P_{a,b}=I$ otherwise. 

With this notation, Eqs.~(\ref{eq:q_theta_vecx}) and (\ref{eq:derivative_qalphabeta}) imply that
\begin{align}
Q(\theta;\vec x) = P_{0, a-1} W_a P_{a+1, 2L-1}, & \forall 0 \le a \le 2L-1, 
\end{align}
and
\begin{align}
Q'(\theta; \vec x) &= P_{0, 0} W'_1 P_{2, 2L-1}+
P_{0, 2} W'_3 P_{4, 2L-1} + \dots 
\nonumber\\
&\quad +
P_{0, 2L-4} W'_{2L-3} P_{2L-2, 2L-1}  
+ P_{0, 2L-2} W'_{2L-1}.
\label{eq:derivative_qalphabeta2}
\end{align}

In order to evaluate $C_j(\theta; \vec x_{\neg j})$, $S_j(\theta; \vec x_{\neg j})$, $C_j'(\theta; \vec x_{\neg j})$ and $S_j'(\theta; \vec x_{\neg j})$ for given $\theta$ and $\vec x_{\neg j}$, we consider the case $j$ is even and the case $j$ is odd separately.

\begin{itemize}

\item Case 1: $j=2(L-t)$ is even, where $0 \le t \le L-1$. In this case,  $W_{2t}=V(x_j)$. Using the fact 
\begin{align}
Q(\theta;\vec x) &= P_{0, 2t-1} W_{2t} P_{2t+1, 2L-1} \\
&= P_{0, 2t-1} (\cosp{x_j} \bar{I} - \i \sinp{x_j} \bar{Z}) P_{2t+1, 2L-1} \\
&= \cosp{x_j} P_{0, 2t-1} P_{2t+1, 2L-1}  \nonumber\\
&\quad -\i \sinp{x_j} P_{0, 2t-1}  
\bar{Z} P_{2t+1, 2L-1},
\end{align}
we obtain 
\begin{align}
\Lambda(\theta; \vec x) &=
C_j(\theta;\vec x_{\neg j}) \cosp{x_j} + 
S_j(\theta;\vec x_{\neg j})\sinp{x_j},
\end{align}
where
\begin{align}
C_j(\theta;\vec x_{\neg j}) &= \rep{\bra{\bar{0}}P_{0, 2t-1} P_{2t+1, 2L-1} \ket{\bar{0}}}, 
\label{eq:cj_ab1}\\
S_j(\theta;\vec x_{\neg j}) &= \imp{\bra{\bar{0}} P_{0, 2t-1}  \bar{Z} P_{2t+1, 2L-1} \ket{\bar{0}}}.
\label{eq:sj_ab1}
\end{align}
Given $\theta$ and $\vec x_{\neg j}$, we first compute $P_{0, 2t-1}$ and $P_{2t+1, 2L-1}$ in $O(L)$ time. Then we calculate $C_j(\theta;\vec x_{\neg j})$ and $S_j(\theta;\vec x_{\neg j})$ by Eqs.~(\ref{eq:cj_ab1}) and (\ref{eq:sj_ab1}). This procedure takes only $O(L)$ time. 

Next, we describe how to compute $C'_j(\theta;\vec x_{\neg j})$ and $S'_j(\theta;\vec x_{\neg j})$. Using Eq.~(\ref{eq:derivative_qalphabeta2}) and the fact  $P_{a,b}=P_{a,2t-1}W_{2t}P_{2t+1,b}$, for any $a \le 2t \le b$,
we obtain
\begin{align}
Q'(\theta; \vec x) &= P_{0, 0} W'_1 P_{2, 2t-1} W_{2t} P_{2t+1, 2L-1} \nonumber\\
&\quad + P_{0, 2} W'_3 P_{4, 2t-1} W_{2t} P_{2t+1, 2L-1} \nonumber\\
&\quad + \dots \nonumber \\
&\quad + P_{0, 2t-2} W'_{2t-1} W_{2t} P_{2t+1, 2L-1} \nonumber \\
&\quad + P_{0, 2t-1} W_{2t} W'_{2t+1} P_{2t+2, 2L-1} \nonumber \\
&\quad + \dots \nonumber\\
&\quad + P_{0, 2t-1} W_{2t} P_{2t+1, 2L-4} W'_{2L-3} P_{2L-2, 2L-1}\nonumber \\
&\quad + P_{0, 2t-1} W_{2t} P_{2t+1, 2L-2} W'_{2L-1}.
\label{eq:derivative_qalphabeta3}
\end{align}

Let 
\begin{align}
A_{t} &= \sum_{s=1}^{t} P_{0, 2s-2} W'_{2s-1} P_{2s, 2t-1} \\
&= \sum_{s=1}^{t} P_{0, 2s-2} U'(\theta; x_{2L-2s+1}) P_{2s, 2t-1},
\label{eq:at_ab} \\
B_{t} &= \sum_{s=t+1}^{L} P_{2t+1, 2s-2} W'_{2s-1} P_{2s, 2L-1}\\
&= \sum_{s=t+1}^{L} P_{2t+1, 2s-2} U'(\theta; x_{2L-2s+1}) P_{2s, 2L-1}.
\label{eq:bt_ab}
\end{align}
Then Eq.~(\ref{eq:derivative_qalphabeta3}) yields
\begin{align}
Q'(\theta; \vec x) &=
A_{t} W_{2t} P_{2t+1, 2L-1} + P_{0, 2t-1} W_{2t} B_{t} \\
&=
A_{t} \lrb{\cosp{x_j} \bar{I} - \i \sinp{x_j} \bar{Z}} P_{2t+1, 2L-1}  \nonumber \\ 
&\quad + P_{0, 2t-1}\lrb{\cosp{x_j}) \bar{I} - \i \sinp{x_j} \bar{Z}} B_{t} \\
&=
\cosp{x_j} \lrb{A_{t} P_{2t+1, 2L-1} + P_{0, 2t-1} B_{t}} \nonumber \\
&\quad - \i \sinp{x_j} \lrb{A_{t} \bar{Z} P_{2t+1, 2L-1} + P_{0, 2t-1} \bar{Z} B_{t}},
\end{align}
which leads to
\begin{align}
\Lambda'(\theta;\vec x) &=
C'_j(\theta;\vec x_{\neg j})\cosp{x_j} 
+ S'_j(\theta;\vec x_{\neg j})\sinp{x_j},
\end{align}
where
\begin{align}
C'_j(\theta;\vec x_{\neg j}) &=  \rep{\bra{\bar{0}} (A_{t} P_{2t+1, 2L-1} + P_{0, 2t-1} B_{t}) \ket{\bar{0}}}, 
\label{eq:dcj_ab} \\
S'_j(\theta;\vec x_{\neg j}) &= \imp{\bra{\bar{0}}  (A_{t} \bar{Z} P_{2t+1, 2L-1} + P_{0, 2t-1} \bar{Z} B_{t}) \ket{\bar{0}}}.
\label{eq:dsj_ab}
\end{align}
Given $\theta$ and $\vec x_{\neg j}$, we first compute the following matrices in a total of $O(L)$ time by standard dynamic programming techniques: 
\begin{itemize}
\item $P_{0, 2s-2}$ and $P_{2s, 2t-1}$ for $s=1, 2, \dots, t$;
\item $P_{2t+1, 2s-2}$ and $P_{2s, 2L-1}$ for $s=t+1, t+2, \dots, L$;
\item $P_{0, 2t-1}$ and $P_{2t+1, 2L-1}$.
\end{itemize} 
Then we compute $A_t$ and $B_t$ by Eqs.~(\ref{eq:at_ab}) and (\ref{eq:bt_ab}). After that, we calculate $C'_j(\theta;\vec x_{\neg j})$ and $S'_j(\theta;\vec x_{\neg j})$ by Eqs.~(\ref{eq:dcj_ab}) and (\ref{eq:dsj_ab}). Overall, this procedure takes $O(L)$ time.

\item Case 2: $j=2(L-t)-1$ is odd, where $0 \le t \le L-1$. In this case,  $W_{2t+1}=U(\theta; x_j)$. Using the fact 
\begin{align}
Q(\theta;\vec x) 
&= P_{0, 2t} W_{2t+1} P_{2t+2, 2L-1} \\
&= P_{0, 2t} \lrb{\cosp{x_j} \bar{I} - \i \sinp{x_j} P(\theta)} P_{2t+2, 2L-1} \\
&= \cosp{x_j} P_{0, 2t} P_{2t+2, 2L-1} \nonumber\\
&\quad - \i \sinp{x_j} P_{0, 2t}  P(\theta) P_{2t+2, 2L-1},
\end{align}
we obtain 
\begin{align}
\Lambda(\theta; \vec x) = 
C_j(\theta;\vec x_{\neg j})\cosp{x_j}+ S_j(\theta;\vec x_{\neg j})\sinp{x_j},
\end{align}
where
\begin{align}
C_j(\theta;\vec x_{\neg j}) &= \rep{\bra{\bar{0}}P_{0, 2t} P_{2t+2, 2L-1} \ket{\bar{0}}}, \label{eq:cj_ab2}\\
S_j(\theta;\vec x_{\neg j}) &= \imp{\bra{\bar{0}} P_{0, 2t}  P(\theta) P_{2t+2, 2L-1} \ket{\bar{0}}}.\label{eq:sj_ab2}
\end{align}
Given $\theta$ and $\vec x_{\neg j}$, we first compute $P_{0, 2t}$ and $P_{2t+2, 2L-1}$ in $O(L)$ time. Then we calculate $C_j(\theta;\vec x_{\neg j})$ and $S_j(\theta;\vec x_{\neg j})$ by Eqs.~(\ref{eq:cj_ab2}) and (\ref{eq:sj_ab2}). This procedure takes only
$O(L)$ time. 

Next, we describe how to compute $C'_j(\theta;\vec x_{\neg j})$ and $S'_j(\theta;\vec x_{\neg j})$. Using Eq.~(\ref{eq:derivative_qalphabeta2}) and the fact $P_{a,b}=P_{a, 2t} W_{2t+1} P_{2t+2, b}$ for any $a \le 2t+1 \le b$, we get
\begin{align}
Q'(\theta; \vec x) &= P_{0, 0} W'_1 P_{2, 2t} W_{2t+1} P_{2t+2, 2L-1} \nonumber\\
&\quad + P_{0, 2} W'_3 P_{4, 2t} W_{2t+1} P_{2t+2, 2L-1} \nonumber\\
&\quad + \dots \nonumber \\
&\quad + P_{0, 2t-2} W'_{2t-1} P_{2t, 2t} W_{2t+1} P_{2t+2, 2L-1} \nonumber \\
&\quad + P_{0, 2t} W'_{2t+1} P_{2t+2, 2L-1} \nonumber \\
&\quad + P_{0, 2t} W_{2t+1} P_{2t+2, 2t+2} W'_{2t+3} P_{2t+4, 2L-1} \nonumber \\
&\quad + \dots \nonumber\\
&\quad + P_{0, 2t} W_{2t+1} P_{2t+2, 2L-4} W'_{2L-3} P_{2L-2, 2L-1}\nonumber \\
&\quad + P_{0, 2t} W_{2t+1} P_{2t+2, 2L-2} W'_{2L-1}.
\label{eq:derivative_qalphabeta4}
\end{align}

Let 
\begin{align}
A_{t} &= \sum_{s=1}^t P_{0, 2s-2} W'_{2s-1} P_{2s, 2t} \\
&= \sum_{s=1}^t P_{0, 2s-2} U'(\theta; x_{2L-2s+1}) P_{2s, 2t}, \label{eq:at_ab2}\\
B_{t}&= \sum_{s=t+2}^L P_{2t+2, 2s-2} W'_{2s-1} P_{2s, 2L-1} \\
&=\sum_{s=t+2}^L P_{2t+2, 2s-2} U'(\theta; x_{2L-2s+1}) P_{2s, 2L-1}. \label{eq:bt_ab2}
\end{align} 
Then Eq.~(\ref{eq:derivative_qalphabeta4}) yields
\begin{align}
Q'(\theta; \vec x) &=
A_{t} W_{2t+1} P_{2t+2, 2L-1}
+
P_{0, 2t}W'_{2t+1} P_{2t+2, 2L-1}
\nonumber\\
&\quad +
P_{0, 2t} W_{2t+1} B_{ t} \\
&=
A_{t} (\cosp{x_j} \bar{I} - \i \sinp{x_j} P(\theta)) P_{2t+2, 2L-1} \nonumber \\
&\quad- \i \sinp{x_j} P_{0, 2t} P'(\theta) P_{2t+2, 2L-1} \nonumber \\
&\quad+
P_{0, 2t} (\cosp{x_j} \bar{I} - \i \sinp{x_j} P(\theta)) B_{ t} \\
&=
\cosp{x_j} \lrb{A_t P_{2t+2, 2L-1}+P_{0, 2t}B_t} \nonumber\\
&\quad -\i \sinp{x_j} (A_t P(\theta) P_{2t+2, 2L-1} 
\nonumber\\
&\quad + 
P_{0, 2t} P'(\theta)P_{2t+2, 2L-1}+P_{0, 2t}P(\theta) B_t),
\end{align}
which leads to
\begin{align}
\Lambda'(\theta;\vec x) = 
C'_j(\theta;\vec x_{\neg j}) \cosp{x_j} + S'_j(\theta;\vec x_{\neg j})\sinp{x_j},
\end{align}
where
\begin{align}
C'_j(\theta;\vec x_{\neg j}) &=  \rep{\bra{\bar{0}} \lrb{A_t P_{2t+2, 2L-1}+P_{0, 2t}B_t} \ket{\bar{0}}}, \label{eq:dcj_ab2}\\
S'_j(\theta;\vec x_{\neg j}) &=  \mathrm{Im}(\bra{\bar{0}} (A_t P(\theta) P_{2t+2, 2L-1} \nonumber\\
&\quad + P_{0, 2t} P'(\theta)P_{2t+2, 2L-1}+P_{0, 2t}P(\theta) B_t) \ket{\bar{0}}). 
\label{eq:dsj_ab2}
\end{align}
Given $\theta$ and $\vec x_{\neg j}$, we first compute the following matrices in a total of $O(L)$ time by standard dynamic programming techniques: 
\begin{itemize}
\item $P_{0, 2s-2}$ and $P_{2s, 2t}$ for $s=1, 2, \dots, t$;
\item $P_{2t+2, 2s-2}$ and $P_{2s, 2L-1}$ for $s=t+2, t+3, \dots, L$;
\item $P_{0, 2t}$ and $P_{2t+2, 2L-1}$.
\end{itemize} 
Then we compute $A_t$ and $B_t$ by Eqs.~(\ref{eq:at_ab2}) and (\ref{eq:bt_ab2}). After that, we calculate $C'_j(\theta;\vec x_{\neg j})$ and $S'_j(\theta;\vec x_{\neg j})$ by Eqs.~(\ref{eq:dcj_ab2}) and (\ref{eq:dsj_ab2}). Overall, this procedure takes $O(L)$ time.
\end{itemize}
\end{proof}

\subsubsection{Maximizing the Fisher information of the likelihood function}
\label{sec:opt_v0_ab}
We propose two algorithms for maximizing the Fisher information of the likelihood function $\mathbb{P}(d|\theta; f, \vec x)$ at a given point $\theta=\mu$ (i.e. the prior mean of $\theta$). Namely, our goal is to find $\vec x \in \mathbb{R}^{2L}$ that maximizes
\begin{align}
\mathcal{I}(\mu; f, \vec x) = \dfrac{f^2\lrb{\Lambda'(\mu; \vec x)}^2}{ 1 - f^2 \lrb{\Lambda(\mu; \vec x)}^2}.
\end{align}

These algorithms are similar to Algorithms \ref{alg:ga_opt_v0_af} and \ref{alg:ca_opt_v0_af} for Fisher information maximization in the ancilla-free case, in the sense that they are also based on gradient ascent and coordinate ascent, respectively. The main difference is that now we invoke the procedures in Lemma \ref{lem:eval_coeff_func_ab} to evaluate $C(\mu; \vec x_{\neg j})$, $S(\mu; \vec x_{\neg j})$, $C'(\mu; \vec x_{\neg j})$ and $S'(\mu; \vec x_{\neg j})$ for given $\mu$ and $\vec x_{\neg j}$, and then use them to either compute the partial derivative of $\mathcal{I}(\mu; f, \vec x)$ with respect to $x_j$ (in gradient ascent) or define a single-variable optimization problem for $x_j$ (in coordinate ascent). These algorithms are formally described in Algorithms \ref{alg:ga_opt_v0_ab} and \ref{alg:ca_opt_v0_ab}.

\begin{algorithm*}[ht]
 \KwIn{The prior mean $\mu$ of $\theta$, the number $L$ of circuit layers, the fidelity $f$ of the process for generating the ELF, the step size schedule $\delta: \mathbb{Z}^{\ge 0} \to \mathbb{R}^+$, the error tolerance $\epsilon$ for termination.}
 \KwOut{A set of parameters $\vec x = (x_1, x_2, \dots, x_{2L}) \in \mathbb{R}^{2L}$ that are a local maximum point of the function $\mathcal{I}(\mu; f, \vec x)$.}

Choose random initial point $\vec x^{(0)}=(x^{(0)}_1, x^{(0)}_2, \dots, x^{(0)}_{2L}) \in (-\pi, \pi]^{2L}$; \\
$t \leftarrow 0$; \\
\While{True}{

\For{$j \leftarrow 1$ \KwTo $2L$}{
    Let $\vec x^{(t)}_{\neg j} = (x^{(t)}_1, \dots, x^{(t)}_{j-1}, x^{(t)}_{j+1}, \dots, x^{(t)}_{2L})$; \\
    Compute $C^{(t)}_j:= C_j(\mu; \vec x^{(t)}_{\neg j})$, $S^{(t)}_j:= S_j(\mu; \vec x^{(t)}_{\neg j})$, $C'^{(t)}_j:= C'_j(\mu; \vec x^{(t)}_{\neg j})$ and $S'^{(t)}_j:= S'_j(\mu; \vec x^{(t)}_{\neg j})$ by using the procedures in Lemma \ref{lem:eval_coeff_func_ab}; \\
    Compute $\Lambda(\mu; \vec x)$, $\Lambda'(\mu; \vec x)$ and their partial derivatives with respect to $x_j$ at $\vec x=\vec x^{(t)}$ as follows:
    \begin{align}
    &\Lambda^{(t)} := \Lambda(\mu; \vec x^{(t)}) = C^{(t)}_j \cosp{x_j} + S^{(t)}_j \sinp{x_j}, \\
    &\Lambda'^{(t)} := \Lambda'(\mu; \vec x^{(t)}) = C'^{(t)}_j \cosp{x_j} + S'^{(t)}_j \sinp{x_j}, \\
    &\chi^{(t)}_j := \dfrac{\partial \Lambda(\mu; \vec x)}{\partial x_j}|_{\vec x = \vec x^{(t)}}
    = -C^{(t)}_j \sinp{x_j} + S^{(t)}_j \cosp{x_j},  \\
    &\chi'^{(t)}_j := \dfrac{\partial \Lambda'(\mu; \vec x)}{\partial x_j}|_{\vec x = \vec x^{(t)}}
    = -C'^{(t)}_j \sinp{x_j} + S'^{(t)}_j \cosp{x_j}; 
    \end{align}
    Compute the partial derivative of $\mathcal{I}(\mu; f, \vec x)$ with respect to $x_j$ at $\vec x=\vec x^{(t)}$ as follows:
    \begin{align}
        \gamma^{(t)}_j := {\dfrac{\partial \mathcal{I}(\mu; f, \vec x)}{\partial x_j}}|_{\vec x=\vec x^{(t)}} = \dfrac{2f^2 \lrbb{(1-f^2 (\Lambda^{(t)})^2) \Lambda'^{(t)} \chi'^{(t)}_j
        +f^2 \Lambda^{(t)} \chi^{(t)}_j (\Lambda'^{(t)})^2}
        }{\lrbb{1-f^2 (\Lambda^{(t)})^2}^2}
    \end{align}
} 
Set $\vec x^{(t+1)}=\vec x^{(t)} + \delta(t) \nabla{\mathcal{I}(\mu; f, \vec x)} |_{\vec x=\vec x^{(t)}}$, where
$\nabla{\mathcal{I}(\mu; f, \vec x)}|_{\vec x=\vec x^{(t)}}=(\gamma^{(t)}_1, \gamma^{(t)}_2, \dots, \gamma^{(t)}_{2L})$;\\
\If{$|\mathcal{I}(\mu; f, \vec x^{(t+1)}) - \mathcal{I}(\mu; f, \vec x^{(t)})| < \epsilon$}{break;}
$t \leftarrow t+1$;
}
Return $\vec x^{(t+1)}=(x_1^{(t+1)}, x_2^{(t+1)}, \dots, x_{2L}^{(t+1)})$ as the optimal parameters.
\caption{Gradient ascent for Fisher information maximization in the ancilla-based case}
\label{alg:ga_opt_v0_ab}
\end{algorithm*}

\begin{algorithm*}[ht]
 \KwIn{The prior mean $\mu$ of $\theta$, the number $L$ of circuit layers, the fidelity $f$ of the process for generating the ELF, the error tolerance $\epsilon$ for termination.}
 \KwOut{A set of parameters $\vec x = (x_1, x_2, \dots, x_{2L}) \in (-\pi, \pi]^{2L}$ that are a local maximum point of the function $\mathcal{I}(\mu; f, \vec x)$.}

Choose random initial point $\vec x^{(0)}=(x^{(0)}_1, x^{(0)}_2, \dots, x^{(0)}_{2L}) \in (-\pi, \pi]^{2L}$; \\
$t \leftarrow 1$; \\
\While{True}{

\For{$j \leftarrow 1$ \KwTo $2L$}{
    Let $\vec x^{(t)}_{\neg j} = (x^{(t)}_1, \dots, x^{(t)}_{j-1}, x^{(t-1)}_{j+1}, \dots, x^{(t-1)}_{2L})$; \\
    Compute $C^{(t)}_j:= C_j(\mu; \vec x^{(t)}_{\neg j})$, $S^{(t)}_j := S_j(\mu; \vec x^{(t)}_{\neg j})$, $C'^{(t)}_j:= C'_j(\mu; \vec x^{(t)}_{\neg j})$ and $S'^{(t)}_j:= S'_j(\mu; \vec x^{(t)}_{\neg j})$ by using the procedures in Lemma \ref{lem:eval_coeff_func_ab}; \\
    Solve the single-variable optimization problem 
    $$\argmax_{z} \dfrac{f^2\lrb{C'^{(t)}_j \cosp{z} + S'^{(t)}_j \sinp{z}}^2}{1-f^2\lrb{C^{(t)}_j \cosp{z}+ S^{(t)}_j \sinp{z} }^2}$$
    by standard gradient-based methods and set $x^{(t)}_j$ to be its solution;
} 
\If{$|\mathcal{I}(\mu; f, \vec x^{(t)}) - \mathcal{I}(\mu; f, \vec x^{(t-1)})| < \epsilon$}{break;}
$t \leftarrow t+1$;
}
Return $\vec x^{(t)}=(x_1^{(t)}, x_2^{(t)}, \dots, x_{2L}^{(t)})$ as the optimal parameters.

\caption{Coordinate ascent for Fisher information maximization in the ancilla-based case}
\label{alg:ca_opt_v0_ab}
\end{algorithm*}

We have used Algorithms \ref{alg:ga_opt_v0_ab} and \ref{alg:ca_opt_v0_ab} to find the parameters $\vec x_{\theta} \in \mathbb{R}^{2L}$ that maximize $\mathcal{I}(\theta; f, \vec x)$ for various $\theta \in (0, \pi)$ (fixing $f$) and obtained Figure \ref{fig:fisher_information_ab}. This figure indicates that the Fisher information of AB ELF is larger than that of AB CLF for the majority of $\theta \in (0, \pi)$. Consequently, the estimation algorithm based on AB ELF is more efficient than the one based on AB CLF, as demonstrated in Section \ref{sec:simulatin_results}.

\begin{figure}[!ht]
\center
\includegraphics[width=0.9\linewidth]{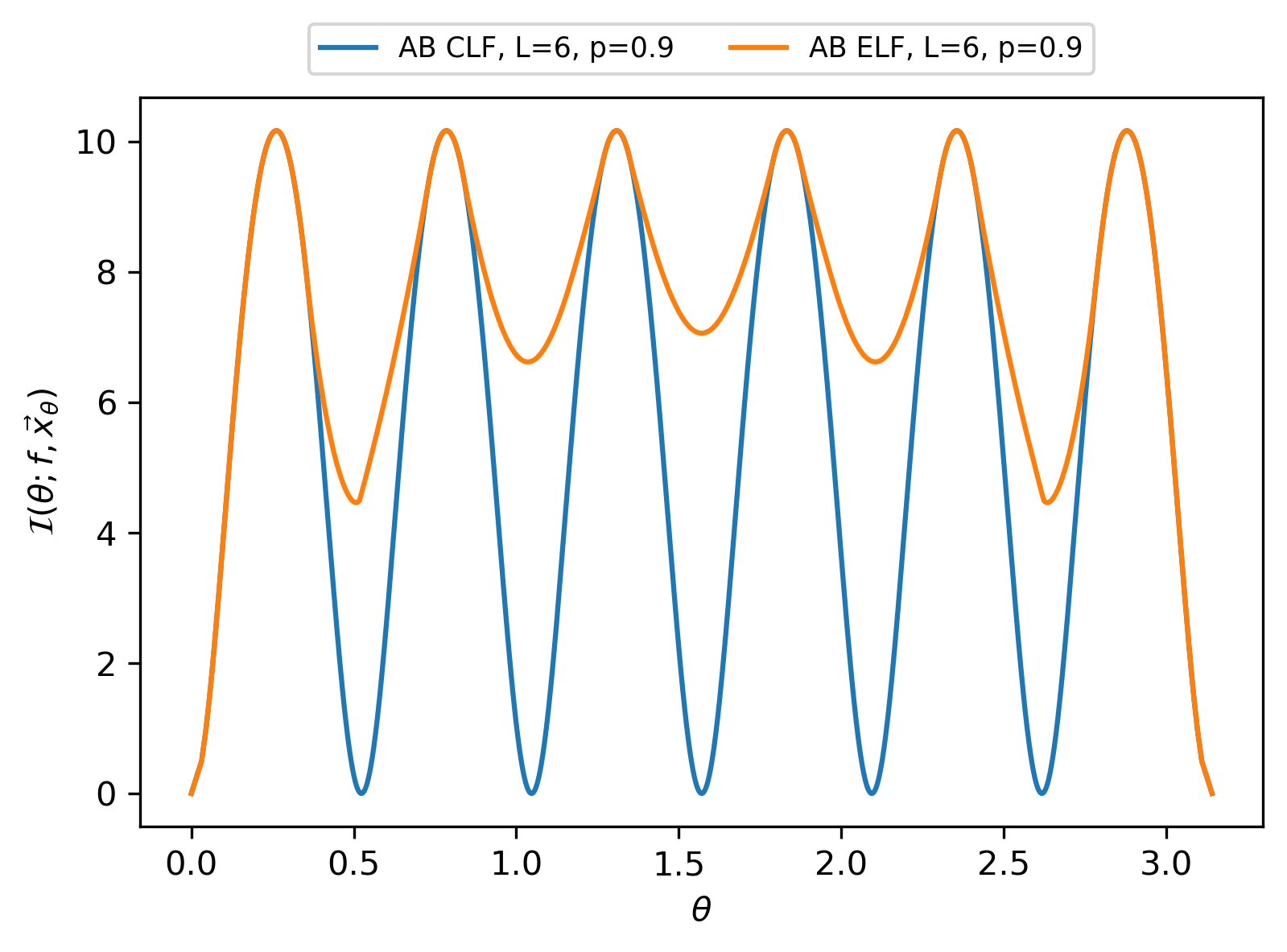}
\caption{This figure compares the Fisher information of AB ELF and AB CLF for various $\theta \in (0, \pi)$, when the number of circuit layers is $L=6$, the fidelity of each layer is $p=0.9$, and there is no SPAM error (i.e. $\bar{p}=1)$. For AB ELF, $\vec x_{\theta}$ is a global maximum point of $\mathcal{I}(\theta; f, \vec x)$ for given $\theta$ and $f=\bar{p}p^L=0.531441$. For AB CLF, $\vec x_{\theta}=(\pi/2, \pi/2, \dots, \pi/2)$ is fixed. One can see that the Fisher information of AB ELF is larger than that of AB CLF for the majority of $\theta \in (0, \pi)$. Furthermore, the Fisher information of AB CLF changes dramatically for different $\theta$'s (in fact, it is exactly $0$ when $\theta = j \pi/L$ for $j=0, 1, \dots, L$), whereas the Fisher information of AB ELF is less sensitive to the value of $\theta$.}
\label{fig:fisher_information_ab}
\end{figure}

\subsubsection{Maximizing the slope of the likelihood function}
\label{sec:slope_opt_ab}

We also propose two algorithms for maximizing the slope of the likelihood function $\mathbb{P}(d|\theta; f, \vec x)$ at a given point $\theta=\mu$ (i.e. the prior mean of $\theta$). Namely, our goal is to find $\vec x \in \mathbb{R}^{2L}$ that maximizes
$|\mathbb{P}'(d|\mu; f, \vec x)|=f|\Lambda'(\mu; \vec x)|/2$.

These algorithms are similar to Algorithms \ref{alg:ga_opt_slope_af} and \ref{alg:ca_opt_slope_af} for slope maximization in the ancilla-free case, in the sense that they are also based on gradient ascent and coordinate ascent, respectively. The main difference is that now we invoke the procedures in Lemma \ref{lem:eval_coeff_func_ab} to evaluate $C'(\mu; \vec x_{\neg j})$ and $S'(\mu; \vec x_{\neg j})$ for given $\mu$ and $\vec x_{\neg j}$. Then we use these quantities to either compute the partial derivative of $(\Lambda(\mu; \vec x))^2$ with respect to $x_j$ (in gradient ascent) or directly update the value of $x_j$ (in coordinate ascent). These algorithms are formally described in Algorithms \ref{alg:ga_opt_slope_ab} and \ref{alg:ca_opt_slope_ab}.

\begin{algorithm*}[ht]
 \KwIn{The prior mean $\mu$ of $\theta$, the number $L$ of circuit layers, the step size schedule $\delta: \mathbb{Z}^{\ge 0} \to \mathbb{R}^+$, the error tolerance $\epsilon$ for termination.}
 \KwOut{A set of parameters $\vec x = (x_1, x_2, \dots, x_{2L}) \in \mathbb{R}^{2L}$ that are a local maximum point of the function $|\Lambda'(\mu; \vec x)|$.}

Choose random initial point $\vec x^{(0)}=(x^{(0)}_1, x^{(0)}_2, \dots, x^{(0)}_{2L}) \in (-\pi, \pi]^{2L}$; \\
$t \leftarrow 0$; \\
\While{True}{

\For{$j \leftarrow 1$ \KwTo $2L$}{
    Let $\vec x^{(t)}_{\neg j} = (x^{(t)}_1, \dots, x^{(t)}_{j-1}, x^{(t)}_{j+1}, \dots, x^{(t)}_{2L})$; \\
    Compute $C'^{(t)}_j:= C'_j(\mu; \vec x^{(t)}_{\neg j})$ and $S'^{(t)}_j:= S'_j(\mu; \vec x^{(t)}_{\neg j})$ 
    by using the procedures in Lemma \ref{lem:eval_coeff_func_ab}; \\
    Compute $\Lambda'(\mu; \vec x)$ at $\vec x=\vec x^{(t)}$ as follows:
    \begin{align}
    \Lambda'^{(t)} := \Lambda'(\mu; \vec x^{(t)}) = C'^{(t)}_j \cosp{x^{(t)}_j} + S'^{(t)}_j \sinp{x^{(t)}_j};
    \end{align}
    Compute the partial derivative of $\Lambda'(\mu; \vec x)$ with respect to $x_j$ as follows: 
    \begin{align}
    \gamma^{(t)}_j := \dfrac{\partial \Lambda'(\mu; \vec x)}{\partial x_j} |_{\vec x = \vec x^{(t)}} = -C'^{(t)}_j \sinp{x^{(t)}_j}+S'^{(t)}_j \cosp{ x^{(t)}_j};    
    \end{align}
} 

Set $\vec x^{(t+1)}=\vec x^{(t)} + \delta(t) \nabla^{(t)}$, where
$\nabla^{(t)}:=  (2\Lambda'^{(t)}\gamma^{(t)}_1, 2\Lambda'^{(t)}\gamma^{(t)}_2, \dots, 2\Lambda'^{(t)}\gamma^{(t)}_{2L})$ is the gradient of $(\Lambda'(\mu; \vec x))^2$ at $\vec x=\vec x^{(t)}$;\\
\If{$|\Lambda'(\mu; \vec x^{(t+1)}) - \Lambda'(\mu; \vec x^{(t)})| < \epsilon$}{break;}
$t \leftarrow t+1$;
}
Return $\vec x^{(t+1)}=(x_1^{(t+1)}, x_2^{(t+1)}, \dots, x_{2L}^{(t+1)})$ as the optimal parameters.
\caption{Gradient ascent for slope maximization in the ancilla-based case}
\label{alg:ga_opt_slope_ab}
\end{algorithm*}

\begin{algorithm*}[ht]
 \KwIn{The prior mean $\mu$ of $\theta$, the number $L$ of circuit layers, the error tolerance $\epsilon$ for termination.}
 \KwOut{A set of parameters $\vec x = (x_1, x_2, \dots, x_{2L}) \in (-\pi, \pi]^{2L}$ that are a local maximum point of the function $|\Lambda'(\mu; \vec x)|$.}

Choose random initial point $\vec x^{(0)}=(x^{(0)}_1, x^{(0)}_2, \dots, x^{(0)}_{2L}) \in (-\pi, \pi]^{2L}$; \\
$t \leftarrow 1$; \\
\While{True}{

\For{$j \leftarrow 1$ \KwTo $2L$}{
    Let $\vec x^{(t)}_{\neg j} = (x^{(t)}_1, \dots, x^{(t)}_{j-1}, x^{(t-1)}_{j+1}, \dots, x^{(t-1)}_{2L})$; \\
    Compute $C'^{(t)}_j:= C'_j(\mu; \vec x^{(t)}_{\neg j})$ and $S'^{(t)}_j:= S'_j(\mu; \vec x^{(t)}_{\neg j})$ by using the procedures in Lemma \ref{lem:eval_coeff_func_ab}; \\
    Set $x^{(t)}_j = \arg{C'^{(t)}_j+i S'^{(t)}_j}$;
} 
\If{$|\Lambda'(\mu; \vec x^{(t)}) - \Lambda'(\mu; \vec x^{(t-1)})| < \epsilon$}{break;}
$t \leftarrow t+1$;
}
Return $\vec x^{(t)}=(x_1^{(t)}, x_2^{(t)}, \dots, x_{2L}^{(t)})$ as the optimal parameters.

\caption{Coordinate ascent for slope optimization in the ancilla-based case}
\label{alg:ca_opt_slope_ab}
\end{algorithm*}

We have also used Algorithms \ref{alg:ga_opt_slope_ab} and \ref{alg:ca_opt_slope_ab} to find the parameters $\vec x_{\theta}$ that maximize $|\Lambda'(\theta; \vec x)|$ for various $\theta \in (0, \pi)$ and obtained Figure \ref{fig:slope_af}. This figure implies that the slope-based AB ELF is steeper than AB CLF for the majority of $\theta \in (0, \pi)$ and hence has more statistical power than AB CLF (at least) in the low-fidelity setting by Eq.~\eqref{eq:v_slope_ab}.

\begin{figure}[!ht]
\center
\includegraphics[width=0.9\linewidth]{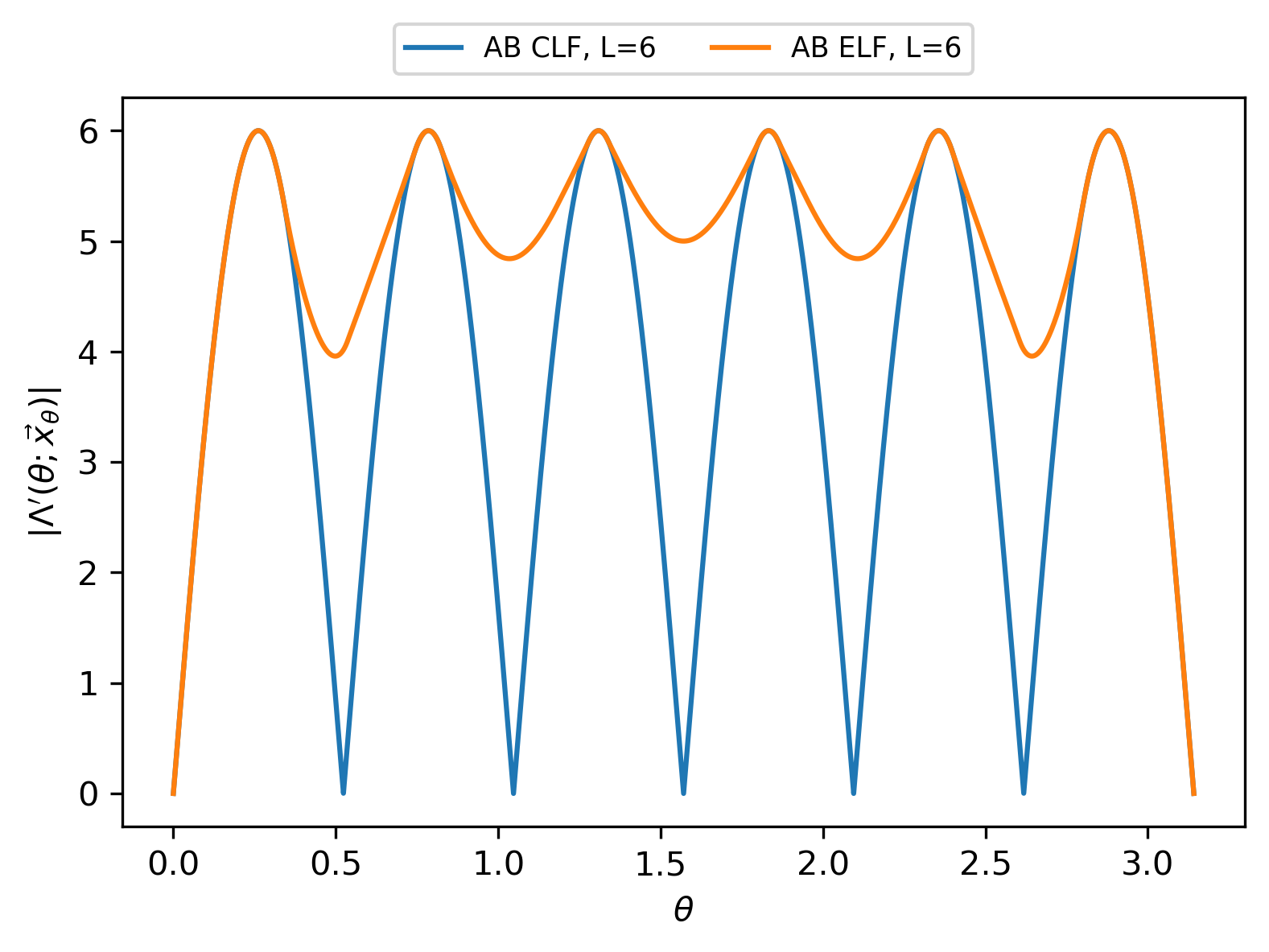}
\caption{This figure compares the values of $|\Lambda'(\theta; \vec x_{\theta})|$ for slope-based AB ELF and AB CLF for various $\theta \in (0, \pi)$, when the number of circuit layers is $L=6$. For AB ELF, $\vec x_{\theta}$ is a global maximum point of $|\Lambda'(\theta; \vec x)|$
for given $\theta$. For AB CLF, $\vec x_{\theta}=(\pi/2, \pi/2, \dots, \pi/2)$ is fixed. This figure implies that the slope-based AB ELF is steeper than AB CLF for the majority of $\theta \in (0, \pi)$. Furthermore, the slope of AB CLF changes dramatically for different $\theta$'s (in fact, it is exactly $0$ when $\theta = j \pi/L $ for $j=0, 1, \dots, L$), whereas the slope of AB ELF is less sensitive to the value of $\theta$.}
\label{fig:slope_ab}
\end{figure}

\subsection{Approximate Bayesian inference with engineered likelihood functions}
\label{sec:bi_elf_ab}
With the algorithms for tuning the circuit parameters $\vec x$ in place, we now briefly describe how to perform Bayesian inference efficiently with the resultant likelihood functions. The idea is similar to the one in Section \ref{sec:bi_elf_af} for the ancilla-free scheme.

Suppose $\theta$ has prior distribution $\mathcal{N}(\mu, \sigma^2)$, where $\sigma \ll 1/L$, and the fidelity of the process for generating the ELF is $f$. We find that the parameters $\vec x=(x_1, x_2, \dots, x_{2L})$ that maximize $\mathcal{I}(\mu; f, \vec x)$ (or $|\Lambda'(\mu; \vec x)|$) satisfy the following property: When $\theta$ is close to $\mu$, i.e. $\theta \in [\mu-O(\sigma), \mu+O(\sigma)]$, we have
\begin{align}
\mathbb{P}(d|\theta; f, \vec x) \approx \dfrac{1+(-1)^d f \sinp{r\theta+b}}{2}    
\end{align}
for some $r, b \in \mathbb{R}$. We find the best-fitting $r$ and $b$ by solving the following least squares problem:
\begin{align}
(r^*, b^*) = \argmin_{r, b} \sum_{\theta \in \Theta}  \left|\arcsinp{\Lambda(\theta;\vec x)}-r\theta-b \right|^2,
\label{eq:lstsq_ab}
\end{align}
where $\Theta=\{\theta_1, \theta_2, \dots, \theta_k\} \subseteq [\mu-O(\sigma), \mu+O(\sigma)]$. This least-squares problem has the following analytical solution:
\begin{align}
\begin{pmatrix}
r^*\\
b^*
\end{pmatrix}
= A^+ z
= (A^TA)^{-1} A^T z,
\end{align}
where 
\begin{align}
A =
\begin{pmatrix}
\theta_1 &  1 \\
\theta_2 &  1 \\
\vdots & \vdots\\
\theta_k & 1
\end{pmatrix}, 
\quad z = \begin{pmatrix}
\arcsinp{\Lambda(\theta_1;\vec x)} \\
\arcsinp{\Lambda(\theta_2;\vec x)} \\
\vdots \\
\arcsinp{\Lambda(\theta_k;\vec x)}
\end{pmatrix}.
\end{align}
Figure~\ref{fig:fit_lf_ab} illustrates an example of the true and fitted likelihood functions.
\begin{figure}[!ht]
\center
\includegraphics[width=0.9\linewidth]{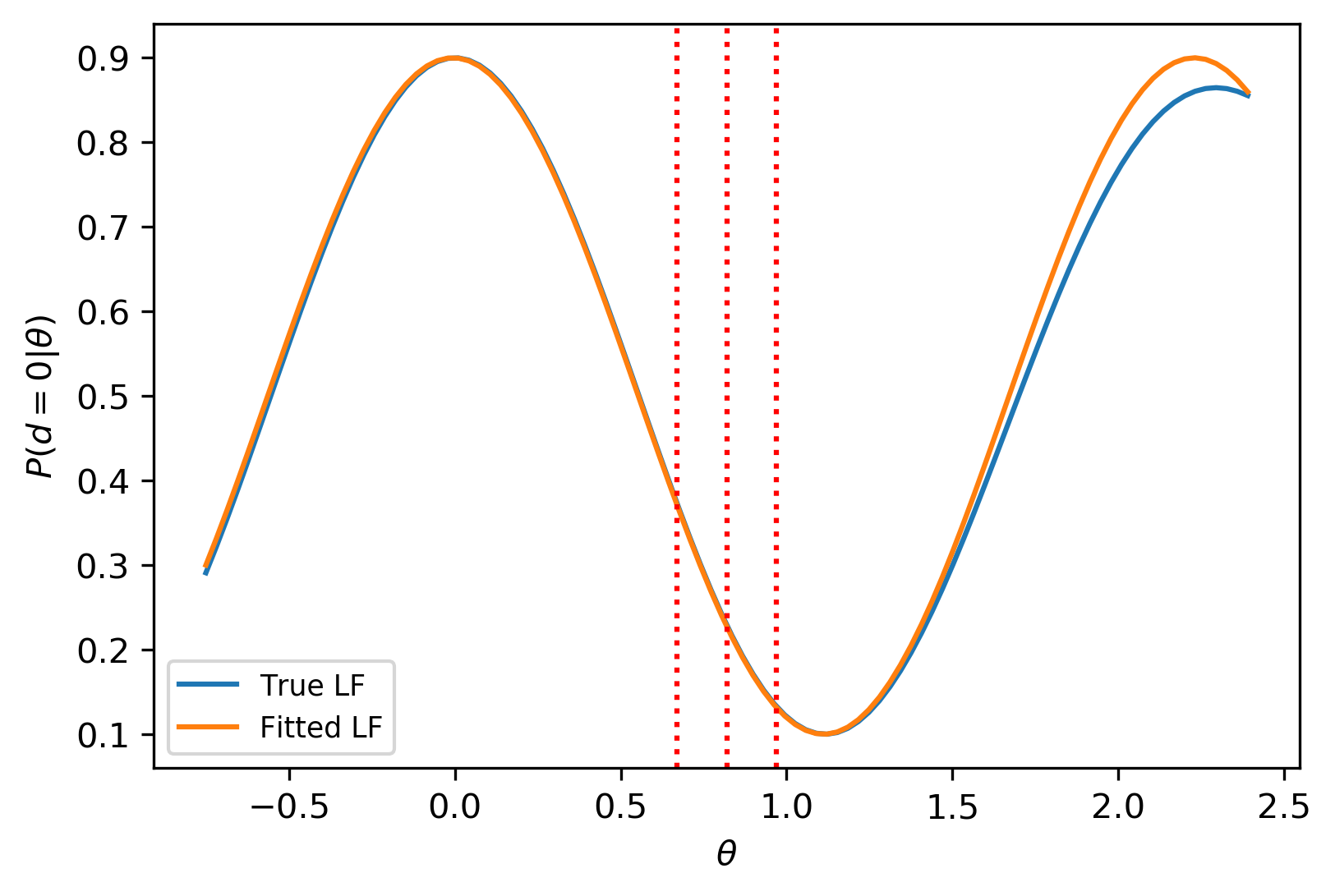}
\caption{The true and fitted likelihood functions when $L=3$, $f=0.8$, and $\theta$ has prior distribution $\mathcal{N}(0.82, 0.0009)$. The true likelihood function is generated by Algorithm \ref{alg:ca_opt_v0_ab}. During the sinusoidal fitting of this function, we
set $\Theta=\{\mu-\sigma, \mu-0.8\sigma, \dots, \mu+0.8\sigma, \mu+\sigma\}$ (i.e. $\Theta$ contains $11$ uniformly distributed points in $[\mu-\sigma, \mu+\sigma]$) in Eq.~(\ref{eq:lstsq_ab}). The fitted likelihood function is $\mathbb{P}(d|\theta)=(1+(-1)^d f\sinp{r \theta + b })/2$, where $r=-2.81081$ and $b=1.55477$. Note that the true and fitted likelihood functions are close for $\theta \in [0.67, 0.97]$. }
\label{fig:fit_lf_ab}
\end{figure}

Once we obtain the optimal $r$ and $b$, we approximate the posterior mean and variance of $\theta$ with the ones for 
\begin{equation}
\mathbb{P}(d|\theta; f) = \dfrac{1+(-1)^d f \sinp{r\theta+b}}{2},
\end{equation}
which have analytical formulas. Specifically, suppose $\theta$ has prior distribution $\mathcal{N}(\mu_k, \sigma_k^2)$ at round $k$. Let $d_k$ be the measurement outcome and $(r_k, b_k)$ be the best-fitting parameters at this round. Then we approximate the posterior mean and variance of $\theta$ by
\begin{widetext}
\begin{eqnarray}
\mu_{k+1} & =  &
\mu_k+\dfrac{(-1)^{d_k} f e^{-r_k^2\sigma_k^2/2}r_k \sigma_k^2\cosp{r_k\mu_k+b_k}}{1+(-1)^{d_k} f  e^{-r_k^2\sigma_k^2/2}\sinp{r_k\mu_k+b_k}}, 
\label{eq:bayes_update_mu_ab} \\
\sigma_{k+1}^2 & = & 
\sigma_k^2 \left (1-\dfrac{f r_k^2 \sigma_k^2 e^{-r_k^2\sigma_k^2/2} [f e^{-r_k^2\sigma_k^2/2}+(-1)^{d_k}\sinp{r_k\mu_k+b_k}]}{[1+(-1)^{d_k} f e^{-r_k^2\sigma_k^2/2}\sinp{r_k\mu_k+b_k}]^2}\right ).
\label{eq:bayes_update_sigma_ab}
\end{eqnarray}
\end{widetext}
After that, we proceed to the next round, setting $\mathcal{N}(\mu_{k+1}, \sigma_{k+1}^2)$ as the prior distribution of $\theta$ for that round. The approximation errors incurred by Eqs.~(\ref{eq:bayes_update_mu_ab}) and (\ref{eq:bayes_update_sigma_ab}) are small and have negligible impact on the performance of the whole algorithm for the same reason as in the ancilla-free case.

\section{Bias and variance of the ELF-based estimator of \texorpdfstring{$\Pi$}{Pi}}
\label{sec:evidences}
In this appendix, we state two facts about the ELF-based estimator $\hat{\mu}_t$ of $\Pi$ during the inference process (recall that we use the Gaussian distribution $\mathcal{N}(\hat{\mu}_t, \hat{\sigma}_t^2)$ to represent our knowledge of $\Pi$ at time $t$, where $\hat{\mu}_t$ and $\hat{\sigma}_t$ are random variables depending on the history of random measurement outcomes up to time $t$):
\begin{itemize}
    \item In both the ancilla-based and ancilla-free cases, the squared bias of $\hat{\mu}_t$ is smaller than its variance, i.e.  
    $\operatorname{Bias}(\hat{\mu}_t)^2=(\mathbb{E}[\hat{\mu}_t] - \Pi)^2 < \operatorname{Var}(\hat{\mu}_t)$, for large enough $t$. See Figure \ref{fig:bias_vs_std} for examples.
    
    \item In both the ancilla-based and ancilla-free cases, the perceived variance $\hat{\sigma}_t^2$ of $\Pi$ is often close to the variance of $\hat{\mu}_t$, i.e. $\hat{\sigma}_t^2 \approx \operatorname{Var}(\hat{\mu}_t)$ with high probability, for large enough $t$. See Figure \ref{fig:two_variances} for examples.
\end{itemize}
Combining these facts, we know that for large enough $t$, $\operatorname{MSE}(\hat{\mu}_t)=\mathbb{E}[(\hat{\mu}_t-\Pi)^2] \le 2 \hat{\sigma}_t^2$ with high probability. Similar statements hold for the ELF-based estimator $\mu_t$ of $\theta$ during the inference process.

\begin{figure*}[!ht]
\begin{minipage}{.48\textwidth}
\center
\includegraphics[width=0.95\linewidth]{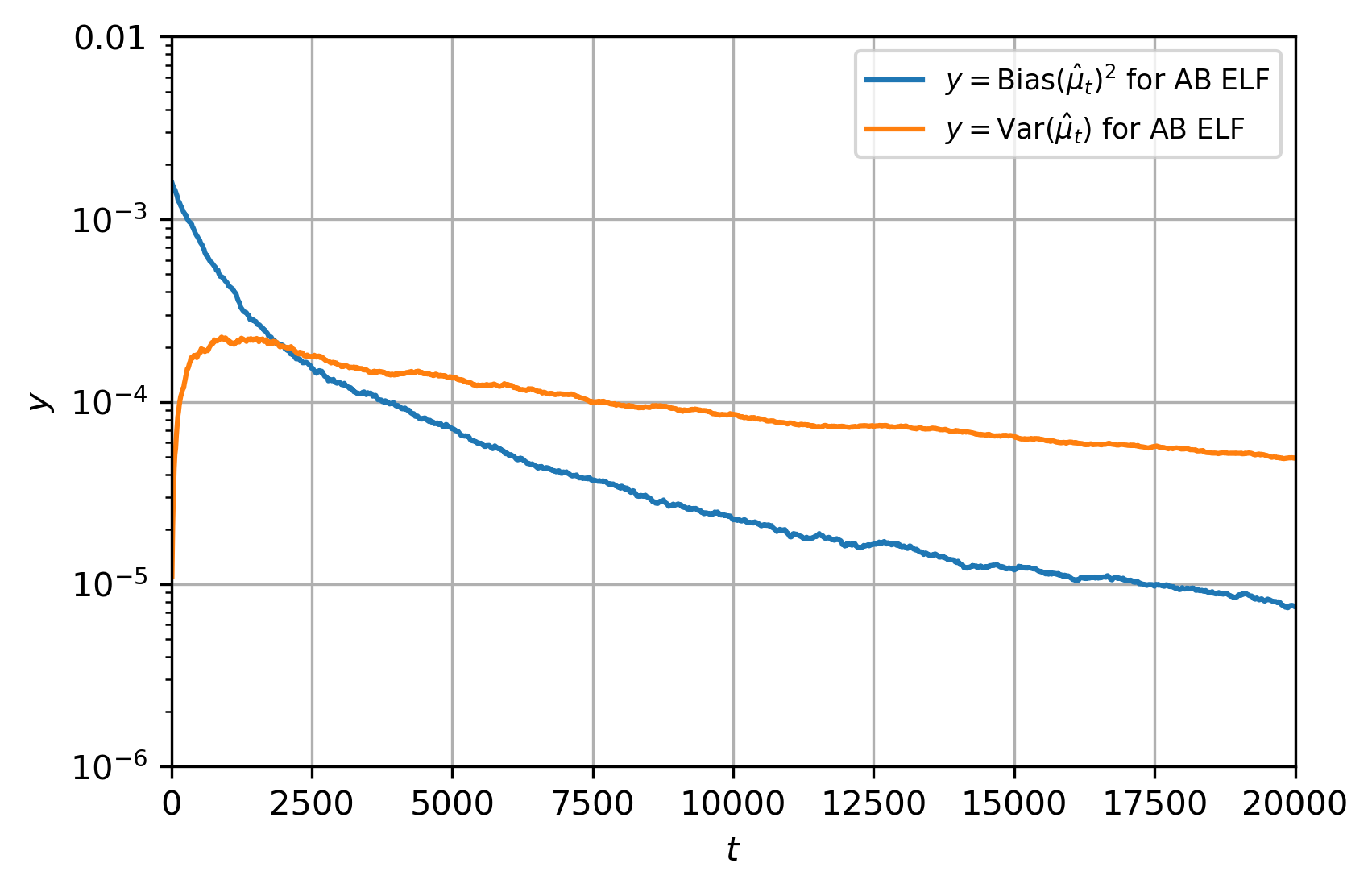}
\end{minipage}
\begin{minipage}{.48\textwidth}
\center
\includegraphics[width=0.95\linewidth]{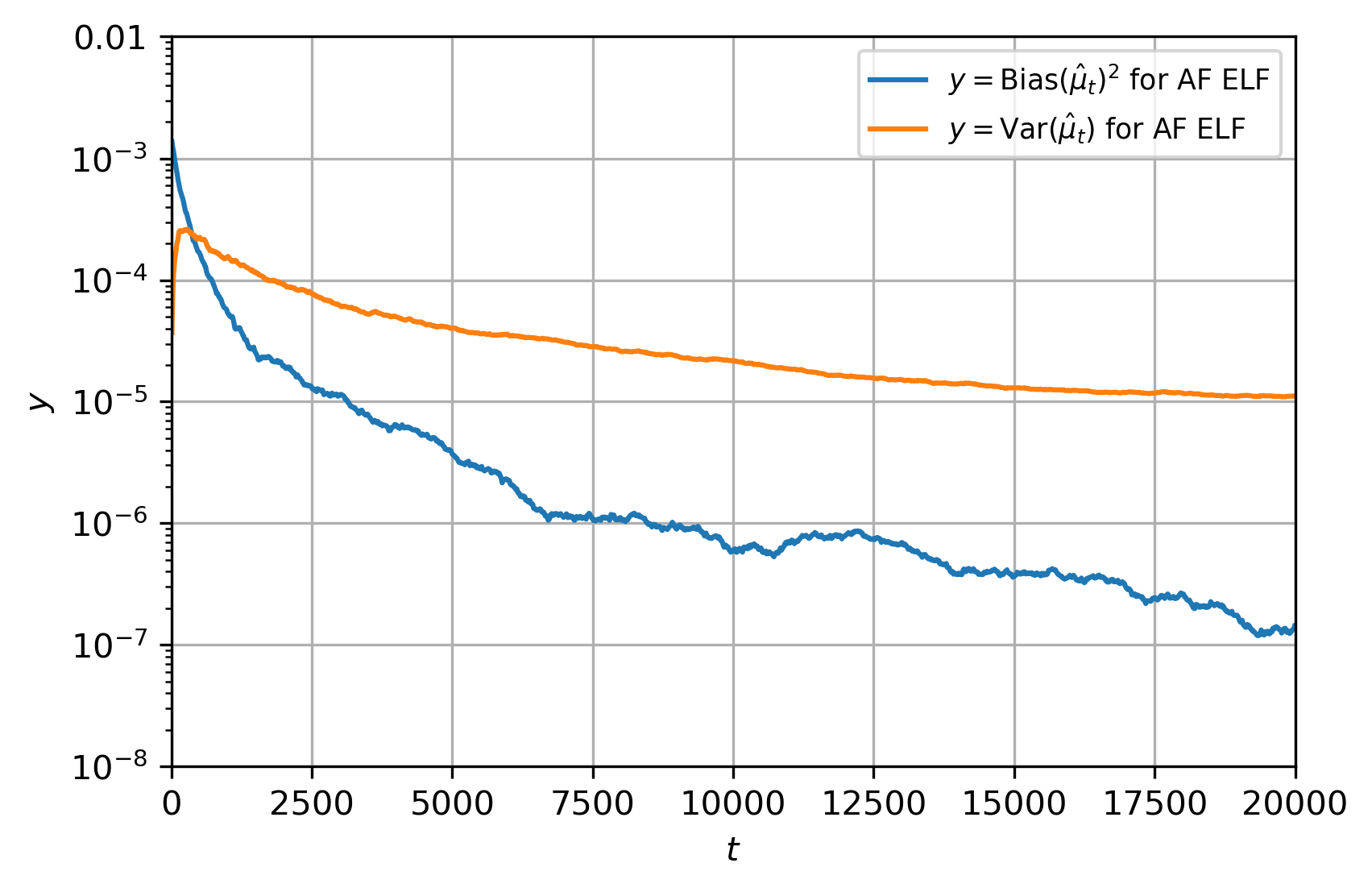}
\end{minipage}
\caption{This figure compares $\operatorname{Bias}(\hat{\mu}_t)^2$ and $\operatorname{Var}(\hat{\mu}_t)$ during the inference process for AB ELF and AF ELF, respectively. Here $\Pi$ has true value $0.6$ and prior distribution $\mathcal{N}(0.64, 0.0009)$, the number of circuit layers is $L=6$, each layer has fidelity $p=0.9$ and there is no SPAM error (i.e. $\bar{p}=1$). Each plot is generated by simulating the inference process $300$ times.
Note that $\operatorname{Bias}(\hat{\mu}_t)^2<\operatorname{Var}(\hat{\mu}_t)$ as soon as $t$ becomes sufficiently large. The variance of $\hat{\mu}_t$ increases at the early stage of the algorithm, because we always start with the same $\hat{\mu}_0$ and it takes certain number of Bayesian updates for $\hat{\mu}_t$ to become sufficiently dispersed.}
\label{fig:bias_vs_std}
\end{figure*}

\begin{figure*}[!ht]
\begin{minipage}{.48\textwidth}
\center
\includegraphics[width=0.95\linewidth]{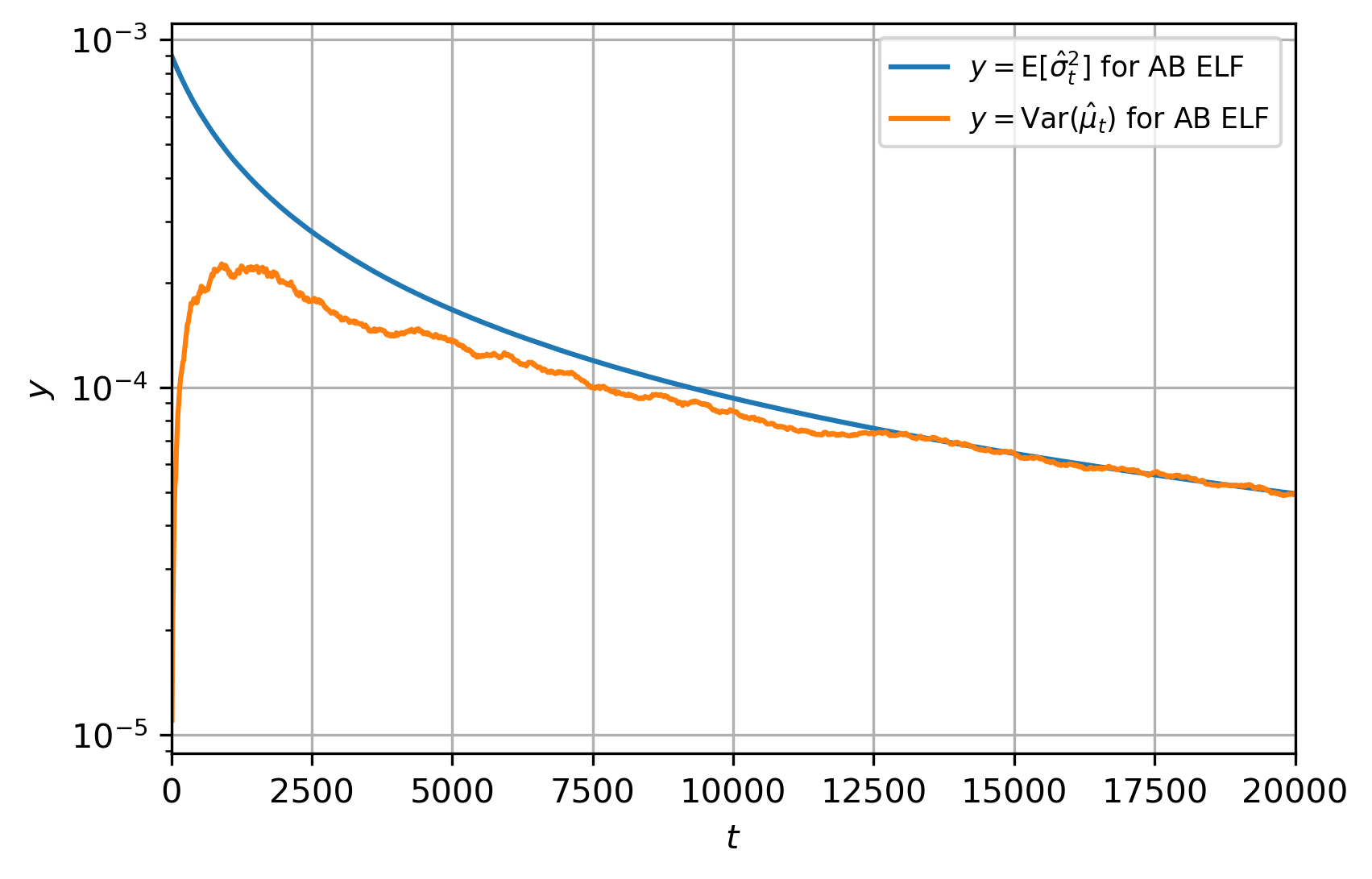}
\end{minipage}
\begin{minipage}{.48\textwidth}
\center
\includegraphics[width=0.95\linewidth]{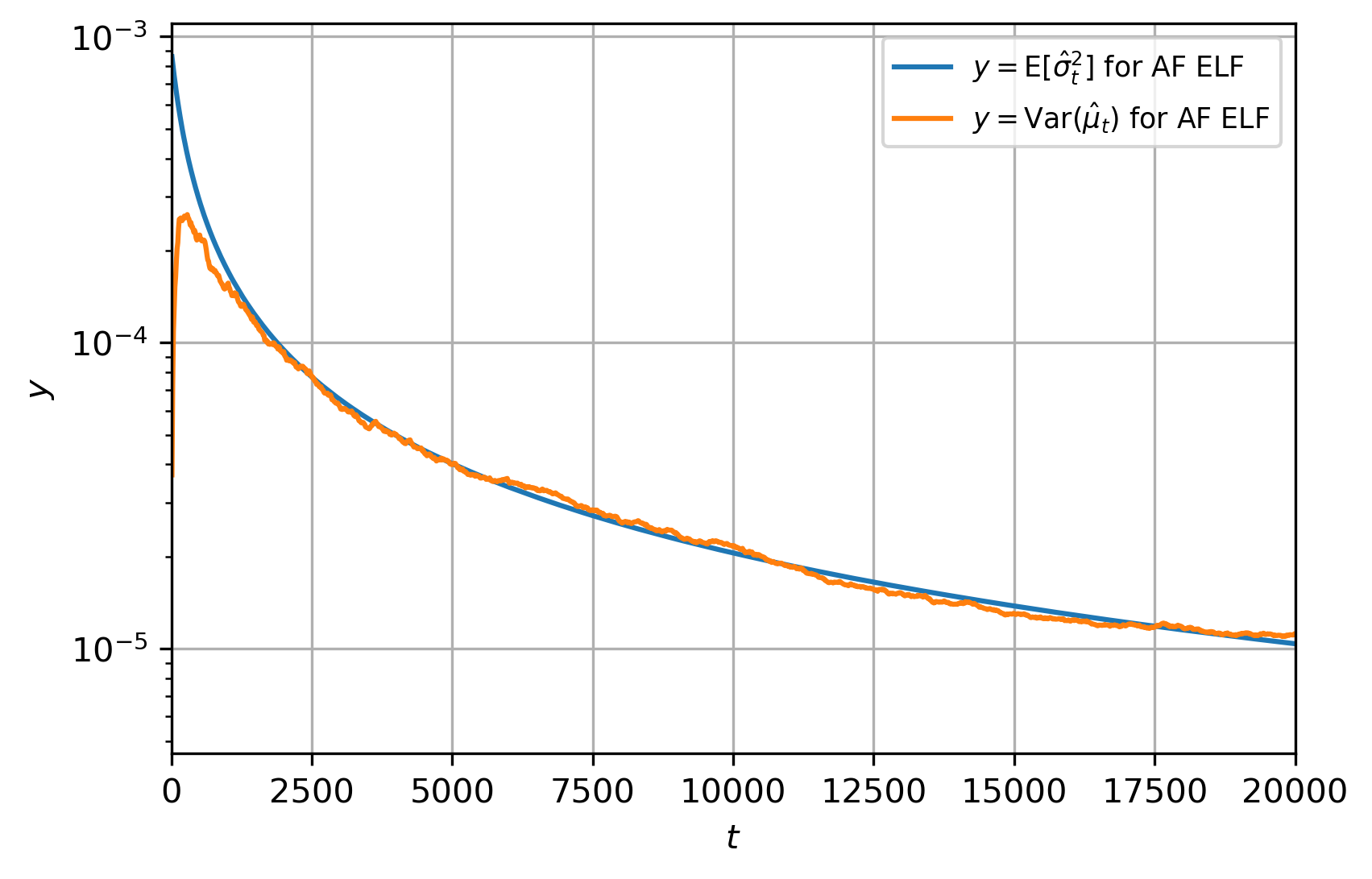}
\end{minipage}
\caption{This figure compares $\mathbb{E}[\hat{\sigma}_t^2]$ and 
$\operatorname{Var}(\hat{\mu}_t)$ during the inference process for AB ELF and AF ELF, respectively. Here $\Pi$ has true value $0.6$ and prior distribution $\mathcal{N}(0.64, 0.0009)$, the number of circuit layers is $L=6$, each layer has fidelity $p=0.9$ and there is no SPAM error (i.e. $\bar{p}=1$). Each plot is generated by simulating the inference process $300$ times.Note that $\mathbb{E}[\hat{\sigma}_t^2] \approx \operatorname{Var}(\hat{\mu}_t)$ once $t$ becomes sufficiently large. Furthermore, we discover that $\hat{\sigma}_t^2$ does not vary much among different runs of the algorithm. Namely, $\hat{\sigma}_t^2 \approx \mathbb{E}[\hat{\sigma}_t^2]$ with high probability. So $\hat{\sigma}_t^2 \approx \operatorname{Var}(\hat{\mu}_t)$ with high probability for large enough $t$. Note that the variance of $\hat{\mu}_t$ increases at the early stage of the algorithm, because we always start with the same $\hat{\mu}_0$ and it takes certain number of Bayesian updates for $\hat{\mu}_t$ to become sufficiently dispersed.}
\label{fig:two_variances}
\end{figure*}

\section{Comparison of exact optimization with optimization of proxies}
\label{sec:proxy_comparison}

In this appendix, we compare the maximization of the variance reduction factor with the maximization of the proxies used in Section \ref{sec:optalg}. We start with motivating the use of these proxies by studying the limiting behavior of $\mathcal{V}(\mu, \sigma; f, \vec x)$ as $\sigma\rightarrow 0$ or as $f\rightarrow 0$ or $1$.

\subsection{Limiting behavior of the variance reduction factor}
\label{sec:limitingBehaviorV}
Consider the following three limiting situations:
(i) when the variance vanishes, i.e. $\sigma^2 \rightarrow 0$, (ii) when the fidelity is close to zero, i.e. $f\approx 0$, and (iii) when the fidelity is equal to one, i.e. $f=1$. We will derive expressions for the variance reduction factor in these conditions (see Figure \ref{fig:flowchartVfactorLimits} for a flowchart summarizing the results below). For simplicity, we will assume in this section that either (i) $0 \leq f < 1$ or (ii) $f=1$ and $|\Delta(\mu;\vec x)|\neq 1$. While the results here are stated in terms of the ancilla-free bias \eqref{eq:delta_af}, they also hold for the ancilla-based bias \eqref{eq:delta_ab} (just replace $\Delta$ with $\Lambda$).

\tikzstyle{blankBlock} = [rectangle]
\tikzstyle{redBlock} = [rectangle, rounded corners, minimum width=3.5cm, minimum height=1.5cm,text centered, draw=orange,fill=orange!15,thick]
\tikzstyle{block} = [rectangle, rounded corners, minimum width=3.5cm, minimum height=1.5cm,text centered, draw=black,fill=yellow!10]
\tikzstyle{arrow} = [->,>=stealth]

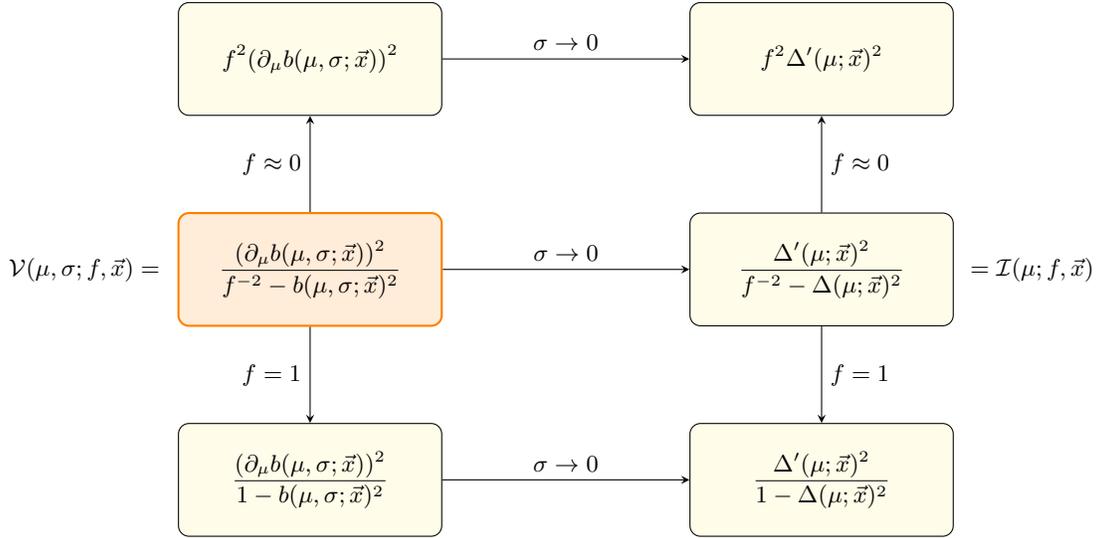
\begin{figure*}[!ht]
    \begin{tikzpicture}[node distance = 2.8cm, auto]
  % Place nodes
  \node [block] (00) {$\displaystyle{\frac{(\partial_{\mu} b(\mu,\sigma;\vec x))^2}{1-b(\mu,\sigma;\vec x)^2}}$};
\node [redBlock, above of=00] (01) {$\displaystyle{ \frac{(\partial_{\mu} b(\mu,\sigma;\vec x))^2}{f^{-2}-b(\mu,\sigma;\vec x)^2}}$};
  \node[blankBlock, left of=01, xshift=-0.2cm, yshift=0cm] (out) {$\mathcal V(\mu,\sigma;f, \vec x) =$};
  \node [block, above of=01] (02) {$\displaystyle{f^2(\partial_{\mu} b(\mu,\sigma;\vec x))^2}$};   
  \node [block, right of=00, xshift=4cm] (10) {$ \displaystyle{\frac{\Delta'(\mu; \vec x)^2}{1-\Delta(\mu; \vec x)^2}}$};
  \node [block, above of=10] (11) {$\displaystyle{\frac{\Delta'(\mu; \vec x)^2}{f^{-2}-\Delta(\mu; \vec x)^2}}$};
  \node [block, above of=11] (12) {$f^2\Delta'(\mu; \vec x)^2$};
    \node[blankBlock, right of=11] (out3) {$=\mathcal I(\mu;f, \vec x)
  $};

  % Draw edges
  \draw [arrow] (01) -- node[anchor=east] {$f=1$} (00);
  \draw [arrow] (01) -- 
  node[anchor=east] {$f\approx 0$} (02);
  \draw [arrow] (11) -- node[anchor=west] {$f=1$}(10);
  \draw [arrow] (11) -- node[anchor=west] {$f\approx 0$}(12);
  \draw [arrow] (02) -- node[anchor=south] {$\sigma\rightarrow 0$}(12);
  \draw [arrow] (01) -- node[anchor=south] {$\sigma\rightarrow 0$}(11);
  \draw [arrow] (00) -- node[anchor=south] {$\sigma\rightarrow 0$}(10);
   
\end{tikzpicture}
\caption{Flowchart showing the behavior of the expected posterior variance $\displaystyle{\mathcal V(\mu,\sigma;f, \vec x)}$ (in orange box) as (i) $\sigma \rightarrow 0$, (ii) $f\approx 0$, (iii) $f=1$. For ease of notation we use $f=\bar{p}p^{L}$, where $p$ is the fidelity of each circuit layer and $\bar{p}$ captures the SPAM error.}
\label{fig:flowchartVfactorLimits}
\end{figure*}

\renewcommand\labelitemi{\tiny$\bullet$}
\begin{itemize}
    \item In Case (i), as $\sigma \rightarrow 0$, the expected bias \eqref{eq:b} behaves as 
\begin{align}
    b(\mu,\sigma;\vec x)    &\rightarrow \Delta(\mu;\vec x),
    \nonumber\\
    \partial_{\mu} b(\mu,\sigma;\vec x) &\rightarrow \Delta'(\mu;\vec x),
\end{align}
which implies that the limit of the variance reduction factor as $\sigma \rightarrow 0$ is
\begin{align}
    \mathcal V_0(\mu;f, \vec x) := \lim_{\sigma \rightarrow 0} \mathcal V(\mu,\sigma;f, \vec x) =\mathcal{I}(\mu; f, \vec x),
    \label{eq:V0Limit}
\end{align}
where
\begin{align}
    \mathcal{I}(\mu; f, \vec x)  := \frac{f^2 \Delta'(\mu; \vec x)^2}{1-f^2\Delta(\mu; \vec x)^2}
\end{align}
is the Fisher information corresponding to the noisy version of the engineered likelihood function given by Eq.~\eqref{eq:lf_af}.

\item
In Case (ii), we get
\begin{align}
    \mathcal V(\mu,\sigma;f, \vec x) \approx f^2 (\partial_{\mu} b(\mu,\sigma;\vec x))^2, \quad\mbox{for $f\approx 0$}.
\end{align}
\item In Case (iii), we get
\begin{align}
    \mathcal V(\mu,\sigma;f, \vec x) 
    = 
    \frac{(\partial_{\mu} b(\mu,\sigma;\vec x))^2}{1-b(\mu,\sigma;\vec x)^2}, \quad\mbox{for $f= 1$}.
\end{align}

\item Combining Cases (i) and (ii), we get
\begin{align}
    \mathcal V(\mu,\sigma;f, \vec x) \approx f^2 \Delta'(\mu;\vec x)^2, \quad\mbox{for $f\approx 0$, $\sigma \approx 0$.}
    \label{eq:V0limitf0sigma0}
\end{align}

\item Combining Cases (i) and (iii), we get
\begin{align}
    \lim_{\sigma \rightarrow 0}
    \mathcal V(\mu,\sigma;f, \vec x) =
    \frac{\Delta'(\mu; \vec x)^2}{1-\Delta(\mu; \vec x)^2} = \mathcal{I}(\mu; 1, \vec x),\quad\mbox{for $f=1$.}
    \label{eq:Vlimit}
\end{align}
\end{itemize}

In the next part, we will show how these approximations give us good proxies for maximizing the variance reduction factor.

\subsection{Implementing the exact variance reduction factor optimization and comparison with proxies}
\label{subsec:proxy-v-factor}

Consider the following optimization problems: 
\begin{align}
\begin{tabular}{cp{8cm}}
$\mathtt{Input}$: 
&
$(\mu, f)$, where $\mu\in \mathbb R, f\in [0,1]$
\\[0.2cm]
$\mathtt{Output}$:
&
$\displaystyle{ \argmax_{\vec x \in (-\pi,\pi]^{2L} }\mathcal I(\mu;f, \vec x)
= \argmax_{\vec x \in (-\pi,\pi]^{2L} } \frac{\Delta'(\mu; \vec x)^2}{f^{-2}-\Delta(\mu; \vec x)^2}
}
$.
\end{tabular}
\label{eq:optProblemV0}
\end{align}
and 
\begin{align}
\begin{tabular}{cp{8cm}}
$\mathtt{Input}$: 
&
$\mu\in \mathbb R$
\\[0.2cm]
$\mathtt{Output}$:
&
$\displaystyle \argmax_{\vec x \in (-\pi,\pi]^{2L} }|\Delta'(\mu; \vec x)|$.
\end{tabular}
\label{eq:optProblemSlope}
\end{align}

By Eq.~\eqref{eq:V0Limit}, we expect that a solution to \eqref{eq:optProblemV0} would be a good proxy for maximizing the expected posterior variance when $\sigma$ is small, i.e. if $\hat x_{\mu,f}$ is a solution to \eqref{eq:optProblemV0} on input $(\mu,f)$, then we expect that
\begin{align}
    \max_{\vec x \in (-\pi,\pi]^{2L} }\mathcal V(\mu,\sigma;f, \vec x) & \approx \frac{\Delta'(\mu; \hat x_{\mu,f})^2}{f^{-2}-\Delta(\mu; \hat x_{\mu,f})^2},
    \nonumber\\
    &\qquad\mbox{when $\sigma$ is small.}
    \label{eq:proxy_sigma_small}
\end{align}

Similarly, by Eq.~\eqref{eq:V0limitf0sigma0}, we expect that a solution to \eqref{eq:optProblemSlope} would be a good proxy for maximizing the expected posterior variance when both $f$ and $\sigma$ is small, i.e. if $\hat x_\mu$ is a solution to \eqref{eq:optProblemSlope} on input $\mu$, then we expect that
\begin{align}
    \max_{\vec x \in (-\pi,\pi]^{2L} }\mathcal V(\mu,\sigma;f, \vec x) &\approx f^2 \Delta'(\mu;\hat x_\mu)^2,
    \nonumber\\
    &\qquad\mbox{when $\sigma$ and $f$ are small.}
    \label{eq:proxy_sigma_f_small}
\end{align}

To investigate the performance of the proxies \eqref{eq:proxy_sigma_small} and \eqref{eq:proxy_sigma_f_small}, we numerically maximize the expected posterior variance and the proxies \eqref{eq:optProblemV0} and \eqref{eq:optProblemSlope} for small problem sizes $L$. The results of this optimization are presented in Figures \ref{fig:grid_L1}--\ref{fig:grid_L2_AB}. For $L=1$, it turns out that the optimization problem \eqref{eq:optProblemSlope} for the ancilla-free bias can be solved analytically. We present this analytical solution in Appendix \ref{sec:L1example}.

\begin{figure*}[!ht]
    \includegraphics[width=0.8\linewidth]{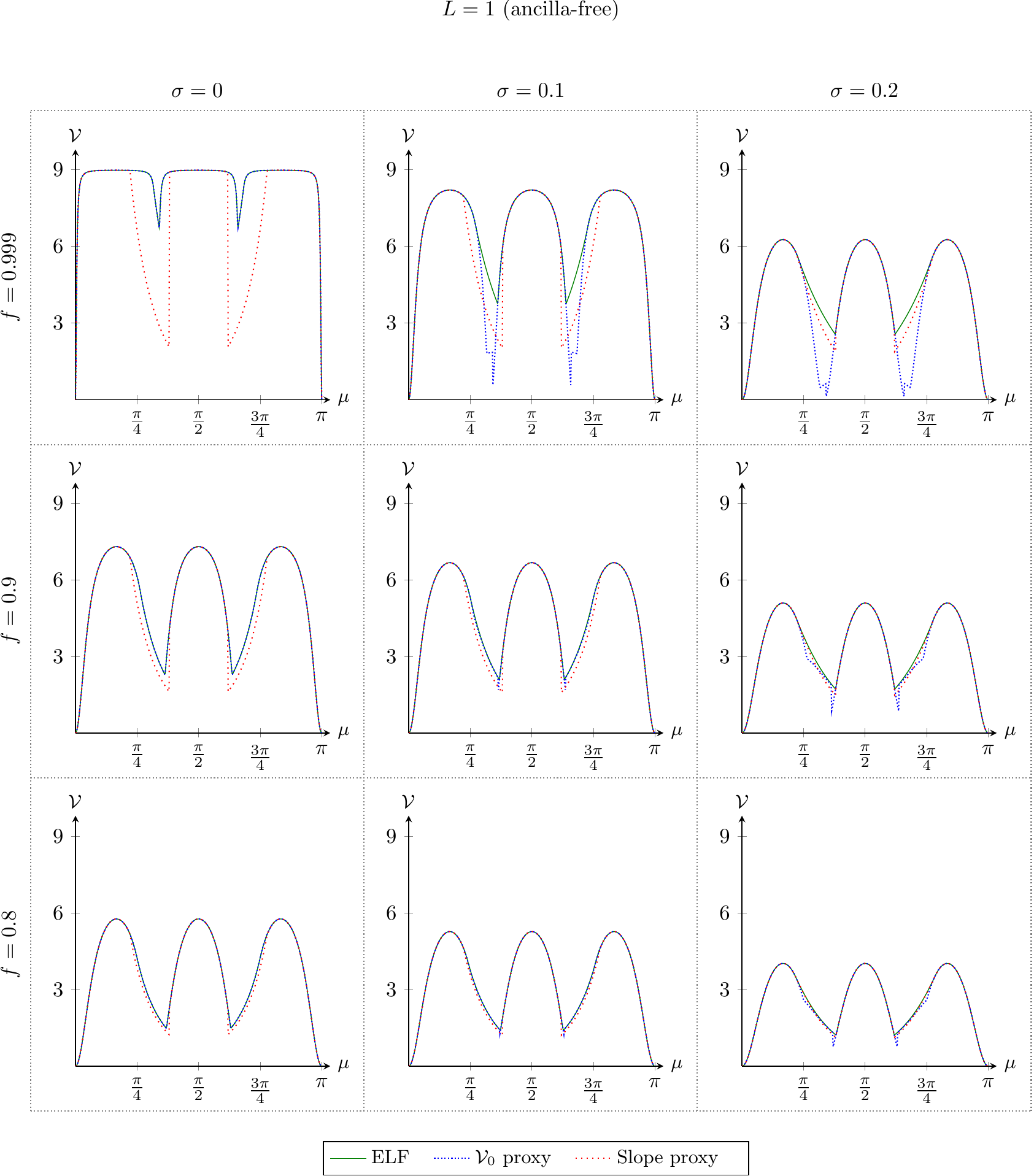}
    \caption{Plots of the variance reduction factor $\mathcal V$ vs the prior mean $\mu$ for $L=1$ for the ancilla-free scheme. The proxies work well when $f$ and $\sigma$ are not too large (say $f\leq 0.9$ and $\sigma \leq 0.1$). When $f$ and $\sigma$ are both large (for example, $f=0.999$ and $\sigma=0.2$), the proxies fail to be good approximations). These figures were obtained with Wolfram Mathematica's \texttt{NMaximize}
RandomSearch method with 120 search points.}
    \label{fig:grid_L1}
\end{figure*}

\begin{figure*}[!ht]
    \includegraphics[width=0.8\linewidth]{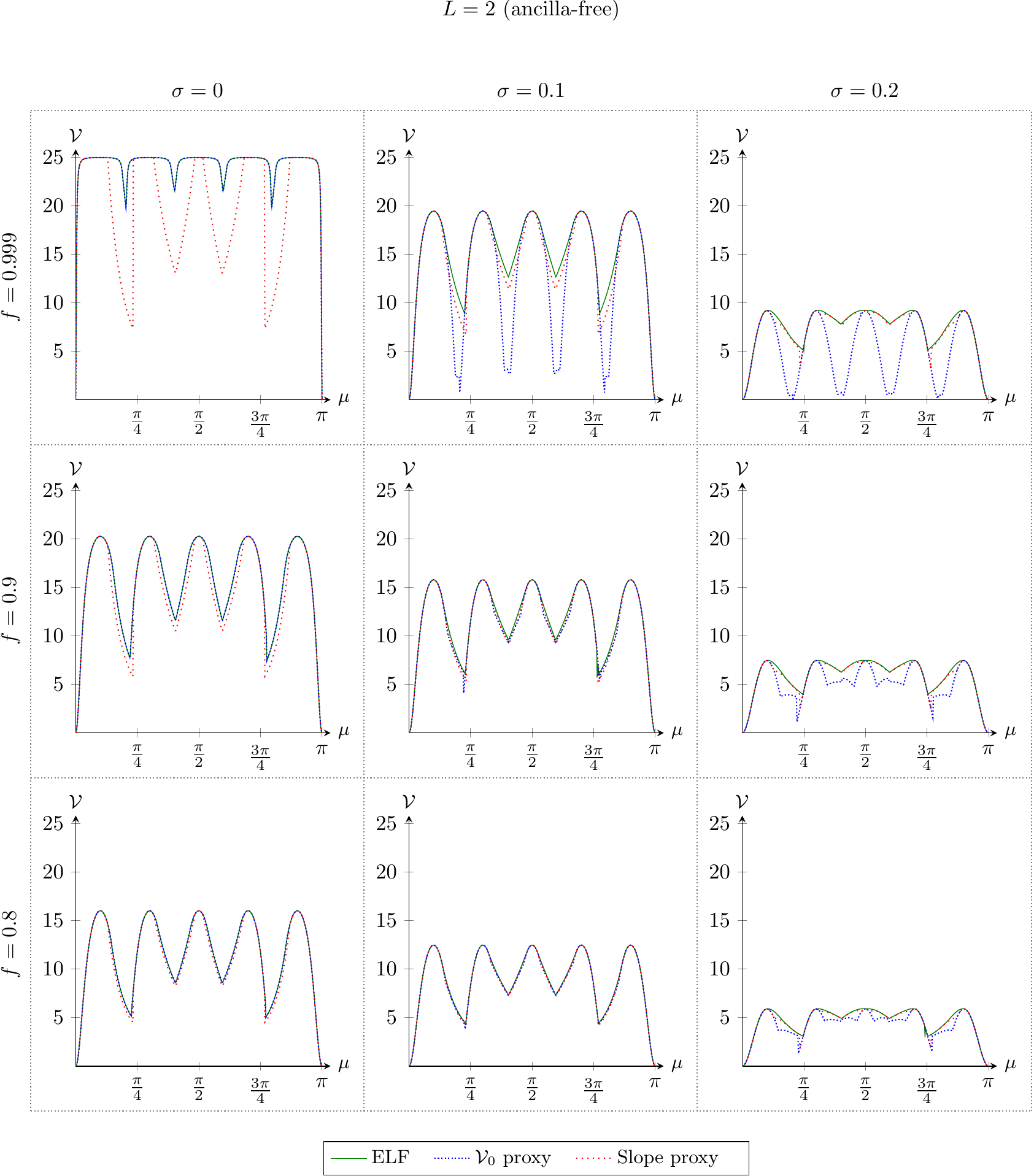}
    \caption{Plots of the variance reduction factor $\mathcal V$ vs the prior mean $\mu$ for $L=2$ for the ancilla-free scheme. The proxies work well when $f$ and $\sigma$ are not too large (say $f\leq 0.9$ and $\sigma \leq 0.1$). When $f$ and $\sigma$ are both large (for example, $f=0.999$ and $\sigma=0.2$), the proxies fail to be good approximations). These figures were obtained with Wolfram Mathematica's \texttt{NMaximize}
RandomSearch method with 220 search points.}
    \label{fig:grid_L2}
\end{figure*}

\begin{figure*}[!ht]
    \includegraphics[width=0.8\linewidth]{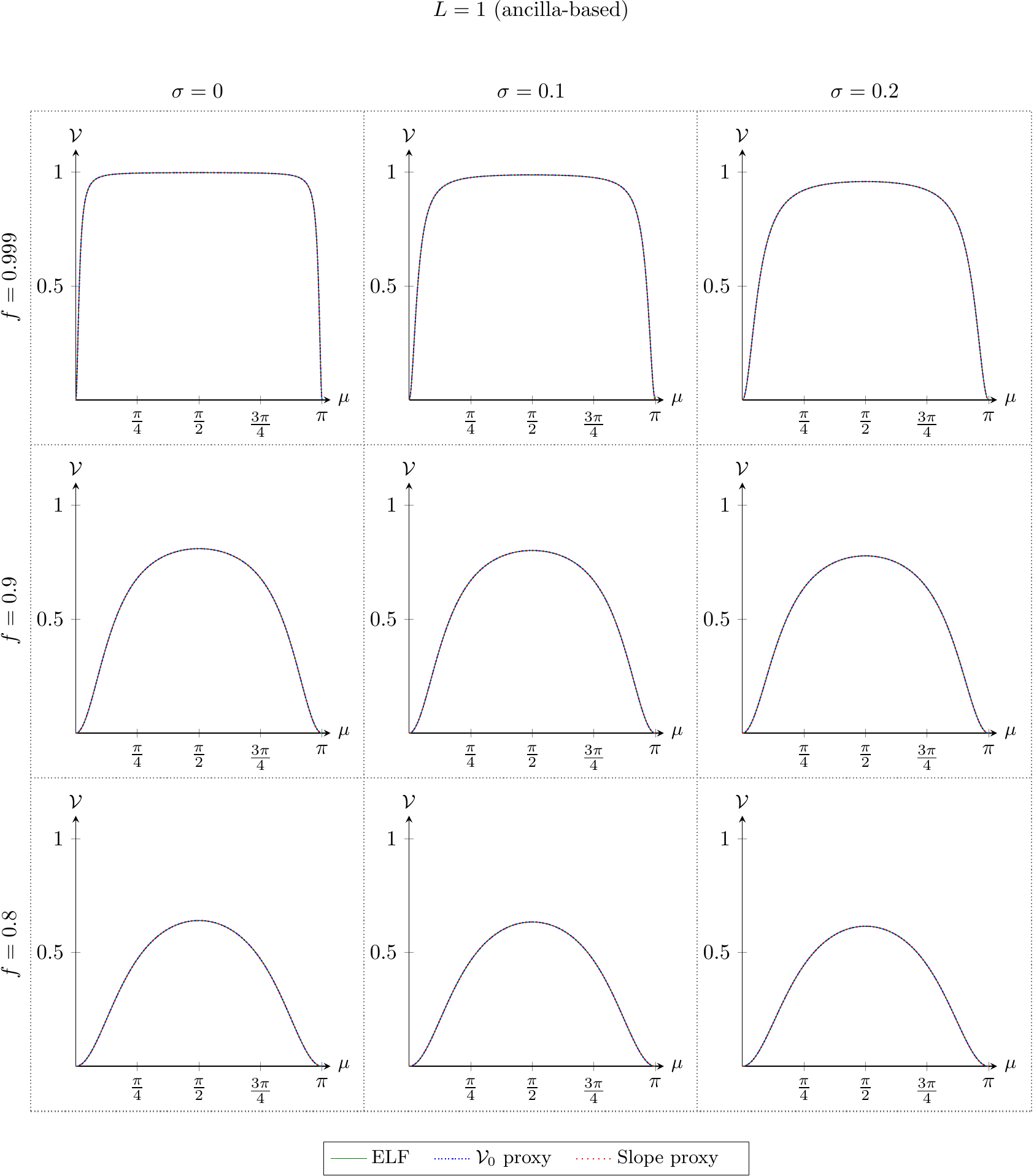}
    \caption{Plots of the variance reduction factor $\mathcal V$ vs the prior mean $\mu$ for $L=1$ for the ancilla-based scheme. The proxies give identical results to the exact optimization of the variance reduction factor. These figures were obtained with Wolfram Mathematica's \texttt{NMaximize}
RandomSearch method with 120 search points.}
    \label{fig:grid_L1_AB}
\end{figure*}

\begin{figure*}[!ht]
    \includegraphics[width=0.8\linewidth]{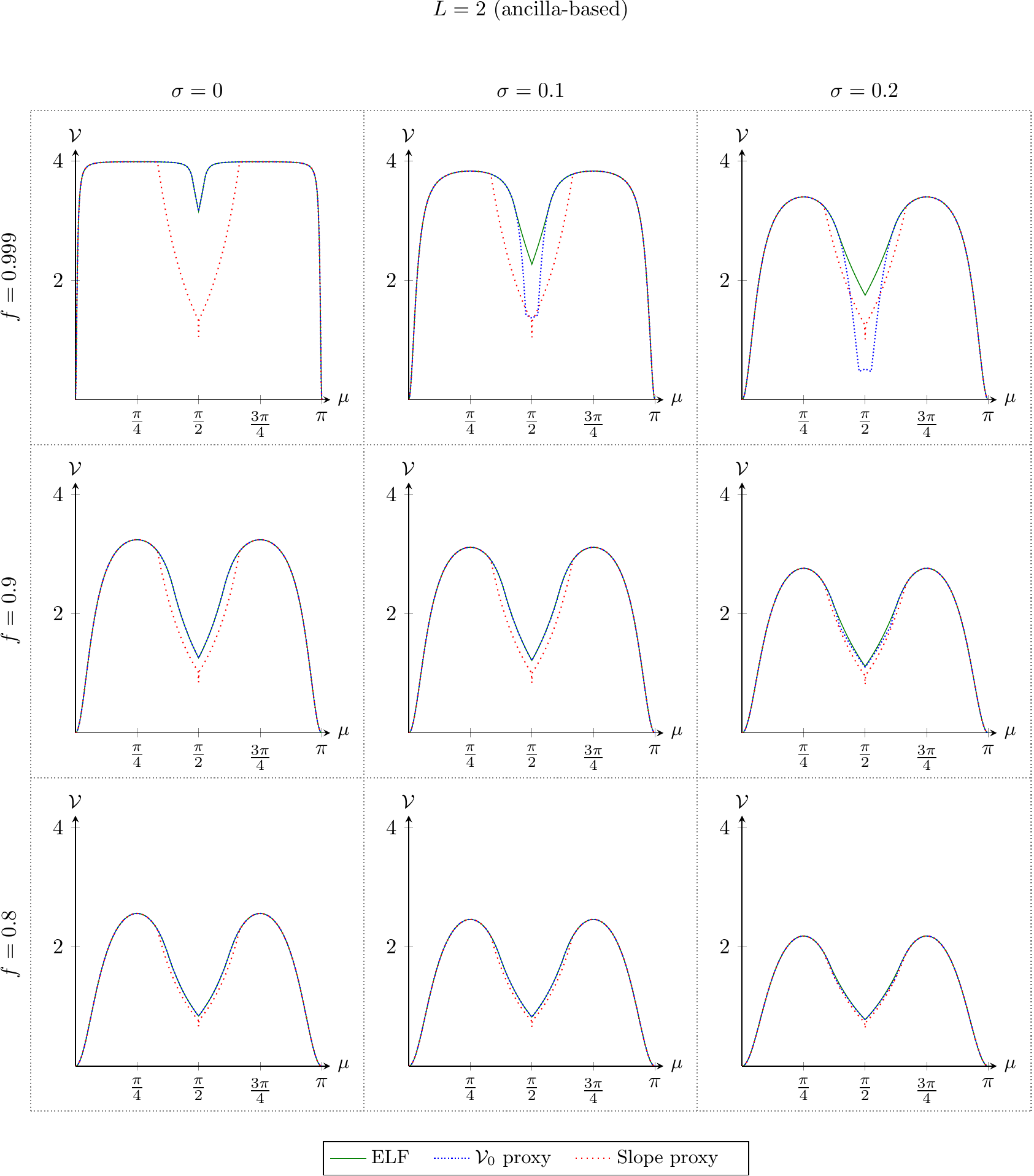}
    \caption{Plots of the variance reduction factor $\mathcal V$ vs the prior mean $\mu$ for $L=2$ for the ancilla-based scheme. The proxies work well when $f$ and $\sigma$ are not too large (say $f\leq 0.9$ and $\sigma \leq 0.1$). When $f$ and $\sigma$ are both large (for example, $f=0.999$ and $\sigma=0.2$), the proxies fail to be good approximations). These figures were obtained with Wolfram Mathematica's \texttt{NMaximize}
RandomSearch method with 120 search points.}
    \label{fig:grid_L2_AB}
\end{figure*}

\subsection{Analytical expressions for \texorpdfstring{$L=1$}{L1} slope proxy}
\label{sec:L1example}

In this appendix, we present an analytical solution to the optimization problem \eqref{eq:optProblemSlope} for the ancilla-free bias when $L=1$. In this case, the bias \eqref{eq:delta_af} can be written as the Fourier series
\begin{align}
    \Delta(\theta;x_1,x_2) = \sum_{l=0}^3 \mu_l(x_1,x_2) \cos(l\theta),
    \label{eq:slopePrime}
\end{align}
where
\begin{align}
    \mu_0(x_1,x_2) &=
    2 \cos(x_1) \sin(x_2) \sin(x_1) \cos(x_2),
    \nonumber\\
    \mu_1(x_1,x_2) &=
    \cos^2(x_1) \cos^2(x_2)
    +
    \cos^2(x_1)\sin^2(x_2)
    \nonumber\\
    &\quad+
    \cos^2(x_1) \sin^2(x_2),
    \nonumber\\
    \mu_2(x_1,x_2) &= -2 \cos(x_1) \cos(x_2) \sin(x_1) \sin(x_2),
    \nonumber\\
    \mu_3(x_1,x_2) &=
    \sin^2(x_1) \sin^2(x_2).
\end{align}

The optimization problem \eqref{eq:optProblemSlope} may be stated as
\begin{align}
&\mbox{Maximize} &|\Delta'(\mu; x_1,x_2)|
\nonumber\\
&\mbox{subject to} & x_1, x_2 \in (-\pi,\pi].
\label{eq:optProblemSlopeL1}
\end{align}
where
\begin{align}
    \Delta'(\mu; x_1,x_2) &= -[\cos^2(x_1) \cos^2(x_2) +
    \cos^2(x_1)\sin^2(x_2) \nonumber \\
& \quad +
    \cos^2(x_1) \sin^2(x_2)]  \sin(\mu)
    \nonumber\\
& \quad + 4 \cos(x_1) \cos(x_2) \sin(x_1) \sin(x_2)
    \sin(2\mu) \nonumber \\
& \quad -3
    \sin^2(x_1) \sin^2(x_2) 
    \sin(3\mu). 
\end{align} 

Solving Eq.~\eqref{eq:optProblemSlopeL1} gives the following solution.

\begin{figure*}[!ht]
\includegraphics[width=0.85\textwidth]{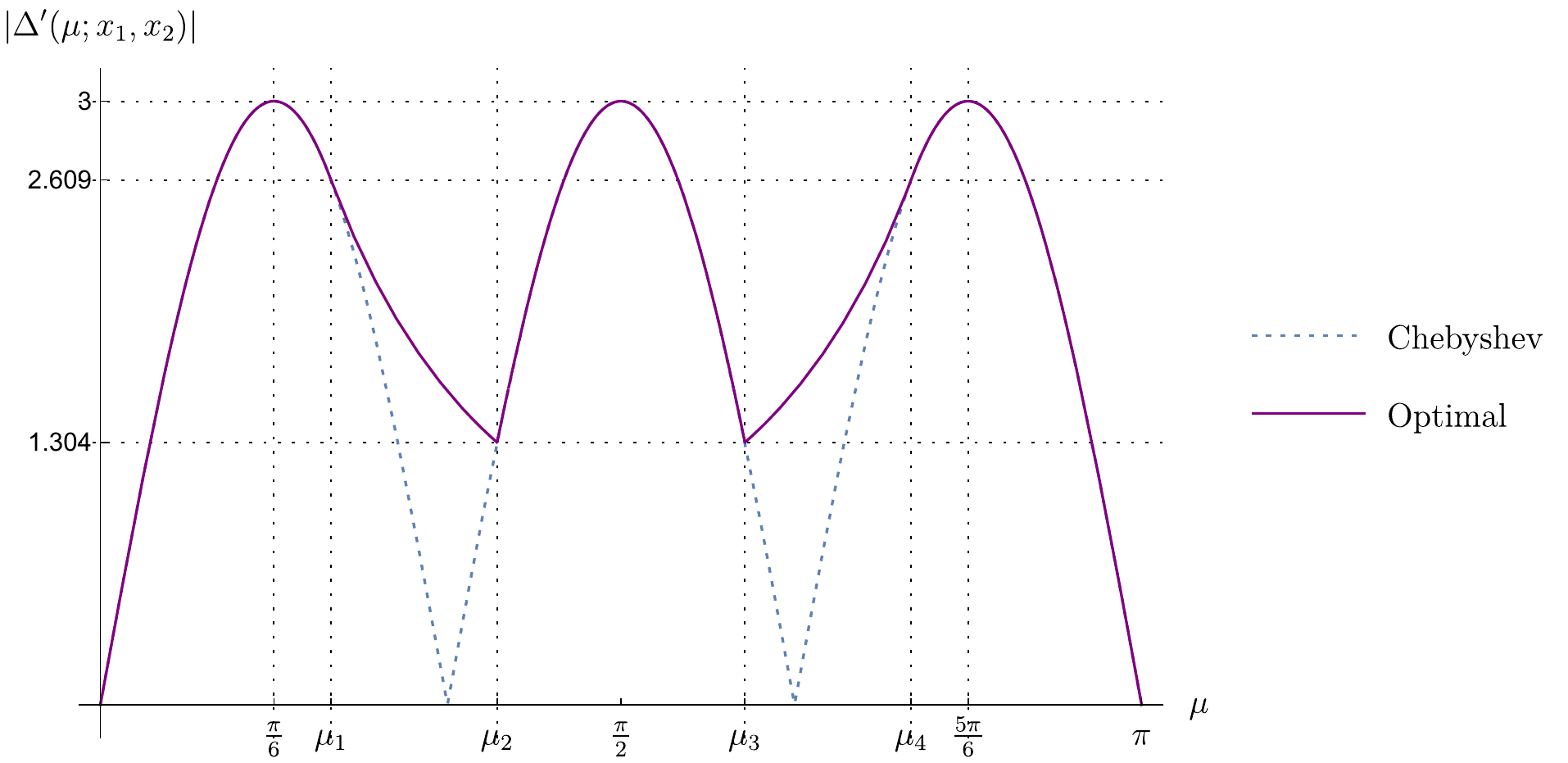}
  \caption{Graph of the optimal slope, with angles $\mu_1$, $\mu_2$, $\mu_3$, and $\mu_4$ given by 
  Eqs.~\eqref{eq:theta1forL1}, \eqref{eq:theta2forL1}, \eqref{eq:theta3forL1}, and \eqref{eq:theta4forL1}, respectively.
  }
  \label{fig:optSlope}
\end{figure*}

\begin{widetext}
\begin{proposition}
The maximum magnitude of the slope of the $L=1$ likelihood function is
\begin{align} 
    & 
    \max_{(x_1,x_2) \in [0,\pi]^2}
    |\Delta'(\mu;x_1,x_2)| = \begin{cases}
        3 \sin(3\mu) , & \mu \in [0,\mu_1]\cup[\mu_4,\pi]
        \\[1.2ex]
        \frac{4 \cos^4 (\mu/2) \cot(\mu/2)}{1+3 \cos(\mu)} , & \mu \in (\mu_1,\mu_2) \\[1.2ex]
        -3 \sin(3\mu) , & \mu \in [\mu_2,\mu_3]
        \\[1.2ex]
        \frac{4 \sin^4 (\mu/2) \tan(\mu/2)}{1-3 \cos(\mu)},  & \mu \in (\mu_3,\mu_4) 
    \end{cases}
    \label{eq:optSlope}
\end{align}
and an example of a pair of angles that achieves this maximum is
\begin{align} 
    (\gamma_1,\gamma_2)\in \argmax_{(x_1,x_2) \in [0,\pi]^2}
    |\Delta'(\mu;x_1,x_2)|,
\end{align}
where for $i=1,2$,
\begin{align}
\gamma_i =
    \begin{cases}
        \frac\pi 2 , & \mu \in [0,\mu_1]\cup[\mu_2,\mu_3]\cup[\mu_4,\pi]
        \\[1.2ex]
        (-1)^i \mathrm{arccot}(\sqrt{1-3\cos(\mu)+\sec(\mu)})
        , & \mu \in (\mu_1,\mu_2) \\[1.2ex] \mathrm{arccot}(\sqrt{1+3\cos(\mu)-\sec(\mu)}), & \mu \in (\mu_3,\mu_4)
    \end{cases}
    \label{eq:optangles}
\end{align}
where $\mu_1,\mu_2,\mu_3,\mu_4$ are given by
\begin{align}
    \mu_1 &= 2 \arctan \left[\sqrt{\tfrac 13 \left(4-\sqrt{13}\right)}\right] \approx 0.6957 
    \label{eq:theta1forL1}
    \\
    \mu_2 &= 4 \arctan\left[\sqrt{\root{3}{p_2}}\right]
    \approx 1.1971 
    \label{eq:theta2forL1}\\
    \mu_3 &= 4 \arctan\left[\sqrt{\root{3}{p_3}}\right] \approx 1.9445 
    \label{eq:theta3forL1}
    \\
    \mu_4 &= 2\arctan\left[ \sqrt{4+\sqrt{13}} \ \right]
    \approx 2.4459
    \label{eq:theta4forL1}
\end{align}
where $p_2$ and $p_3$ are octic polynomials given by
\begin{align}
    p_2(x) &= 1+72x-1540x^2+8568 x^3 - 16506 x^4+8568 x^5 - 1540 x^6 + 72 x^7 + x^8, \\
    p_3(x) &= 9-264x+2492x^2-9016 x^3 +13302 x^4 -9016 x^5 + 2492 x^6 -264 x^7 + 9x^8.
\end{align}
The notation $\root{a}{p}$ refers to the $a$th smallest (with starting index 1) real root of the polynomial $p$.
\end{proposition}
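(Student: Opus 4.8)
The plan is to treat this as a two-parameter constrained optimization of the smooth function $\Delta'(\mu; x_1, x_2)$ for each fixed $\mu$, exploiting the trigonometric-polynomial structure to reduce it to a single-variable problem whose stationarity condition is solvable in closed form. First I would record that, from the Fourier expansion \eqref{eq:slopePrime}, $\Delta'$ is a combination $-K\,\sinp{\mu} + M\,\sinp{2\mu} - 3N\,\sinp{3\mu}$, whose coefficients depend on $(x_1,x_2)$ only through $a := \sinpt{x_1}$, $b := \sinpt{x_2}$, the product $N = \sinpt{x_1}\sinpt{x_2} = ab$, and the cross term $M = \sinp{2x_1}\sinp{2x_2}$. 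Since $\cospt{x_i}$ and $\sinpt{x_i}$ are even and $\pi$-periodic while $\sinp{2x_i}$ can independently take either sign, the domain $(-\pi,\pi]^2$ collapses to $a,b\in[0,1]$ with the sign of $M$ free, so I may restrict to $x_1,x_2\in[0,\pi]$ as in the statement. I would also establish the reflection symmetry $\mu \mapsto \pi-\mu$: because $\sinp{2\mu}$ flips sign under this map while $\sinp{\mu}$ and $\sinp{3\mu}$ are invariant, composing it with $x_1 \mapsto \pi-x_1$ (which flips $M$ and fixes $a,b$) maps the problem at $\mu$ to the problem at $\pi-\mu$. This explains and lets me assume the palindromic structure $\mu_1+\mu_4=\pi$, $\mu_2+\mu_3=\pi$, so it suffices to analyze $\mu\in[0,\pi/2]$.

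Next, for fixed $\mu$ I would optimize over the cross term first: for fixed $(a,b)$ the magnitude $|\Delta'|$ is maximized by taking $M = \pm 4\sqrt{a(1-a)b(1-b)}$ with the sign chosen to align the middle term with the rest, reducing the problem to maximizing a function of $(a,b)\in[0,1]^2$. Two families of candidates arise: the boundary point $a=b=1$, i.e. $x_1=x_2=\pi/2$, where only the $\sinp{3\mu}$ term survives and $|\Delta'| = 3|\sinp{3\mu}|$, giving the $\pm 3\sinp{3\mu}$ branches; and the interior stationary points. Guided by the optimal angles in \eqref{eq:optangles}, which satisfy $|x_1|=|x_2|$, I would make the ansatz $a=b$ and solve the resulting one-variable stationarity equation in $a$; this should produce the $\mathrm{arccot}$-form solutions of \eqref{eq:optangles} and, after substituting back, the closed-form branch values appearing in \eqref{eq:optSlope}. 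To make the ansatz rigorous I would either verify directly that every interior critical point of the full two-variable system lies on the diagonal $a=b$ for the relevant $\mu$, or argue by a symmetrization estimate that the diagonal dominates.

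The transition points are then located by matching branches. The endpoints $\mu_1,\mu_4$ occur where the interior maximizer $a^\ast\in(0,1)$ merges into the boundary $a=1$; linearizing the stationarity condition at $a=1$ yields a quadratic in $\tan^2(\mu/2)$ (equivalently in $\cosp{\mu}$), whose roots $\tfrac13(4-\sqrt{13})$ and $4+\sqrt{13}$ give exactly the closed forms \eqref{eq:theta1forL1} and \eqref{eq:theta4forL1}. The inner points $\mu_2,\mu_3$ occur where the nontrivial branch value crosses $3|\sinp{3\mu}|$; clearing denominators and substituting $w=\tan^2(\mu/2)$ turns this transcendental matching condition into a polynomial equation, and carrying out the elimination should produce precisely the octics $p_2,p_3$, with $\mu_2,\mu_3$ being the prescribed real roots. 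I would finish by comparing the finitely many candidate values over $\mu\in[0,\pi/2]$, confirming second-order (local-max) conditions so that the global maximizer is correctly identified on each subinterval, and then extending to $\mu\in[0,\pi]$ by the reflection symmetry.

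The main obstacle is twofold. First, justifying the diagonal reduction $a=b$ rigorously: the objective over $[0,1]^2$ is non-convex, so I must rule out off-diagonal critical points beating the diagonal ones, which likely requires a careful Hessian or resultant-based argument rather than a one-line symmetry claim. Second, and harder, is the exact determination of $\mu_2,\mu_3$: the branch-matching condition becomes polynomial only after a somewhat delicate substitution, the resulting octics $p_2,p_3$ do not factor over the rationals, and I must both verify that the elimination genuinely yields these polynomials and prove that $\mu_2,\mu_3$ correspond to the specified (second- and third-smallest) real roots. I expect this elimination and root-selection step to be the part that, in practice, needs computer-algebra verification.
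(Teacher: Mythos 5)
The paper offers no proof of this proposition at all --- after posing the optimization problem it simply asserts that ``solving'' it ``gives the following solution'' --- so there is no argument in the text to measure your outline against. Taken on its own terms, your strategy is sound and its skeleton (Fourier form of $\Delta'$, the reflection $\mu\mapsto\pi-\mu$ paired with $x_1\mapsto\pi-x_1$ to reduce to $\mu\in[0,\pi/2]$, comparison of the boundary candidate $x_1=x_2=\pi/2$ against interior critical points, and branch matching to locate $\mu_1,\dots,\mu_4$) is consistent with every feature of the stated answer, including $\mu_1+\mu_4=\mu_2+\mu_3=\pi$. Moreover, the obstacle you flag as hardest --- justifying the diagonal ansatz $a=b$ --- in fact dissolves. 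Writing $\Delta'=-K\sin\mu+M\sin 2\mu-3N\sin 3\mu$, one has $N=ab$ and $K=1-ab$ (note the paper's displayed $\mu_1(x_1,x_2)$ contains a typo: its third summand should be $\sin^2(x_1)\cos^2(x_2)$, so that the coefficient equals $1-\sin^2(x_1)\sin^2(x_2)$, as forced by $\Delta(0;\vec x)=1$). Hence $K$ and $N$ depend on $(a,b)$ only through $c=ab$; for fixed $c$ the objective is affine in $M$, so $|\Delta'|$ is convex in $M$ and maximized at $|M|=4\sqrt{a(1-a)b(1-b)}=4\sqrt{c}\sqrt{1-(a+b)+c}$, and AM--GM gives $a+b\ge 2\sqrt c$ with equality iff $a=b$. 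Thus every maximizer may be taken on the diagonal with no Hessian or resultant analysis. From there the problem is the maximization of a quadratic in $s=\sin^2(x_1)$ over $[0,1]$, the interior critical value reproduces the two rational-trigonometric branches, and the merging condition at $s=1$ reduces (after dividing by $\sin\mu$) to $3\cos^2\mu-\cos\mu-1=0$, giving $\cos\mu_1=(1+\sqrt{13})/6$, i.e.\ exactly Eq.~\eqref{eq:theta1forL1}, with $\mu_4=\pi-\mu_1$ by the reflection.

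Two cautions on the remaining step. The rationalizing substitution that produces the stated octics must be $x=\tan^2(\mu/4)$, matching the $4\arctan$ form of $\mu_2$ and $\mu_3$ in Eqs.~\eqref{eq:theta2forL1}--\eqref{eq:theta3forL1}, not $\tan^2(\mu/2)$ as you wrote. And after squaring the matching condition between $-3\sin 3\mu$ and the rational branch and clearing denominators, you must strip spurious factors before identifying $p_2$ and $p_3$, and then certify numerically that the relevant roots are the second and third smallest. That elimination genuinely requires computer algebra, but since the authors evidently obtained the result the same way, delegating this finite verification to a CAS leaves you no worse off than the paper.
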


A plot of Eq.~\eqref{eq:optSlope} is shown in Figure \ref{fig:optSlope}.

\end{widetext}

\end{document}